\documentclass[amsart]{article}
\usepackage{bm}

\usepackage{color}
\usepackage{amsmath}
\usepackage{amssymb}\usepackage{amsthm}
\usepackage{amsfonts}
\usepackage{amscd}
\setlength{\textwidth}{15cm}
\setlength{\textheight}{22cm}
\setlength{\oddsidemargin}{-0.5cm}
\setlength{\evensidemargin}{-0.5cm}
\begin{document}
\theoremstyle{plain}
\newtheorem*{ithm}{Theorem}
\newtheorem*{idefn}{Definition}
\newtheorem{thm}{Theorem}[section]
\newtheorem{lem}[thm]{Lemma}
\newtheorem{dlem}[thm]{Lemma/Definition}
\newtheorem{prop}[thm]{Proposition}
\newtheorem{set}[thm]{Setting}
\newtheorem{cor}[thm]{Corollary}
\newtheorem*{icor}{Corollary}
\theoremstyle{definition}
\newtheorem{assum}[thm]{Assumption}
\newtheorem{notation}[thm]{Notation}
\newtheorem{setting}[thm]{Setting}
\newtheorem{defn}[thm]{Definition}
\newtheorem{clm}[thm]{Claim}
\newtheorem{ex}[thm]{Example}
\theoremstyle{remark}
\newtheorem{rem}[thm]{Remark}
\newcommand{\unit}{\mathbb I}
\newcommand{\ali}[1]{{\mathfrak A}_{[ #1 ,\infty)}}
\newcommand{\alm}[1]{{\mathfrak A}_{(-\infty, #1 ]}}
\newcommand{\nn}[1]{\lV #1 \rV}
\newcommand{\br}{{\mathbb R}}
\newcommand{\dm}{{\rm dom}\mu}
\newcommand{\lb}{l_{\bb}(n,n_0,k_R,k_L,\lal,\bbD,\bbG,Y)}
\newcommand{\Ad}{\mathop{\mathrm{Ad}}\nolimits}
\newcommand{\Proj}{\mathop{\mathrm{Proj}}\nolimits}
\newcommand{\RRe}{\mathop{\mathrm{Re}}\nolimits}
\newcommand{\RIm}{\mathop{\mathrm{Im}}\nolimits}
\newcommand{\Wo}{\mathop{\mathrm{Wo}}\nolimits}
\newcommand{\Prim}{\mathop{\mathrm{Prim}_1}\nolimits}
\newcommand{\Primz}{\mathop{\mathrm{Prim}}\nolimits}
\newcommand{\ClassA}{\mathop{\mathrm{ClassA}}\nolimits}
\newcommand{\Class}{\mathop{\mathrm{Class}}\nolimits}
\newcommand{\diam}{\mathop{\mathrm{diam}}\nolimits}
\def\qed{{\unskip\nobreak\hfil\penalty50
\hskip2em\hbox{}\nobreak\hfil$\square$
\parfillskip=0pt \finalhyphendemerits=0\par}\medskip}
\def\proof{\trivlist \item[\hskip \labelsep{\bf Proof.\ }]}
\def\endproof{\null\hfill\qed\endtrivlist\noindent}
\def\proofof[#1]{\trivlist \item[\hskip \labelsep{\bf Proof of #1.\ }]}
\def\endproofof{\null\hfill\qed\endtrivlist\noindent}
\numberwithin{equation}{section}

\newcommand{\oo}{{\boldsymbol\omega}}
\newcommand{\ctv}{\caC_{(\theta,\varphi)}}
\newcommand{\btv}{\caB_{(\theta,\varphi)}}
\newcommand{\amf}{\mathfrak A}
\newcommand{\at}{\caA_{\bbZ^2}}
\newcommand{\oz}{\caO_0}
\newcommand{\caA}{{\mathcal A}}
\newcommand{\caB}{{\mathcal B}}
\newcommand{\caC}{{\mathcal C}}
\newcommand{\caD}{{\mathcal D}}
\newcommand{\caE}{{\mathcal E}}
\newcommand{\caF}{{\mathcal F}}
\newcommand{\caG}{{\mathcal G}}
\newcommand{\caH}{{\mathcal H}}
\newcommand{\caI}{{\mathcal I}}
\newcommand{\caJ}{{\mathcal J}}
\newcommand{\caK}{{\mathcal K}}
\newcommand{\caL}{{\mathcal L}}
\newcommand{\caM}{{\mathcal M}}
\newcommand{\caN}{{\mathcal N}}
\newcommand{\caO}{{\mathcal O}}
\newcommand{\caP}{{\mathcal P}}
\newcommand{\caQ}{{\mathcal Q}}
\newcommand{\caR}{{\mathcal R}}
\newcommand{\caS}{{\mathcal S}}
\newcommand{\caT}{{\mathcal T}}
\newcommand{\caU}{{\mathcal U}}
\newcommand{\caV}{{\mathcal V}}
\newcommand{\caW}{{\mathcal W}}
\newcommand{\caX}{{\mathcal X}}
\newcommand{\caY}{{\mathcal Y}}
\newcommand{\caZ}{{\mathcal Z}}
\newcommand{\bba}{{\mathbb a}}
\newcommand{\bbA}{{\mathbb A}}
\newcommand{\bbB}{{\mathbb B}}
\newcommand{\bbC}{{\mathbb C}}
\newcommand{\bbD}{{\mathbb D}}
\newcommand{\bbE}{{\mathbb E}}
\newcommand{\bbF}{{\mathbb F}}
\newcommand{\bbG}{{\mathbb G}}
\newcommand{\bbH}{{\mathbb H}}
\newcommand{\bbI}{{\mathbb I}}
\newcommand{\bbJ}{{\mathbb J}}
\newcommand{\bbK}{{\mathbb K}}
\newcommand{\bbL}{{\mathbb L}}
\newcommand{\bbM}{{\mathbb M}}
\newcommand{\bbN}{{\mathbb N}}
\newcommand{\bbO}{{\mathbb O}}
\newcommand{\bbP}{{\mathbb P}}
\newcommand{\bbQ}{{\mathbb Q}}
\newcommand{\bbR}{{\mathbb R}}
\newcommand{\bbS}{{\mathbb S}}
\newcommand{\bbT}{{\mathbb T}}
\newcommand{\bbU}{{\mathbb U}}
\newcommand{\bbV}{{\mathbb V}}
\newcommand{\bbW}{{\mathbb W}}
\newcommand{\bbX}{{\mathbb X}}
\newcommand{\bbY}{{\mathbb Y}}
\newcommand{\bbZ}{{\mathbb Z}}
\newcommand{\str}{^*}
\newcommand{\lv}{\left \vert}
\newcommand{\rv}{\right \vert}
\newcommand{\lV}{\left \Vert}
\newcommand{\rV}{\right \Vert}
\newcommand{\la}{\left \langle}
\newcommand{\ra}{\right \rangle}
\newcommand{\ltm}{\left \{}
\newcommand{\rtm}{\right \}}
\newcommand{\lcm}{\left [}
\newcommand{\rcm}{\right ]}
\newcommand{\ket}[1]{\lv #1 \ra}
\newcommand{\bra}[1]{\la #1 \rv}
\newcommand{\lmk}{\left (}
\newcommand{\rmk}{\right )}
\newcommand{\al}{{\mathcal A}}
\newcommand{\md}{M_d({\mathbb C})}
\newcommand{\ainn}{\mathop{\mathrm{AInn}}\nolimits}
\newcommand{\id}{\mathop{\mathrm{id}}\nolimits}
\newcommand{\Tr}{\mathop{\mathrm{Tr}}\nolimits}
\newcommand{\Ran}{\mathop{\mathrm{Ran}}\nolimits}
\newcommand{\Ker}{\mathop{\mathrm{Ker}}\nolimits}
\newcommand{\Aut}{\mathop{\mathrm{Aut}}\nolimits}
\newcommand{\spn}{\mathop{\mathrm{span}}\nolimits}
\newcommand{\Mat}{\mathop{\mathrm{M}}\nolimits}
\newcommand{\UT}{\mathop{\mathrm{UT}}\nolimits}
\newcommand{\DT}{\mathop{\mathrm{DT}}\nolimits}
\newcommand{\GL}{\mathop{\mathrm{GL}}\nolimits}
\newcommand{\spa}{\mathop{\mathrm{span}}\nolimits}
\newcommand{\supp}{\mathop{\mathrm{supp}}\nolimits}
\newcommand{\rank}{\mathop{\mathrm{rank}}\nolimits}
\newcommand{\idd}{\mathop{\mathrm{id}}\nolimits}
\newcommand{\ran}{\mathop{\mathrm{Ran}}\nolimits}
\newcommand{\dr}{ \mathop{\mathrm{d}_{{\mathbb R}^k}}\nolimits} 
\newcommand{\dc}{ \mathop{\mathrm{d}_{\cc}}\nolimits} \newcommand{\drr}{ \mathop{\mathrm{d}_{\rr}}\nolimits} 
\newcommand{\zin}{\mathbb{Z}}
\newcommand{\rr}{\mathbb{R}}
\newcommand{\cc}{\mathbb{C}}
\newcommand{\ww}{\mathbb{W}}
\newcommand{\nan}{\mathbb{N}}\newcommand{\bb}{\mathbb{B}}
\newcommand{\aaa}{\mathbb{A}}\newcommand{\ee}{\mathbb{E}}
\newcommand{\pp}{\mathbb{P}}
\newcommand{\wks}{\mathop{\mathrm{wk^*-}}\nolimits}
\newcommand{\mk}{{\Mat_k}}
\newcommand{\mnz}{\Mat_{n_0}}
\newcommand{\mn}{\Mat_{n}}
\newcommand{\dist}{\dc}
\newcommand{\braket}[2]{\left\langle#1,#2\right\rangle}
\newcommand{\ketbra}[2]{\left\vert #1\right \rangle \left\langle #2\right\vert}
\newcommand{\abs}[1]{\left\vert#1\right\vert}
\newcommand{\trl}[2]
{T_{#1}^{(\theta,\varphi), \Lambda_{#2},\bar V_{#1,\Lambda_{#2}}}}
\newcommand{\trlz}[1]
{T_{#1}^{(\theta,\varphi), \Lambda_{0},\unit}}
\newcommand{\trlt}[2]
{T_{#1}^{(\theta,\varphi), \Lambda_{#2}+t_{#2}\bm e_{\Lambda_{#2}},\bar V_{#1,\Lambda_{#2}+t_{#2}\bm e_{\Lambda_{#2}}}}}
\newcommand{\trltj}[4]
{T_{#1}^{(\theta,\varphi), \Lambda_{#2}^{(#3)}+t_{#4}\bm e_{\Lambda_{#2}^{(#3)}},\bar V_{#1,\Lambda_{#2}^{(#3)}+t_{#4}\bm e_{\Lambda_{#2}^{(#3)}}}}}
\newcommand{\trltjp}[4]
{T_{#1}^{(\theta,\varphi), {\Lambda'}_{#2}^{(#3)}+t_{#4}'\bm e_{{\Lambda'}_{#2}^{(#3)}},\bar V_{#1,{\Lambda'}_{#2}^{(#3)}+t_{#4}'\bm e_{{\Lambda'}_{#2}^{(#3)}}}}}
\newcommand{\trlta}[2]
{T_{#1}^{(\theta,\varphi), \Lambda_{#2}^{t_{#2}},\bar V_{#1,\Lambda_{#2}^{t_{#2}}}}}
\newcommand{\trltb}[2]
{T_{#1}^{(\theta,\varphi), \Lambda_{#2}+t\bm e_{\Lambda_{#2}},\bar V_{#1,\Lambda_{#2}+t\bm e_{\Lambda_{#2}}}}}

\newcommand{\trlpt}[2]
{T_{#1}^{(\theta,\varphi), \Lambda_{#2}'+t_{#2}'\bm e_{\Lambda_{#2}'},\bar V_{#1,\Lambda_{#2}'+t_{#2}'\bm e_{\Lambda_{#2}'}}}}

\newcommand{\trll}[3]
{T_{#1, #3}^{(\theta,\varphi), \Lambda_{#2},\bar V_{#1,\Lambda_{#2}}}}
\newcommand{\trlp}[2]
{T_{#1}^{(\theta,\varphi), \Lambda_{#2}',\bar V_{#1,\Lambda_{#2}'}}}
\newcommand{\trlpp}[2]
{T_{#1}^{(\theta,\varphi), \Lambda_{#2}'',\bar V_{#1,\Lambda_{#2}''}}}
\newcommand{\trlj}[3]
{T_{\rho_{#1}}^{(\theta,\varphi), \Lambda_{#2}^{(#3)}, V_{\rho_{#1},\Lambda_{#2}^{(#3)}}}}
\newcommand{\trljp}[3]
{T_{{\rho'}_{#1}}^{(\theta,\varphi), {\Lambda'}_{#2}^{(#3)},V_{\rho'_{#1},{\Lambda'}_{#2}^{(#3)}}}}
\newcommand{\wod}[3]
{W_{#1\Lambda_{#2}\Lambda_{#3}}}
\newcommand{\wodt}[3]
{{W^{\bm t}}_{#1\Lambda_{#2}\Lambda_{#3}}}
\newcommand{\comp}[2]
{{(\theta_{#1},\varphi_{#1}), \Lambda_{#2},\{\bar V_{\eta,\Lambda_{#2}}\}_\eta}}
\newcommand{\ltj}[2]{\Lambda_{#1}+{#2} \bm e_{\Lambda_{#1}} }
\newcommand{\ltjp}[2]{{\Lambda'}_{#1}+{#2} \bm e_{\Lambda_{#1}} }
\newtheorem{nota}{Notation}[section]
\def\qed{{\unskip\nobreak\hfil\penalty50
\hskip2em\hbox{}\nobreak\hfil$\square$
\parfillskip=0pt \finalhyphendemerits=0\par}\medskip}
\def\proof{\trivlist \item[\hskip \labelsep{\bf Proof.\ }]}
\def\endproof{\null\hfill\qed\endtrivlist\noindent}
\def\proofof[#1]{\trivlist \item[\hskip \labelsep{\bf Proof of #1.\ }]}
\def\endproofof{\null\hfill\qed\endtrivlist\noindent}
\newcommand{\wrl}[2]{Y_{#1}^{\Lambda_0^{(#2)}}}
\newcommand{\wrlt}[2]{\tilde Y_{#1}^{\Lambda_0^{(#2)}}}
\newcommand{\ZZ}{\bbZ_2\times\bbZ_2}
\newcommand{\SSS}{\mathcal{S}}
\newcommand{\cs}{S}
\newcommand{\ct}{t}
\newcommand{\hS}{S}
\newcommand{\vv}{{\boldsymbol v}}
\newcommand{\ala}{a}
\newcommand{\bet}{b}
\newcommand{\gam}{c}
\newcommand{\alphas}{\alpha}
\newcommand{\alphai}{\alpha^{(\sigma_{1})}}
\newcommand{\alphan}{\alpha^{(\sigma_{2})}}
\newcommand{\betas}{\beta}
\newcommand{\betai}{\beta^{(\sigma_{1})}}
\newcommand{\betan}{\beta^{(\sigma_{2})}}
\newcommand{\alphass}{\alpha^{{(\sigma)}}}
\newcommand{\uu}{V}
\newcommand{\vp}{\varsigma}
\newcommand{\vpr}{R}
\newcommand{\tg}{\tau_{\Gamma}}
\newcommand{\sgg}{\Sigma_{\Gamma}}
\newcommand{\nh}{t28}
\newcommand{\rk}{6}
\newcommand{\nii}{2}
\newcommand{\nhh}{28}
\newcommand{\sjt}{30}
\newcommand{\sjtg}{30}
\newcommand{\bcg}{\caB(\caH_{\alpha})\otimes  C^{*}(\Sigma)}
\newcommand{\pza}[1]{\pi_0\lmk\caA_{\Lambda_{#1}}\rmk''}
\newcommand{\pzac}[1]{\pi_0\lmk\caA_{\Lambda_{#1}}\rmk'}
\newcommand{\pzacc}[1]{\pi_0\lmk\caA_{\Lambda_{#1}^c}\rmk'}
\newcommand{\trlzi}[2]{T_{#1}^{(\theta,\varphi) \Lambda_0^{(#2)}\unit}}

\title{A derivation of braided $C^*$-tensor categories from
gapped ground states satisfying the approximate Haag duality}

\author{Yoshiko Ogata \thanks{ Graduate School of Mathematical Sciences
The University of Tokyo, Komaba, Tokyo, 153-8914, Japan
Supported in part by
the Grants-in-Aid for
Scientific Research, JSPS.}}
\maketitle

\begin{abstract}
We derive 
 braided $C^*$-tensor categories from 
gapped ground states on two-dimensional quantum spin systems
satisfying
some additional condition which we call the
 approximate Haag duality.
\end{abstract}

\section{Introduction}
Recently, the classification problem of topologically ordered gapped
systems has attracted
a lot of attention.
It is frequently said that topological order is characterized by the
{\it existence of  anyons}
and {\it long-range entanglement}.

It is well known that anyons show up in AQFT \cite{BDMRS} \cite{BF} \cite{DHRI}\cite{FRS}\cite{K}\cite{BKLR}
surprisingly naturally.
In Kitaev's quantum double model, a famous model
of gapped ground state phase, 
anyons also show up.
Applying the techniques of AQFT, 
a detailed study of anyons
in Kitaev's quantum double models in the operator algebraic setting
 was carried out by P. Naaijkens and his coauthors 
 \cite{N1}\cite{N2}\cite{FN}\cite{CNN1}.
 Furthermore, in \cite{CNN2} Cha-Naaijkens-Nachtergaele
 derived a braided $C^*$-tensor category in a general setting of
 semi-group of almost localized endomorphisms.
  
 On the other hand, in \cite{NaOg}, we proved that superselection sectors are topological invariant.
 In particular, 
 the existence of a non-trivial
superselection sector implies the long-range entanglement of the system.
There is folklore saying that the existence of
anyons implies long-range entanglement of the state.
This motivates us to derive braided $C^*$-tensor categories out of 
superselection sectors, where, unlike endomorphisms, the multiplication rule is not apriori given.
In this paper, we derive braided $C^*$-tensor categories out of
 non-trivial
sector theory provided an additional condition,
the approximate Haag duality.
In AQFT, a braided $C^*$-tensor category
is usually derived
under Haag duality  \cite{BF} \cite{DHRI}\cite{FRS}\cite{K}\cite{BKLR}, except for \cite{BDMRS}.
Haag duality itself looks too strong to be required in gapped ground phases of 
quantum spin systems, although in very nice models
like Kitaev quantum double models, it holds \cite{N1}\cite{N2}\cite{FN}.
In particular, it does not look plausible that this property is stable under 
quasi-local automorphisms, which are fundamental operations
in our classification problem \cite{bmns}\cite{NSY}.
The approximate Haag duality is a relaxation of Haag duality.
The good point about this approximate version is that it is stable under quasi-local automorphisms.
 Our derivation of the braided $C^*$-tensor category basically
 goes parallel to the recipe from AQFT. However, the ``approximateness'' of the Haag duality 
 and the other differences of our setting from AQFT require additional arguments.

\subsection{Two-dimensional quantum spin systems}
We start by summarizing standard setup of $2$-dimensional quantum spin systems on the two dimensional lattice $\bbZ^{2}$ \cite{BR1,BR2}. 
Throughout this paper, we fix some $2\le d\in\nan$.
We denote the algebra of $d\times d$ matrices by $\Mat_{d}$.
For each $z\in\bbZ^2$,  let $\caA_{\{z\}}$ be an isomorphic copy of $\Mat_{d}$, and for any finite subset $\Lambda\subset\bbZ^2$, we set $\caA_{\Lambda} = \bigotimes_{z\in\Lambda}\caA_{\{z\}}$.
For finite $\Lambda$, the algebra $\caA_{\Lambda} $ can be regarded as the set of all bounded operators acting on
the Hilbert space $\bigotimes_{z\in\Lambda}{\bbC}^{d}$.
We use this identification freely.
If $\Lambda_1\subset\Lambda_2$, the algebra $\caA_{\Lambda_1}$ is naturally embedded in $\caA_{\Lambda_2}$ by tensoring its elements with the identity. 
For an infinite subset $\Gamma\subset \bbZ^{2}$,
$\caA_{\Gamma}$
is given as the inductive limit of the algebras $\caA_{\Lambda}$ with $\Lambda$, finite subsets of $\Gamma$.
We call $\caA_{\Gamma}$ the quantum spin system on $\Gamma$.
For a subset $\Gamma_1$ of $\Gamma\subset\bbZ^{2}$,
the algebra $\caA_{\Gamma_1}$ can be regarded as a subalgebra of $\caA_{\Gamma}$. 
For $\Gamma\subset \bbR^2$, with a bit abuse of notation, we write $\caA_{\Gamma}$
to denote $\caA_{\Gamma\cap \bbZ^2}$.
Also, $\Gamma^c$ denotes the complement of $\Gamma$ in $\bbR^2$.
We regularly use the notations and  facts about cones
collected in Appendix \ref{apcone}.

\subsection{Approximate Haag duality}

A representation $(\caH,\pi_0)$ is said to satisfy
the Haag duality if $\pi_0(\caA_{\Lambda^c})'=\pi_0(\caA_{\Lambda})''$.
The corresponding (original) condition is broadly assumed in AQFT.
The problem for us about introducing this condition in quantum spin systems
is that it does not look to be plausible that this condition is stable under
quasi-local automorphisms, 
the fundamental operation in the analysis of gappd ground state phases.
For this reason, we introduce a weaker version of Haag duality.
\begin{defn}\label{assum7}[Approximate Haag duality]
Let $(\caH,\pi_0)$ be an irreducible representation of $\at$.
We say that $(\caH,\pi_0)$ satisfies the approximate Haag duality if
the following condition holds.:
For any $\varphi\in (0,2\pi)$ and 
 $\varepsilon>0$ with
$\varphi+4\varepsilon<2\pi$,
there is some $R_{\varphi,\varepsilon}>0$ and decreasing
functions $f_{\varphi,\varepsilon,\delta}(t)$, $\delta>0$
on $\bbR_{\ge 0}$
with $\lim_{t\to\infty}f_{\varphi,\varepsilon,\delta}(t)=0$
such that
\begin{description}
\item[(i)]
for any cone $\Lambda$ with $|\arg\Lambda|=\varphi$, there is a unitary 
$U_{\Lambda,\varepsilon}\in \caU(\caH)$
satisfying
\begin{align}\label{lem7p}
\pi_0\lmk\caA_{\Lambda^c}\rmk'\subset 
\Ad\lmk U_{\Lambda,\varepsilon}\rmk\lmk 
\pi_0\lmk \caA_{\lmk \Lambda-R_{\varphi,\varepsilon}\bm e_\Lambda\rmk_\varepsilon}\rmk''
\rmk,
\end{align}
and 
\item[(ii)]
 for any $\delta>0$ and $t\ge 0$, there is a unitary 
 $\tilde U_{\Lambda,\varepsilon,\delta,t}\in \pi_0\lmk \caA_{\Lambda_{\varepsilon+\delta}-t\bm e_{\Lambda}}\rmk''$
 satisfying
\begin{align}\label{uappro}
\lV
U_{\Lambda,\varepsilon}-\tilde U_{\Lambda,\varepsilon,\delta,t}
\rV\le f_{\varphi,\varepsilon,\delta}(t).
\end{align}
\end{description}
\end{defn}
The good point about this weaker version is that we know it is stable under quasi-local automorphisms.
Quasi-local automorphisms are automorphisms given by time-dependent interactions
satisfying suitable locality conditions \cite{NSY}.
We will not repeat the definition here. The only property of
quasi-local automorphisms we need is the following factorization property.
\begin{defn}\label{qfdef}
Let $\alpha$ be an automorphism of $\at$.
We say that $\alpha$ is {approximately-factorizable}
if the following condition holds.
\begin{description}
\item[(i)]
For any cone $\Lambda$ and $\delta>0$,
there are automorphisms $\beta_{\Lambda}, \tilde \beta_{\Lambda}\in\Aut\lmk\caA_{\Lambda}\rmk$,
$\beta_{\Lambda^c}, \tilde \beta_{\Lambda^c}\in \Aut\lmk\caA_{\Lambda^c}\rmk$
and $\Xi_{\Lambda, \delta}, \tilde \Xi_{\Lambda, \delta}\in \Aut\lmk\caA_{\Lambda_{\delta}\cap (\Lambda^c)_\delta}\rmk$
and unitaries $v_{\Lambda\delta},\tilde v_{\Lambda\delta}\in\at$
such that 
\begin{align}
\begin{split}
&\alpha=\Ad\lmk v_{\Lambda\delta}\rmk\circ
\Xi_{\Lambda, \delta}\circ \lmk\beta_{\Lambda}\otimes \beta_{\Lambda^c}\rmk,\\
&\alpha^{-1}=\Ad\lmk \tilde v_{\Lambda\delta}\rmk\circ
\tilde \Xi_{\Lambda, \delta}\circ \lmk\tilde \beta_{\Lambda}\otimes \tilde \beta_{\Lambda^c}\rmk.
\end{split}
\end{align}
\item[(ii)]
For each $\delta,\delta'>0,\varphi\in (0,2\pi)$, there exists a decreasing
function $g_{\varphi,\delta,\delta'}(t)$ on $\bbR_{\ge 0}$
with $\lim_{t\to\infty}g_{\varphi,\delta,\delta'}(t)=0$.
For any cone $\Lambda$ with $\varphi=|\arg\Lambda|$, there are unitaries
 $v_{\Lambda,\delta,\delta',t}', \tilde v_{\Lambda,\delta,\delta',t}'\in \caA_{\Lambda_{\delta+\delta'}-t\bm e_{\Lambda}}$
 satisfying
\begin{align}\label{uappro}
\lV
v_{\Lambda,\delta}-v_{\Lambda,\delta,\delta',t}'
\rV,\lV
\tilde v_{\Lambda,\delta}-\tilde v_{\Lambda,\delta,\delta',t}'
\rV
\le g_{\varphi,\delta,\delta'}(t),
\end{align}
for unitaries $v_{\Lambda\delta},\tilde v_{\Lambda\delta}$ in (i).
\end{description}

\end{defn} 
The approximate Haag duality is stable under the approximately-factorizable automorphisms.
\begin{prop}\label{staah}
Let $(\caH,\pi_0)$ be an irreducible representation of $\at$
satisfying the approximate Haag duality.
Then for any approximately-factorizable automorphism $\alpha$ on $\at$,
 $(\caH,\pi_0\circ\alpha)$ also satisfies the approximate Haag duality.
\end{prop}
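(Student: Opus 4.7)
The plan is to transport approximate Haag duality from $\pi_0$ to $\pi_0\circ\alpha$ by using the approximate factorization of $\alpha$ and $\alpha^{-1}$ (Definition~\ref{qfdef}) to ``localize'' $\alpha(\caA_{\Lambda^c})$ up to conjugation by $\pi_0(v_{\Lambda\delta_1})$, then to apply Definition~\ref{assum7} for $\pi_0$ on a cone of the same angular opening $\varphi$ with apex slightly shifted back, and finally to rewrite the result back in terms of $\pi_0\circ\alpha$ by the same technique applied to $\alpha^{-1}$. Heuristically, modulo $\Ad(\pi_0(v_{\Lambda\delta_1}))$, the algebra $\pi_0(\alpha(\caA_{\Lambda^c}))''$ is sandwiched between $\pi_0(\caA_{(\Lambda_{\delta_1})^c})''$ and $\pi_0(\caA_{(\Lambda^c)_{\delta_1}})''$, so its commutant inherits approximate duality from $\pi_0$ on the shrunken cone containing $\Lambda_{\delta_1}$.

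Fix $\varphi\in(0,2\pi)$ and $\varepsilon>0$ with $\varphi+4\varepsilon<2\pi$, and $\Lambda$ with $|\arg\Lambda|=\varphi$. Choose auxiliary parameters $\delta_1,\delta_2,\varepsilon_0>0$ small compared to $\varepsilon$ (say $\delta_1,\delta_2\le\varepsilon/4$ and $\varepsilon_0\le\varepsilon/2$, so $\varphi+4\varepsilon_0<2\pi$). Applying Definition~\ref{qfdef}(i) to $\alpha$ at $\Lambda$ with $\delta_1$: since $\beta_{\Lambda^c}(\caA_{\Lambda^c})=\caA_{\Lambda^c}$ and $\Xi_{\Lambda,\delta_1}$ is trivial on $\caA_{\Lambda^c\setminus\Lambda_{\delta_1}}=\caA_{(\Lambda_{\delta_1})^c}$, one gets $\alpha(\caA_{\Lambda^c})\supset\Ad(v_{\Lambda\delta_1})(\caA_{(\Lambda_{\delta_1})^c})$. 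By the cone geometry of Appendix~\ref{apcone} there is $s_1=s_1(\delta_1,\varphi)>0$ with $\Lambda_{\delta_1}\subset\tilde\Lambda:=\Lambda-s_1\bm e_\Lambda$ and $|\arg\tilde\Lambda|=\varphi$. Applying Definition~\ref{assum7} to $\pi_0$ at $\tilde\Lambda$ and $\varepsilon_0$, with $R_0:=R_{\varphi,\varepsilon_0}$ and $\caC:=(\tilde\Lambda-R_0\bm e_\Lambda)_{\varepsilon_0}$, and combining,
\begin{equation*}
\pi_0(\alpha(\caA_{\Lambda^c}))'\subset\Ad\bigl(\pi_0(v_{\Lambda\delta_1})\,U_{\tilde\Lambda,\varepsilon_0}\bigr)\bigl(\pi_0(\caA_\caC)''\bigr).
\end{equation*}
Next choose a cone $\Lambda^\flat=\Lambda-R^\flat\bm e_\Lambda$ of argument $\varphi$ with $\caC\subset\Lambda^\flat$, and apply Definition~\ref{qfdef}(i) to $\alpha^{-1}$ at $\Lambda^\flat$ with $\delta_2$: this gives $\alpha^{-1}(\caA_{\Lambda^\flat})\subset\Ad(\tilde v_{\Lambda^\flat,\delta_2})(\caA_{(\Lambda^\flat)_{\delta_2}})$, whence $\pi_0(\caA_\caC)''\subset\Ad((\pi_0\circ\alpha)(\tilde v_{\Lambda^\flat,\delta_2}))\bigl((\pi_0\circ\alpha)(\caA_{(\Lambda^\flat)_{\delta_2}})''\bigr)$. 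With $\delta_2\le\varepsilon$ we have $(\Lambda^\flat)_{\delta_2}\subset(\Lambda-R^\flat\bm e_\Lambda)_\varepsilon$, so setting $R':=R^\flat$ and
\begin{equation*}
U'_{\Lambda,\varepsilon}:=\pi_0(v_{\Lambda\delta_1})\,U_{\tilde\Lambda,\varepsilon_0}\,(\pi_0\circ\alpha)(\tilde v_{\Lambda^\flat,\delta_2})
\end{equation*}
verifies condition~(i) of Definition~\ref{assum7} for $\pi_0\circ\alpha$.

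For condition~(ii), fix $\delta>0$ and $t$ large, set $Z:=(\Lambda-t\bm e_\Lambda)_{\varepsilon+\delta}$, and construct $\tilde U'_{\Lambda,\varepsilon,\delta,t}$ as a product of approximants of the three factors of $U'_{\Lambda,\varepsilon}$. The third factor is approximated by $(\pi_0\circ\alpha)(\tilde v'_{\Lambda^\flat,\delta_2,\delta,t-R^\flat})$, which lies directly in $(\pi_0\circ\alpha)(\caA_Z)''$ for $t\geq R^\flat$. The first two factors live a priori only in $\pi_0(\caA_Z)''$, so to embed them in $(\pi_0\circ\alpha)(\caA_Z)''$ I would use an extra round of the factorization of $\alpha^{-1}$: for $\pi_0(v_{\Lambda\delta_1})$, first approximate $v_{\Lambda\delta_1}$ by $v'_{\Lambda,\delta_1,\delta,s_*}\in\caA_{\Lambda_{\delta_1+\delta}-s_*\bm e_\Lambda}$ for a shift $s_*<t$, then factor $\alpha^{-1}$ at a cone $\hat\Lambda=\Lambda-(s_*+r)\bm e_\Lambda\supset\supp v'$ with small $\delta_3$, and approximate the unitary $\tilde v_{\hat\Lambda,\delta_3}$ by $\tilde v''\in\caA_{(\hat\Lambda)_{\delta_3+\delta_4}-s'\bm e_{\hat\Lambda}}$. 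Choosing $s_*$ and $s'$ both tending to infinity subject to $s_*+r+s'+\delta_3+\delta_4\le t+\varepsilon+\delta$ (e.g.\ $s_*=s'=(t-r)/2$) places the combined approximant in $(\pi_0\circ\alpha)(\caA_Z)''$ with error $g_{\varphi,\delta_1,\delta}(s_*)+g_{\varphi,\delta_3,\delta_4}(s')\to 0$ as $t\to\infty$. The analogous maneuver applied to $U_{\tilde\Lambda,\varepsilon_0}$, starting from its Definition~\ref{assum7}(ii) approximant for $\pi_0$, handles the second factor. The total error is then a finite sum of decreasing functions of $t$ depending only on $(\varphi,\varepsilon,\delta)$.

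The hard part is the bookkeeping: the chain of parameters $\delta_1,\delta_2,\varepsilon_0,\delta_3,\delta_4$ and cone-shifts $r,s_*,s'$ must be chosen so that (a)~all intermediate fattened and shifted cones fit inside the $(\Lambda-R'\bm e_\Lambda)_\varepsilon$-shape required by~(i) and the $Z$-shape required by~(ii), (b)~the angular constraint $\varphi+4\varepsilon_0<2\pi$ is preserved when invoking Definition~\ref{assum7} for $\pi_0$, and (c)~the composite error decays in $t$ and depends only on $(\varphi,\varepsilon,\delta)$. The most delicate point is absorbing the spread produced by the $\alpha^{-1}$-factorization into the ``slack'' between the target shift $t$ and the approximation shift $s_*$: this is what dictates the rate $f'_{\varphi,\varepsilon,\delta}(t)$ of the approximation for $\pi_0\circ\alpha$ and is where most of the actual proof-work lies.
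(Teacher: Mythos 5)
Your overall architecture---sandwich $\pi_0\circ\alpha(\caA_{\Lambda^c})'$ using the approximate factorization, apply the approximate Haag duality of $\pi_0$ on an auxiliary cone, pull back with the factorization of $\alpha^{-1}$, and then approximate the three factors of the resulting unitary for condition (ii)---is the same as the paper's. However, the proposal rests on a cone-geometric claim that is false, and it is used twice. You assert that there is $s_1>0$ with $\Lambda_{\delta_1}\subset\tilde\Lambda:=\Lambda-s_1\bm e_\Lambda$ and $|\arg\tilde\Lambda|=\varphi$, and later that $\caC=(\tilde\Lambda-R_0\bm e_\Lambda)_{\varepsilon_0}$ is contained in $\Lambda^\flat=\Lambda-R^\flat\bm e_\Lambda$. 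Translating a cone backward along its axis does not enlarge its set of asymptotic directions: choose a direction $\bm e_\beta$ whose angle from the axis lies strictly between the half-opening of $\Lambda$ and that of $\Lambda_{\delta_1}$; the ray $\bm a_\Lambda+t\bm e_\beta$ lies in $\Lambda_{\delta_1}$ for all $t>0$ but leaves $\Lambda-s\bm e_\Lambda$ for $t$ large, for every $s$. (Lemma \ref{lem1p} goes the opposite way: the contained cone must have the smaller arc.) Hence no translate of $\Lambda$ with opening $\varphi$ contains the fattened cone $\Lambda_{\delta_1}$, and likewise none contains $\caC$; so the unitaries $U_{\tilde\Lambda,\varepsilon_0}$ and $\tilde v_{\Lambda^\flat,\delta_2}$ you invoke do not control the algebras you need them to control, and the constants $R_{\varphi,\varepsilon_0}$, $f_{\varphi,\cdot,\cdot}$, $g_{\varphi,\cdot,\cdot}$ you extract are attached to the wrong opening angle.

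The repair is exactly what the paper does: invoke Definition \ref{assum7} directly at the fattened cone (the paper uses $\Lambda_\delta$, with constants $R_{\varphi+2\delta,\delta}$ and $f_{\varphi+2\delta,\delta,\delta}$), and invoke the factorization of $\alpha^{-1}$ at a further fattened, shifted cone ($\tilde\Lambda_{2\delta}$ in the paper), keeping all auxiliary apertures inside the slack provided by $\varphi+4\varepsilon<2\pi$ so that the final region $\lmk \Lambda-R^{(1)}_{\varphi,\varepsilon}\bm e_\Lambda\rmk_\varepsilon$ still absorbs everything. With that correction your $U'_{\Lambda,\varepsilon}$ coincides in structure with the paper's $U^{(1)}_{\Lambda,\varepsilon}$ (a factorization unitary, the duality unitary of $\pi_0$, and a second factorization unitary), and your plan for (ii)---approximating each factor, and running one more round of the $\alpha^{-1}$-factorization so that the $\pi_0$-localized factors land in $(\pi_0\circ\alpha)$-localized algebras---is the same mechanism as the paper's estimates and its Lemma \ref{intr}. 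What remains is the parameter bookkeeping you already flagged, but it must be done with the enlarged angular arguments ($\varphi+2\delta$, $\varphi+6\delta$, etc.) appearing in the error functions, not with $\varphi$ itself.
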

\begin{proof}
We use the notation in Definition \ref{assum7} and Definition \ref{qfdef}.
For each 
 $\varphi\in (0,2\pi)$ and 
 $\varepsilon>0$ with
$\varphi+4\varepsilon<2\pi$, we fix $\delta>0$ 
 so that 
 \begin{align}
 16\delta<\varepsilon,\quad \varphi+16\delta<2\pi.
 \end{align}
and set
\begin{align}
\begin{split}
&R_{\varphi,\varepsilon}^{(1)}:=
R_{\varphi+2\delta,\delta},\\
&f_{\varphi,\varepsilon,\delta_0}^{(1)}(t):=\left\{
\begin{gathered}
g_{\varphi+2\delta,\delta,\delta}\lmk \frac {t-R_{\varphi+2\delta, \delta}}2\rmk
+2g_{\varphi+6\delta,\delta,\delta}\lmk \frac {t-R_{\varphi+2\delta, \delta}}2\rmk
+f_{\varphi+2\delta,\delta,\delta}\lmk \frac {t-R_{\varphi+2\delta, \delta}}2\rmk\\
+g_{\varphi+4\delta,\delta,\delta}\lmk \frac {t-R_{\varphi+2\delta, \delta}}2\rmk
+2g_{\varphi+8\delta,\delta,\delta}\lmk \frac {t-R_{\varphi+2\delta, \delta}}2\rmk
,\quad\text{if} \quad t\ge R_{\varphi+2\delta,\delta},\\
2+6\lV g\rV_{\infty}+\lV f\rV_{\infty},\quad \text{if} \quad 0\le t \le R_{\varphi+2\delta,\delta}.
\end{gathered}
\right.
\end{split}
\end{align}
for $\delta_0>0$.
%

We show (i), (ii) of Definition \ref{assum7} for this 
$f_{\varphi,\varepsilon,\delta_0}^{(1)}(t)$, $R_{\varphi,\varepsilon}^{(1)}$.
Let  $\Lambda$ be any cone with $|\arg\Lambda|=\varphi$. We have
\begin{align}\label{api}
\begin{split}
\pi_0\lmk \caA_{\lmk \Lambda_\delta\rmk ^c}\rmk
=\pi_0\circ\alpha\circ\alpha^{-1}\lmk \caA_{\lmk \Lambda_\delta\rmk^c}\rmk
=\pi_0\circ\alpha\circ\Ad\lmk \tilde v_{\Lambda_\delta,\delta}\rmk\circ
\tilde \Xi_{\Lambda_\delta,\delta}\circ \lmk\tilde \beta_{\Lambda_\delta}\otimes \tilde
\beta_{(\Lambda_\delta)^c}\rmk
\lmk \caA_{\lmk \Lambda_\delta\rmk^c}\rmk\\
\subset 
\Ad\lmk \pi_0\lmk \alpha\lmk  \tilde v_{\Lambda_\delta,\delta}
\rmk\rmk\rmk
\lmk
\pi_0\circ\alpha(\caA_{\Lambda^c})
\rmk.
\end{split}
\end{align}
We use the notation
\begin{align}
\tilde \Lambda:=
 \Lambda-R_{|\arg \Lambda|+2\delta,\delta}\bm e_\Lambda.
\end{align}

Taking the commutant of (\ref{api}), we obtain
\begin{align}
\begin{split}
&\Ad\lmk \pi_0\circ\alpha\lmk  \tilde v_{\Lambda_\delta,\delta}\rmk \rmk
\lmk
\pi_0\circ\alpha(\caA_{\Lambda^c})'
\rmk
\subset \pi_0\lmk \caA_{\lmk \Lambda_\delta\rmk ^c}\rmk'\\
&\subset \Ad\lmk U_{\Lambda_\delta,\delta}\rmk\lmk 
\pi_0\lmk \caA_{\lmk \tilde \Lambda\rmk_{2\delta}}\rmk''
\rmk
= \Ad\lmk U_{\Lambda_\delta,\delta}\rmk\lmk 
\pi_0\alpha\alpha^{-1}
\lmk \caA_{\lmk \tilde \Lambda\rmk_{2\delta}}\rmk''
\rmk\\
&=
\Ad\lmk
U_{\Lambda_\delta,\delta}
\pi_0\alpha\lmk \tilde v_{\tilde\Lambda_{2\delta}, \delta}
\rmk
\rmk\lmk
\lmk \pi_0\alpha
\circ\tilde \Xi_{\tilde\Lambda_{2\delta}, \delta}
\lmk
 \caA_{\tilde\Lambda_{2\delta}}
\rmk\rmk''\rmk\\
&\subset
\Ad\lmk
U_{\Lambda_\delta,\delta}
\pi_0\alpha\lmk \tilde v_{\tilde\Lambda_{2\delta}, \delta}
\rmk
\rmk
\lmk 
\pi_0\alpha
\lmk
 \caA_{\tilde\Lambda_{3\delta}}
\rmk''\rmk
\end{split}
\end{align}
Hence we obtain
\begin{align}
&\pi_0\circ\alpha(\caA_{\Lambda^c})'\subset
\Ad\lmk \pi_0\circ\alpha\lmk  \tilde v_{\Lambda_\delta,\delta}\rmk ^*
 U_{\Lambda_\delta,\delta}
 \pi_0\alpha\lmk \tilde v_{\tilde \Lambda_{2\delta}, \delta}
\rmk
 \rmk\lmk
 \pi_0\alpha
\lmk
 \caA_{\tilde  \Lambda_{3\delta}}
\rmk''\rmk.
\end{align}
Note that
\begin{align}
\tilde \Lambda_{3\delta}=\lmk \Lambda-R_{|\arg \Lambda|+2\delta,\delta}\bm e_\Lambda\rmk_{3\delta}
\subset \lmk \Lambda-R_{\varphi,\varepsilon}^{(1)}\bm e_\Lambda\rmk_{\varepsilon}.
\end{align}
Hence setting
\begin{align}
U_{\Lambda\epsilon}^{(1)}:=\pi_0\circ\alpha\lmk  \tilde v_{\Lambda_\delta,\delta}\rmk ^*
 U_{\Lambda_\delta,\delta}
 \pi_0\alpha\lmk \tilde v_{\tilde \Lambda_{2\delta}, \delta}
\rmk,
\end{align}
we obtain
\begin{align}
\pi_0\circ\alpha(\caA_{\Lambda^c})'\subset
\Ad\lmk
U_{\Lambda\epsilon}^{(1)}
 \rmk\lmk
 \pi_0\alpha
\lmk
 \caA_{\lmk \Lambda-R_{\varphi,\varepsilon}^{(1)}\bm e_\Lambda\rmk_{\varepsilon}}
\rmk''\rmk,
\end{align}
proving (i).
For any $t\ge 0$,
\begin{align}
\begin{split}
&\lV
\alpha\lmk  \tilde v_{\Lambda_\delta,\delta}\rmk-
\Ad\lmk v'_{\Lambda_{3\delta}-\frac t2 \bm e_{\Lambda},\delta, \delta,\frac t2 }\rmk\circ
\Xi_{\Lambda_{3\delta}-\frac t2 \bm e_{\Lambda}, \delta}
\circ\beta_{\Lambda_{3\delta}-\frac t2 \bm e_{\Lambda}}\lmk  \tilde v'_{\Lambda_\delta,\delta,\delta,
\frac t2}\rmk
\rV\\
&\le 
\lV
\alpha\lmk  \tilde v_{\Lambda_{\delta}, \delta}\rmk-
\alpha\lmk  \tilde v'_{\Lambda_\delta,\delta,\delta,
\frac t2} \rmk
\rV\\
&+
2 \lV  v'_{\Lambda_{3\delta}-\frac t2\bm e_{\Lambda},\delta, \delta,\frac t2 }- 
v_{\Lambda_{3\delta}-\frac t2\bm e_{\Lambda}, \delta}\rV\\
&\le
g_{|\arg\Lambda|+2\delta,\delta,\delta}\lmk \frac t2\rmk
+2g_{|\arg \Lambda|+6\delta,\delta,\delta}\lmk \frac t2\rmk
=g_{\varphi+2\delta,\delta,\delta}\lmk \frac t2\rmk
+2g_{\varphi+6\delta,\delta,\delta}\lmk \frac t2\rmk.
\end{split}
\end{align}
Similarly, we have
\begin{align}
\begin{split}
&\lV
\alpha\lmk  \tilde v_{{\tilde \Lambda}_{2\delta},\delta}\rmk-
\Ad\lmk v'_{{\tilde \Lambda}_{4\delta}-\frac t2 \bm e_{{\tilde \Lambda}},\delta, \delta,\frac t2 }\rmk\circ
\Xi_{{\tilde \Lambda}_{4\delta}-\frac t2 \bm e_{{\tilde \Lambda}}, \delta}
\circ\beta_{{\tilde \Lambda}_{4\delta}-\frac t2 \bm e_{{\tilde \Lambda}}}\lmk  \tilde v'_{{\tilde \Lambda}_{2\delta},\delta,\delta,
\frac t2}\rmk
\rV\\
&\le
g_{\varphi+4\delta,\delta,\delta}\lmk \frac t2\rmk
+2g_{\varphi+8\delta,\delta,\delta}\lmk \frac t2\rmk.
\end{split}
\end{align}

For each $t\ge 0$, set
\begin{align}
\begin{split}
 { U_{t}}:&=\pi_0\lmk \Ad\lmk v'_{\Lambda_{3\delta}-\frac t2\bm e_{\Lambda},\delta, \delta,\frac t2 }\rmk\circ
\Xi_{\Lambda_{3\delta}-\frac t2 \bm e_{\Lambda}, \delta}
\circ\beta_{\Lambda_{3\delta}-\frac t2 \bm e_{\Lambda}}\lmk  \tilde {v'}_{\Lambda_\delta,\delta,\delta,\frac t2}\rmk\rmk^*
\cdot
\tilde U_{\Lambda_\delta,\delta,\delta,\frac t2}\\
&\pi_0\lmk\Ad
\lmk v'_{{\tilde \Lambda}_{4\delta}-\frac t2 \bm e_{{\tilde \Lambda}},\delta, \delta,\frac t2 }\rmk\circ
\Xi_{{\tilde \Lambda}_{4\delta}-\frac t2 \bm e_{{\tilde \Lambda}}, \delta}
\circ\beta_{{\tilde \Lambda}_{4\delta}-\frac t2 \bm e_{{\tilde \Lambda}}}\lmk  \tilde v'_{{\tilde \Lambda}_{2\delta},\delta,\delta,
\frac t2}\rmk\rmk
\\
&\in 
\pi_0\lmk \caA_{\Lambda_{5\delta}-t\bm e_{\Lambda}}\rmk''
\cdot
\pi_0\lmk\caA_{\Lambda_{3\delta}-\frac t2\bm e_{\Lambda}}\rmk''
\pi_0\lmk \caA_{\tilde \Lambda_{6\delta}-t\bm e_{\Lambda}}\rmk''
\subset \pi_0\lmk
\caA_{\Lambda_{6\delta}-\lmk R_{\varphi+2\delta,\delta}+t\rmk}\bm e_{\Lambda}
\rmk''.
\end{split}
\end{align}
Because 
\begin{align}
{\Lambda_{6\delta}-\lmk R_{\varphi+2\delta,\delta}+t\rmk}\bm e_{\Lambda}
\subset \Lambda_{\varepsilon+\delta_{0}}-\lmk t+R_{\varphi+2\delta,\delta}\rmk\bm e_{\Lambda}
=\Lambda_{\varepsilon+\delta_{0}}-\lmk t+R_{\varphi,\varepsilon}^{(1)}\rmk\bm e_{\Lambda},
\end{align}
we have
\begin{align}
{ U_{t}}\in  \pi_0\lmk
\caA_{\Lambda_{\varepsilon+\delta_{0}}-\lmk t+R_{\varphi,\varepsilon}^{(1)}\rmk\bm e_{\Lambda}}
\rmk''
\end{align}
for any $t\ge 0$ and $\delta_{0}>0$.
For this $U_{t}$, we have
\begin{align}
\begin{split}
&\lV
  U_{t}
-U_{\Lambda\epsilon}^{(1)}
\rV\\
&\le
g_{\varphi+2\delta,\delta,\delta}\lmk \frac t2\rmk
+2g_{\varphi+6\delta,\delta,\delta}\lmk \frac t2\rmk
+f_{\varphi+2\delta,\delta,\delta}\lmk \frac t2\rmk
+g_{\varphi+4\delta,\delta,\delta}\lmk \frac t2\rmk
+2g_{\varphi+8\delta,\delta,\delta}\lmk \frac t2\rmk
 =f_{\varphi,\varepsilon,\delta_0}^{(1)}\lmk t+R_{\varphi+2\delta,\delta}\rmk.
 \end{split}
\end{align}
Hence setting
\begin{align}
{\tilde U_{\Lambda,\varepsilon, \delta_{0}, t}}^{(1)}
:=\left\{
\begin{gathered}
U_{t-R_{\varphi+2\delta,\delta}},\quad t\ge R_{\varphi+2\delta,\delta},\\
\unit,\quad t< R_{\varphi+2\delta,\delta}
\end{gathered}
\right.,
\end{align}
we obtain
\begin{align}
{\tilde U_{\Lambda,\varepsilon, \delta_{0}, t}}^{(1)}\in  \pi_0\lmk
\caA_{\Lambda_{\varepsilon+\delta_{0}}-t\bm e_{\Lambda}}
\rmk'',
\end{align}
and
\begin{align}
\lV
{\tilde U_{\Lambda,\varepsilon, \delta_{0}, t}}^{(1)}-U_{\Lambda\epsilon}^{(1)}
\rV
\le f_{\varphi,\varepsilon,\delta_0}^{(1)}\lmk t\rmk.
\end{align}
This proves (ii).
\end{proof}
From the proof of the proposition, we obtain the following Lemma which we will use later.
\begin{lem}\label{intr}
Let $(\caH,\pi_0)$ be a representation of $\at$
satisfying the approximate Haag duality and $\alpha$
an approximately-factorizable automorphism $\alpha$ on $\at$.
With the notations in Definition \ref{assum7} and Definition \ref{qfdef},
for any cone $\Lambda$, $\delta>0$, $t\ge R_{|\arg \Lambda|+2\delta,\delta}$ with $|\arg\Lambda|+16\delta<2\pi$,
 there is a unitary $\tilde w_{\Lambda\delta t}$ on $\caH$
 such that
 \begin{align}
 \Ad\lmk \tilde w_{\Lambda\delta t}\rmk\lmk \pi_0\circ\alpha(\caA_{\Lambda^c})'\rmk
 \subset \pi_0\lmk \caA_{\Lambda_{5\delta}-t\bm e_{\Lambda}}\rmk''
 \end{align} and
\begin{align}
\lV
\tilde w_{\Lambda,\delta,t}-\unit
\rV
\le
g_{\varphi+2\delta,\delta,\delta}\lmk \frac t2\rmk
+2g_{\varphi+6\delta,\delta,\delta}\lmk \frac t2\rmk
+f_{\varphi+2\delta,\delta,\delta}\lmk \frac t2\rmk,
\end{align}
with $\varphi=|\Lambda|$.
\end{lem}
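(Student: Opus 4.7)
The lemma is really a byproduct of the proof of Proposition \ref{staah}: instead of applying the double-layered approximation used there, only the first layer is needed here, because the target algebra is $\pi_0(\caA_\cdot)''$ and not $\pi_0\circ\alpha(\caA_\cdot)''$, so the second approximation (that of $\alpha(\tilde v_{\tilde\Lambda_{2\delta},\delta})$) can be dropped. This also explains why the bound in the lemma contains only three of the five terms present in the $f^{(1)}$ bound in the proposition.

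First I would set up an exact conjugation statement. Applying the approximate factorization of $\alpha^{-1}$ to $\caA_{(\Lambda_\delta)^c}$, pushing forward by $\alpha$, representing in $\pi_0$ and taking commutants yields
\[
\pi_0\circ\alpha(\caA_{\Lambda^c})' \subset \Ad\bigl(\pi_0\circ\alpha(\tilde v_{\Lambda_\delta,\delta})^*\bigr)\bigl(\pi_0(\caA_{(\Lambda_\delta)^c})'\bigr).
\]
Approximate Haag duality of $\pi_0$ applied to the cone $\Lambda_\delta$ with parameter $\delta$ then gives, with $\tilde\Lambda:=\Lambda-R_{\varphi+2\delta,\delta}\bm e_\Lambda$,
\[
\pi_0(\caA_{(\Lambda_\delta)^c})' \subset \Ad(U_{\Lambda_\delta,\delta})\bigl(\pi_0(\caA_{\tilde\Lambda_{2\delta}})''\bigr).
\]
Combining, $\pi_0\circ\alpha(\caA_{\Lambda^c})'\subset\Ad(X)(\pi_0(\caA_{\tilde\Lambda_{2\delta}})'')$ with $X:=\pi_0\circ\alpha(\tilde v_{\Lambda_\delta,\delta})^*\,U_{\Lambda_\delta,\delta}$.

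Next I would construct a local approximation $\tilde X\in\pi_0(\caA_{\Lambda_{5\delta}-t\bm e_\Lambda})''$ of $X$. Condition (ii) of approximate Haag duality supplies $\tilde U_{\Lambda_\delta,\delta,\delta,t/2}\in\pi_0(\caA_{\Lambda_{3\delta}-(t/2)\bm e_\Lambda})''$ approximating $U_{\Lambda_\delta,\delta}$ with error $f_{\varphi+2\delta,\delta,\delta}(t/2)$. For $\alpha(\tilde v_{\Lambda_\delta,\delta})$, I would factorize $\alpha$ using the cone $\Lambda_{3\delta}-(t/2)\bm e_\Lambda$ and replace both the outer conjugating unitary and the inner $\tilde v_{\Lambda_\delta,\delta}$ by their localized versions at scale $t/2$ from Definition \ref{qfdef}(ii); a triangle-inequality then yields an approximation error $g_{\varphi+2\delta,\delta,\delta}(t/2)+2g_{\varphi+6\delta,\delta,\delta}(t/2)$ --- precisely the first computation carried out in the proof of Proposition \ref{staah}. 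A short bookkeeping check shows that both local approximations lie in $\caA_{\Lambda_{5\delta}-t\bm e_\Lambda}$, so their product $\tilde X$ is a unitary in $\pi_0(\caA_{\Lambda_{5\delta}-t\bm e_\Lambda})''$ with $\|X-\tilde X\|$ bounded by the sum of the three terms above.

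Finally I would set $\tilde w_{\Lambda,\delta,t}:=\tilde X X^*$. This is a unitary on $\caH$ with $\|\tilde w_{\Lambda,\delta,t}-\unit\|=\|\tilde X-X\|$ satisfying the claimed bound, and for any $a\in\pi_0\circ\alpha(\caA_{\Lambda^c})'$,
\[
\Ad(\tilde w_{\Lambda,\delta,t})(a)=\Ad(\tilde X)\bigl(\Ad(X^*)(a)\bigr)\in\Ad(\tilde X)\bigl(\pi_0(\caA_{\tilde\Lambda_{2\delta}})''\bigr)\subset\pi_0(\caA_{\Lambda_{5\delta}-t\bm e_\Lambda})'',
\]
where the last inclusion uses $\tilde\Lambda_{2\delta}\subset\Lambda_{5\delta}-t\bm e_\Lambda$ (which is where the hypothesis $t\ge R_{\varphi+2\delta,\delta}$ enters) together with $\tilde X\in\pi_0(\caA_{\Lambda_{5\delta}-t\bm e_\Lambda})''$. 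The main obstacle is the cone-geometry bookkeeping required to force every localized piece into the single cone $\Lambda_{5\delta}-t\bm e_\Lambda$; this is essentially the same calculation as the $U_t$-construction in the proof of Proposition \ref{staah}, just truncated by one layer.
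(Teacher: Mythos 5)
Your proposal is correct and is essentially the paper's own proof: the unitary $\tilde X X^{*}$ you construct is literally the paper's $\tilde w_{\Lambda,\delta,t}=\pi_0\lmk \Ad\lmk v'_{\Lambda_{3\delta}-\frac t2\bm e_{\Lambda},\delta,\delta,\frac t2}\rmk\circ\Xi_{\Lambda_{3\delta}-\frac t2\bm e_{\Lambda},\delta}\circ\beta_{\Lambda_{3\delta}-\frac t2\bm e_{\Lambda}}\lmk \tilde v'_{\Lambda_\delta,\delta,\delta,\frac t2}\rmk\rmk^{*}\tilde U_{\Lambda_\delta,\delta,\delta,\frac t2}U_{\Lambda_\delta,\delta}^{*}\pi_0\circ\alpha\lmk \tilde v_{\Lambda_\delta,\delta}\rmk$, obtained by truncating the $U_t$-construction of Proposition \ref{staah} by one layer, with the same error terms and the same cone bookkeeping ($t\ge R_{\varphi+2\delta,\delta}$ forcing $\tilde\Lambda_{2\delta}\subset\Lambda_{5\delta}-t\bm e_\Lambda$). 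No gaps.
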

\begin{proof}
With the notation in the proof of Proposition \ref{staah},
set
\begin{align}
\begin{split}
\tilde w_{\Lambda,\delta,t}:=&
\pi_{0}\lmk
\Ad\lmk v'_{\Lambda_{3\delta}-\frac t2 \bm e_{\Lambda},\delta, \delta,\frac t2 }\rmk\circ
\Xi_{\Lambda_{3\delta}-\frac t2 \bm e_{\Lambda}, \delta}
\circ\beta_{\Lambda_{3\delta}-\frac t2 \bm e_{\Lambda}}\lmk  \tilde v'_{\Lambda_\delta,\delta,\delta,
\frac t2}\rmk
\rmk^{*}
\tilde U_{\Lambda_\delta,\delta,\delta,\frac t2}
U_{\Lambda_\delta,\delta}^{*}
\pi_0\circ\alpha\lmk  \tilde v_{\Lambda_\delta,\delta}\rmk.
 \end{split}
 \end{align}
\end{proof}

\subsection{Main result}
Let us recall the definition of the superselection criterion.
For representations $\pi_1,\pi_2$ of a $C^*$-algebra, we write $\pi_1\simeq_{u.e.}\pi_2$
if they are unitarily equivalent.
\begin{defn}
Let $(\caH,\pi_0)$ be an irreducible representation of $\caA_{\bbZ^2}$.
We say a representation $\pi$ of $\caA_{\bbZ^2}$ on $\caH$
satisfies the superselection criterion for $\pi_0$ if 
\[
\pi\vert_{\caA_{\Lambda^c}}\simeq_{u.e.}
 \pi_0\vert_{{\caA_{\Lambda^c}}},
\]
for any cone $\Lambda$ in $\bbR^2$.
We say that $\pi_0$ has a trivial sector theory if 
any representation satisfying the superselection criterion for $\pi_0$
is quasi-equivalent to $\pi_0$.
Otherwise, we say $\pi_0$ has a non-trivial sector theory.

\end{defn}

In this paper, we show the following theorem.
For the more precise statement, see Theorem \ref{bracatthm} and Theorem \ref{monoidalthm}.
\begin{ithm}
Let  $\omega_{\Phi}$ be a pure ground state of a uniformly bounded finite range interaction $\Phi$
on $\caA_{\bbZ^2}$
satisfying the gap condition. Suppose that its
GNS representation satisfies
 the approximate Haag duality and has a non-trivial super selection sector.
 Then its superselection sectors 
 form a braided $C^*$-tensor category.
 If two of such states $\omega_{\Phi_1},\omega_{\Phi_2}$
 are connected via an approximately factorizable automorphism $\alpha$
 as $\omega_{\Phi_2}=\omega_{\Phi_1}\circ\alpha$,
 then  corresponding  braided $C^*$-tensor categories
 are monoidal equivalent.
\end{ithm}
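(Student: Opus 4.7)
My plan is to adapt the classical Doplicher--Haag--Roberts / Buchholz--Fredenhagen construction of a braided $C^*$-tensor category from superselection sectors to the present setting, systematically replacing uses of exact Haag duality by the approximate version of Definition \ref{assum7} and tracking all error terms via $f_{\varphi,\varepsilon,\delta}$ and, for Part 2, $g_{\varphi,\delta,\delta'}$.

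For Part 1, I take the objects of the category to be representations $\pi$ of $\at$ on $\caH$ satisfying the superselection criterion for $(\caH,\pi_0)$, and morphisms to be bounded intertwiners. Given $\pi$ and a cone $\Lambda$, the superselection criterion yields a unitary $V\in\caU(\caH)$ with $\pi=\Ad(V)\circ\pi_0$ on $\caA_{\Lambda^c}$, so $V\in\pi_0(\caA_{\Lambda^c})'$. By approximate Haag duality, $V\in \Ad(U_{\Lambda,\varepsilon})\bigl(\pi_0(\caA_{(\Lambda-R_{\varphi,\varepsilon}\bm e_\Lambda)_\varepsilon})''\bigr)$, and $U_{\Lambda,\varepsilon}$ is itself approximated by elements of $\pi_0(\caA_{\Lambda_{\varepsilon+\delta}-t\bm e_\Lambda})''$ with decay $f_{\varphi,\varepsilon,\delta}(t)$. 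This allows me to associate to $\pi$ an \emph{approximate} localized endomorphism of $\at$ which plays the role of the DHR endomorphism. The tensor product is then defined by composing two such approximate endomorphisms localized in suitable cones, with the existence of the $t\to\infty$ and $\varepsilon\to 0$ limits ensured by the decay estimates; independence of all auxiliary choices is checked via intertwiners as in the classical case.

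The braiding is obtained by picking two cones $\Lambda_1,\Lambda_2$, one counterclockwise of the other, localizing the two representations there, and comparing the two composition orders; the comparison unitary converges as the cones are moved apart. Hexagon, pentagon and naturality then reduce to the statement that several \emph{a priori} different limits coincide, which again follows from the decay bounds. For Part 2, I would define $F\colon\caC_1\to\caC_2$ by $F(\pi):=\pi\circ\alpha$, $F(T):=T$; Proposition \ref{staah} guarantees that the codomain is well-defined, and the factorization in Definition \ref{qfdef}(i) ensures $F(\pi)$ still satisfies the superselection criterion for $\pi_2$. The tensorator $\mu_{\pi,\sigma}\colon F(\pi)\otimes F(\sigma)\to F(\pi\otimes\sigma)$ is built out of the factorization unitaries $v_{\Lambda\delta},\tilde v_{\Lambda\delta}$, whose approximate locality, quantified by Lemma \ref{intr}, matches precisely the error terms appearing in the definition of $\otimes$; compatibility with the braiding is checked analogously.

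The principal obstacle is analytic rather than conceptual: every identity that is exact in the classical DHR construction becomes an approximate identity, and I must show that the associated limits in $\varepsilon\to 0$ and $t\to\infty$ exist, are independent of the many auxiliary choices (cones, unitaries $V$, the $U_{\Lambda,\varepsilon}$ and $\tilde U_{\Lambda,\varepsilon,\delta,t}$, and in Part 2 the $v_{\Lambda\delta},\tilde v_{\Lambda\delta}$ together with the approximants $v'_{\Lambda,\delta,\delta',t}$), and intertwine correctly with composition of morphisms. This is where essentially all of the work beyond the AQFT literature sits; Lemma \ref{intr} has clearly been designed precisely to control the interaction between the two layers of approximation (in the automorphism $\alpha$ and in Haag duality) that appears once one tries to combine them.
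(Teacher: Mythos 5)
Your outline reproduces the BF/DHR skeleton that the paper also follows, but it has a genuine gap: you never use the gap condition, and correspondingly you never address the existence of direct sums and subobjects, which are part of the definition of a braided $C^*$-tensor category. In AQFT these come essentially for free from Haag duality plus cyclicity of the vacuum for the local algebras (property B / proper infiniteness); here $\Omega$ is \emph{not} cyclic for $\pi_0(\caA_\Lambda)''$, so a new argument is required. The paper supplies it in Section \ref{braidsec}: the gap condition rules out a normal tracial state on $\pi_0(\caA_\Lambda)''$ (Lemma \ref{lem36}), the non-trivial sector theory rules out type $I$ (Theorem \ref{no}), hence the cone algebras and $\amf(\Lambda)$ are properly infinite (Lemma \ref{lem37}), and only then can one build direct sums (Lemma \ref{directsum}) and subobjects (Lemma \ref{lem41}, which itself needs the technical Lemmas \ref{lin}--\ref{nmr2} to handle the approximate unitaries). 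Without something in this role your category is not a $C^*$-tensor category in the sense required by Theorem \ref{bracatthm}, and no amount of tracking $f_{\varphi,\varepsilon,\delta}$-errors will produce it.

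Two further points where your plan diverges from what actually works. First, you propose to attach to each sector an ``approximate localized endomorphism'' of $\at$ and to define $\otimes$ through $t\to\infty$, $\varepsilon\to 0$ limits whose existence and choice-independence you only assert; the paper avoids this by extending each sector to an \emph{exact} $*$-homomorphism $\trl{\rho}{0}$ of the auxiliary algebra $\btv$ (Lemma \ref{lem9}), so that composition is defined without limits and only the braiding $\epsilon_+^{(\Lambda_0)}$ (Definition \ref{epdef}) involves a limit, controlled by an intertwiner calculus that is exact. Your version would have to prove convergence and coherence of many coupled limits, which is precisely the content you defer. Second, in Part 2 the naive $F(\pi):=\pi\circ\alpha$, $F(T):=T$ does not land in the target category as constructed (objects are normalized to agree with $\pi_2$ outside a fixed cone $\Lambda_0^{(2)}$); one must conjugate by the unitaries $Y_{\rho}^{\Lambda}$ built from the factorization data of Definition \ref{qfdef}, set $F(R)=(Y_{\sigma})^*RY_{\rho}$, and then construct the tensorator and verify braided monoidality, which is where Lemma \ref{intr} and Lemmas \ref{lem319}, \ref{lem320} are actually consumed. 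So the overall strategy is recognizably the right one, but the parts you leave as ``analytic bookkeeping'' include the one argument (proper infiniteness from the spectral gap) that is conceptually new relative to AQFT and cannot be recovered from the approximation estimates alone.
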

In section \ref{supersec}, following \cite{BF},
we extend our superselection sectors to a larger $C^*$-algebra $\btv$.
As in \cite{BF}, this allows us to define the  composition
of superselection sectors (section \ref{compsec}), which gives the tensor
of the object in our braided tensor category.
In section \ref{intsec}, we introduce tensor for the morphisms
(i.e., intertwiners) and further obtain braiding operators, following the argument in \cite{DHRI}
.
Everything from section \ref{supersec} to section \ref{intsec}
is parallel to that of \cite{BF} and \cite{DHRI}, but the fact that our Haag duality
is only an approximate one requires some additional estimate.
In section \ref{braidsec}, we prove that
our superselection sectors form a braided $C^*$-tensor category.
Here, we need the gap condition to show the existence of
direct sums and subobjects.
This part requires arguments that are different from those in AQFT.
We then show the monoidal equivalence between the braided $C^*$-categories
obtained from states related by an approximate-factorizable automorphism
in section \ref{stasec}.

\section{Superselection sectors and their extensions}\label{supersec}
Throughout this section we consider the following setting.
\begin{setting}\label{setni}
Let $(\caH,\pi_0)$ be an irreducible representation of $\caA_{\bbZ^2}$.
We denote by $\caO_0$ the set of all representations of $\at$ on $\caH$
satisfying the superselection criterion for $\pi_0$.
For $\rho,\sigma\in \caO_0$ and $R\in \caB(\caH)$,
we say that $R$ is an intertwiner from $\rho$ to $\sigma$
if 
\begin{align}
R\rho(A)=\sigma(A) R,\quad A\in \at.
\end{align}
We denote by $(\rho,\sigma)$ the set of all intertwiners from $\rho$ to $\sigma$.
For each $\rho\in\oz$ and a cone 
$\Lambda$, we denote by $\caV_{\rho,\Lambda}$
the set of all unitaries $V_{\rho,\Lambda}$
satisfying
\begin{align}
\Ad\lmk V_{\rho,\Lambda}\rmk\circ\rho\vert_{\caA_{\Lambda^c}}=
 \pi_0\vert_{{\caA_{\Lambda^c}}}.
\end{align}
(By $\rho\in\oz$, $\caV_{\rho,\Lambda}$ is not empty.)
\end{setting}
\begin{lem}\label{lem3}
Consider Setting \ref{setni}.
Let $\Lambda_1,\Lambda_2$ be cones, $\rho,\sigma\in \caO_0$, and
$V_{\rho,\Lambda_1}\in \caV_{\rho,\Lambda_1}$, $V_{\sigma,\Lambda_2}\in \caV_{\sigma,\Lambda_2}$.
Let $R\in\caB(\caH)$ be an intertwiner from $\rho$ to $\sigma$, i.e.,
$\sigma(A) R=R\rho(A)$ for $A\in \at$.
Then
the operator 
$V_{\sigma,\Lambda_2} RV_{\rho,\Lambda_1}^*$
belongs to $\pi_0\lmk \caA_{\lmk \Lambda_1\cup\Lambda_2\rmk^c}\rmk'$.
\end{lem}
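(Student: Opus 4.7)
The plan is to verify directly that $V_{\sigma,\Lambda_2} R V_{\rho,\Lambda_1}^*$ commutes with every element of $\pi_0(\caA_{(\Lambda_1 \cup \Lambda_2)^c})$, which by the double commutant containment suffices.

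First I would unpack the intertwining relations. From $V_{\rho,\Lambda_1} \in \caV_{\rho,\Lambda_1}$, for every $A \in \caA_{\Lambda_1^c}$ one has $\pi_0(A) = V_{\rho,\Lambda_1} \rho(A) V_{\rho,\Lambda_1}^*$, equivalently $\rho(A) V_{\rho,\Lambda_1}^* = V_{\rho,\Lambda_1}^* \pi_0(A)$. Similarly, for every $A \in \caA_{\Lambda_2^c}$ we have $V_{\sigma,\Lambda_2} \sigma(A) = \pi_0(A) V_{\sigma,\Lambda_2}$.

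Next, for $A \in \caA_{(\Lambda_1\cup\Lambda_2)^c}$, note that $A$ lies in both $\caA_{\Lambda_1^c}$ and $\caA_{\Lambda_2^c}$, so both intertwining identities apply. Combining them with the defining intertwining relation $\sigma(A) R = R \rho(A)$ gives a short chain of equalities:
\begin{align*}
\pi_0(A)\, V_{\sigma,\Lambda_2} R V_{\rho,\Lambda_1}^*
&= V_{\sigma,\Lambda_2}\, \sigma(A)\, R\, V_{\rho,\Lambda_1}^* \\
&= V_{\sigma,\Lambda_2}\, R\, \rho(A)\, V_{\rho,\Lambda_1}^* \\
&= V_{\sigma,\Lambda_2} R V_{\rho,\Lambda_1}^*\, \pi_0(A).
\end{align*}
This shows $V_{\sigma,\Lambda_2} R V_{\rho,\Lambda_1}^*$ commutes with $\pi_0(\caA_{(\Lambda_1\cup\Lambda_2)^c})$, hence lies in its commutant.

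There is no real obstacle here: the statement is essentially a bookkeeping exercise with the intertwining relations for $R$, $V_{\rho,\Lambda_1}$, and $V_{\sigma,\Lambda_2}$, valid on the intersection $\caA_{\Lambda_1^c} \cap \caA_{\Lambda_2^c} \supset \caA_{(\Lambda_1\cup\Lambda_2)^c}$. The only point worth flagging is that the three intertwining identities are used on the correct algebras, which is automatic since $(\Lambda_1 \cup \Lambda_2)^c \subset \Lambda_i^c$ for $i=1,2$ and $R$ intertwines on all of $\at$.
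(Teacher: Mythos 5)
Your proof is correct and is essentially the same as the paper's own argument: both establish the claim by the same three-step computation, moving $\pi_0(A)$ through $V_{\sigma,\Lambda_2}$, then through $R$ via the intertwining relation, and finally through $V_{\rho,\Lambda_1}^*$, using $(\Lambda_1\cup\Lambda_2)^c\subset\Lambda_1^c\cap\Lambda_2^c$. No differences worth noting.
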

\begin{proof}
For any $A\in \caA_{(\Lambda_1\cup\Lambda_2)^c}$,
we have
\begin{align}
V_{\sigma,\Lambda_2} R V_{\rho,\Lambda_1}^*\pi_0(A)
=V_{\sigma,\Lambda_2} R\rho(A)  V_{\rho,\Lambda_1}^*
=V_{\sigma,\Lambda_2} \sigma(A) R  V_{\rho,\Lambda_1}^*
=\pi_0(A) V_{\rho,\Lambda_2} R V_{\rho,\Lambda_1}^*,
\end{align}
by the definition of $\caV_{\sigma,\Lambda_1}, \caV_{\sigma,\Lambda_2}$ and the fact $A\in \caA_{(\Lambda_1\cup\Lambda_2)^c}$.
\end{proof}
We introduce a $C^*$-subalgebra of $\caB(\caH)$.
\begin{defn}Consider Setting \ref{setni}. For each cone $\Lambda$, we set
\begin{align}
\mathfrak A(\Lambda):=\pi_0\lmk\caA_{\Lambda^c}\rmk'.
\end{align}
For each $\theta\in\bbR$ and $\varphi\in (0,\pi)$,
we set
\begin{align}\label{btvdef}
\caB_{(\theta,\varphi)}:=
\overline{\cup_{\Lambda\in\caC_{(\theta,\varphi)} } \mathfrak A(\Lambda) }.
\end{align}
Here $\overline{\cdot}$ denotes the norm closure.
(See Appendix \ref{apcone} (\ref{ctvdef}) for the definition of $\ctv$.)
\end{defn}
Note that $\amf(\Lambda_1)\subset \amf(\Lambda_2)$
if $\Lambda_1\subset\Lambda_2$.
Because $\caC_{(\theta,\varphi)}$ is an upward filtering set,
$\caB_{(\theta,\varphi)}$ is a $C^*$-algebra.
For each $\Lambda\in\ctv$, we have
\begin{align}
\pi_0(\caA_{\Lambda})\subset\pi_0(\caA_{\Lambda})''\subset 
\pi_0\lmk\caA_{\Lambda^c}\rmk'=\mathfrak A(\Lambda)\subset \btv.
\end{align}
By Lemma \ref{lem4}, this implies that
$\pi_0(\at)$ is a $C^*$-subalgebra of $\btv$.

\begin{lem}\label{lemhoshi}
Consider Setting \ref{setni} and assume the approximate Haag duality.
The unitary $U_{\Lambda,\varepsilon}$ in Definition \ref{assum7}
belongs to $\btv$.
The $*$-algebra $\caB_0:=\cup_{\Lambda\in \ctv}\pi_0(\caA_\Lambda)''$
is norm-dense in $\btv$.
\end{lem}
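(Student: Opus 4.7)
The plan is to use Definition \ref{assum7}(ii) as the main engine for both parts, and to invoke the approximate Haag duality (i) for the density statement.

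For the first assertion, fix $\Lambda\in\ctv$ and set $\varphi'=|\arg\Lambda|$. By Definition \ref{assum7}(ii), for any $\delta>0$ the unitary $U_{\Lambda,\varepsilon}$ is the norm limit, as $t\to\infty$, of the unitaries $\tilde U_{\Lambda,\varepsilon,\delta,t}\in \pi_0(\caA_{\Lambda_{\varepsilon+\delta}-t\bm e_\Lambda})''$. Each shifted and slightly widened cone $\Lambda_{\varepsilon+\delta}-t\bm e_\Lambda$ again lies in $\ctv$ (by the cone calculus recorded in Appendix \ref{apcone}: translation along $-\bm e_\Lambda$ together with a small widening preserves membership in the upward filtering family). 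Hence $\tilde U_{\Lambda,\varepsilon,\delta,t}\in \amf(\Lambda_{\varepsilon+\delta}-t\bm e_\Lambda)\subset\btv$, and the norm limit remains in $\btv$.

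For the density claim, note first that $\pi_0(\caA_\Lambda)''\subset\amf(\Lambda)$ already gives $\caB_0\subset\btv$. Conversely, I would show that every $A\in\amf(\Lambda)$ with $\Lambda\in\ctv$ is a norm limit of elements of $\caB_0$. By Definition \ref{assum7}(i) applied with a suitable small $\varepsilon$, there is $X\in \pi_0(\caA_{(\Lambda-R_{\varphi',\varepsilon}\bm e_\Lambda)_\varepsilon})''$ with $\|X\|=\|A\|$ and $A=U_{\Lambda,\varepsilon}XU_{\Lambda,\varepsilon}^*$. Substituting the approximants from (ii),
\begin{align*}
\bigl\| A-\tilde U_{\Lambda,\varepsilon,\delta,t}\,X\,\tilde U_{\Lambda,\varepsilon,\delta,t}^*\bigr\|
\le 2\|A\|\,f_{\varphi',\varepsilon,\delta}(t)\to 0\qquad(t\to\infty).
\end{align*}
For each $t$, the approximant lies in the von Neumann algebra generated by $\pi_0(\caA_{\Lambda_{\varepsilon+\delta}-t\bm e_\Lambda})''$ and $\pi_0(\caA_{(\Lambda-R_{\varphi',\varepsilon}\bm e_\Lambda)_\varepsilon})''$; by upward filtering of $\ctv$, both of these underlying cones sit inside a common $\Lambda'\in\ctv$, and the approximant therefore lies in $\pi_0(\caA_{\Lambda'})''\subset\caB_0$.

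The only genuinely delicate point will be confirming that the shifted and enlarged cones produced by Definition \ref{assum7}(i) and (ii) remain in $\ctv$ and can be absorbed into a common member of $\ctv$; this is routine cone bookkeeping based on Appendix \ref{apcone}, forcing one to choose $\varepsilon$ and $\delta$ small relative to the geometric data defining $\ctv$. Once that is in hand, the rest is a standard norm-approximation argument, and both conclusions of the lemma follow.
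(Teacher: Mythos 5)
Your proposal is correct and follows essentially the same route as the paper: both parts rest on approximating $U_{\Lambda,\varepsilon}$ in norm by the unitaries $\tilde U_{\Lambda,\varepsilon,\delta,t}\in\pi_0(\caA_{\Lambda_{\varepsilon+\delta}-t\bm e_\Lambda})''\subset\caB_0$ (with $\varepsilon,\delta$ chosen small enough that the fattened translated cones stay in $\ctv$), and then on the inclusion $\amf(\Lambda)\subset\Ad(U_{\Lambda,\varepsilon})\lmk\pi_0\lmk\caA_{(\Lambda-R_{|\arg\Lambda|,\varepsilon}\bm e_\Lambda)_\varepsilon}\rmk''\rmk$ from the approximate Haag duality; your explicit re-approximation $\lV A-\tilde U X\tilde U^*\rV\le 2\lV A\rV f_{\varphi',\varepsilon,\delta}(t)$ is just an unpacked version of the paper's observation that $\overline{\caB_0}$ is closed under conjugation by $U_{\Lambda,\varepsilon}\in\overline{\caB_0}$.
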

\begin{proof}
By the definition (\ref{btvdef}), it suffices to show that
for any $\Lambda\in \ctv$, $\amf(\Lambda)$ is in the norm colosure $\overline{\caB_0}$
of $\caB_0$.
For any cone $\Lambda\in\ctv$, choose 
 $\varepsilon>0$ so that
$\Lambda_{4\varepsilon}\in \ctv$.
For each $t\ge 0$ and $\delta>0$, 
 $\tilde U_{\Lambda,\varepsilon,\delta,t}$ in Definition \ref{assum7} belongs to
 $\pi_0\lmk \caA_{\Lambda_{|\arg\Lambda|+\delta}-t\bm e_{\Lambda}}\rmk''\subset \btv$,
 and we have $\lim_{t\to\infty}\lV
U_{\Lambda,\varepsilon}-\tilde U_{\Lambda,\varepsilon,\delta,t}
\rV=0$.
Hence we have
 $U_{\Lambda,\varepsilon}\in \overline{\caB_0}$.
Furthermore
 we have $\pi_0\lmk \caA_{\lmk \Lambda-R_{|\arg\Lambda|,\varepsilon}\bm e_\Lambda\rmk_\varepsilon}\rmk''\subset \caB_0$
 because $\lmk \Lambda-R_{|\arg\Lambda|,\varepsilon}\bm e_\Lambda\rmk_\varepsilon\in\ctv$.
Hence
\begin{align}
\amf(\Lambda)\subset 
\Ad(U_{\Lambda,\varepsilon})\lmk 
\pi_0\lmk \caA_{\lmk \Lambda-R_{|\arg\Lambda|,\varepsilon}\bm e_\Lambda\rmk_\varepsilon}\rmk''
\rmk
\subset  \overline{\caB_0}.
\end{align}
\end{proof}
We also note the following from the approximate Haag duality.
\begin{lem}\label{lem2t6}
Consider Setting \ref{setni} and assume the approximate Haag duality.
For any cone $\Lambda$, a unitary $u_\Lambda\in \amf(\Lambda)$,
$\varepsilon,\delta>0$ with $|\arg\Lambda|+4\varepsilon<2\pi$
and $t\ge R_{|\arg\Lambda|,\varepsilon}$,
there is some unitary $\tilde u_{\Lambda}\in \pi_0\lmk
\caA_{\Lambda_{\varepsilon+\delta}-t\bm e_{\Lambda}}
\rmk''$
such that $\lV u_\Lambda-\tilde u_{\tilde \Lambda}\rV\le 2f_{|\arg\Lambda|, \varepsilon,\delta}(t)$.
\end{lem}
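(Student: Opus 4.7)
The plan is to combine the two clauses of the approximate Haag duality in a direct way. First, invoking (i) of Definition \ref{assum7} on the unitary $u_\Lambda\in\amf(\Lambda)=\pi_0(\caA_{\Lambda^c})'$, I would set $v:=U_{\Lambda,\varepsilon}^{*}\,u_\Lambda\,U_{\Lambda,\varepsilon}$. By (\ref{lem7p}), this $v$ is a unitary belonging to $\pi_0(\caA_{(\Lambda-R_{|\arg\Lambda|,\varepsilon}\bm e_\Lambda)_\varepsilon})''$.

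Next I would approximate the outer conjugating unitary using (ii): define the candidate
\[
\tilde u_\Lambda := \tilde U_{\Lambda,\varepsilon,\delta,t}\, v\, \tilde U_{\Lambda,\varepsilon,\delta,t}^{*},
\]
which is manifestly unitary. To see that it lies in $\pi_0(\caA_{\Lambda_{\varepsilon+\delta}-t\bm e_\Lambda})''$, the crucial geometric point is the cone inclusion $(\Lambda-R_{|\arg\Lambda|,\varepsilon}\bm e_\Lambda)_\varepsilon\subset \Lambda_{\varepsilon+\delta}-t\bm e_\Lambda$ for $t\ge R_{|\arg\Lambda|,\varepsilon}$. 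Since fattening commutes with translation and since translating a cone by $-s\bm e_\Lambda$ only enlarges it as $s$ grows, this reduces to the chain $\Lambda_\varepsilon-R_{|\arg\Lambda|,\varepsilon}\bm e_\Lambda\subset \Lambda_\varepsilon-t\bm e_\Lambda\subset \Lambda_{\varepsilon+\delta}-t\bm e_\Lambda$, all of which are immediate from the cone calculus collected in Appendix \ref{apcone}. With this inclusion in hand, $v$ lies in the same von Neumann algebra as $\tilde U_{\Lambda,\varepsilon,\delta,t}$ (which lies there by (ii)), hence $\tilde u_\Lambda$ does as well.

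The norm bound is a standard two-term telescoping: writing
\[
u_\Lambda-\tilde u_\Lambda=(U_{\Lambda,\varepsilon}-\tilde U_{\Lambda,\varepsilon,\delta,t})\,v\,U_{\Lambda,\varepsilon}^{*}+\tilde U_{\Lambda,\varepsilon,\delta,t}\,v\,(U_{\Lambda,\varepsilon}^{*}-\tilde U_{\Lambda,\varepsilon,\delta,t}^{*})
\]
and using unitarity of $v$ together with the estimate (\ref{uappro}) immediately yields $\lV u_\Lambda-\tilde u_\Lambda\rV\le 2 f_{|\arg\Lambda|,\varepsilon,\delta}(t)$, which is precisely the bound asserted.

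The only step requiring any thought is the cone-inclusion verification, but it is routine from the appendix conventions; everything else is essentially forced by the two clauses of Definition \ref{assum7}, so no genuine obstacle arises.
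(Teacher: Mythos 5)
Your proposal is correct and is essentially the paper's own proof: the paper simply defines $\tilde u_{\tilde\Lambda}:=\Ad\lmk \tilde U_{\Lambda,\varepsilon,\delta,t}U_{\Lambda,\varepsilon}^*\rmk(u_\Lambda)$, which is exactly your $\tilde U_{\Lambda,\varepsilon,\delta,t}\,v\,\tilde U_{\Lambda,\varepsilon,\delta,t}^*$. You merely spell out the details the paper leaves implicit — the cone inclusion $\lmk\Lambda-R_{|\arg\Lambda|,\varepsilon}\bm e_\Lambda\rmk_\varepsilon\subset\Lambda_{\varepsilon+\delta}-t\bm e_\Lambda$ (where the hypothesis $t\ge R_{|\arg\Lambda|,\varepsilon}$ is used) and the two-term telescoping estimate — and these are verified correctly.
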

\begin{proof}
Set
\begin{align}
\tilde u_{\tilde \Lambda}
:=\Ad\lmk \tilde U_{\Lambda\varepsilon\delta t} U_{\Lambda\varepsilon}^*\rmk
\lmk
u_{\Lambda}\rmk.
\end{align}
\end{proof}

\begin{defn}\label{def36}
Consider Setting \ref{setni} and assume the approximate Haag duality.
We say that a set of cones $\caS_1$ 
is distal from a set of cones $\caS_2$
if there are cones $\tilde\Lambda_1,\tilde\Lambda_2$ and $\varepsilon>0$
 such that 
\begin{align}\label{iep}
\cup_{\Lambda_1\in \caS_1} \Lambda_1\subset\tilde \Lambda_1,\quad
\cup_{\Lambda_2\in \caS_2} \Lambda_2\subset\tilde \Lambda_2
\end{align}
and 
\begin{align}\label{tep}
\lmk \tilde \Lambda_1-R_{|\arg\tilde \Lambda_1|, \varepsilon}\bm e_{\tilde \Lambda_1}\rmk_{\varepsilon}
\subset \tilde \Lambda_2^c.
\end{align}
We also say $\Lambda_1$ is distal from $\Lambda_2$ if
$\{\Lambda_1\}$ is distal from $\{\Lambda_2\}$.
Note that if $\Lambda_1$ is distal from $\Lambda_2$ then $\Lambda_1\cap\Lambda_2=\emptyset$.

Let $\theta\in \bbR$, $\varphi\in (0,\pi)$.
We say that a set of cones $\caS_1\subset \caC_{(\theta,\varphi)}$ 
is distal from a set of cones $\caS_2\subset \caC_{(\theta,\varphi)}$
with the forbidden direction ${(\theta,\varphi)}$
if $\caS_1$ 
is distal from $\caS_2$ and the above 
cones $\tilde\Lambda_1,\tilde\Lambda_2$ and $\varepsilon>0$ can be taken so that
\begin{align}\label{lep}
\lmk \tilde\Lambda_1\rmk_{\varepsilon},
\lmk \tilde\Lambda_2\rmk_{\varepsilon}\in \caC_{(\theta,\varphi)},\quad
\arg\lmk \tilde \Lambda_1\rmk_{\varepsilon}\cap \arg\lmk \tilde \Lambda_2\rmk_{\varepsilon}=\emptyset.
\end{align}

We write
$\caS_1\perp_{(\theta,\varphi)}\caS_2$ if
$\caS_1\subset \caC_{(\theta,\varphi)}$ 
is distal from a set of cones $\caS_2\subset \caC_{(\theta,\varphi)}$
with the forbidden direction ${(\theta,\varphi)}$, and
$\caS_2$ 
is distal from a set of cones $\caS_1$
with the forbidden direction ${(\theta,\varphi)}$.
Furthermore, for such $\caS_1\perp_{(\theta,\varphi)}\caS_2$,
we denote by $\caL(\caS_1,\caS_2)$ the set of
$(\varepsilon,\tilde\Lambda_1,\tilde \Lambda_2)$
satisfying (\ref{iep}), (\ref{tep}) (\ref{lep}),
and
\begin{align}
\lmk \tilde \Lambda_2-R_{|\arg\tilde \Lambda_2|, \varepsilon}\bm e_{\tilde \Lambda_2}\rmk_{\varepsilon}
\subset \tilde \Lambda_1^c.
\end{align}

\end{defn}
\begin{lem}Consider Setting \ref{setni} and assume the approximate Haag duality.
Let $\caS_1, \caS_2\subset\ctv$ be finite sets of
cones such that $\caS_1\perp_{(\theta,\varphi)}\caS_2$.
Let $(\varepsilon, \tilde \Lambda_1$ $\tilde \Lambda_2)\in \caL(\caS_1,\caS_2)$.
Then for any $t\ge 0$,
there are $t_{\Lambda_1}\ge 0$,
 $t_{\Lambda_2}\ge 0$ for each $\Lambda_1\in \caS_1$, $\Lambda_2\in \caS_2$,
such that
\begin{align}
\cup_{\Lambda_1\in\caS_1}\lmk  \Lambda_1+t_{\Lambda_1}\bm e_{\Lambda_1}\rmk
\subset \tilde \Lambda_1+t\bm e_{\tilde \Lambda_1},\quad
\cup_{\Lambda_2\in\caS_2} \lmk \Lambda_1+t_{\Lambda_2}\bm e_{\Lambda_2}\rmk
\subset \tilde \Lambda_2+t\bm e_{\tilde \Lambda_2}.
\end{align}
\end{lem}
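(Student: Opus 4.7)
My approach treats the statement as a purely geometric assertion about cones in $\bbR^2$. By the symmetry of the relation $\caS_1\perp_{(\theta,\varphi)}\caS_2$ in its two arguments (see Definition \ref{def36}), the conclusions for $\caS_1$ and for $\caS_2$ can be established independently by the same argument, with the roles of $(\caS_1,\tilde\Lambda_1)$ and $(\caS_2,\tilde\Lambda_2)$ interchanged. Since each of the index sets is finite, it suffices to exhibit, for each fixed $\Lambda_1\in\caS_1$, a translation parameter $t_{\Lambda_1}\ge 0$ with
\begin{align*}
\Lambda_1+t_{\Lambda_1}\bm e_{\Lambda_1}\subset \tilde\Lambda_1+t\bm e_{\tilde\Lambda_1}.
\end{align*}

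The main tool I will use is the standard characterization that, for two cones $C,D$ in $\bbR^2$, the inclusion $C\subset D$ holds if and only if the apex of $C$ lies in $D$ and the recession cone of $C$ is contained in that of $D$. Taking $C=\Lambda_1+t_{\Lambda_1}\bm e_{\Lambda_1}$ and $D=\tilde\Lambda_1+t\bm e_{\tilde\Lambda_1}$, the recession-cone condition is immediate: recession cones are invariant under translation, and the hypothesis $\Lambda_1\subset \tilde\Lambda_1$ already forces the recession cone of $\Lambda_1$ into that of $\tilde\Lambda_1$. Hence the whole problem reduces to choosing $t_{\Lambda_1}$ so that the translated apex of $C$ lies in $D$, namely
\begin{align*}
a_{\Lambda_1}+t_{\Lambda_1}\bm e_{\Lambda_1}-t\bm e_{\tilde\Lambda_1}\in\tilde\Lambda_1,
\end{align*}
where $a_{\Lambda_1}$ denotes the apex of $\Lambda_1$.

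The key geometric input, which I expect to draw from Appendix \ref{apcone}, is that $\bm e_{\Lambda_1}$ is an \emph{interior} direction of $\arg\tilde\Lambda_1$: it is the midpoint of $\arg\Lambda_1$, and $\arg\Lambda_1$ is a positive-length sub-arc of $\arg\tilde\Lambda_1$. Consequently, the ray from $a_{\Lambda_1}$ in the direction $\bm e_{\Lambda_1}$ enters $\tilde\Lambda_1$ transversally, and the Euclidean distance from $a_{\Lambda_1}+s\bm e_{\Lambda_1}$ to $\partial\tilde\Lambda_1$ is bounded below by a strictly positive affine function of $s$, with slope equal to the sine of the angular gap between $\bm e_{\Lambda_1}$ and the nearest bounding ray of $\tilde\Lambda_1$. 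Taking $t_{\Lambda_1}$ large enough that this distance exceeds $t$, the further shift by $-t\bm e_{\tilde\Lambda_1}$, which has norm $t$, cannot push the point out of $\tilde\Lambda_1$, and the required inclusion follows.

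The main obstacle I anticipate is the quantitative linear lower bound on the distance to $\partial\tilde\Lambda_1$, together with clean handling of apparent edge cases such as $\arg\Lambda_1=\arg\tilde\Lambda_1$ or $a_{\Lambda_1}\in\partial\tilde\Lambda_1$. None of these are genuine obstructions: in the first situation $\bm e_{\Lambda_1}=\bm e_{\tilde\Lambda_1}$ and the estimate is trivial, while in the second the linear growth still applies, with the boundary contribution absorbed into an additive constant. All such estimates should be routine consequences of the cone calculus collected in Appendix \ref{apcone}.
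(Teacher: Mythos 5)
Your reduction to a single cone via finiteness and symmetry, and the observation that $\Lambda_1\subset\tilde\Lambda_1$ forces both $\arg\Lambda_1\subset\arg\tilde\Lambda_1$ and the recession-cone inclusion, are fine. The gap is the ``standard characterization'' you invoke: for the cones of this paper, which are open angular sectors with half-opening $\varphi\in(0,\pi)$, the implication ``apex of $C$ lies in $D$ and the recession cone of $C$ is contained in that of $D$ $\Rightarrow$ $C\subset D$'' is false whenever $|\arg D|>\pi$, because such a sector is not convex and its recession cone is not closed under addition. Concretely, take $D=\Lambda_{\bm 0,0,3\pi/4}$, fix $\beta$ slightly below $3\pi/4$, and let $C$ be a very thin cone with axis direction $\bm e_{\beta}$ and apex at $M\bm e_{-\beta}$ with $M$ large. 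Then the apex lies in $D$, indeed at distance $\approx M\sin\lmk\tfrac{3\pi}4-\beta\rmk$ from $\partial D$, so deeper than any prescribed $t$; the recession cone of $C$ sits inside that of $D$; yet the point of $C$ at parameter $M$ along its axis is $M\lmk \bm e_{-\beta}+\bm e_{\beta}\rmk=2M\cos\beta\,\bm e_{0}$ with $\cos\beta<0$, which lies in the excluded back sector and hence outside $D$. So your distance estimate only controls the image of the apex, and the sentence ``the required inclusion follows'' is exactly where the argument breaks; the edge cases you flag ($\arg\Lambda_1=\arg\tilde\Lambda_1$, apex on the boundary) are not the issue.

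Since the hypotheses only put $\lmk\tilde\Lambda_1\rmk_\varepsilon\in\ctv$, the half-opening of $\tilde\Lambda_1$ can exceed $\pi/2$, so the non-convex case cannot be excluded. Your argument is correct when $|\arg\tilde\Lambda_1|\le\pi$: there $\tilde\Lambda_1$ is an intersection of two open half-planes, the distance to $\partial\tilde\Lambda_1$ is the minimum of the two margins, and adding recession-cone vectors does not decrease either margin. To cover the general case you must exploit the specific structure that the apex is translated along $\bm e_{\Lambda_1}$ starting from a cone already contained in $\tilde\Lambda_1$; this is precisely the content of Lemma \ref{lem1p} in Appendix \ref{apcone}, which the paper's one-line proof quotes, and which is proved there by passing to complements (turning the wide sector into a narrow one) together with the explicit estimate of Lemma \ref{lem1}, including a separate treatment of half-opening beyond $\pi/2$. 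Replacing your second and third paragraphs by an application of Lemma \ref{lem1p} to each $\Lambda_i\in\caS_i$ and the target $\tilde\Lambda_i+t\bm e_{\tilde\Lambda_i}$, followed by taking the maximum of the resulting $R_0$'s over the finite families, closes the gap and recovers the paper's argument.
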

\begin{proof}
This is immediate from Lemma \ref{lem1p}.
\end{proof}

\begin{lem}\label{lem201}
Consider Setting \ref{setni} and assume the approximate Haag duality.
Let $\Lambda_1$, $\Lambda_2$ be cones such that
$\Lambda_1$ is distal from $\Lambda_2$.
Let  $X^{t_1}\in \amf(\Lambda_1+t_1\bm e_{\Lambda_1})$
and $X^{t_2}\in \amf(\Lambda_2+t_2\bm e_{\Lambda_2})$ be
operators
given for each $t_1,t_2\ge 0$.
Suppose that $\sup_{t_i\ge 0}\lV X^{t_i}\rV<\infty$, $i=1,2$.
Then we have 
\begin{align}
\lim_{t_1,t_2\to\infty} \lV X^{t_1} X^{t_2}-X^{t_2}X^{t_1}\rV=0.
\end{align}
\end{lem}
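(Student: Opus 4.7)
The plan is to approximate $X^{t_1}$ in operator norm by an element $Y^{t_1}\in\pi_0(\caA_M)''$ whose support $M$ is disjoint from $\Lambda_2+t_2\bm e_{\Lambda_2}$ for every $t_2\geq 0$; then $Y^{t_1}$ automatically commutes with $X^{t_2}\in\amf(\Lambda_2+t_2\bm e_{\Lambda_2})=\pi_0(\caA_{(\Lambda_2+t_2\bm e_{\Lambda_2})^c})'$, and a triangle inequality closes the estimate. By the distal hypothesis, fix cones $\tilde\Lambda_1\supset\Lambda_1$, $\tilde\Lambda_2\supset\Lambda_2$ and $\varepsilon>0$ with $(\tilde\Lambda_1-R_{|\arg\tilde\Lambda_1|,\varepsilon}\bm e_{\tilde\Lambda_1})_\varepsilon\subset\tilde\Lambda_2^c$. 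Because the back-shift by $R_{|\arg\tilde\Lambda_1|,\varepsilon}>0$ and the subsequent $\varepsilon$-inflation both strictly enlarge $\tilde\Lambda_1$, the cone $\Lambda_1\subset\tilde\Lambda_1$ sits strictly interior to $(\tilde\Lambda_1-R_{|\arg\tilde\Lambda_1|,\varepsilon}\bm e_{\tilde\Lambda_1})_\varepsilon$; one then chooses $\varepsilon',\delta'>0$ small enough that $(\Lambda_1)_{\varepsilon'+\delta'}\subset(\tilde\Lambda_1-R_{|\arg\tilde\Lambda_1|,\varepsilon}\bm e_{\tilde\Lambda_1})_\varepsilon\subset\tilde\Lambda_2^c$ and $|\arg\Lambda_1|+4\varepsilon'<2\pi$.

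Next I apply Lemma \ref{lem2t6} to the cone $\Lambda_1+t_1\bm e_{\Lambda_1}$ with its shift parameter set to $t_1$ (requiring $t_1\geq R_{|\arg\Lambda_1|,\varepsilon'}$): since $|\arg(\Lambda_1+t_1\bm e_{\Lambda_1})|=|\arg\Lambda_1|$, $\bm e_{\Lambda_1+t_1\bm e_{\Lambda_1}}=\bm e_{\Lambda_1}$, and inflation commutes with translation, the output region reduces to $(\Lambda_1+t_1\bm e_{\Lambda_1})_{\varepsilon'+\delta'}-t_1\bm e_{\Lambda_1}=(\Lambda_1)_{\varepsilon'+\delta'}$. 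Thus for every unitary $u\in\amf(\Lambda_1+t_1\bm e_{\Lambda_1})$ there is a unitary $\tilde u\in\pi_0(\caA_{(\Lambda_1)_{\varepsilon'+\delta'}})''$ with $\lV u-\tilde u\rV\leq 2f_{|\arg\Lambda_1|,\varepsilon',\delta'}(t_1)$. Writing $X^{t_1}$ as a linear combination of at most four unitaries of $\amf(\Lambda_1+t_1\bm e_{\Lambda_1})$ (via the standard self-adjoint/spectral-calculus decomposition $a=(v_++v_-)/2$ with $v_\pm=a\pm i\sqrt{\unit-a^2}$) with coefficients bounded by a universal multiple of $\lV X^{t_1}\rV$, applying the above approximation to each and recombining, one obtains $Y^{t_1}\in\pi_0(\caA_{(\Lambda_1)_{\varepsilon'+\delta'}})''$ with
\begin{equation*}
\lV X^{t_1}-Y^{t_1}\rV\leq C\lV X^{t_1}\rV\,f_{|\arg\Lambda_1|,\varepsilon',\delta'}(t_1)\longrightarrow 0\quad\text{as}\quad t_1\to\infty,
\end{equation*}
for an absolute constant $C$.

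By convexity of the cone $\Lambda_2$ and $\bm e_{\Lambda_2}\in\Lambda_2$, $\Lambda_2+t_2\bm e_{\Lambda_2}\subset\Lambda_2\subset\tilde\Lambda_2$ for every $t_2\geq 0$, so $(\Lambda_1)_{\varepsilon'+\delta'}\subset\tilde\Lambda_2^c\subset(\Lambda_2+t_2\bm e_{\Lambda_2})^c$. Hence $Y^{t_1}\in\pi_0(\caA_{(\Lambda_2+t_2\bm e_{\Lambda_2})^c})''$ commutes with $X^{t_2}\in\pi_0(\caA_{(\Lambda_2+t_2\bm e_{\Lambda_2})^c})'$, and
\begin{equation*}
\lV X^{t_1}X^{t_2}-X^{t_2}X^{t_1}\rV=\lV [X^{t_1}-Y^{t_1},X^{t_2}]\rV\leq 2\lV X^{t_1}-Y^{t_1}\rV\sup_{t_2\geq 0}\lV X^{t_2}\rV,
\end{equation*}
which vanishes as $t_1\to\infty$ and \emph{a fortiori} in the joint limit $t_1,t_2\to\infty$. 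The main obstacle I anticipate is the quantitative geometric step of making the strictly-interior containment $(\Lambda_1)_{\varepsilon'+\delta'}\subset(\tilde\Lambda_1-R_{|\arg\tilde\Lambda_1|,\varepsilon}\bm e_{\tilde\Lambda_1})_\varepsilon$ precise via a compatible choice of $\varepsilon',\delta'$, which leans on the cone calculus collected in Appendix \ref{apcone}.
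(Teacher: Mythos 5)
Your analytic skeleton (norm-approximate $X^{t_1}$ by an operator localized in a slightly fattened copy of $\Lambda_1$ via Lemma \ref{lem2t6} and the four-unitary decomposition, then use exact commutation with $X^{t_2}$ and the triangle inequality) is fine, and would even give a bound uniform in $t_2$. The genuine gap is precisely the geometric step you flag at the end: the containment $(\Lambda_1)_{\varepsilon'+\delta'}\subset\lmk\tilde\Lambda_1-R_{|\arg\tilde\Lambda_1|,\varepsilon}\bm e_{\tilde\Lambda_1}\rmk_\varepsilon$ does \emph{not} follow from ``$\Lambda_1$ sits strictly interior to the enlarged cone'', and it can fail for every choice of $\varepsilon'+\delta'$ made in terms of $\varepsilon$ alone. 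Fattening is not a bounded-width enlargement: it adds points whose distance to $\Lambda_1$ grows linearly in the distance to the apex of $\Lambda_1$, while the slack coming from the back-shift $R$ is of bounded size. Concretely, take $\tilde\Lambda_1=\Lambda_{\bm 0,0,3\pi/4}$ and let $\Lambda_1$ be a thin cone whose apex is $s\bm e_{-3\pi/4}$, far out on the lower boundary ray of $\tilde\Lambda_1$, and whose directions lie just below $\pi/4$; one checks $\Lambda_1\subset\tilde\Lambda_1$. For any $\mu>0$ the cone $(\Lambda_1)_\mu$ contains rays in directions $\pi/4+\mu'$, $0<\mu'<\mu$, which make an angle larger than $\pi$ with the apex direction $-3\pi/4$; such a ray sweeps past the direction $\pi$ and crosses the negative $x$-axis at distance about $\sqrt2\,s\mu'$ from the origin, hence leaves $\lmk\tilde\Lambda_1-R\bm e_{\tilde\Lambda_1}\rmk_\varepsilon$ and can lie deep inside a legitimately chosen $\Lambda_2=\tilde\Lambda_2$ (a thin cone about the direction $\pi$ with apex near $(-R-1,0)$, which is compatible with Definition \ref{def36}) as soon as $s\mu'\gtrsim R$. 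Since $s$ is part of the given data, no choice of $\varepsilon'+\delta'\le\varepsilon$ works uniformly, and then your $Y^{t_1}$ does not commute with $X^{t_2}$. Appendix \ref{apcone} contains nothing covering this step, and Lemma \ref{lem1p} does not apply because it requires translating the inner cone far along its axis --- a freedom your construction forfeits, since your localization region $(\Lambda_1)_{\varepsilon'+\delta'}$ does not move with $t_1$.

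What you actually need is weaker, namely $(\Lambda_1)_{\varepsilon'+\delta'}\cap\Lambda_2=\emptyset$ for \emph{some} $\varepsilon'+\delta'>0$ depending on the whole configuration (apexes included), and this is true but requires its own argument: distality gives $\overline{\Lambda_1}\subset\lmk\tilde\Lambda_1-R\bm e_{\tilde\Lambda_1}\rmk_\varepsilon$ and $\overline{\Lambda_2}\cap\lmk\tilde\Lambda_1-R\bm e_{\tilde\Lambda_1}\rmk_\varepsilon=\emptyset$, hence $\overline{\Lambda_1}\cap\overline{\Lambda_2}=\emptyset$, and it also forces the closed angular intervals of $\Lambda_1$ and $\Lambda_2$ to be at angular distance at least $\varepsilon$; one then treats a bounded neighbourhood of the apexes (where the positive Euclidean distance beats the small fattening) and the far region (where the angular gap makes the separation grow linearly) separately. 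With such a lemma supplied, your proof closes. By contrast, the paper's proof conjugates $X^{t_1}$ by $U_{\Lambda_1+t_1\bm e_{\Lambda_1},\varepsilon/2}$ and approximates that unitary at distance $t_1/2$, so all localization regions recede along $\bm e_{\Lambda_1}$ as $t_1\to\infty$ and the disjointness from $\Lambda_2+t_2\bm e_{\Lambda_2}$ is only needed asymptotically; that is exactly the translational room of Lemma \ref{lem1p} that your static approximant gives up, and it is why the paper does not need the fixed-cone separation statement your route depends on.
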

\begin{proof}
By definition, there are cones 
$\tilde\Lambda_1,\tilde\Lambda_2$ and $\varepsilon>0$
 such that 
\begin{align}
\Lambda_1\subset \tilde \Lambda_1\subset \lmk \tilde \Lambda_1-R_{|\arg\tilde \Lambda_1|, \varepsilon}\bm e_{\tilde \Lambda_1}\rmk_{\varepsilon}
\subset \tilde \Lambda_2^c,\quad  \Lambda_2\subset\tilde \Lambda_2.
\end{align}
Note that 
\begin{align}\label{inclu}
\Lambda_1+t_1\bm e_{\Lambda_1}\subset\Lambda_1+\frac{t_1}2\bm e_{\Lambda_1}
\subset
\Lambda_1\subset \tilde \Lambda_1\subset \lmk \tilde \Lambda_1-R_{|\arg\tilde \Lambda_1|, \varepsilon}\bm e_{\tilde \Lambda_1}\rmk_{\varepsilon}
\subset \tilde \Lambda_2^c
\subset \Lambda_2^c\subset \lmk \Lambda_2+t_{2}\bm e_{\Lambda_2}\rmk^c,
\end{align}
for $t_1,t_2\ge 0$.
From (\ref{lem7p}), we have
\begin{align}
\begin{split}
&\amf(\Lambda_1+t_1\bm e_{\Lambda_1})\subset
\Ad\lmk U_{\Lambda_1+t_1\bm e_{\Lambda_1},\frac{\varepsilon}2}\rmk\lmk
 \pi_0\lmk \caA_{\lmk  \Lambda_1+\lmk t_1-R_{|\arg \Lambda_1|, \frac{\varepsilon}2}\rmk\bm e_{\Lambda_1}\rmk_{\frac{\varepsilon}2}}\rmk''\rmk\\
&\subset 
\Ad\lmk U_{\Lambda_1+t_1\bm e_{\Lambda_1},\frac{\varepsilon}2}\rmk\lmk 
\pi_0\lmk \caA_{ \lmk \Lambda_2+t_{2}\bm e_{\Lambda_2}\rmk^c}\rmk''\rmk,
\end{split}
\end{align}
for $t_1\ge R_{|\arg\Lambda_1|,\frac\varepsilon 2}$.
Therefore, $\Ad\lmk   U_{\Lambda_1+t_1\bm e_{\Lambda_1},\frac{\varepsilon}2}^*\rmk
\lmk X^{t_1}\rmk\in \pi_0\lmk \caA_{ \lmk \Lambda_2+t_{2}\bm e_{\Lambda_2}\rmk^c}\rmk''$
and $X^{t_2}\in \amf(\Lambda_2+t_2\bm e_{\Lambda_2})$
commute.
Furthermore, 
we have
\begin{align}
\begin{split}
 &\lV
  U_{\Lambda_1+t_1\bm e_{\Lambda_1},\frac{\varepsilon}2} X^{t_2}-
X^{t_2}  U_{\Lambda_1+t_1\bm e_{\Lambda_1},\frac{\varepsilon}2}
\rV\\
&\le2\lV
  U_{\Lambda_1+t_1\bm e_{\Lambda_1},\frac{\varepsilon}2}
  -
\tilde U_{\Lambda_1+t_1\bm e_{\Lambda_1},\frac{\varepsilon}2,\frac\varepsilon 2, \frac {t_1}2}
\rV\lV X^{t_2}\rV
\le 2 f_{ |\arg\Lambda_1|, \frac{\varepsilon}2, \frac{\varepsilon}2}\lmk \frac{t_1}2\rmk
\lV X^{t_2}\rV
\end{split}
\end{align}
because of (\ref{inclu})
and
\begin{align}
U_{\Lambda_1+t_1\bm e_{\Lambda_1}, \frac \varepsilon 2,\frac \varepsilon 2, \frac {t_1} 2}
\subset 
\pi_0\lmk \caA_{\lmk \Lambda_1+t_1\bm e_{\Lambda_1}-\frac {t_1}2 \bm e_{\Lambda_1}\rmk_{\varepsilon}}\rmk''.
\end{align}
An analogous estimate holds for $\lmk X^{t_2}\rmk^*$.
Hence we get 
\begin{align}
\begin{split}
 &\lim_{t_1,t_2\to\infty} \lV X^{t_1} X^{t_2}-X^{t_2}X^{t_1}\rV
 = \lim_{t_1,t_2\to\infty}  \lV
  U_{\Lambda_1+t_1\bm e_{\Lambda_1},\frac{\varepsilon}2}^*\lmk
 X^{t_1} X^{t_2}-X^{t_2}X^{t_1}
\rmk
U_{\Lambda_1+t_1\bm e_{\Lambda_1},\frac{\varepsilon}2}\rV\\
& = \lim_{t_1,t_2\to\infty}  \lV
\left[
\Ad\lmk   U_{\Lambda_1+t_1\bm e_{\Lambda_1},\frac{\varepsilon}2}^*\rmk
\lmk X^{t_1}\rmk, X^{t_2}
 \right ]
 \rV=0.
\end{split}
\end{align}
\end{proof}

\begin{lem}\label{lem23}
Consider Setting \ref{setni} and assume the approximate Haag duality.
Let $\theta\in \bbR$, $\varphi\in (0,\pi)$.
Let $\bbA_1,\bbA_2\subset \bbT$ be closed intervals such that 
$\lmk \bbA_1\cup\bbA_2\rmk\cap \bbA_{[\theta-\varphi,\theta+\varphi]}=\emptyset$
and $\bbA_1\cap \bbA_2=\emptyset$.
Then for any $\Lambda_1,\Lambda_1', \Lambda_2,\Lambda_2'\in \caC_{(\theta,\varphi)}$
with 
\[
\arg\Lambda_i, \arg\Lambda_i'\subset \bbA_i,\quad i=1,2,
\]
there are $L_i,L_i'\ge 0$, $i=1,2$
such that
\begin{align}
\{\Lambda_1+L_1\bm e_{\Lambda_1}, \Lambda_1'+L_1'\bm e_{\Lambda_1}\}\perp_{(\theta,\varphi)}
\{\Lambda_2+L_2\bm e_{\Lambda_2}, \Lambda_2'+L_2'\bm e_{\Lambda_2}\}.
\end{align}
\end{lem}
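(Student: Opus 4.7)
The plan is to engulf each of the two families in a single \emph{umbrella cone} with a slightly enlarged angular opening, then exploit the asymptotic disjointness of sectors with disjoint angular supports to verify the distal conditions of Definition \ref{def36}.

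First, since $\bbA_1$, $\bbA_2$ and the forbidden arc $\bbA_{[\theta-\varphi,\theta+\varphi]}$ are three pairwise disjoint closed subsets of $\bbT$, one can pick $\eta>0$ so small that their closed $\eta$-neighborhoods remain pairwise disjoint. Let $\tilde\bbA_i$ denote the closed $\eta$-neighborhood of $\bbA_i$, and let $\tilde\Lambda_1,\tilde\Lambda_2$ be reference cones with $\arg\tilde\Lambda_i=\tilde\bbA_i$, whose apexes will be pinned down in the next step. Then choose $\varepsilon\in(0,\eta/2)$ small enough that $(\tilde\Lambda_i)_\varepsilon\in\caC_{(\theta,\varphi)}$ and such that $\arg(\tilde\Lambda_1)_\varepsilon$ and $\arg(\tilde\Lambda_2)_\varepsilon$ remain disjoint; this secures condition (\ref{lep}) in Definition \ref{def36}.

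The main geometric step is the placement of the apexes. Once $\varepsilon$ is fixed, so are the retreat lengths $R_i:=R_{|\arg\tilde\Lambda_i|,\varepsilon}$ coming from the approximate Haag duality. Because $\arg\tilde\Lambda_1$ and $\arg\tilde\Lambda_2$ are disjoint closed intervals on $\bbT$, any two rays emanating from these respective sectors diverge linearly, so after pushing the apex of $\tilde\Lambda_i$ sufficiently far along $\bm e_{\tilde\Lambda_i}$ (relative to the opposite umbrella) one obtains
\[
(\tilde\Lambda_1-R_1\bm e_{\tilde\Lambda_1})_\varepsilon\subset\tilde\Lambda_2^c,\qquad (\tilde\Lambda_2-R_2\bm e_{\tilde\Lambda_2})_\varepsilon\subset\tilde\Lambda_1^c,
\]
which gives (\ref{tep}) together with its symmetric counterpart. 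Finally, since $\arg\Lambda_i,\arg\Lambda_i'\subset\bbA_i\subsetneq\tilde\bbA_i=\arg\tilde\Lambda_i$, the argument inclusions are strict, and the appendix Lemma \ref{lem1p} (invoked already in the preceding lemma) supplies $L_i,L_i'\ge0$ with $\Lambda_i+L_i\bm e_{\Lambda_i}\subset\tilde\Lambda_i$ and $\Lambda_i'+L_i'\bm e_{\Lambda_i}\subset\tilde\Lambda_i$. This verifies (\ref{iep}) and thus produces a valid triple $(\varepsilon,\tilde\Lambda_1,\tilde\Lambda_2)\in\caL(\caS_1,\caS_2)$ for $\caS_1=\{\Lambda_1+L_1\bm e_{\Lambda_1},\Lambda_1'+L_1'\bm e_{\Lambda_1}\}$ and $\caS_2=\{\Lambda_2+L_2\bm e_{\Lambda_2},\Lambda_2'+L_2'\bm e_{\Lambda_2}\}$.

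The main obstacle is reconciling the retreat length $R_i$ — pre-determined by the approximate Haag duality once $\varepsilon$ and the angular opening $|\arg\tilde\Lambda_i|$ are chosen — with the apex positioning: one must check that the apexes can be pushed far enough that the prescribed retreated, thickened umbrellas still sit in each other's complements. This reduces to the elementary planar fact that two cones with disjoint angular supports become asymptotically disjoint once the apexes are separated suitably, so the finite shift $R_i$ can always be absorbed.
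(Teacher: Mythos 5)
Your proposal is correct and follows the same overall strategy as the paper: enclose each pair of cones in an umbrella cone whose slightly enlarged argument arc stays away from $\bbA_{[\theta-\varphi,\theta+\varphi]}$ and from the other umbrella's arc, and then use Lemma \ref{lem1p} to push $\Lambda_i,\Lambda_i'$ into the umbrellas. The one point where you are weaker than the paper is exactly the step you flag as the main obstacle, namely condition (\ref{tep}): you assert, as an ``elementary planar fact'', that pushing the apexes far enough along their own axes makes the retreated, fattened umbrella land in the other umbrella's complement, but this is really the crux of the lemma and deserves an argument. It can be closed with the tools at hand: writing $\tilde\Lambda_2=\Lambda_{\bm a_2',\theta_2',\varphi_2'}$, the complement $\tilde\Lambda_2^c$ is the closure of the opposite cone $\Lambda_{\bm a_2',\theta_2'+\pi,\pi-\varphi_2'}$, whose argument arc contains $\arg(\tilde\Lambda_1)_\varepsilon$ because the enlarged arcs are disjoint; Lemma \ref{lem1p} applied to this opposite cone then shows that translating $\lmk \tilde\Lambda_1-R_{|\arg\tilde\Lambda_1|,\varepsilon}\bm e_{\tilde\Lambda_1}\rmk_\varepsilon$ far enough along $\bm e_{\tilde\Lambda_1}$ places it inside $\tilde\Lambda_2^c$, and since advancing an umbrella's apex only shrinks that umbrella (hence only enlarges its complement), the two symmetric containments can be arranged simultaneously with no circularity. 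The paper sidesteps this asymptotic argument by a sharper choice: both umbrellas are anchored at a common origin, $\tilde\Lambda_i=\Lambda_{\bm 0,\theta_i,\varphi_i}+(R_{2\varphi_i,\varepsilon}+1)\bm e_{\theta_i}$ with $\arg\tilde\Lambda_i=\bbA_i$ exactly, so the retreated fattened cone is $\Lambda_{\bm 0,\theta_i,\varphi_i+\varepsilon}+\bm e_{\theta_i}$, a cone with apex essentially at the common origin, and its disjointness from the other umbrella is an exact one-line consequence of disjointness of the arcs, no limit over apex positions needed. Either route works; the paper's buys a completely explicit triple $(\varepsilon,\tilde\Lambda_1,\tilde\Lambda_2)\in\caL(\caS_1,\caS_2)$, while yours is fine once the Lemma \ref{lem1p}--for--complements justification above is spelled out.
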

\begin{proof}
We may write $\bbA_1=\bbA_{I_1}$, $\bbA_2=\bbA_{I_2}$ with closed 
intervals $I_i=\theta_i+[-\varphi_i,\varphi_i]$, $i=1,2$ in $\bbR$. Fix some $\varepsilon>0$ such that
$\bbA_{\lmk \theta_1+[-\varphi_1 +\varepsilon,\varphi_1+\varepsilon]\rmk}\cap 
\bbA_{\lmk \theta_2+[-\varphi_2-\varepsilon,\varphi_2+\varepsilon]\rmk}=\emptyset$ and 
$\bbA_{\lmk \theta_i+[-\varphi_i-\varepsilon,\varphi_i+\varepsilon]\rmk}
\cap \bbA_{{[\theta-\varphi,\theta+\varphi]}}=\emptyset$, $i=1,2$.
Set
$\tilde\Lambda_i:=\Lambda_{\bm 0, \theta_i,\varphi_i}+(R_{2\varphi_i,\varepsilon}+1)\bm e_{\theta_i}$,
$i=1,2$.
Then we have
\begin{align}
&\lmk \tilde \Lambda_1-R_{|\arg\tilde \Lambda_1|, \varepsilon}\bm e_{\tilde \Lambda_1}\rmk_{\varepsilon}
=\Lambda_{\bm 0, \theta_1,\varphi_1+\varepsilon}+\bm e_{\theta_1}
\subset \lmk \Lambda_{\bm 0,\theta_2, \varphi_2}\rmk^c
\subset \tilde \Lambda_2^c,\\
&\lmk \tilde \Lambda_2-R_{|\arg\tilde \Lambda_2|, \varepsilon}\bm e_{\tilde \Lambda_2}\rmk_{\varepsilon}
=\Lambda_{\bm 0, \theta_2,\varphi_2+\varepsilon}+\bm e_{\theta_2}
\subset \lmk \Lambda_{\bm 0,\theta_1, \varphi_1}\rmk^c
\subset \tilde \Lambda_1^c,\\
 &\lmk \tilde \Lambda_i\rmk_\varepsilon\in\ctv,\quad i=1,2 ,\\
 &\arg\lmk \tilde \Lambda_1\rmk_{\varepsilon}\cap \arg\lmk \tilde \Lambda_2\rmk_{\varepsilon}=\emptyset.
\end{align}
By Lemma \ref{lem1p}, there are $L_i,L_i'\ge 0$, $i=1,2$
such that
\begin{align}
\{\Lambda_i+L_i\bm e_{\Lambda_i}, \Lambda_i'+L_i'\bm e_{\Lambda_i}\}
\subset \tilde\Lambda_i.
\end{align}
This proves the claim.
\end{proof}

In order to introduce the extension of $\rho\in\caO_0$
to $\btv$, we prepare the following sets.
\begin{defn}
Let $\theta\in \bbR$, $\varphi\in (0,\pi)$, and $\Lambda\in\ctv$.
Then we have $\Lambda=\Lambda_{\bm p,  \bar\theta,\bar\varphi}$
with some $\bm p\in \bbR^2$,  $\bar\theta\in \bbR$, $\bar\varphi\in (0,\pi)$
satisfying
\begin{align}\label{bard}
\theta+\varphi<\bar\theta-\bar\varphi<\bar\theta+\bar\varphi<\theta-\varphi+2\pi.
\end{align}
We denote by $\kappa_{\Lambda,\theta,\varphi}$
the set of all cones $K_\Lambda$
distal from $\Lambda$ with the forbidden direction ${(\theta,\varphi)}$,
satisfying $\arg K_\Lambda\subset \bbA_{(\theta+\varphi,\bar \theta-\bar\varphi)}$.
\end{defn}
Clearly, we have $\kappa_{\Lambda,\theta,\varphi}\subset\ctv$.
\begin{lem}\label{lem8lem8}
Consider Setting \ref{setni} and assume the approximate Haag duality.
Let $\theta\in \bbR$, $\varphi\in (0,\pi)$, and $\Lambda\in\ctv$.
Let $K_\Lambda,\tilde K_\Lambda\in \kappa_{\Lambda,\theta,\varphi}$.
Then there are $L_1,\tilde L_1\ge 0$
such that
$
\{K_\Lambda+L_1\bm e_{K_\Lambda}, \tilde K_\Lambda+\tilde L_1\bm e_{\tilde K_\Lambda}\}
$ is distal from $\{\Lambda\}$ with the forbidden direction
$(\theta,\varphi)$.
\end{lem}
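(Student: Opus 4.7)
The plan is to mimic the construction of Lemma \ref{lem23}. Writing $\Lambda=\Lambda_{\bm p,\bar\theta,\bar\varphi}$ with $\theta+\varphi<\bar\theta-\bar\varphi$ from (\ref{bard}), both $\arg K_\Lambda$ and $\arg\tilde K_\Lambda$ are closed sub-arcs of the open arc $\bbA_{(\theta+\varphi,\bar\theta-\bar\varphi)}$. First I would pick a closed interval $[a,b]$ with $\theta+\varphi<a\le b<\bar\theta-\bar\varphi$ strictly containing both $\arg K_\Lambda$ and $\arg\tilde K_\Lambda$, and then choose $\varepsilon>0$ small enough that $[a-\varepsilon,b+\varepsilon]$ still lies in $(\theta+\varphi,\bar\theta-\bar\varphi)$ and that the angular intervals of $(\Lambda)_\varepsilon$ and of the cone $\tilde\Lambda_1$ to be defined below both remain in the complement of $[\theta-\varphi,\theta+\varphi]$ modulo $2\pi$.

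The intended aggregating cone is then $\tilde\Lambda_1:=\Lambda_{\bm 0,(a+b)/2,(b-a)/2}+(R_{b-a,\varepsilon}+C)\bm e_{(a+b)/2}$ for a sufficiently large constant $C$ (depending on $|\bm p|$), following the same forward-translation trick as in Lemma \ref{lem23}. For $\tilde\Lambda_2$ I would take $\Lambda$ itself, or else a mild backward shift of $\Lambda$ along $-\bm e_\Lambda$ if extra room is needed so that $\Lambda\subset\tilde\Lambda_2$ and $(\tilde\Lambda_2)_\varepsilon\in\ctv$. The disjointness $\arg(\tilde\Lambda_1)_\varepsilon\cap\arg(\tilde\Lambda_2)_\varepsilon=\emptyset$ and the memberships $(\tilde\Lambda_1)_\varepsilon,(\tilde\Lambda_2)_\varepsilon\in\ctv$ are then immediate from the choice of $[a,b]$ and $\varepsilon$. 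The existence of $L_1,\tilde L_1\ge 0$ realising $K_\Lambda+L_1\bm e_{K_\Lambda}\subset\tilde\Lambda_1$ and $\tilde K_\Lambda+\tilde L_1\bm e_{\tilde K_\Lambda}\subset\tilde\Lambda_1$ would follow directly from Lemma \ref{lem1p}, exactly as invoked at the end of the proof of Lemma \ref{lem23}.

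The only nontrivial condition to check is the distal inclusion $(\tilde\Lambda_1-R_{b-a,\varepsilon}\bm e_{\tilde\Lambda_1})_\varepsilon\subset\tilde\Lambda_2^c$. After the backward shift this cone has apex $C\bm e_{(a+b)/2}$ and angular support $[a-\varepsilon,b+\varepsilon]$. Since this angular support is disjoint from $\arg\Lambda=[\bar\theta-\bar\varphi,\bar\theta+\bar\varphi]$ with a strictly positive angular gap, and taking $C$ large compared to $|\bm p|$ forces the apex $C\bm e_{(a+b)/2}$ itself into $\Lambda^c$, an elementary planar-geometry check on $\arg(\bm y-\bm p)$ for $\bm y$ ranging over the fattened cone yields the desired inclusion in $\Lambda^c\subset\tilde\Lambda_2^c$: at the apex the angle is close to $(a+b)/2$, in the far-field the angle approaches a direction in $[a-\varepsilon,b+\varepsilon]$, and both remain outside $[\bar\theta-\bar\varphi,\bar\theta+\bar\varphi]$.

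The main obstacle is nothing conceptually new: it is the bookkeeping of apex placements needed to guarantee the backward-shifted fat cone avoids $\Lambda$ while both $K_\Lambda$ and $\tilde K_\Lambda$, after their translations, fit inside $\tilde\Lambda_1$. This is entirely parallel to the apex-translation argument in Lemma \ref{lem23} and is routine once Lemma \ref{lem1p} is invoked.
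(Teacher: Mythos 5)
Your proposal is correct and follows the same overall strategy as the paper: enclose $\arg K_\Lambda\cup\arg\tilde K_\Lambda$ in a closed interval lying strictly inside $\bbA_{(\theta+\varphi,\bar\theta-\bar\varphi)}$, build $\tilde\Lambda_1$ as a forward-translated cone over that interval with a small fattening parameter $\varepsilon$, take $\tilde\Lambda_2=\Lambda$, and use Lemma \ref{lem1p} to translate $K_\Lambda,\tilde K_\Lambda$ into $\tilde\Lambda_1$. Where you diverge is precisely the step you flag as the ``main obstacle'': you place the apex of $\tilde\Lambda_1$ at the origin and compensate with a large translation $C$ depending on $|\bm p|$, which forces a far-field angular estimate to obtain $\lmk\tilde\Lambda_1-R_{|\arg\tilde\Lambda_1|,\varepsilon}\bm e_{\tilde\Lambda_1}\rmk_\varepsilon\subset\Lambda^c$. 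The paper sidesteps this entirely by choosing the apex at $\bm p$, namely $\tilde\Lambda_1:=\Lambda_{\bm p,\beta,\gamma}+(R_{2\gamma,\varepsilon}+1)\bm e_\beta$: then the backward-shifted fattened cone equals $\Lambda_{\bm p,\beta,\gamma+\varepsilon}+\bm e_\beta\subset\Lambda_{\bm p,\beta,\gamma+\varepsilon}$, which is disjoint from $\Lambda=\Lambda_{\bm p,\bar\theta,\bar\varphi}$ simply because the two cones share the apex $\bm p$ and have disjoint angular supports, so no quantitative choice of translation is needed. Your variant does go through: the ``elementary planar-geometry check'' is exactly the content of Lemma \ref{lem1p} applied to the complementary cone $\Lambda_{\bm p,\bar\theta+\pi,\pi-\bar\varphi}\subset\Lambda^c$ (this is the same complement trick used in the proof of Lemma \ref{lem1p} itself), and citing it there would make your sketch rigorous — but the common-apex choice removes that bookkeeping altogether, which is the only real difference between the two arguments.
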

\begin{proof}
Writing $\Lambda=\Lambda_{\bm p, \bar\theta,\bar\varphi}$ 
with $\bm p\in\bbR^2$ and $\bar\theta,\bar\varphi$ satisfying (\ref{bard}), 
there are $\beta,\gamma\in\bbR$ and $\varepsilon>0$ such that 
\[
\theta+\varphi<\beta-\gamma-\varepsilon<\beta+\gamma+\varepsilon<
\bar \theta-\bar\varphi
\]
and
\[
\arg K_\Lambda,\arg\tilde K_{\Lambda}\subset \bbA_{[\beta-\gamma,\beta+\gamma]},\quad
\Lambda_\varepsilon\in\ctv.
\]
Set
$\tilde\Lambda_1:=\Lambda_{\bm p, \beta,\gamma}+(R_{2\gamma,\varepsilon}+1)\bm e_{\beta}$,
and $\tilde\Lambda_2:=\Lambda$.
Then we have
\begin{align}
&\lmk \tilde \Lambda_1-R_{|\arg\tilde \Lambda_1|, \varepsilon}\bm e_{\tilde \Lambda_1}\rmk_{\varepsilon}
=\Lambda_{\bm p, \beta,\gamma+\varepsilon}+\bm e_{\beta}
\subset \lmk \Lambda_{\bm p,\bar\theta, \bar\varphi}\rmk^c
=  \Lambda^c=\tilde \Lambda_2^c,\\
 &\lmk \tilde \Lambda_i\rmk_\varepsilon\in\ctv,\quad i=1,2 \\
 &\arg\lmk \tilde \Lambda_1\rmk_{\varepsilon}\cap \arg\lmk \tilde \Lambda_2\rmk_{\varepsilon}=\emptyset.
\end{align}
By Lemma \ref{lem1p}, there are $L_1,\tilde L_1\ge 0$,
such that
\begin{align}
\{K_\Lambda+L_1\bm e_{K_\Lambda}, \tilde K_{\Lambda}+\tilde L_1\bm e_{\tilde K_\Lambda}\}
\subset \tilde\Lambda_1.
\end{align}
This proves the claim.
\end{proof}

\begin{lem}\label{lem8}
Consider Setting \ref{setni} and assume the approximate Haag duality.
Let $\theta\in\bbR$, $\varphi\in (0,\pi)$, $\Lambda_0\in \ctv$,
$\rho\in \caO_0$ and $\bar V_{\rho,\Lambda_0}\in \caV_{\rho,\Lambda_0}$.
For any $\Lambda\in \ctv$, the followings hold.
\begin{description}
\item[(i)] For any $K_\Lambda\in \kappa_{\Lambda,\theta,\varphi}$ and $V_{\rho,K_\Lambda}\in\caV_{\rho,K_\Lambda}$,
we have 
\begin{align}\label{ieq}
\Ad\lmk \bar V_{\rho, \Lambda_0} V_{\rho,K_\Lambda}^*\rmk
\lmk \amf(\Lambda)\rmk\subset\caB_{(\theta,\varphi)},
\end{align}
and 
\begin{align}\label{ieq2}
\Ad\lmk \bar V_{\rho, \Lambda_0} V_{\rho,K_\Lambda}^*\rmk\circ\pi_0(A)
=\Ad\lmk \bar V_{\rho, \Lambda_0} \rmk\circ\rho(A),\quad A\in \caA_{\Lambda}.
\end{align}
In particular, there is some $C_\Lambda\in\ctv$
such that $\Ad\lmk \bar V_{\rho, \Lambda_0} V_{\rho,K_\Lambda}^*\rmk \lmk \amf(\Lambda)\rmk
\subset\amf(C_\Lambda)$ for each $\Lambda\in\ctv$.
\item[(ii)]
For any $K_\Lambda,\tilde K_\Lambda\in \kappa_{\Lambda,\theta,\varphi}$ 
 and 
$V_{\rho,K_\Lambda}\in\caV_{\rho,K_\Lambda}$, 
$V_{\rho,\tilde K_\Lambda}\in\caV_{\rho,\tilde K_\Lambda}$, 
we have
\begin{align}
\left.\Ad\lmk \bar V_{\rho, \Lambda_0} V_{\rho,K_\Lambda}^*\rmk\right\vert_{\pi_0(\al_\Lambda)''}
=\left.\Ad\lmk \bar V_{\rho, \Lambda_0} V_{\rho,\tilde K_\Lambda}^*\rmk\right\vert_{\pi_0(\al_\Lambda)''}.
\end{align}
\end{description}
\end{lem}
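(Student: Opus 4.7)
The plan is to prove the three claims of (i) separately --- the intertwining relation (\ref{ieq2}) first, then the inclusion (\ref{ieq}) together with the existence of $C_\Lambda$ --- and then to obtain (ii) by a direct commutant argument.

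For (\ref{ieq2}), since $K_\Lambda\in\kappa_{\Lambda,\theta,\varphi}$ is distal from $\Lambda$, Definition \ref{def36} gives $K_\Lambda\cap\Lambda=\emptyset$, so $\caA_\Lambda\subset\caA_{K_\Lambda^c}$. The defining property of $V_{\rho,K_\Lambda}\in\caV_{\rho,K_\Lambda}$ then says $V_{\rho,K_\Lambda}^*\pi_0(A)V_{\rho,K_\Lambda}=\rho(A)$ for every $A\in\caA_\Lambda$, and applying $\Ad(\bar V_{\rho,\Lambda_0})$ to both sides yields (\ref{ieq2}). For (\ref{ieq}) and the choice of $C_\Lambda\in\ctv$, I would use the upward filtering property of $\ctv$ (already invoked to make $\btv$ a $C^*$-algebra): iterating it once, I can pick $C_\Lambda\in\ctv$ with $C_\Lambda\supset\Lambda_0\cup K_\Lambda\cup\Lambda$, so that for any $B\in\caA_{C_\Lambda^c}$ one has $B\in\caA_{\Lambda_0^c}\cap\caA_{K_\Lambda^c}\cap\caA_{\Lambda^c}$ simultaneously. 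Pushing $\pi_0(B)$ through the five factors of $\bar V_{\rho,\Lambda_0}V_{\rho,K_\Lambda}^*XV_{\rho,K_\Lambda}\bar V_{\rho,\Lambda_0}^*$ by using successively the $\bar V_{\rho,\Lambda_0}$-intertwining on $\caA_{\Lambda_0^c}$, the $V_{\rho,K_\Lambda}$-intertwining on $\caA_{K_\Lambda^c}$, the hypothesis $X\in\pi_0(\caA_{\Lambda^c})'$, and then the same two intertwinings in reverse on the way out, shows that $\Ad(\bar V_{\rho,\Lambda_0}V_{\rho,K_\Lambda}^*)(X)$ commutes with $\pi_0(\caA_{C_\Lambda^c})$ and therefore lies in $\amf(C_\Lambda)\subset\btv$.

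For (ii), I would reduce the assertion to showing $V_{\rho,\tilde K_\Lambda}V_{\rho,K_\Lambda}^*\in\pi_0(\caA_\Lambda)'$. Given $A\in\caA_\Lambda$, the distality of both $K_\Lambda$ and $\tilde K_\Lambda$ from $\Lambda$ forces disjointness, hence $V_{\rho,K_\Lambda}^*\pi_0(A)V_{\rho,K_\Lambda}=\rho(A)=V_{\rho,\tilde K_\Lambda}^*\pi_0(A)V_{\rho,\tilde K_\Lambda}$, and rearrangement yields the desired commutation. Since $\pi_0(\caA_\Lambda)'$ coincides with the commutant of the von Neumann algebra $\pi_0(\caA_\Lambda)''$, the operator $V_{\rho,\tilde K_\Lambda}V_{\rho,K_\Lambda}^*$ also commutes with every $Y\in\pi_0(\caA_\Lambda)''$; rewriting that commutation as $V_{\rho,K_\Lambda}^*YV_{\rho,K_\Lambda}=V_{\rho,\tilde K_\Lambda}^*YV_{\rho,\tilde K_\Lambda}$ and conjugating by $\bar V_{\rho,\Lambda_0}$ on the outside gives the stated equality of restrictions.

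The only mildly non-routine step I foresee is guaranteeing that $C_\Lambda$ in (i) actually lies in $\ctv$ while simultaneously covering $\Lambda_0$, $K_\Lambda$ and $\Lambda$; this should follow by one iteration of upward filtering of $\ctv$, and if a more explicit witness is wanted it can be read off from the angular data $(\bar\theta,\bar\varphi)$ of $\Lambda$ together with the restriction $\arg K_\Lambda\subset\bbA_{(\theta+\varphi,\bar\theta-\bar\varphi)}$ built into the definition of $\kappa_{\Lambda,\theta,\varphi}$. Apart from that, the argument is pure algebra with intertwining relations and the double commutant theorem; notably, approximate Haag duality itself is not used in this particular lemma, since no limit in $t$ is taken.
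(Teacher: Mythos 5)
Your proposal is correct. For part (i) it is essentially the paper's argument: the paper invokes Lemma \ref{lem3} (with $R=\unit$) to place $\bar V_{\rho,\Lambda_0}V_{\rho,K_\Lambda}^*$ in $\pi_0\lmk\caA_{(\Lambda_0\cup K_\Lambda)^c}\rmk'\subset\amf(C_\Lambda)$ for a common upper bound $C_\Lambda\in\ctv$ of $\Lambda_0,K_\Lambda,\Lambda$ obtained by upward filtering, and then notes $\amf(\Lambda)\subset\amf(C_\Lambda)$; your five-factor push-through of $\pi_0(B)$, $B\in\caA_{C_\Lambda^c}$, is just the unpacked form of that same computation, and your treatment of (\ref{ieq2}) via $\Lambda\subset K_\Lambda^c$ is identical to the paper's. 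For part (ii) you take a genuinely shorter route: the paper first invokes Lemma \ref{lem8lem8} to translate $K_\Lambda$ and $\tilde K_\Lambda$ into a common auxiliary cone $\tilde\Lambda_1$ disjoint from $\Lambda$, factors $V_{\rho,\tilde K_\Lambda}V_{\rho,K_\Lambda}^*$ into three charge transporters lying in $\amf(K_\Lambda)$, $\amf(\tilde\Lambda_1)$, $\amf(\tilde K_\Lambda)$, and concludes that the product commutes with $\pi_0(\caA_\Lambda)''$, whereas you show directly from the two intertwining relations that $V_{\rho,\tilde K_\Lambda}V_{\rho,K_\Lambda}^*\in\pi_0(\caA_\Lambda)'=\lmk\pi_0(\caA_\Lambda)''\rmk'$ and conjugate by $\bar V_{\rho,\Lambda_0}$; equivalently one could apply Lemma \ref{lem3} once with $\Lambda_1=K_\Lambda$, $\Lambda_2=\tilde K_\Lambda$, $R=\unit$, since $\Lambda\subset(K_\Lambda\cup\tilde K_\Lambda)^c$. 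Both are valid; the paper's cone-translation pattern is the one that becomes indispensable later (e.g.\ in Lemmas \ref{lem22}--\ref{lem26}, where genuine asymptotic estimates are needed), but for this statement your one-step commutant argument suffices and is cleaner. Your closing remarks are also accurate: $C_\Lambda\in\ctv$ comes from the filtering property exactly as in the paper (recall $\kappa_{\Lambda,\theta,\varphi}\subset\ctv$), and approximate Haag duality enters only through the constants $R_{\varphi,\varepsilon}$ hidden in the definition of distality, not through any estimate or limit.
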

\begin{proof}
{\it (i)} Because $\ctv$ is an upward filtering set, for $\Lambda, \Lambda_0, K_\Lambda\in \ctv$ 
there is a $C_\Lambda\in \ctv$
such that $\Lambda,\Lambda_0, K_\Lambda\subset C_\Lambda$.
From Lemma \ref{lem3}, we have
\begin{align}
\bar V_{\rho, \Lambda_0} V_{\rho,K_\Lambda}^*
\in \pi_0\lmk \caA_{\lmk \Lambda_0\cup K_\Lambda\rmk^c}\rmk'
\subset \pi_0\lmk \caA_{C_\Lambda^c}\rmk'=\amf(C_\Lambda)\subset \caB_{(\theta,\varphi)}.
\end{align}
Hence we conclude (\ref{ieq}) and $\Ad\lmk \bar V_{\rho, \Lambda_0} V_{\rho,K_\Lambda}^*\rmk \lmk \amf(\Lambda)\rmk
\subset\amf(C_\Lambda)$.
By the definition, we have $\Lambda\subset K_\Lambda^c$ for
$K_\Lambda\in \kappa_{\Lambda,\theta,\varphi}$.
Therefore, by the definition of $\caV_{\rho,K_\Lambda}$ (Setting \ref{setni}),
 for any $A\in \caA_{\Lambda}\subset \caA_{K_\Lambda^c}$, we have
\begin{align}
\Ad\lmk \bar V_{\rho, \Lambda_0} V_{\rho,K_\Lambda}^*\rmk\circ\pi_0(A)
=\Ad\lmk \bar V_{\rho, \Lambda_0} \rmk\circ\rho(A),
\end{align}
proving (\ref{ieq2}).

\noindent{\it (ii)}
For any $K_\Lambda,\tilde K_\Lambda\in \kappa_{\Lambda,\theta,\varphi}$ , there are 
$L_1,\tilde L_1\ge 0$
such that
$
\{K_\Lambda+L_1\bm e_{K_\Lambda}, \tilde K_\Lambda+\tilde L_1\bm e_{\tilde K_\Lambda}\}
$ is distal from $\{\Lambda\}$ with the forbidden direction
$(\theta,\varphi)$, by Lemma \ref{lem8lem8}.
Namely, there is a cone $\tilde \Lambda_1\in\ctv$
including $K_\Lambda+L_1\bm e_{K_\Lambda}, \tilde K_\Lambda+\tilde L_1\bm e_{\tilde K_\Lambda}$,
such that $\tilde\Lambda_1\cap\Lambda=\emptyset$.
Note that $\Lambda\subset K_\Lambda^c\cap\lmk\tilde\Lambda_1\rmk^c\cap \lmk\tilde K_\Lambda\rmk^c$.
By Lemma \ref{lem3}, all of
\begin{align}
\begin{split}
&  V_{\rho,K_\Lambda+L_1\bm e_{K_\Lambda}}V_{\rho,K_\Lambda}^*
  \in \pi_0\lmk \caA_{K_\Lambda^c}\rmk'=\amf(K_\Lambda),\\
  &V_{\rho,\tilde K_\Lambda+L_1\bm e_{\tilde K_\Lambda}}
   V_{\rho,K_\Lambda+L_1\bm e_{K_\Lambda}}^*
   \in \pi_0\lmk
   \caA_{\lmk \lmk \tilde K_\Lambda+L_1\bm e_{\tilde K_\Lambda}\rmk \cup \lmk K_\Lambda+L_1\bm e_{K_\Lambda}\rmk\rmk^c}
   \rmk'
  \subset  \pi_0\lmk
   \caA_{\lmk \tilde\Lambda_1\rmk^c}
   \rmk'
   =\amf( \tilde\Lambda_1),\\
 & V_{\rho,\tilde K_\Lambda} V_{\rho,\tilde K_\Lambda+L_1\bm e_{\tilde K_\Lambda}}^*
  \in \pi_0\lmk \caA_{\tilde K_\Lambda^c}\rmk'=\amf (\tilde K_\Lambda)
\end{split}
\end{align}
commute with $\pi_0(\al_\Lambda)''$.
As a result,
\begin{align} 
V_{\rho,\tilde K_\Lambda}V_{\rho,K_\Lambda}^*
=
V_{\rho,\tilde K_\Lambda} V_{\rho,\tilde K_\Lambda+L_1\bm e_{\tilde K_\Lambda}}^*
V_{\rho,\tilde K_\Lambda+L_1\bm e_{\tilde K_\Lambda}}
   V_{\rho,K_\Lambda+L_1\bm e_{K_\Lambda}}^*
 V_{\rho,K_\Lambda+L_1\bm e_{K_\Lambda}}V_{\rho,K_\Lambda}^*
\end{align}
commutes with $\pi_0\lmk\caA_\Lambda\rmk''$, proving the claim.
%
\end{proof}

\begin{defn}\label{lem8p}
Consider Setting \ref{setni} and assume the approximate Haag duality.
Let $\theta\in\bbR$, $\varphi\in (0,\pi)$, $\Lambda_0\in \ctv$,
$\rho\in \caO_0$, $\bar V_{\rho,\Lambda_0}\in \caV_{\rho,\Lambda_0}$, and
$\Lambda\in \ctv$.
From Lemma \ref{lem8}
we may define a $\sigma$w-continuous $*$-homomorphism
\begin{align}
T_{\rho,\Lambda}^{(\theta,\varphi),\Lambda_0, \bar V_{\rho,\Lambda_0}}:
\pi_0\lmk \caA_\Lambda\rmk''\to \caB_{(\theta,\varphi)}
\end{align}
by
\begin{align}
T_{\rho,\Lambda}^{(\theta,\varphi),\Lambda_0, \bar V_{\rho,\Lambda_0}}
:=\left. \Ad\lmk \bar V_{\rho, \Lambda_0} V_{\rho,K_\Lambda}^*\rmk\right\vert_{\pi_0\lmk \caA_\Lambda\rmk''},
\end{align}
independent of choice of $K_\Lambda\in \kappa_{\Lambda,\theta,\varphi}$ and $V_{\rho,K_\Lambda}\in\caV_{\rho,K_\Lambda}$.
\end{defn}

\begin{lem}\label{lem9}
Consider Setting \ref{setni} and assume the approximate Haag duality.
Let $\theta\in\bbR$, $\varphi\in (0,\pi)$, $\Lambda_0\in \ctv$,
$\rho\in \caO_0$, $\bar V_{\rho,\Lambda_0}\in \caV_{\rho,\Lambda_0}$, and
$\Lambda\in \ctv$.
Then there exists a $*$-homomorphism $T_\rho^{(\theta,\varphi), \Lambda_0,\bar V_{\rho,\Lambda_0}}
: \caB_{(\theta,\varphi)}\to \caB_{(\theta,\varphi)}$ such that 
\begin{description}
\item[(i)]
$\trl{\rho}{0}$ is $\sigma$w-continuous on $\pi_0\lmk \caA_\Lambda\rmk''$
for all $\Lambda\in\ctv$,
\item[(ii)]
$\trl{\rho}{0}\circ\pi_0(A)=\Ad\lmk \bar V_{\rho,\Lambda_0}\rmk\circ\rho(A)$,
for all $A\in \caA_{\bbZ^2}$.
\end{description}
Furthermore, it has the following properties.
\begin{description}
\item[(a)]It is unique in the sense that if $X_\rho : \btv\to \caB(\caH)$ is a $*$-homomorphism which is 
$\sigma$w-continuous on $\pi_0\lmk \caA_\Lambda\rmk''$
for all $\Lambda\in\ctv$ and $X_\rho\circ\pi_0(A)=\Ad\lmk \bar V_{\rho,\Lambda_0}\rmk\circ\rho(A)$,
for all $A\in \caA_{\bbZ^2}$, then $X_\rho=\trl{\rho}{0}$.
\item[(b)]For any cone $\Gamma$ and $V_{\rho\Gamma}\in \caV_{\rho\Gamma}$we have
\begin{align}
\trl{\rho}{0}\circ\pi_0\vert_{\caA_{\Gamma^c}}=\Ad\lmk \bar{V}_{\rho, \Lambda_0}V_{\rho,\Gamma}^*\rmk\circ\pi_0\vert_{\caA_{\Gamma^c}}.
\end{align}
In particular, the representation $\trl{\rho}{0}\circ \pi_0 : \caA_{\bbZ^2}\to \caB(\caH)$ belongs to $\caO_0$,
\item[(c)]
For any $\Lambda\in \ctv$, we have
$\trl{\rho}{0}\vert_{\pi_0\lmk \caA_\Lambda\rmk''}=\trll{\rho}{0}{\Lambda}$.
\item[(d)]
We have $\trl{\rho}{0}\circ\pi_0\vert_{\caA_{\Lambda_0^c}}=\pi_0\vert_{\caA_{\Lambda_0^c}}$.
\item[(e)]
 For any $\Lambda\in\ctv$, there is some $C_\Lambda\in\ctv$
such that $\trl\rho 0\lmk \pi_0\lmk \caA_\Lambda\rmk''\rmk
\subset\amf(C_\Lambda)$.
\end{description}
\end{lem}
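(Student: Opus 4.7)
The plan is to construct $\trl{\rho}{0}$ first as a $*$-homomorphism on the dense $*$-subalgebra $\caB_0=\bigcup_{\Lambda\in\ctv}\pi_0(\caA_\Lambda)''$ of $\btv$ (norm density being Lemma \ref{lemhoshi}) by gluing the local maps $T_{\rho,\Lambda}^{(\theta,\varphi),\Lambda_0,\bar V_{\rho,\Lambda_0}}$ from Definition \ref{lem8p}, and then to extend by norm continuity to all of $\btv$.

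The first step is compatibility. Given $\Lambda_1,\Lambda_2\in\ctv$ with $a\in \pi_0(\caA_{\Lambda_1})''\cap\pi_0(\caA_{\Lambda_2})''$, I would use the upward-filtering property of $\ctv$ to choose $\Lambda_3\in\ctv$ containing both, so it suffices to treat the case $\Lambda_1\subset\Lambda_3$. For such an inclusion, one checks that $\kappa_{\Lambda_3,\theta,\varphi}\subset\kappa_{\Lambda_1,\theta,\varphi}$ (the distality requirement weakens as $\Lambda$ shrinks and the angular constraint on $K$ remains valid because $\arg\Lambda_1\subset\arg\Lambda_3$), so a common $K\in\kappa_{\Lambda_3,\theta,\varphi}$ witnesses both definitions, yielding $T_{\rho,\Lambda_1}(a)=\Ad(\bar V_{\rho,\Lambda_0}V_{\rho,K}^*)(a)=T_{\rho,\Lambda_3}(a)$. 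This gives a well-defined map on $\caB_0$, which is a $*$-homomorphism since on each $\pi_0(\caA_\Lambda)''$ it is $\Ad$ by a unitary. Being a $*$-homomorphism between $C^*$-algebras, it is contractive and extends uniquely by norm continuity to a $*$-homomorphism $\trl{\rho}{0}\colon\btv\to\caB(\caH)$; the image lies in $\btv$ because Lemma \ref{lem8}(i) places each $T_{\rho,\Lambda}(\pi_0(\caA_\Lambda)'')$ inside some $\amf(C_\Lambda)\subset\btv$, and $\btv$ is norm-closed (which simultaneously gives property (e)).

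The defining properties (i) and (ii) are then almost immediate: (i) because each $T_{\rho,\Lambda}$ is $\Ad$ by a unitary on a von Neumann algebra, and (ii) because Lemma \ref{lem8}(i), formula (\ref{ieq2}), already gives the identity $\trl\rho 0\circ\pi_0(A)=\Ad(\bar V_{\rho,\Lambda_0})\circ\rho(A)$ on every local $\caA_\Lambda$ with $\Lambda$ finite, which extends by norm continuity of both sides to all $A\in\at$. Property (c) is tautological from the construction; property (b) follows from (ii) combined with the defining relation $\rho|_{\caA_{\Gamma^c}}=\Ad(V_{\rho,\Gamma}^*)\circ\pi_0|_{\caA_{\Gamma^c}}$, and its ``in particular'' clause is the observation that conjugation by the unitary $\bar V_{\rho,\Lambda_0}V_{\rho,\Gamma}^*$ is a unitary equivalence between $\trl\rho 0\circ\pi_0|_{\caA_{\Gamma^c}}$ and $\pi_0|_{\caA_{\Gamma^c}}$; property (d) is the special case $\Gamma=\Lambda_0$ with $V_{\rho,\Lambda_0}:=\bar V_{\rho,\Lambda_0}$ of (b). For uniqueness (a), any other candidate $X_\rho$ agrees with $\trl\rho 0$ on $\pi_0(\caA_\Lambda)$ by the shared identity with $\Ad(\bar V_{\rho,\Lambda_0})\circ\rho$; $\sigma$w-continuity on $\pi_0(\caA_\Lambda)''$ plus $\sigma$w-density of $\pi_0(\caA_\Lambda)$ in its bicommutant forces agreement on all of $\caB_0$, and then norm density of $\caB_0$ in $\btv$ closes the argument.

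The only genuinely nontrivial point I expect is the compatibility step, which hinges on verifying that the angular constraint defining $\kappa_{\Lambda,\theta,\varphi}$ is monotone under $\Lambda_1\subset\Lambda_2$; once this is in hand, everything else is mechanical gluing, $\sigma$w-continuity, and norm-closure arguments. If the monotonicity of $\kappa$ is more delicate than anticipated, I would fall back on Lemma \ref{lem8}(ii) applied inside a third cone $\Lambda_3\supset \Lambda_1\cup\Lambda_2$ with independently chosen $K_{\Lambda_i}$'s, using Lemma \ref{lem8lem8} to displace them so they become simultaneously distal from $\Lambda_3$.
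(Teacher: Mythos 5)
Your proposal is correct and follows essentially the same route as the paper: define the map on $\caB_0=\cup_{\Lambda\in\ctv}\pi_0(\caA_\Lambda)''$ by gluing the local maps of Definition \ref{lem8p}, verify compatibility via the upward-filtering property of $\ctv$ and the inclusion $\kappa_{\Lambda_3,\theta,\varphi}\subset\kappa_{\Lambda_1,\theta,\varphi}$ (which the paper asserts without comment and you justify correctly), extend by norm continuity using Lemma \ref{lemhoshi}, and deduce (i)--(e) exactly as in the paper, with (e) and the containment of the image in $\btv$ coming from Lemma \ref{lem8}(i), and (a), (b), (d) from $\sigma$w-density, (\ref{ieq2}), and the specialization $\Gamma=\Lambda_0$ respectively. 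The fallback you mention for the compatibility step is not needed; your primary argument already suffices.
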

\begin{proof}
Recall $\caB_0$ from Lemma \ref{lemhoshi}.
We claim that 
\begin{align}
T_0(x):=\trll{\rho}{0}{\Lambda}(x),\quad \text{if}\quad x\in \pi_0\lmk \caA_\Lambda\rmk'',\quad \Lambda\in \ctv
\end{align}
defines a well-defined $*$-homomorphism 
$T_0: \caB_0\to \btv$.
To see this, suppose that $x\in \pi_0\lmk \caA_{\Lambda_1}\rmk''\cap \pi_0\lmk \caA_{\Lambda_2}\rmk''$, for some
$\Lambda_1,\Lambda_2\in \ctv$.
Because $\ctv$ is an upward filtering set (Lemma \ref{lem1})
there is a $\Lambda\in\ctv$ such that $\Lambda_1,\Lambda_2\subset \Lambda$.
Note that $\kappa_{\Lambda,\theta,\varphi}\subset \kappa_{\Lambda_1,\theta,\varphi}, \kappa_{\Lambda_2,\theta,\varphi}$.
Therefore, by Definition \ref{lem8p}, we have
$\trll{\rho}{0}{\Lambda_1}(x)=\trll{\rho}{0}{\Lambda}(x)=\trll{\rho}{0}{\Lambda_2}(x)$.
Hence the map $T_0 : \caB_0\to \caB(\caH)$ is well-defined.
To see that it is a $*$-homomorphism, let $x\in \pza{1}$ and $y\in \pza{2}$.
As above, there is a $\Lambda\in\ctv$ such that $\Lambda_1,\Lambda_2\subset \Lambda$, and we have
$x,y\in \pza{}$. On $\pi_0(\al_\Lambda)''$, $T_0$ is a $*$-homomorphism we see that
$T_0$ is $*$-homomorphism on $\caB_0$. 
By the definition, $T_0$ is a bounded map.
As $\caB_0$ is norm-dense in $\btv$ by Lemma \ref{lemhoshi}, $T_0$ extends to a $*$-homomorphism $\trl{\rho}{0}$ 
on $\btv$ uniquely.
From Lemma \ref{lem8}, for each $\Lambda$, there is some
$C_\Lambda\in\ctv$ such that
$\trl{\rho}{0}\lmk  \pi_0(\al_\Lambda)''\rmk=T_0\lmk  \pi_0(\al_\Lambda)''\rmk\subset\amf(C_\Lambda)\subset\btv$.
This proves (e) and $\trl{\rho}{0}\lmk \btv\rmk\subset\btv$, 
By the definition, $\trl{\rho}{0}$ is clearly 
$\sigma$w-continuous on $\pza{}$
for any $\Lambda\in \ctv$ (i). (c) holds from the above definition.
To see that $\trl{\rho}{0}$ satisfies (ii), 
note that any $A\in\caA_{\rm loc}$, belongs to $\caA_\Lambda$ for some $\Lambda\in\ctv$.
Then we have
\begin{align}
\begin{split}
\trl{\rho}{0}\circ\pi_0(A)
=\trll{\rho}{0}{\Lambda} \circ\pi_0(A)
=\Ad\lmk \bar{V}_{\rho\Lambda_0}V_{\rho K_{\Lambda}}^*\rmk\circ\pi_0(A)
=\Ad\lmk \bar{V}_{\rho\Lambda_0}\rmk\circ\rho(A).
\end{split}
\end{align}
In the last equality, we used $A\in \caA_\Lambda\subset \caA_{K_\Lambda^c}$.
As this holds for any $A\in\caA_{\rm loc}$, we obtain (ii).

Next let us prove (a). 
For $X_\rho$ in (a), 
$X_\rho$ and $\trl{\rho}0$ coincides on $\pi_0(\caA_{\Lambda})$, for all $\Lambda\in\ctv$.
From the $\sigma$w-continuity of $X_\rho$ and $\trl{\rho}0$ on $\pza{}$,
$X_\rho$ and $\trl{\rho}0$ coincides on $\pi_0(\caA_{\Lambda})''$, for all $\Lambda\in\ctv$.
Hence they coincides on $\caB_0$.
Because $\caB_0$ is norm-dense in $\btv$, we obtain (a).

To prove (b),
let $\Gamma$ be a cone.
From (ii), 
\begin{align}
\trl{\rho}{0}\circ\pi_0(A)
=\Ad\lmk \bar{V}_{\rho, \Lambda_0}\rmk \circ\rho(A)
=\Ad\lmk \bar{V}_{\rho, \Lambda_0}V_{\rho,\Gamma}^*\rmk\circ\pi_0(A)
\end{align}
for any $A\in \caA_{\Gamma^c}$.
This proves (b).
(d) follows from (b) setting $\Gamma=\Lambda_0$ and $ \bar{V}_{\rho, \Lambda_0}= {V}_{\rho, \Lambda_0}$.
\end{proof}
These extensions depending on $\Lambda_0,\theta,\varphi$
are related to each other as follows.
\begin{lem}\label{lem10}
Consider Setting \ref{setni} and assume the approximate Haag duality.
We have the following.
\begin{description}
\item[(i)]
Let $\theta\in\bbR$, $\varphi\in (0,\pi)$, $\Lambda_0\in \ctv$,
$\rho_1,\rho_2\in \caO_0$ such that $\rho_1\simeq_{u.e.} \rho_2$.
Then for any 
$\bar V_{\rho_i,\Lambda_0}\in \caV_{\rho_i,\Lambda_0}$ $i=1,2$, we have
$\trl{\rho_1}{0}\simeq_{u.e.} \trl{\rho_2}{0}$.
\item[(ii)]
For  $\theta_i\in\bbR$, $\varphi_i\in (0,\pi)$ with
$(\theta_1-\varphi_1,\theta_1+\varphi_1)\subset (\theta_2-\varphi_2,\theta_2+\varphi_2)$,
 we have $\caB_{(\theta_2, \varphi_2)} \subset \caB_{(\theta_1, \varphi_1)}$ and
 $\caC_{(\theta_2, \varphi_2)} \subset \caC_{(\theta_1, \varphi_1)}$.
 For any $\Lambda_0\in \caC_{(\theta_2, \varphi_2)} $,
$\rho\in \caO_0$, $\bar V_{\rho,\Lambda_0}\in \caV_{\rho,\Lambda_0}$,
we have 
\[
T_\rho^{(\theta_1,\varphi_1)\Lambda_0\bar V_{\rho,\Lambda_0}}\vert_{\caB_{(\theta_2, \varphi_2)} }
=T_\rho^{(\theta_2,\varphi_2)\Lambda_0\bar V_{\rho,\Lambda_0}}.
\]
\item[(iii)]
Let $\theta\in\bbR$, $\varphi\in (0,\pi)$, $\Lambda_0,\Lambda_1\in \ctv$,
$\rho\in \caO_0$ and $\bar V_{\rho,\Lambda_0}\in \caV_{\rho,\Lambda_0}$,$\bar V_{\rho,\Lambda_1}\in \caV_{\rho,\Lambda_1}$.
Then 
\[
\trl{\rho}0=\Ad \lmk \bar V_{\rho,\Lambda_0}\bar V_{\rho,\Lambda_1}^*\rmk
\trl{\rho}1.
\]
\end{description}
\end{lem}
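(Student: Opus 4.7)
The plan for all three parts is to invoke the uniqueness statement (a) of Lemma \ref{lem9}, which says that any $*$-homomorphism $X_\rho : \btv \to \caB(\caH)$ that is $\sigma$w-continuous on each $\pi_0(\caA_\Lambda)''$ with $\Lambda\in\ctv$ and agrees with $\Ad(\bar V_{\rho,\Lambda_0})\circ\rho$ on $\pi_0(\caA_{\bbZ^2})$ must coincide with $\trl{\rho}{0}$. In each item I will exhibit a candidate with the required properties.

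For (i), since $\rho_1\simeq_{u.e.}\rho_2$, fix a unitary $W\in\caU(\caH)$ with $\rho_2=\Ad(W)\circ\rho_1$, and set
\begin{align}
U:=\bar V_{\rho_2,\Lambda_0}\,W\,\bar V_{\rho_1,\Lambda_0}^*.
\end{align}
The map $\Ad(U)\circ T_{\rho_1}^{(\theta,\varphi),\Lambda_0,\bar V_{\rho_1,\Lambda_0}}$ is a $*$-homomorphism from $\btv$ to $\caB(\caH)$ and, being the composition of $T_{\rho_1}^{(\theta,\varphi),\Lambda_0,\bar V_{\rho_1,\Lambda_0}}$ with conjugation by a unitary, is $\sigma$w-continuous on every $\pi_0(\caA_\Lambda)''$, $\Lambda\in\ctv$. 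On $\pi_0(A)$ with $A\in\caA_{\bbZ^2}$ a direct computation using (ii) of Lemma \ref{lem9} gives $\Ad(U)\circ T_{\rho_1}^{...}\pi_0(A)=\Ad(\bar V_{\rho_2,\Lambda_0})\circ\rho_2(A)$. By uniqueness this candidate equals $T_{\rho_2}^{(\theta,\varphi),\Lambda_0,\bar V_{\rho_2,\Lambda_0}}$, which proves unitary equivalence.

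For (ii), the inclusion $\caC_{(\theta_2,\varphi_2)}\subset\caC_{(\theta_1,\varphi_1)}$ is immediate from the definition of $\ctv$ (the forbidden angular interval for $(\theta_2,\varphi_2)$ contains that for $(\theta_1,\varphi_1)$), and then $\caB_{(\theta_2,\varphi_2)}\subset\caB_{(\theta_1,\varphi_1)}$ follows from the definition (\ref{btvdef}). Define $Y:=T_\rho^{(\theta_1,\varphi_1),\Lambda_0,\bar V_{\rho,\Lambda_0}}\big\vert_{\caB_{(\theta_2,\varphi_2)}}$. This is a $*$-homomorphism from $\caB_{(\theta_2,\varphi_2)}$ into $\caB(\caH)$; it is $\sigma$w-continuous on $\pi_0(\caA_\Lambda)''$ for every $\Lambda\in\caC_{(\theta_2,\varphi_2)}$ because such $\pi_0(\caA_\Lambda)''$ is also contained in $\caB_{(\theta_1,\varphi_1)}$ (since $\caC_{(\theta_2,\varphi_2)}\subset\caC_{(\theta_1,\varphi_1)}$) and $T_\rho^{(\theta_1,\varphi_1),\Lambda_0,\bar V_{\rho,\Lambda_0}}$ has the continuity there; and it agrees with $\Ad(\bar V_{\rho,\Lambda_0})\circ\rho$ on $\pi_0(\caA_{\bbZ^2})$ by (ii) of Lemma \ref{lem9}. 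The uniqueness statement (a), applied now to $\caB_{(\theta_2,\varphi_2)}$ in place of $\btv$, yields $Y=T_\rho^{(\theta_2,\varphi_2),\Lambda_0,\bar V_{\rho,\Lambda_0}}$.

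For (iii), set $X:=\Ad(\bar V_{\rho,\Lambda_0}\bar V_{\rho,\Lambda_1}^*)\circ T_\rho^{(\theta,\varphi),\Lambda_1,\bar V_{\rho,\Lambda_1}}$. Again this is a $*$-homomorphism on $\btv$, $\sigma$w-continuous on each $\pi_0(\caA_\Lambda)''$, and on $\pi_0(A)$ we compute $X\circ\pi_0(A)=\Ad(\bar V_{\rho,\Lambda_0}\bar V_{\rho,\Lambda_1}^*)\circ\Ad(\bar V_{\rho,\Lambda_1})\circ\rho(A)=\Ad(\bar V_{\rho,\Lambda_0})\circ\rho(A)$, so uniqueness gives $X=T_\rho^{(\theta,\varphi),\Lambda_0,\bar V_{\rho,\Lambda_0}}$. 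There is no genuine obstacle in the argument: once uniqueness in Lemma \ref{lem9} is available, each item amounts to exhibiting and verifying the correct candidate, and the only mild points to check are the $\sigma$w-continuity and the inclusions in (ii), both of which are immediate from the definitions.
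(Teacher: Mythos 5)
Your proposal is correct and follows essentially the same route as the paper: all three parts are reduced to the uniqueness statement of Lemma \ref{lem9} (a), with (ii) and (iii) verbatim the paper's argument. The only minor variation is in (i), where the paper first rewrites $\trl{\rho_1}{0}$ on each $\pi_0(\caA_\Lambda)''$ using the fact that $V_{\rho_2,K_\Lambda}W\in\caV_{\rho_1,K_\Lambda}$ before invoking uniqueness, whereas you verify the characterizing properties of the candidate $\Ad(\bar V_{\rho_2,\Lambda_0}W\bar V_{\rho_1,\Lambda_0}^*)\circ\trl{\rho_1}{0}$ directly; both yield the same intertwining unitary.
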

\begin{proof}
\noindent{\it (i)}
Because $\rho_1\simeq_{u.e.} \rho_2$, there is a unitary $W$ on $\caH$ such that
$\rho_2=\Ad(W)\rho_1$.
Because 
\begin{align}
\Ad\lmk V_{\rho_2,{\tilde \Lambda}} W\rmk\circ\rho_1\vert_{\caA_{{\tilde \Lambda}^c}}
=\Ad\lmk V_{\rho_2,{\tilde \Lambda}} \rmk\circ\rho_2\vert_{\caA_{{\tilde \Lambda}^c}}
=\pi_0\vert_{\caA_{{\tilde \Lambda}^c}},
\end{align}
we have $  V_{\rho_2,{\tilde \Lambda}} W\in \caV_{\rho_1,{\tilde \Lambda}}$, for any ${\tilde \Lambda}\in \ctv$ and $V_{\rho_2\tilde\Lambda}\in \caV_{\rho_2\tilde\Lambda}$.
Using this for $\tilde \Lambda:= K_\Lambda$ with $\Lambda\in\ctv$ and  (c) of Lemma \ref{lem8}, 
we have
\begin{align}
\begin{split}
&\trl{\rho_1}0\vert_{\pza{}}
=\Ad\lmk \bar V_{\rho_1,\Lambda_0}  \lmk V_{\rho_2,K_\Lambda} W\rmk^* \rmk\vert_{\pza{}}\\
&=\Ad\lmk  \bar V_{\rho_1,\Lambda_0} W^* \bar V_{\rho_2,\Lambda_0}^*   
\bar V_{\rho_2,\Lambda_0} V_{\rho_2,K_\Lambda}^* \rmk\vert_{\pza{}}
=\Ad\lmk  \bar V_{\rho_1,\Lambda_0} W^* \bar V_{\rho_2,\Lambda_0}^*   \rmk 
\trl{\rho_2}0\vert_{\pza{}},
\end{split}
\end{align}
for any $\Lambda\in\ctv$ and $K_\Lambda\in\kappa(\Lambda,\theta,\varphi)$.
Hence by Lemma \ref{lem9} (a), we obtain
\begin{align}\label{51}
\trl{\rho_1}0=\Ad\lmk  \bar V_{\rho_1,\Lambda_0} W^* \bar V_{\rho_2,\Lambda_0}^*   \rmk 
\trl{\rho_2}0.
\end{align}
\noindent{\it (ii)} The first claim is trivial.
Note that $X_\rho:=T_\rho^{(\theta_1,\varphi_1)\Lambda_0\bar V_{\rho,\Lambda_0}}\vert_{\caB_{(\theta_2, \varphi_2)} } : \caB_{(\theta_2,\varphi_2)}\to \caB(\caH)$ is a $*$-homomorphism
$\sigma$w-continous on any $\pza{}$ with $\Lambda\in \caC_{(\theta_2,\varphi_2)}
\subset \caC_{(\theta_1,\varphi_1)}$,
satisfying $X_\rho\circ\pi_0=\Ad\lmk \bar V_{\rho,\Lambda_0}  \rmk\circ\rho$.
From Lemma \ref{lem9} (a), we get $X_\rho=T_\rho^{(\theta_2,\varphi_2)\Lambda_0\bar V_{\rho,\Lambda_0}}$.

\noindent{\it (iii)}
Note that 
$X_\rho:=\Ad\lmk  \bar V_{\rho,\Lambda_0}  \bar V_{\rho,\Lambda_1}^*  \rmk\trl{\rho}1 :
\btv\to \caB(\caH)$ is a  $*$-homomorphism
$\sigma$w-continous on any $\pza{}$ with $\Lambda\in\ctv$
such that
\begin{align}
X_\rho\circ\pi_0
=\Ad\lmk \bar V_{\rho,\Lambda_0}  \bar V_{\rho,\Lambda_1}^*   \rmk\trl{\rho}1\circ\pi_0
=\Ad\lmk \bar V_{\rho,\Lambda_0}  \bar V_{\rho,\Lambda_1}^*   \rmk
\Ad \lmk \bar V_{\rho,\Lambda_1}\rmk\rho
=\Ad\lmk \bar V_{\rho,\Lambda_0} \rmk\rho.
\end{align}
From Lemma \ref{lem9} (a), we get $X_\rho=\trl{\rho}0$.
\end{proof}
\begin{lem}\label{lem33}
Consider Setting \ref{setni} and assume the approximate Haag duality.
Let $\theta\in \bbR$, $\varphi\in (0,\pi)$, and $\Lambda_0\in \ctv$.
For any $\varepsilon>0$, $\Lambda\in\ctv$ with $\Lambda_0\subset \Lambda$,
$\rho\in\caO_0$ and $\bar V_{\rho\Lambda_0}\in\caV_{\rho\Lambda_0}$,
we have
\begin{align}
\trl{\rho}{0}(\pza{})\subset \amf(\Lambda_\varepsilon).
\end{align}
\end{lem}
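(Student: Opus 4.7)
My plan is to leverage the $K$-independence of $\trl{\rho}{0}$ on $\pza{}$ established in Lemma \ref{lem8}(ii) and Definition \ref{lem8p}: for any $x\in\pza{}$,
\[
\trl{\rho}{0}(x)=\bar V_{\rho,\Lambda_0}\,V_{\rho,K}^{*}\,x\,V_{\rho,K}\,\bar V_{\rho,\Lambda_0}^{*}
\]
holds for every $K\in\kappa_{\Lambda,\theta,\varphi}$ and $V_{\rho,K}\in\caV_{\rho,K}$, with the value on the left independent of these choices. This lets me shift $K$ arbitrarily far along its own cone direction before testing commutation. Concretely, fix any $K_{0}\in\kappa_{\Lambda,\theta,\varphi}$ and set $K_{t}:=K_{0}+t\bm e_{K_{0}}$. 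Translation along $\bm e_{K_{0}}$ preserves the opening angle and only enlarges the gap to $\Lambda$, so $K_{t}\in\kappa_{\Lambda,\theta,\varphi}$ for every $t\ge 0$, and I pick $V_{\rho,K_{t}}\in\caV_{\rho,K_{t}}$ arbitrarily.

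To prove $\trl{\rho}{0}(x)\in\amf(\Lambda_\varepsilon)=\pi_{0}(\caA_{\Lambda_\varepsilon^{c}})'$, it is enough to verify $[\trl{\rho}{0}(x),\pi_{0}(A)]=0$ for $A$ in the norm-dense subspace of strictly local elements of $\caA_{\Lambda_\varepsilon^{c}}$, then extend by norm continuity of $\pi_{0}$. Fix such a local $A$. Its support is finite, while the apex of $K_{t}$ moves to infinity along $\bm e_{K_{0}}$, so for all sufficiently large $t$ one has $\supp A\cap K_{t}=\emptyset$, i.e.\ $A\in\caA_{K_{t}^{c}}$. Moreover $\Lambda_{0}\subset\Lambda\subset\Lambda_\varepsilon$ gives $\caA_{\Lambda_\varepsilon^{c}}\subset\caA_{\Lambda_{0}^{c}}$, so $A\in\caA_{\Lambda_{0}^{c}}$ as well.

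The remainder is a short bracket computation with the sandwiched expression above. From $A\in\caA_{\Lambda_{0}^{c}}$ one has $\bar V_{\rho,\Lambda_{0}}\rho(A)\bar V_{\rho,\Lambda_{0}}^{*}=\pi_{0}(A)$, from $A\in\caA_{K_{t}^{c}}$ one has $V_{\rho,K_{t}}\rho(A)V_{\rho,K_{t}}^{*}=\pi_{0}(A)$, and from $\Lambda\subset\Lambda_\varepsilon$ together with the commutation of $\caA_{\Lambda}$ with $\caA_{\Lambda^{c}}$ one has $\pi_{0}(\caA_{\Lambda_\varepsilon^{c}})\subset\pi_{0}(\caA_{\Lambda^{c}})\subset\pi_{0}(\caA_{\Lambda})'$, so $[\pi_{0}(A),x]=0$. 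Substituting successively pulls $\pi_{0}(A)$ through $\bar V_{\rho,\Lambda_{0}}^{*}$ (where it becomes $\rho(A)$), through $V_{\rho,K_{t}}$ (back to $\pi_{0}(A)$), across $x$, and then back out through $V_{\rho,K_{t}}^{*}$ and $\bar V_{\rho,\Lambda_{0}}$, producing $\pi_{0}(A)\trl{\rho}{0}(x)=\trl{\rho}{0}(x)\pi_{0}(A)$. I do not foresee any genuine obstacle: the approximate Haag duality enters only indirectly, via Lemma \ref{lem8}, whose $K$-independence statement is what allows us to send $K_{t}$ to infinity without changing $\trl{\rho}{0}(x)$. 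The only subtle point is to remember this translation freedom before attempting the commutation check.
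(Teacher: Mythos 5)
Your argument is correct, but it follows a genuinely different route from the paper. The paper's proof is geometric and avoids any limiting argument: writing $\Lambda=\Lambda_{\bm p,\bar\theta,\bar\varphi}$, it constructs one explicit auxiliary cone $K_\Lambda\in\kappa_{\Lambda,\theta,\varphi}$ squeezed into the $\varepsilon$-collar so that $\Lambda_0\cup K_\Lambda\subset\Lambda_\varepsilon$, and then applies Lemma \ref{lem3} to conclude that the implementing unitary $\bar V_{\rho\Lambda_0}V_{\rho K_\Lambda}^*$ itself lies in $\amf(\Lambda_\varepsilon)$; since $\pza{}\subset\amf(\Lambda)\subset\amf(\Lambda_\varepsilon)$, conjugation stays inside $\amf(\Lambda_\varepsilon)$ and the lemma is immediate. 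You instead keep $K$ abstract, invoke the $K$-independence of Definition \ref{lem8p}, translate $K_t=K_0+t\bm e_{K_0}$ to infinity (note $K_t\subset K_0$, so the same witnesses in Definition \ref{def36} show $K_t\in\kappa_{\Lambda,\theta,\varphi}$ — your ``enlarges the gap'' phrasing is informal but the inclusion argument is what makes it rigorous), and verify the commutant relation against strictly local $A\in\caA_{\Lambda_\varepsilon^c}$, extending by norm density; the intertwining relations you use are exactly those of Setting \ref{setni}, and the commutation $[\pi_0(A),x]=0$ for $x\in\pza{}$ is valid since $\pi_0(\caA_{\Lambda^c})\subset\pi_0(\caA_\Lambda)'$. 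What each approach buys: the paper's gives a one-line algebraic conclusion once the cone-fitting is done, with no density or limit; yours avoids the cone-fitting geometry entirely, and in fact never uses the fattening, so it proves the stronger inclusion $\trl{\rho}{0}(\pza{})\subset\amf(\Lambda)$ (the same strengthening also follows even more directly from the one-line check that $\trl{\rho}{0}(\pi_0(\caA_\Lambda))\subset\amf(\Lambda)$ together with the $\sigma$w-continuity of $\trl{\rho}{0}$ on $\pza{}$ and the $\sigma$w-closedness of $\amf(\Lambda)$, with no translation of $K$ at all); the $\varepsilon$-fattened statement is all the paper needs later, so both proofs serve.
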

\begin{proof}
We may write $\Lambda=\Lambda_{\bm p,\bar\theta,\bar\varphi}$
with $\theta+\varphi<\bar\theta-\bar\varphi<\bar\theta+\bar\varphi<\theta-\varphi+2\pi$.
Set $\varepsilon':=\frac 14\min\{\varepsilon,\bar\theta-\bar\varphi-\theta-\varphi \}>0$.
Then
\begin{align}
K_\Lambda:=\Lambda_{\bm p,\bar\theta-\bar\varphi-2\varepsilon',\varepsilon'}
+\lmk
R_{2\varepsilon', \frac14\min
\{\varepsilon', \bar\theta-\bar\varphi-3\varepsilon'-\theta-\varphi\}}+1\rmk\bm 
e_{\bar\theta-\bar\varphi-2\varepsilon'}
\end{align}
belongs to $\kappa_{\Lambda,\theta,\varphi}$.
Hence, by definition, with $V_{\rho K_{\Lambda}}\in \caV_{\rho K_{\Lambda}}$, we have 
\begin{align}
\trl{\rho}{0}\vert_{\pza{}}
=\Ad\lmk \bar V_{\rho\Lambda_0} V_{\rho K_{\Lambda}}^*\rmk\vert_{\pza{}}.
\end{align}
Note that
\begin{align}
\Lambda_0\cup K_\Lambda\subset \Lambda\cup \Lambda_{\bm p, \bar\theta-\bar\varphi-2\varepsilon',\varepsilon'}\subset \Lambda_\varepsilon.
\end{align}
By Lemma \ref{lem3}, from this, we have
$V_{\rho\Lambda} V_{\rho K_{\Lambda}}^*\in \amf(\Lambda_\varepsilon)$.
Hence we get
\begin{align}
\trl{\rho}{0}\lmk {\pi_0(\al_\Lambda)''}\rmk
=\Ad\lmk \bar V_{\rho\Lambda} V_{\rho K_{\Lambda}}^*\rmk\lmk{\pza{}}\rmk
\subset  \amf(\Lambda_\varepsilon).
\end{align}
\end{proof}
\begin{lem}\label{sclem}
Consider Setting \ref{setni} and assume the approximate Haag duality.
Let $\theta\in \bbR$, $\varphi\in (0,\pi)$ and $\Lambda_0\in \ctv$.
For any $\Lambda\in\ctv$, $\rho\in\caO_0$ and $\bar V_{\rho\Lambda_0}\in\caV_{\rho\Lambda_0}$,
$\trl{\rho}{0}$ is $\sigma$w-continuous on $\amf(\Lambda)$.
\end{lem}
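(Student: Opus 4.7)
The plan is to reduce $\sigma$w-continuity on $\amf(\Lambda)$ to $\sigma$w-continuity on some $\pi_0(\caA_{\tilde\Lambda})''$ via the approximate Haag duality, exploiting the fact that $\sigma$w-convergence is preserved under inner conjugation by a fixed bounded operator, and that $\trl{\rho}{0}$ is already known to be $\sigma$w-continuous on $\pi_0(\caA_{\tilde\Lambda})''$ for any $\tilde\Lambda \in \ctv$ by Lemma \ref{lem9}(i).

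First, I would fix $\Lambda \in \ctv$ and choose $\varepsilon > 0$ small enough that $|\arg\Lambda| + 4\varepsilon < 2\pi$ and the inflated-and-translated cone $\tilde\Lambda := (\Lambda - R_{|\arg\Lambda|,\varepsilon}\bm e_\Lambda)_\varepsilon$ still lies in $\ctv$; since $\Lambda$ is a cone in $\ctv$, its argument interval avoids $\bbA_{[\theta-\varphi,\theta+\varphi]}$ with a strictly positive gap, so a sufficiently small $\varepsilon$-thickening preserves this property. Applying the approximate Haag duality (\ref{lem7p}), we obtain a unitary $U := U_{\Lambda,\varepsilon}$ with
\[
\amf(\Lambda) \subset \Ad(U)\bigl( \pi_0(\caA_{\tilde\Lambda})''\bigr),
\]
and by Lemma \ref{lemhoshi} this $U$ belongs to $\btv$, so $\trl{\rho}{0}(U)$ is a well-defined bounded operator on $\caH$.

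Next, I would let $(x_\alpha) \subset \amf(\Lambda)$ be a bounded net converging $\sigma$-weakly to $x \in \amf(\Lambda)$, and set $y_\alpha := U^* x_\alpha U$, $y := U^* x U$. The net $(y_\alpha)$ lies in $\pi_0(\caA_{\tilde\Lambda})''$ and, since $\sigma$-weak convergence is preserved by left and right multiplication with fixed operators in $\caB(\caH)$, $y_\alpha \to y$ $\sigma$-weakly. By Lemma \ref{lem9}(i), $\trl{\rho}{0}$ is $\sigma$-weakly continuous on $\pi_0(\caA_{\tilde\Lambda})''$, hence $\trl{\rho}{0}(y_\alpha) \to \trl{\rho}{0}(y)$ $\sigma$-weakly. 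Because $\trl{\rho}{0}$ is a $*$-homomorphism,
\[
\trl{\rho}{0}(x_\alpha) = \trl{\rho}{0}(U)\,\trl{\rho}{0}(y_\alpha)\,\trl{\rho}{0}(U)^*,
\]
and one more application of the same observation about two-sided multiplication gives $\trl{\rho}{0}(x_\alpha) \to \trl{\rho}{0}(x)$ $\sigma$-weakly, proving $\sigma$w-continuity on bounded sets, which suffices for $\sigma$w-continuity on the von Neumann algebra $\amf(\Lambda)$.

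The only delicate point is the preliminary geometric choice of $\varepsilon$: one needs to know that $\Lambda \in \ctv$ leaves enough angular room so that $\tilde\Lambda = (\Lambda - R_{|\arg\Lambda|,\varepsilon}\bm e_\Lambda)_\varepsilon$ still belongs to $\ctv$ (otherwise Lemma \ref{lem9}(i) does not apply to $\pi_0(\caA_{\tilde\Lambda})''$). This is essentially a restatement of the observation, already used in the proof of Lemma \ref{lem33}, that a cone in $\ctv$ has a strictly positive gap to the forbidden direction; modulo that standard cone-geometry check, the argument is otherwise a direct transplant of $\sigma$w-continuity across the unitary equivalence furnished by approximate Haag duality.
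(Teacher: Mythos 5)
Your proposal is correct and follows essentially the same route as the paper: conjugate by the unitary $U_{\Lambda,\varepsilon}$ from the approximate Haag duality to land in $\pi_0\lmk \caA_{\lmk \Lambda-R_{|\arg\Lambda|,\varepsilon}\bm e_\Lambda\rmk_\varepsilon}\rmk''$, invoke Lemma \ref{lem9} (i) there, and transport the $\sigma$w-convergence back via $\Ad\lmk \trl{\rho}{0}(U_{\Lambda,\varepsilon})\rmk$. The only cosmetic difference is your detour through bounded nets and normality; the paper's argument works verbatim for arbitrary $\sigma$w-convergent nets, since only conjugation by fixed unitaries and Lemma \ref{lem9} (i) are used.
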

\begin{proof}
Let $\{x_\alpha\}_\alpha$ be a net in $\amf(\Lambda)$
converging to $x\in \amf(\Lambda)$ in the $\sigma$w-topology.
By the approcimate Haag duality, we have
$\Ad(U_{\Lambda,\varepsilon}^*)(x_\alpha), \Ad(U_{\Lambda,\varepsilon}^*)(x)\in\pi_0\lmk \caA_{\lmk \Lambda-R_{|\arg\Lambda|,\varepsilon}\bm e_\Lambda\rmk_\varepsilon}\rmk''$,
for $\varepsilon>0$ small enough.
From
$\sigma w-\lim_{\alpha}\Ad(U_{\Lambda,\varepsilon}^*)(x_\alpha)=\Ad(U_{\Lambda,\varepsilon}^*)(x)$
and the $\sigma$w-continuity of $\trl{\rho}{0}$ on 
$\pi_0\lmk \caA_{\lmk \Lambda-R_{|\arg\Lambda|,\varepsilon}\bm e_\Lambda\rmk_\varepsilon}\rmk''
$,
we have
\begin{align}
\sigma w-\lim_{\alpha}\Ad\lmk \trl{\rho}{0}\lmk  U_{\Lambda,\varepsilon}^*\rmk\rmk
\circ
\trl{\rho}{0}(x_\alpha)=\Ad\lmk \trl{\rho}{0}\lmk U_{\Lambda,\varepsilon}^*\rmk\rmk)\circ\trl{\rho}{0}(x).
\end{align}
This proves the claim.
\end{proof}

\begin{lem}\label{niku}
Consider Setting \ref{setni} and assume the approximate Haag duality.
Let $\theta\in \bbR$, $\varphi\in (0,\pi)$.
Let $\rho\in\caO_0$,
$\Lambda_1,\Lambda_2\in\ctv$, $t\ge 0$, $\varepsilon,\delta>0$
with $\lmk \Lambda_1\rmk_{\varepsilon+\delta}, \lmk \Lambda_2\rmk_{\varepsilon+\delta}\in\ctv$
and $|\arg\Lambda_2|+4\varepsilon<2\pi$.
Let $\bar V_{\rho\Lambda_1}\in \caV_{\rho\Lambda_1}$.
Let $U_{\Lambda_2\varepsilon}$, $f_{|\arg\Lambda_2|,\varepsilon,\delta}(t)$
be a unitary and a function
given in Definition \ref{assum7}.
Suppose that $\lmk\Lambda_2-t\bm e_{\Lambda_2}\rmk_{\varepsilon+\delta}\subset \Lambda_1^c$.
Then we have
\begin{align}
\lV
\trl{\rho}{1}\lmk U_{\Lambda_2,\varepsilon}\rmk  U_{\Lambda_2,\varepsilon}^*-
\unit
\rV\le 2 f_{|\arg \Lambda_2|,\varepsilon,\delta} (t).
\end{align}
\end{lem}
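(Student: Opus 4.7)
The strategy is to replace $U_{\Lambda_2,\varepsilon}$ by a nearby ``local'' unitary $\tilde U$ that lives in a subalgebra on which $\trl{\rho}{1}$ is forced to act trivially, and then to finish with the triangle inequality together with contractivity of the $*$-homomorphism $\trl{\rho}{1}$.

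Concretely, I would first apply Definition~\ref{assum7}(ii) with the given $\Lambda_2,\varepsilon,\delta,t$ (this is legal by the hypotheses $(\Lambda_2)_{\varepsilon+\delta}\in\ctv$ and $|\arg\Lambda_2|+4\varepsilon<2\pi$) to produce a unitary
\[
\tilde U := \tilde U_{\Lambda_2,\varepsilon,\delta,t}\in \pi_0\lmk\caA_S\rmk'',\qquad S:=\lmk\Lambda_2\rmk_{\varepsilon+\delta}-t\bm e_{\Lambda_2},
\]
satisfying $\lV U_{\Lambda_2,\varepsilon}-\tilde U\rV\le f_{|\arg\Lambda_2|,\varepsilon,\delta}(t)$. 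Since fattening commutes with translation, $S=\lmk\Lambda_2-t\bm e_{\Lambda_2}\rmk_{\varepsilon+\delta}$, so the hypothesis $\lmk\Lambda_2-t\bm e_{\Lambda_2}\rmk_{\varepsilon+\delta}\subset\Lambda_1^c$ gives $\caA_S\subset\caA_{\Lambda_1^c}$; and $S$ is a translate of the cone $(\Lambda_2)_{\varepsilon+\delta}\in\ctv$, hence $S\in\ctv$.

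Next, I would show that $\trl{\rho}{1}$ fixes $\tilde U$. By Lemma~\ref{lem9}(d) applied with $\Lambda_0\to\Lambda_1$, the map $\trl{\rho}{1}\circ\pi_0$ coincides with $\pi_0$ on $\caA_{\Lambda_1^c}$, in particular on $\caA_S$. By Lemma~\ref{lem9}(i), $\trl{\rho}{1}$ is $\sigma$w-continuous on $\pi_0(\caA_S)''$, so normality of $*$-homomorphisms (or a Kaplansky density argument) upgrades the pointwise equality on $\pi_0(\caA_S)$ to $\trl{\rho}{1}(x)=x$ for every $x\in\pi_0(\caA_S)''$; hence $\trl{\rho}{1}(\tilde U)=\tilde U$.

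Finally, since $U_{\Lambda_2,\varepsilon}\in\btv$ by Lemma~\ref{lemhoshi}, $\trl{\rho}{1}(U_{\Lambda_2,\varepsilon})$ is well-defined, and $\trl{\rho}{1}$ is contractive as a $*$-homomorphism, so
\begin{align*}
\lV\trl{\rho}{1}(U_{\Lambda_2,\varepsilon})U_{\Lambda_2,\varepsilon}^*-\unit\rV
&=\lV\trl{\rho}{1}(U_{\Lambda_2,\varepsilon})-U_{\Lambda_2,\varepsilon}\rV\\
&\le\lV\trl{\rho}{1}(U_{\Lambda_2,\varepsilon}-\tilde U)\rV+\lV\tilde U-U_{\Lambda_2,\varepsilon}\rV\\
&\le 2\lV U_{\Lambda_2,\varepsilon}-\tilde U\rV\le 2 f_{|\arg\Lambda_2|,\varepsilon,\delta}(t),
\end{align*}
which is the claimed bound. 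There is no serious obstacle here; the only care needed is the bookkeeping that $S$ is both contained in $\Lambda_1^c$ (so that $\trl{\rho}{1}$ acts trivially on $\pi_0(\caA_S)$ by Lemma~\ref{lem9}(d)) and remains a cone in $\ctv$ (so that the $\sigma$w-continuity of Lemma~\ref{lem9}(i) applies and the trivial action extends to $\pi_0(\caA_S)''$).
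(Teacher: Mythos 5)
Your proof is correct and follows essentially the same route as the paper: approximate $U_{\Lambda_2,\varepsilon}$ by $\tilde U_{\Lambda_2,\varepsilon,\delta,t}\in\pi_0\lmk\caA_{(\Lambda_2-t\bm e_{\Lambda_2})_{\varepsilon+\delta}}\rmk''$, observe that $\trl{\rho}{1}$ fixes it via Lemma~\ref{lem9}(d) together with the $\sigma$w-continuity of Lemma~\ref{lem9}(i), and conclude by the triangle inequality and contractivity. The bookkeeping you flag (that $(\Lambda_2-t\bm e_{\Lambda_2})_{\varepsilon+\delta}$ lies in $\Lambda_1^c$ and is still a cone in $\ctv$) is exactly the point the paper relies on, so nothing is missing.
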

\begin{proof}
\begin{align}
\begin{split}
&\lV
\trl{\rho}{1}\lmk U_{\Lambda_2,\varepsilon}\rmk U_{\Lambda_2,\varepsilon}^*
-\unit\rV=
\lV
\trl{\rho}{1}\lmk U_{\Lambda_2,\varepsilon}\rmk-U_{\Lambda_2,\varepsilon}
\rV\\
&\le
\lV
 \trl{\rho}{1}\lmk 
\tilde U_{\Lambda_2,\varepsilon,\delta,t}
\rmk
-\tilde U_{\Lambda_2,\varepsilon,\delta,t}
\rV
+2\lV
\tilde U_{\Lambda_2,\varepsilon,\delta,t}
-U_{\Lambda_2,\varepsilon}
\rV\\
&=
2\lV
\tilde U_{\Lambda_2,\varepsilon,\delta,t}
-U_{\Lambda_2,\varepsilon}
\rV
\le 2 f_{|\arg \Lambda_2|,\varepsilon,\delta} (t),
\end{split}
\end{align}
because $\lmk\Lambda_2-t\bm e_{\Lambda_2}\rmk_{\varepsilon+\delta}\subset \Lambda_1^c$,
Lemma \ref{lem9} (d) and $\sigma$w-continuity of $ \trl{\rho}{1}$
on $\pi_0\lmk\caA_{\lmk \Lambda_2-t\bm e_{\Lambda_2}\rmk_{\varepsilon+\delta}}\rmk''$.
\end{proof}
\begin{lem}\label{lem319}
Consider Setting \ref{setni} and assume the approximate Haag duality.
Let $\theta\in \bbR$, $\varphi\in (0,\pi)$, $\Lambda_1,\Lambda_2\in \ctv$
with $\arg (\Lambda_1)\cap \arg (\Lambda_2)=\emptyset$.
Then for any $\rho\in\caO_0$ and $\bar V_{\rho_1\Lambda_1+t\bm e_{\Lambda_1}}
\in \caV_{\rho_1\Lambda_1+t\bm e_{\Lambda_1}}$,
$t\ge 0$
we have
\begin{align}
\lim_{t\to\infty} 
\lV
\left. \trltb{\rho}{1}\right\vert_{\amf(\Lambda_2)}-\id_{\amf(\Lambda_2)}
\rV=0.
\end{align}
\end{lem}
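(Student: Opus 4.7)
The plan is to reduce everything to the control afforded by the approximate Haag duality together with Lemma \ref{niku}. Set $T^{t}:=\trltb{\rho}{1}$, $\varphi_{2}:=|\arg\Lambda_{2}|$, and fix $\varepsilon,\delta>0$ small enough that $\varphi_{2}+4\varepsilon<2\pi$ and both $(\Lambda_{2})_{\varepsilon+\delta}$ and $\lmk\Lambda_{2}-R_{\varphi_{2},\varepsilon}\bm e_{\Lambda_{2}}\rmk_{\varepsilon}$ lie in $\ctv$. The key observation is that, by approximate Haag duality, every $x\in\amf(\Lambda_{2})$ has the form $x=\Ad(U_{\Lambda_{2},\varepsilon})(y)$ for some $y\in\pi_{0}\lmk\caA_{\lmk\Lambda_{2}-R_{\varphi_{2},\varepsilon}\bm e_{\Lambda_{2}}\rmk_{\varepsilon}}\rmk''$ with $\lV y\rV=\lV x\rV$. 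So $T^{t}(x)-x$ splits into a contribution from $T^{t}$ acting on $y$ (which I will show is zero for large $t$) and a contribution from $T^{t}$ acting on the unitary $U_{\Lambda_{2},\varepsilon}$ (which I control by Lemma \ref{niku}).

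For the geometric part, the hypothesis $\arg\Lambda_{1}\cap\arg\Lambda_{2}=\emptyset$ means that a small angular neighbourhood of $\arg\Lambda_{2}$ stays disjoint from a small angular neighbourhood of $\arg\Lambda_{1}$. From this and Lemma \ref{lem1p}-type geometry I claim the following:
\begin{description}
\item[(a)] there exists $t_{0}\ge 0$ such that, for every $t\ge t_{0}$,
\[
\lmk\Lambda_{2}-R_{\varphi_{2},\varepsilon}\bm e_{\Lambda_{2}}\rmk_{\varepsilon}\subset\lmk\Lambda_{1}+t\bm e_{\Lambda_{1}}\rmk^{c};
\]
\item[(b)] there is a function $\tau(t)$ with $\tau(t)\to\infty$ as $t\to\infty$ such that
\[
\lmk\Lambda_{2}-\tau(t)\bm e_{\Lambda_{2}}\rmk_{\varepsilon+\delta}\subset\lmk\Lambda_{1}+t\bm e_{\Lambda_{1}}\rmk^{c}.
\]
\end{description}
Both follow from angular disjointness: after translating backwards along $-\bm e_{\Lambda_{2}}$, the cone $\lmk\Lambda_{2}-\tau\bm e_{\Lambda_{2}}\rmk_{\varepsilon+\delta}$ is sent to infinity in a direction angularly separated from the directions reached by $\Lambda_{1}+t\bm e_{\Lambda_{1}}$, so for any prescribed $\tau$ one can take $t$ large enough.

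With (a) and (b) in hand the proof proceeds as follows. Because $\caA_{\lmk\Lambda_{2}-R_{\varphi_{2},\varepsilon}\bm e_{\Lambda_{2}}\rmk_{\varepsilon}}\subset\caA_{\lmk\Lambda_{1}+t\bm e_{\Lambda_{1}}\rmk^{c}}$ for $t\ge t_{0}$ by (a), Lemma \ref{lem9} (d) gives $T^{t}\circ\pi_{0}=\pi_{0}$ on this subalgebra, and the $\sigma$w-continuity from Lemma \ref{lem9} (i) upgrades this to $T^{t}(y)=y$. Hence
\begin{align*}
T^{t}(x)-x=T^{t}(U_{\Lambda_{2},\varepsilon})\,y\,T^{t}(U_{\Lambda_{2},\varepsilon})^{*}-U_{\Lambda_{2},\varepsilon}\,y\,U_{\Lambda_{2},\varepsilon}^{*},
\end{align*}
whose norm is bounded by $2\lV x\rV\cdot\lV T^{t}(U_{\Lambda_{2},\varepsilon})-U_{\Lambda_{2},\varepsilon}\rV$. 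Now apply Lemma \ref{niku} with its $\Lambda_{1}$ taken to be $\Lambda_{1}+t\bm e_{\Lambda_{1}}$, its $\Lambda_{2}$ taken to be $\Lambda_{2}$, and its $t$ taken to be $\tau(t)$ from (b); this yields
\[
\lV T^{t}(U_{\Lambda_{2},\varepsilon})U_{\Lambda_{2},\varepsilon}^{*}-\unit\rV\le 2f_{\varphi_{2},\varepsilon,\delta}(\tau(t)),
\]
and therefore $\lV T^{t}|_{\amf(\Lambda_{2})}-\id\rV\le 4f_{\varphi_{2},\varepsilon,\delta}(\tau(t))\to 0$.

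The only genuinely non-routine step is the geometric construction of $\tau(t)\to\infty$ in (b); everything else is a direct combination of approximate Haag duality, Lemma \ref{lem9} (d)--(i), and Lemma \ref{niku}. Note that the bound in Lemma \ref{niku} is independent of the choice of $\bar V_{\rho,\Lambda_{1}+t\bm e_{\Lambda_{1}}}$, so the conclusion holds uniformly in this choice.
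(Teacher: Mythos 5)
Your proof is correct and takes essentially the same route as the paper's: decompose $x=\Ad\lmk U_{\Lambda_2,\varepsilon}\rmk(y)$ via the approximate Haag duality, use Lemma \ref{lem9} (d), (i) to get $\trltb{\rho}{1}(y)=y$ once the relevant cone attached to $\Lambda_2$ lies in $\lmk\Lambda_1+t\bm e_{\Lambda_1}\rmk^c$, and control $\trltb{\rho}{1}\lmk U_{\Lambda_2,\varepsilon}\rmk-U_{\Lambda_2,\varepsilon}$ by Lemma \ref{niku}; the paper merely phrases your rate function $\tau(t)$ as ``fix $s$ with $2f_{|\arg\Lambda_2|,\varepsilon,\delta}(s)<\epsilon'$, then take $t\ge t_1$''. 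The only thing to state explicitly is that $\varepsilon,\delta$ must also be chosen with $(\Lambda_1)_{\varepsilon+\delta}\in\ctv$ and $\arg(\Lambda_1)_{\varepsilon+\delta}\cap\arg(\Lambda_2)_{\varepsilon+\delta}=\emptyset$, since both the hypotheses of Lemma \ref{niku} (applied with $\Lambda_1+t\bm e_{\Lambda_1}$ in place of its $\Lambda_1$) and your geometric claims (a), (b) concern the fattened cones rather than $\Lambda_1,\Lambda_2$ themselves.
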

\begin{proof}
Choose and fix $\varepsilon,\delta>0$ such that
$(\Lambda_1)_{\varepsilon+\delta}, (\Lambda_2)_{\varepsilon+\delta}\in \ctv$,
$|\arg\Lambda_2|+4\varepsilon<2\pi$
and $\arg (\Lambda_1)_{\varepsilon+\delta}\cap \arg (\Lambda_2)_{\varepsilon+\delta}=\emptyset$.
Set
\begin{align}
\tilde\Lambda_2
:=\lmk \Lambda_2\rmk_{\varepsilon}-R_{|\arg\Lambda_2|,\varepsilon}\bm e_{\Lambda_2}.
\end{align}
By the approximate Haag duality, we have
\begin{align}
\amf(\Lambda_2)\subset \Ad\lmk U_{\Lambda_2,\varepsilon}\rmk
\lmk\pi_0\lmk \caA_{\tilde\Lambda_2}\rmk''\rmk.
\end{align}

Fix any $\epsilon'>0$.
Fix $s\ge 0$ such that 
\begin{align}
2f_{|\arg\Lambda_2|,\varepsilon,\delta}(s)<\epsilon'.
\end{align}
There is some $t_1\ge 0$ such that 
\begin{align}\label{deta}
\tilde\Lambda_2, \lmk
\Lambda_2-s\bm e_{\Lambda_2}
\rmk_{\varepsilon+\delta}
\subset 
\lmk \Lambda_1+t\bm e_{\Lambda_1}\rmk^c,
\end{align}
for all $t\ge t_1$.
From Lemma \ref{niku},
we have
\begin{align}
\lV \trltb{\rho}1\lmk U_{\Lambda_2\varepsilon}\rmk-U_{\Lambda_2\varepsilon} \rV<\epsilon',
\end{align}
for any $t\ge t_1$.
Then for any $t\ge t_1$,
$x\in \amf(\Lambda_2)$,
we have
\begin{align}
\begin{split}
&\lV
\trltb{\rho}1(x)-x
\rV
=
\lV
\trltb{\rho}1\circ \Ad\lmk U_{\Lambda_2\varepsilon}\rmk 
\circ \Ad\lmk U_{\Lambda_2\varepsilon}^*\rmk(x)-x
\rV\\
&=
\lV
\Ad\lmk\trltb{\rho}1\lmk U_{\Lambda_2\varepsilon}\rmk \rmk
\circ 
\trltb{\rho}1\lmk \Ad\lmk U_{\Lambda_2\varepsilon}^*\rmk(x)\rmk-x
\rV\\
&=\lV
\Ad\lmk\trltb{\rho}1\lmk U_{\Lambda_2\varepsilon}\rmk \rmk
\circ 
 \Ad\lmk U_{\Lambda_2\varepsilon}^*\rmk(x)-x
\rV\le 2\epsilon' \lV x\rV.
\end{split}
\end{align}
In the third equality,
we used $ \Ad\lmk U_{\Lambda_2\varepsilon}^*\rmk(x)\in \pi_0\lmk \caA_{\tilde\Lambda_2}\rmk''$
and
\begin{align}\label{kore}
\left.\trltb{\rho}1\right\vert_{\pi_0\lmk \caA_{\tilde\Lambda_2}\rmk''}
=\id_{\pi_0\lmk \caA_{\tilde\Lambda_2}\rmk''}.
\end{align}
The last property (\ref{kore}) follows from (\ref{deta})
and Lemma \ref{lem9} (d) (i).
Hence we get
\begin{align}
\lV
\left. \trltb{\rho}1\right\vert_{\amf(\Lambda_2)}-\id_{\amf(\Lambda_2)}
\rV\le 2\epsilon'
\end{align}
for any $t\ge t_1$, completing the proof.
\end{proof}
\begin{lem}\label{lem320}
Consider Setting \ref{setni} and assume the approximate Haag duality.
Let $\theta\in \bbR$, $\varphi\in (0,\pi)$, $\Lambda_1,\Lambda_2\in \ctv$
with $\arg (\Lambda_1)\cap \arg (\Lambda_2)=\emptyset$.
Then we have
\begin{align}
\lim_{t\to \infty}\sup_{\substack{x\in \amf(\Lambda_2), \\y\in \amf(\Lambda_1+t\bm e_{\Lambda_1}),\\ \lV x\rV,\lV y\rV\le 1}}
\lV
xy-yx
\rV=0.
\end{align}
\end{lem}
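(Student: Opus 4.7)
The plan is to mimic the argument of Lemma \ref{lem201}: use the approximate Haag duality at the shifted cone $\Lambda_1+t\bm{e}_{\Lambda_1}$ to replace any $y\in\amf(\Lambda_1+t\bm{e}_{\Lambda_1})$ by a conjugate element localized in $\Lambda_2^{c}$, which will then commute with $x\in\amf(\Lambda_2)=\pi_0(\caA_{\Lambda_2^{c}})'$ up to a small error.

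Since $\arg\Lambda_1\cap\arg\Lambda_2=\emptyset$, I would first fix $\varepsilon,\delta>0$ small enough that $|\arg\Lambda_1|+4\varepsilon<2\pi$, $(\Lambda_1)_{\varepsilon+\delta},(\Lambda_2)_{\varepsilon+\delta}\in\ctv$, and so that $\arg\Lambda_1$ and $\arg\Lambda_2$ remain disjoint after $\varepsilon+\delta$-enlargement. A geometric check (of the type carried out in the proof of Lemma \ref{lem23}) then shows that for all $t$ sufficiently large both $(\Lambda_1+(t-R_{|\arg\Lambda_1|,\varepsilon})\bm{e}_{\Lambda_1})_{\varepsilon}$ and $(\Lambda_1+(t/2)\bm{e}_{\Lambda_1})_{\varepsilon+\delta}$ are contained in $\Lambda_2^{c}$. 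Writing $U:=U_{\Lambda_1+t\bm{e}_{\Lambda_1},\varepsilon}$ for the unitary from Definition \ref{assum7}(i), any $y$ in the unit ball of $\amf(\Lambda_1+t\bm{e}_{\Lambda_1})$ admits the representation $y=UzU^{*}$ with $z:=U^{*}yU\in\pi_0\bigl(\caA_{(\Lambda_1+(t-R_{|\arg\Lambda_1|,\varepsilon})\bm{e}_{\Lambda_1})_{\varepsilon}}\bigr)''\subset\pi_0(\caA_{\Lambda_2^{c}})''$, so that $[x,z]=0$. Applying the Leibniz identity for the commutator and using $[x,z]=0$ gives
\begin{align*}
[x,y]=[x,U]\,zU^{*}+Uz\,[x,U^{*}],
\end{align*}
whence $\lV [x,y]\rV\le\lV y\rV\bigl(\lV [x,U]\rV+\lV [x,U^{*}]\rV\bigr)$.

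To bound the right-hand side I would invoke Definition \ref{assum7}(ii) with $s=t/2$: it produces $\tilde U\in\pi_0\bigl(\caA_{(\Lambda_1+(t/2)\bm{e}_{\Lambda_1})_{\varepsilon+\delta}}\bigr)''$ with $\lV U-\tilde U\rV\le f_{|\arg\Lambda_1|,\varepsilon,\delta}(t/2)$. By the geometric choice above, $\tilde U$ lies in $\pi_0(\caA_{\Lambda_2^{c}})''$ and hence commutes with $x\in\pi_0(\caA_{\Lambda_2^{c}})'$, so $\lV [x,U]\rV\le 2\lV x\rV f_{|\arg\Lambda_1|,\varepsilon,\delta}(t/2)$, and the same bound holds for $\lV [x,U^{*}]\rV$. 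Combining,
\begin{align*}
\lV xy-yx\rV\le 4\lV x\rV\lV y\rV\,f_{|\arg\Lambda_1|,\varepsilon,\delta}(t/2),
\end{align*}
uniformly for $\lV x\rV,\lV y\rV\le 1$, and the right-hand side tends to zero as $t\to\infty$.

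The only substantive step is geometric, namely verifying that once $\varepsilon$ and $\delta$ are fixed, both the $\varepsilon$-enlargement appearing in the Haag duality inclusion and the $(\varepsilon+\delta)$-enlargement of the intermediate shift sit inside $\Lambda_2^{c}$ for all large $t$; this is a direct consequence of the disjointness of the argument intervals. No additional analytic ingredient beyond the approximate Haag duality is required, and the entire argument is essentially parallel to the proof of Lemma \ref{lem201}.
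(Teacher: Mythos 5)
Your argument is correct, but it is not the route the paper takes for this particular lemma. The paper's own proof applies the approximate Haag duality to the \emph{fixed} cone $\Lambda_2$: it picks $\varepsilon,\delta$, chooses a single cutoff $s$ with $f_{|\arg\Lambda_2|,\varepsilon,\delta}(s)<\epsilon'$, writes $x=\Ad(U_{\Lambda_2,\varepsilon})(z)$ with $z$ localized in $\tilde\Lambda_2=(\Lambda_2)_\varepsilon-R_{|\arg\Lambda_2|,\varepsilon}\bm e_{\Lambda_2}$, and then uses the geometric fact that both $\tilde\Lambda_2$ and $(\Lambda_2)_{\varepsilon+\delta}-s\bm e_{\Lambda_2}$ lie in $(\Lambda_1+t\bm e_{\Lambda_1})^c$ for all large $t$, so that $y$ commutes with $z$ and with the fixed approximant $\tilde U_{\Lambda_2,\varepsilon,\delta,s}$, giving $\lV xy-yx\rV\le 4\lV U_{\Lambda_2,\varepsilon}-\tilde U_{\Lambda_2,\varepsilon,\delta,s}\rV<4\epsilon'$. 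You instead apply the duality to the \emph{moving} cone $\Lambda_1+t\bm e_{\Lambda_1}$ with cutoff $t/2$ — exactly the mechanism of Lemma \ref{lem201} — conjugating $y$ rather than $x$, and exploiting that $R_{\varphi,\varepsilon}$ and $f_{\varphi,\varepsilon,\delta}$ depend on the cone only through $|\arg\Lambda_1|$, which is translation invariant. Both arguments rest on the same two ingredients (approximate Haag duality plus a Lemma \ref{lem1p}-type cone-separation statement; your geometric claim that $(\Lambda_1+s'\bm e_{\Lambda_1})_{\varepsilon+\delta}\subset\Lambda_2^c$ for large $s'$ does follow from the disjointness of the fattened argument arcs), and both are uniform over the unit balls. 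What yours buys is an explicit quantitative rate, $\lV xy-yx\rV\le 4 f_{|\arg\Lambda_1|,\varepsilon,\delta}(t/2)$ once $t$ exceeds a geometric threshold; what the paper's version buys is that the duality data ($U$, $\tilde U$) are chosen once and for all, independent of $t$, so no uniformity-in-$t$ bookkeeping is needed. Note also the mild difference in hypotheses used: you need $|\arg\Lambda_1|+4\varepsilon<2\pi$ while the paper uses $|\arg\Lambda_2|+4\varepsilon<2\pi$; either is available since $\Lambda_1,\Lambda_2\in\ctv$.
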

\begin{proof}
Choose and fix $\varepsilon,\delta>0$ such that
$(\Lambda_1)_{\varepsilon+\delta}, (\Lambda_2)_{\varepsilon+\delta}\in \ctv$
and $\arg (\Lambda_1)_{\varepsilon+\delta}\cap \arg (\Lambda_2)_{\varepsilon+\delta}=\emptyset$,
$|\arg\Lambda_2|+4\varepsilon<2\pi$
.
Recall the notations in Definition \ref{assum7}.

Fix any $\epsilon'>0$.
There is some $s\ge 0$ such that
\begin{align}
f_{|\arg\Lambda_2|,\varepsilon,\delta}(s)<\epsilon'.
\end{align}
Choose and fix such $s$. 
There is some $ t_1\ge 0$ such that
\begin{align}\label{tata}
\lmk \Lambda_2\rmk_{\varepsilon+\delta}-s\bm e_{\Lambda_2},
\quad \tilde\Lambda_2:=\lmk \Lambda_2\rmk_{\varepsilon}-R_{|\arg\Lambda_2|,\varepsilon}\bm e_{\Lambda_2}
\subset \lmk \Lambda_1+t\bm e_{\Lambda_1}\rmk^c,
\end{align}
for all $t\ge t_1$.
From the approximate Haag duality, there are some unitaries
 $U_{\Lambda_2,\varepsilon}$
 and
 $\tilde U_{\Lambda_2,\varepsilon,\delta,s}\in 
 \pi_0\lmk \caA_{\lmk \Lambda_2\rmk_{\varepsilon+\delta}-s\bm e_{\Lambda}}\rmk''$
such that
\begin{align}\label{haako}
\begin{split}
 \amf(\Lambda_2)=\pi_0\lmk\caA_{\Lambda_2^c}\rmk'\subset 
\Ad\lmk U_{\Lambda_2,\varepsilon}\rmk
\lmk 
\pi_0\lmk 
\caA_{\lmk \Lambda_2-R_{|\arg\Lambda_2|,\varepsilon}\bm e_{\Lambda_2}\rmk_\varepsilon}
\rmk''\rmk,\\
\lV
U_{\Lambda_2,\varepsilon}-\tilde U_{\Lambda_2,\varepsilon,\delta,s}
\rV\le
f_{|\arg\Lambda_2|,\varepsilon,\delta}(s)<\epsilon'.
\end{split}
\end{align}
Let $t\ge t_1$,
$x\in \amf(\Lambda_2)$, $y\in \amf(\Lambda_1+t\bm e_{\Lambda_1})$,
with $\lV x\rV,\lV y\rV\le 1$.
By (\ref{haako}) 
$
\Ad\lmk U_{\Lambda_2,\varepsilon}^*\rmk(x)
\in \pi_0\lmk \caA_{\tilde\Lambda_2}\rmk'',
$
commutes with $y$ because of (\ref{tata}).
By (\ref{tata}), $y$ also commutes with $\tilde U_{\Lambda_2,\varepsilon,\delta,s}$.
Therefore, we have
\begin{align}
\begin{split}
&\lV
xy-yx
\rV
=\lV
\Ad\lmk U_{\Lambda_2,\varepsilon}\rmk
\lmk
\Ad\lmk U_{\Lambda_2,\varepsilon}^*\rmk(x)
\rmk
y-
y \Ad\lmk U_{\Lambda_2,\varepsilon}\rmk
\lmk
\Ad\lmk U_{\Lambda_2,\varepsilon}^*\rmk(x)
\rmk
\rV\\
&\le
4\lV
U_{\Lambda_2,\varepsilon}-\tilde U_{\Lambda_2,\varepsilon,\delta,s}
\rV<4\epsilon'.
\end{split}
\end{align}
Hence for any $t\ge t_1$, we have
\begin{align}
\sup_{\substack{x\in \amf(\Lambda_2), \\y\in \amf(\Lambda_1+t\bm e_{\Lambda_1}),\\ \lV x\rV,\lV y\rV\le 1}}
\lV
xy-yx
\rV\le 4\epsilon'.
\end{align}
This completes the proof.

\end{proof}

\section{The composition}\label{compsec}
From the extension of superselection sectors in the previous section, we can define the composition
of them now.
\begin{defn}\label{compdef}
Consider Setting \ref{setni} and assume the approximate Haag duality.
Let $\theta\in\bbR$, $\varphi\in (0,\pi)$, $\Lambda_0\in \ctv$  and 
$\{\bar V_{\eta,\Lambda_0}\in \caV_{\eta,\Lambda_0}\mid \eta\in\caO_0\}$.
For any 
$\rho,\sigma\in \caO_0$, we define 
\begin{align}
\rho\;\;\circ_{\comp{}{0}}\;\;\sigma
:=\trl\rho{0}\circ\trl{\sigma}{0}\circ\pi_0 : \at\to \btv.
\end{align}
It is a well-defined $*$-homomorphism because $\pi_0(\at)\subset\btv$.
\end{defn}
\begin{lem}\label{lem11}
Consider Setting \ref{setni} and assume the approximate Haag duality.
For any $\theta_1,\theta_2\in \bbR$, $\varphi_1,\varphi_2\in (0,\pi)$,
$\Lambda_i\in\caC_{(\theta_i,\varphi_i)}$, $i=1,2$
$\rho,\sigma\in \caO_0$, and $\{\bar V_{\eta,\Lambda_i}\in \caV_{\eta,\Lambda_i}\mid \eta\in\caO_0\}$,
$i=1,2$,
we have
\begin{align}
\rho\;\;\circ_{\comp{1}{1}}\;\;\sigma\simeq_{u.e.}
\rho\;\;\circ_{\comp{2}{2}}\;\;\sigma
\end{align}
\end{lem}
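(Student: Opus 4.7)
The plan is to reduce the comparison of $\rho \circ_{\comp{1}{1}} \sigma$ and $\rho \circ_{\comp{2}{2}} \sigma$ to a common choice of all parameters, using Lemma \ref{lem10}(iii) to vary $(\Lambda, \bar V)$ and Lemma \ref{lem10}(ii) to vary $(\theta,\varphi)$. I will carry out the reduction in two stages.

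\emph{Stage 1 (common $(\theta,\varphi)$, varying $\Lambda$ and $\bar V$).} Suppose $(\theta_1,\varphi_1)=(\theta_2,\varphi_2)=(\theta,\varphi)$ but with possibly different $\Lambda_i$ and $\bar V_{\eta,\Lambda_i}$. Applying Lemma \ref{lem10}(iii) separately to $\rho$ and $\sigma$ yields
\begin{align*}
\trl{\rho}{1} &= \Ad\!\lmk \bar V_{\rho,\Lambda_1}\bar V_{\rho,\Lambda_2}^*\rmk \circ \trl{\rho}{2}, \\
\trl{\sigma}{1} &= \Ad\!\lmk \bar V_{\sigma,\Lambda_1}\bar V_{\sigma,\Lambda_2}^*\rmk \circ \trl{\sigma}{2}.
\end{align*}
Inserting this into the definition of $\circ_{\comp{}{1}}$ and pushing one of the inner automorphisms through $\trl{\rho}{2}$, I obtain $\rho\circ_{\comp{}{1}}\sigma = \Ad(U)\circ(\rho\circ_{\comp{}{2}}\sigma)$ with the explicit unitary $U := \bar V_{\rho,\Lambda_1}\bar V_{\rho,\Lambda_2}^{*}\cdot \trl{\rho}{2}\!\lmk \bar V_{\sigma,\Lambda_1}\bar V_{\sigma,\Lambda_2}^{*}\rmk$. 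So the two compositions are unitarily equivalent.

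\emph{Stage 2 (common $\Lambda_0,\bar V$, varying $(\theta,\varphi)$).} Now fix a common $\Lambda_0$ and common $\{\bar V_{\eta,\Lambda_0}\}_\eta$, but allow different $(\theta_i,\varphi_i)$ with $\Lambda_0 \in \caC_{(\theta_1,\varphi_1)}\cap \caC_{(\theta_2,\varphi_2)}$. I would choose $(\theta_3,\varphi_3)$ whose forbidden arc $(\theta_3-\varphi_3,\theta_3+\varphi_3)$ contains both $(\theta_i-\varphi_i,\theta_i+\varphi_i)$, $i=1,2$, and still has $\Lambda_0 \in \caC_{(\theta_3,\varphi_3)}$; Lemma \ref{lem10}(ii) then gives $T_\eta^{(\theta_i,\varphi_i),\Lambda_0,\bar V_{\eta,\Lambda_0}}\vert_{\caB_{(\theta_3,\varphi_3)}} = T_\eta^{(\theta_3,\varphi_3),\Lambda_0,\bar V_{\eta,\Lambda_0}}$ for both $i$ and each $\eta \in \{\rho,\sigma\}$. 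For local $A$, property (ii) of Lemma \ref{lem9} shows that $T_\sigma^{(\theta_i,\varphi_i)}\pi_0(A) = \Ad(\bar V_{\sigma,\Lambda_0})\sigma(A)$ is independent of $i$ and equals $T_\sigma^{(\theta_3,\varphi_3)}\pi_0(A) \in \caB_{(\theta_3,\varphi_3)}$. Applying the identity above to this element then gives $T_\rho^{(\theta_i,\varphi_i)}T_\sigma^{(\theta_i,\varphi_i)}\pi_0(A)=T_\rho^{(\theta_3,\varphi_3)}T_\sigma^{(\theta_3,\varphi_3)}\pi_0(A)$ for $i=1,2$. By density and continuity, $\rho\circ_{\comp{1}{0}}\sigma = \rho\circ_{\comp{3}{0}}\sigma = \rho\circ_{\comp{2}{0}}\sigma$ exactly, not just up to unitary equivalence.

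\emph{Combining.} For the full lemma, I would first pick $\Lambda_0 \in \caC_{(\theta_1,\varphi_1)}\cap \caC_{(\theta_2,\varphi_2)}$ and fix a single family $\{\bar V_{\eta,\Lambda_0}\}_\eta$. Using Stage 1 with $(\theta_1,\varphi_1)$ I replace $(\Lambda_1,\bar V_{\eta,\Lambda_1})$ by $(\Lambda_0, \bar V_{\eta,\Lambda_0})$ up to unitary equivalence; I then apply Stage 2 to switch $(\theta_1,\varphi_1)$ to $(\theta_2,\varphi_2)$ as an actual equality; finally Stage 1 with $(\theta_2,\varphi_2)$ converts $(\Lambda_0,\bar V_{\eta,\Lambda_0})$ back to $(\Lambda_2, \bar V_{\eta,\Lambda_2})$. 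Chaining these three unitary equivalences gives the conclusion.

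\emph{Expected main obstacle.} The delicate part is the geometric existence of the intermediate data in Stage 2, namely $(\theta_3,\varphi_3)$ with $\Lambda_0 \in \caC_{(\theta_3,\varphi_3)}$ and forbidden arc containing both given arcs. This is fine whenever $(\theta_1-\varphi_1,\theta_1+\varphi_1)\cup(\theta_2-\varphi_2,\theta_2+\varphi_2)$ does not fill the whole circle, but in the pathological case in which these two forbidden arcs cover $\bbT$ there is no common $\Lambda_0$ or intermediate arc, and one must chain through an auxiliary pair $(\theta_a,\varphi_a)$ with a smaller forbidden arc $I_a \subset I_1$ (respectively $I_b\subset I_2$) and then find a common $(\theta_3,\varphi_3)$ for $(\theta_a,\varphi_a)$ and $(\theta_b,\varphi_b)$. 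The argument of Stage 2 works verbatim for the restriction direction $I_a\subset I_1$ using Lemma \ref{lem10}(ii) in the opposite direction, so this geometric chaining is the only addition needed.
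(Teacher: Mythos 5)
Your proposal is correct and follows essentially the same route as the paper: Lemma \ref{lem10}(iii) handles the change of cone at a fixed direction (your Stage 1), Lemma \ref{lem10}(ii) gives exact equality for nested forbidden arcs (your Stage 2), and your fallback of first shrinking the forbidden arcs and then chaining through a common enclosing arc and a common base cone is exactly what the paper does uniformly with the $\varphi_i/4$ trick in its third step. No substantive gap.
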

\begin{proof}
{\it 1.}
For any $\theta \in \bbR$, $\varphi\in (0,\pi)$,
$\Lambda_i\in\caC_{(\theta,\varphi)}$, $i=1,2$,
$\rho,\sigma\in \caO_0$ and $\{\bar V_{\eta,\Lambda_i}\in \caV_{\eta,\Lambda_i}\mid \eta\in\caO_0\}$,
$i=1,2$, we
 show 
\begin{align}\label{63}
\rho\;\;\circ_{\comp{}{1}}\;\;\sigma\simeq_{u.e.}
\rho\;\;\circ_{\comp{}{2}}\;\;\sigma.
\end{align}
Let $\Lambda\in\ctv$ be a cone including $\Lambda_1,\Lambda_2$ (Lemma \ref{lem4}).
Fix some  $\{\bar V_{\eta,\Lambda}\in \caV_{\eta,\Lambda}\mid \eta\in\caO_0\}$
for this $\Lambda$.
Then, by Lemma \ref{lem10} (iii), we have
\begin{align}
\begin{split}
&\rho\;\;\circ_{\comp{}{i}}\;\;\sigma=
\trl{\rho}i\circ\trl{\sigma}i\circ\pi_0\\
&=\Ad\lmk \bar V_{\rho,\Lambda_i} \bar V_{\rho,\Lambda}^*\rmk\trl{\rho}{}
\circ \Ad\lmk \bar V_{\sigma,\Lambda_i} \bar V_{\sigma,\Lambda}^*\rmk\trl{\sigma}{}\circ\pi_0\\
&=\Ad\lmk \bar V_{\rho,\Lambda_i} \bar V_{\rho,\Lambda}^*
\trl{\rho}{}\lmk \bar V_{\sigma,\Lambda_i} \bar V_{\sigma,\Lambda}^*\rmk
\rmk\circ \trl{\rho}{}\trl{\sigma}{}\circ\pi_0\\
&=\Ad\lmk \bar V_{\rho,\Lambda_i} \bar V_{\rho,\Lambda}^*
\trl{\rho}{}\lmk \bar V_{\sigma,\Lambda_i} \bar V_{\sigma,\Lambda}^*\rmk
\rmk\circ
\rho\;\;\circ_{\comp{}{}}\;\;\sigma,
\end{split}
\end{align}
for $i=1,2$.
In the third equality, we used the fact $\bar V_{\sigma,\Lambda_i} \bar V_{\sigma,\Lambda}^*\in\amf(\Lambda)\subset \btv$ (Lemma \ref{lem3}).
Hence we get (\ref{63}).

\noindent{\it 2.}
For  $\theta_i\in\bbR$, $\varphi_i\in (0,\pi)$ with
$(\theta_1-\varphi_1,\theta_1+\varphi_1)\subset (\theta_2-\varphi_2,\theta_2+\varphi_2)$,
  $\Lambda_0\in \caC_{(\theta_2, \varphi_2) }\subset \caC_{(\theta_1, \varphi_1) }$,
$\rho,\sigma\in \caO_0$, and $\{\bar V_{\eta,\Lambda_0}\in \caV_{\eta,\Lambda_0}\mid \eta\in\caO_0\}$,
we have 
\begin{align}
\begin{split}
&\rho\;\;\circ_{\comp{2}{0}}\;\;\sigma
=T_\rho^{(\theta_2,\varphi_2)\Lambda_0\bar V_{\rho,\Lambda_0}}
\circ T_\sigma^{(\theta_2,\varphi_2)\Lambda_0\bar V_{\sigma,\Lambda_0}}\circ\pi_0
=T_\rho^{(\theta_1,\varphi_1)\Lambda_0\bar V_{\rho,\Lambda_0}}
\circ T_\sigma^{(\theta_2,\varphi_2)\Lambda_0\bar V_{\sigma,\Lambda_0}}\circ\pi_0\\
&=T_\rho^{(\theta_1,\varphi_1)\Lambda_0\bar V_{\rho,\Lambda_0}}
\circ T_\sigma^{(\theta_1,\varphi_1)\Lambda_0\bar V_{\sigma,\Lambda_0}}\circ\pi_0
=
\rho\;\;\circ_{\comp{1}{0}}\;\;\sigma.
\end{split}
\end{align}
by Lemma \ref{lem10} (ii).

\noindent{\it 3.}
Let  $\theta_1,\theta_2\in \bbR$, $\varphi_1,\varphi_2\in (0,\pi)$,
$\Lambda_i\in\caC_{(\theta_i,\varphi_i)}$, $i=1,2$,
$\rho,\sigma\in \caO_0$ and $\{\bar V_{\eta,\Lambda_i}\in \caV_{\eta,\Lambda_i}\mid \eta\in\caO_0\}$,
$i=1,2$.
Because $(\theta_i-\frac{\varphi_i}4, \theta_i+\frac{\varphi_i}4)\subset (\theta_i-\varphi_i, \theta_i+\varphi_i)$, and $\Lambda_i\in \caC_{(\theta_i,\varphi_i)}\subset \caC_{(\theta_i,\frac{\varphi_i}4)}$,
we get
\begin{align}\label{e111}
\rho\;\;\circ_{(\theta_i,\frac{\varphi_i}{4})\Lambda_i\{\bar V_{\eta\Lambda_i}\}}\;\;\sigma
=\rho\;\;\circ_{(\theta_i,{\varphi_i} )\Lambda_i\{\bar V_{\eta\Lambda_i}\}}\;\;\sigma
\end{align}
by {\it 2.} above.
Choose some $\theta\in\bbR$, $\varphi\in (0,\pi)$ 
such that $(\theta_i-\frac{\varphi_i}4, \theta_i+\frac{\varphi_i}4)\subset (\theta-\varphi,\theta+\varphi)$,
$i=1,2$.(These exist because $\frac{\varphi_i}4<\frac\pi 4$.)
Fix some $\Lambda_0\in \caC_{(\theta,\varphi)}\subset \caC_{(\theta_i,\frac{\varphi_i}{4})}$
and $\{\bar V_{\eta,\Lambda_0}\in \caV_{\eta,\Lambda_0}\mid \eta\in\caO_0\}$.
Then from {\it 2.}, we have
\begin{align}\label{e112}
\rho\;\;\circ_{(\theta_i,\frac{\varphi_i}{4})\Lambda_0\{\bar V_{\eta\Lambda_0}\}}\;\;\sigma
=\rho\;\;\circ_{(\theta,{\varphi})\Lambda_0\{\bar V_{\eta\Lambda_0}\}}\;\;\sigma.
\end{align}
Replacing
$\theta,\varphi, \Lambda_1,\Lambda_2,\bar V_{\eta\Lambda_1}, \bar V_{\eta\Lambda_2}$
with
$\theta_i,\frac{\varphi_i}{4}, \Lambda_0,\Lambda_i, \bar V_{\eta\Lambda_0}, \bar V_{\eta\Lambda_i}$,
we apply {\it 1.} and obtain
\begin{align}\label{e113}
\rho\;\;\circ_{(\theta_i,\frac{\varphi_i}{4})\Lambda_0\{\bar V_{\eta\Lambda_0}\}}\;\;\sigma
\simeq_{u.e.}
\rho\;\;\circ_{(\theta_i,\frac{\varphi_i}{4})\Lambda_i\{\bar V_{\eta\Lambda_i}\}}\;\;\sigma.
\end{align}
Combining (\ref{e111}), (\ref{e112}), (\ref{e113}), we complete the proof of the Lemma.
\end{proof}
\begin{lem}\label{lem12}
Consider Setting \ref{setni} and assume the approximate Haag duality.
For any $\theta\in \bbR$, $\varphi\in (0,\pi)$,
$\Lambda_0\in\caC_{(\theta,\varphi)}$, 
$\rho,\sigma\in \caO_0$ and 
$\{\bar V_{\eta,\Lambda_0}\in \caV_{\eta,\Lambda_0}\mid \eta\in\caO_0\}$,
$\rho\;\;\circ_{\comp{}{0}}\;\;\sigma$ defined in Definition \ref{compdef}
belongs to $\caO_0$.
In particular, we have
\begin{align}
\rho\;\;\circ_{(\bar\theta,\bar\varphi), \Lambda_0,\{\bar V_{\eta\Lambda_0}\}}\;\; \sigma\vert_{\caA_{\Lambda_0^c}}
=\pi_0\vert_{\caA_{\Lambda_0^c}}.
\end{align}
\end{lem}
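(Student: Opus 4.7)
The plan is to deduce both assertions from Lemma \ref{lem9}(d) together with Lemma \ref{lem11}. The displayed identity on $\caA_{\Lambda_0^c}$ falls out on unwinding Definition \ref{compdef}; the superselection criterion for arbitrary cones $\Lambda$ then follows by replacing $\Lambda_0$ with a sub-cone of $\Lambda$ and invoking Lemma \ref{lem11}.

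For the ``in particular'' identity, by Definition \ref{compdef} the composition equals $\trl{\rho}{0} \circ \trl{\sigma}{0} \circ \pi_0$. For $A \in \caA_{\Lambda_0^c}$, Lemma \ref{lem9}(d) applied to $\sigma$ gives $\trl{\sigma}{0}(\pi_0(A)) = \pi_0(A)$; applying (d) again, this time to $\rho$, yields $\trl{\rho}{0}(\pi_0(A)) = \pi_0(A)$. Composing, $(\rho \circ_{\comp{}{0}} \sigma)(A) = \pi_0(A)$, as claimed.

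For the superselection criterion, fix any cone $\Lambda$. Pick a sub-cone $\Lambda_1 \subset \Lambda$, a direction $(\theta_1, \varphi_1)$ with $\Lambda_1 \in \caC_{(\theta_1, \varphi_1)}$, and a family $\{\bar V_{\eta, \Lambda_1}' \in \caV_{\eta, \Lambda_1}\}_{\eta \in \caO_0}$ (nonempty because each $\eta$ lies in $\caO_0$). Applying the identity just proved to these new parameters,
\[
\lmk \rho \circ_{(\theta_1, \varphi_1), \Lambda_1, \{\bar V_{\eta, \Lambda_1}'\}_\eta} \sigma \rmk \vert_{\caA_{\Lambda_1^c}} = \pi_0 \vert_{\caA_{\Lambda_1^c}}.
\]
Since $\Lambda_1 \subset \Lambda$ gives $\caA_{\Lambda^c} \subset \caA_{\Lambda_1^c}$, the identity in particular holds on $\caA_{\Lambda^c}$. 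By Lemma \ref{lem11}, there is a unitary $W$ on $\caH$ with $\Ad(W) \circ (\rho \circ_{\comp{}{0}} \sigma) = \rho \circ_{(\theta_1,\varphi_1), \Lambda_1, \{\bar V_{\eta, \Lambda_1}'\}_\eta} \sigma$ as representations of $\at$; restricting to $\caA_{\Lambda^c}$ gives $\Ad(W) \circ (\rho \circ_{\comp{}{0}} \sigma)\vert_{\caA_{\Lambda^c}} = \pi_0 \vert_{\caA_{\Lambda^c}}$. As $\Lambda$ was arbitrary, $\rho \circ_{\comp{}{0}} \sigma \in \caO_0$.

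I do not anticipate any substantive obstacle. The only routine point is the availability, for any cone $\Lambda$, of a sub-cone $\Lambda_1 \subset \Lambda$ with $\Lambda_1 \in \caC_{(\theta_1,\varphi_1)}$ for some $(\theta_1,\varphi_1)$, which is straightforward from the cone combinatorics in Appendix \ref{apcone}.
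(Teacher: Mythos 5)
Your proof is correct and follows essentially the same route as the paper: the restriction identity comes from Lemma \ref{lem9}(d) applied successively to $\sigma$ and $\rho$, and membership in $\caO_0$ is obtained by re-running the computation for another base cone and transferring via the unitary equivalence of Lemma \ref{lem11}. The only (harmless) difference is that you pass to a sub-cone $\Lambda_1\subset\Lambda$, whereas the paper simply notes that $\Lambda$ itself lies in $\caC_{(\bar\theta,\bar\varphi)}$ for a suitable direction, since $|\arg\Lambda|<2\pi$, so no sub-cone is needed.
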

\begin{proof}
Let  $\Lambda$ be a cone in $\bbR^2$.
For this cone, there exist $\bar\theta\in\bbR$,
$\bar\varphi\in (0,\pi)$, such that $\Lambda\in \caC_{(\bar\theta,\bar\varphi)}$. 
Fix some 
$\{\bar V_{\eta,\Lambda}\in \caV_{\eta,\Lambda}\mid \eta\in\caO_0\}$.
By the definition, we have
\begin{align}\label{121eq}
\rho\;\;\circ_{(\bar\theta,\bar\varphi), \Lambda,\{\bar V_{\eta\Lambda}\}}\;\; \sigma\vert_{\caA_{\Lambda^c}}
=T_\rho^{(\bar\theta,\bar\varphi)\Lambda\bar V_{\rho\Lambda}} 
\circ T_\sigma^{(\bar\theta,\bar\varphi)\Lambda\bar V_{\sigma\Lambda}} \circ\pi_0\vert_{\caA_{\Lambda^c}}
=\pi_0\vert_{\caA_{\Lambda^c}}
\end{align}
using Lemma \ref{lem9} (d).
From Lemma \ref{lem11}, we have
\begin{align}
\rho\;\;\circ_{(\bar\theta,\bar\varphi), \Lambda,\{\bar V_{\eta\Lambda}\}}\;\; \sigma
\simeq_{u.e.}
\rho\;\;\circ_{(\theta,\varphi), \Lambda_0,\{\bar V_{\eta\Lambda_0}\}}\;\; \sigma.
\end{align}
Hence we have
\begin{align}
\rho\;\;\circ_{(\theta,\varphi), \Lambda_0,\{\bar V_{\eta\Lambda_0}\}}\;\; \sigma\vert_{\caA_{\Lambda^c}}
\simeq_{u.e.}
\rho\;\;\circ_{(\bar\theta,\bar\varphi), \Lambda,\{\bar V_{\eta\Lambda}\}}\;\; \sigma\vert_{\caA_{\Lambda^c}}
=\pi_0\vert_{\caA_{\Lambda^c}}.
\end{align}
As this holds for any cone $\Lambda$, we obtain $\rho\;\;\circ_{\comp{}{0}}\;\;\sigma\in \caO_0$.
\end{proof}
The product preserves the unitary equivalence.
\begin{lem}\label{lem13}
Consider Setting \ref{setni} and assume the approximate Haag duality.
Let $\theta\in \bbR$, $\varphi\in (0,\pi)$,
$\Lambda_0\in\caC_{(\theta,\varphi)}$, 
$\rho_1,\rho_2,\sigma_1,\sigma_2\in \caO_0$ and 
$\{\bar V_{\eta,\Lambda_0}\in \caV_{\eta,\Lambda_0}\mid \eta\in\caO_0\}$.
Suppose that $\rho_1\simeq_{u.e.}\rho_2$ and $\sigma_1\simeq_{u.e.}\sigma_2$.
Then we have
 \begin{align}
 \rho_1\;\;\circ_{(\theta,\varphi), \Lambda_0,\{\bar V_{\eta\Lambda_0}\}}\;\; \sigma_1
 \simeq_{u.e.} \rho_2\;\;\circ_{(\theta,\varphi), \Lambda_0,\{\bar V_{\eta\Lambda_0}\}}\;\; \sigma_2.
 \end{align}
\end{lem}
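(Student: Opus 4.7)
The plan is to invoke Lemma \ref{lem10}(i), which already contains the hardest part of this statement at the level of extensions. Specifically, if $W,W'\in\caU(\caH)$ implement $\rho_2=\Ad(W)\rho_1$ and $\sigma_2=\Ad(W')\sigma_1$, then from (the proof of) Lemma \ref{lem10}(i) one obtains
\begin{align}
T_{\rho_1}^{(\theta,\varphi),\Lambda_0,\bar V_{\rho_1,\Lambda_0}}
&=\Ad\lmk u_\rho\rmk \circ T_{\rho_2}^{(\theta,\varphi),\Lambda_0,\bar V_{\rho_2,\Lambda_0}},\\
T_{\sigma_1}^{(\theta,\varphi),\Lambda_0,\bar V_{\sigma_1,\Lambda_0}}
&=\Ad\lmk u_\sigma\rmk \circ T_{\sigma_2}^{(\theta,\varphi),\Lambda_0,\bar V_{\sigma_2,\Lambda_0}},
\end{align}
where $u_\rho:=\bar V_{\rho_1,\Lambda_0}W^*\bar V_{\rho_2,\Lambda_0}^*$ and $u_\sigma:=\bar V_{\sigma_1,\Lambda_0}W'^*\bar V_{\sigma_2,\Lambda_0}^*$.

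Next I would check that $u_\sigma\in\btv$, so that $T_{\rho_2}^{(\theta,\varphi),\Lambda_0,\bar V_{\rho_2,\Lambda_0}}(u_\sigma)$ is well defined. Since $W'$ intertwines $\sigma_1$ and $\sigma_2$, Lemma \ref{lem3} applied with $\Lambda_1=\Lambda_2=\Lambda_0$ gives $\bar V_{\sigma_2,\Lambda_0}W'\bar V_{\sigma_1,\Lambda_0}^*\in\amf(\Lambda_0)\subset\btv$; taking adjoints yields $u_\sigma\in\amf(\Lambda_0)\subset\btv$. The same remark applied to the identity intertwiner on $\rho_2$ (or directly to Lemma \ref{lem3} with $W=\unit$) shows $u_\rho\in\amf(\Lambda_0)\subset\btv$.

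Then I would simply compute, using Definition \ref{compdef}:
\begin{align}
\rho_1\circ_{\comp{}{0}}\sigma_1
&=T_{\rho_1}^{(\theta,\varphi),\Lambda_0,\bar V_{\rho_1,\Lambda_0}}\circ T_{\sigma_1}^{(\theta,\varphi),\Lambda_0,\bar V_{\sigma_1,\Lambda_0}}\circ\pi_0\\
&=\Ad\lmk u_\rho\rmk\circ T_{\rho_2}^{(\theta,\varphi),\Lambda_0,\bar V_{\rho_2,\Lambda_0}}\circ \Ad\lmk u_\sigma\rmk\circ T_{\sigma_2}^{(\theta,\varphi),\Lambda_0,\bar V_{\sigma_2,\Lambda_0}}\circ\pi_0\\
&=\Ad\lmk u_\rho\cdot T_{\rho_2}^{(\theta,\varphi),\Lambda_0,\bar V_{\rho_2,\Lambda_0}}(u_\sigma)\rmk\circ \lmk \rho_2\circ_{\comp{}{0}}\sigma_2\rmk.
\end{align}
Since $T_{\rho_2}^{(\theta,\varphi),\Lambda_0,\bar V_{\rho_2,\Lambda_0}}$ is a $*$-homomorphism and $u_\sigma$ is a unitary in $\btv$, the operator $u_\rho\cdot T_{\rho_2}^{(\theta,\varphi),\Lambda_0,\bar V_{\rho_2,\Lambda_0}}(u_\sigma)$ is a unitary on $\caH$, which establishes the claimed unitary equivalence.

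There is no serious obstacle here; the only thing to watch is that one must apply $T_{\rho_2}^{(\theta,\varphi),\Lambda_0,\bar V_{\rho_2,\Lambda_0}}$ to an element of $\btv$ (hence the need for $u_\sigma\in\btv$, handled by Lemma \ref{lem3}) and remember that the extensions $T_\eta^{(\theta,\varphi),\Lambda_0,\bar V_{\eta,\Lambda_0}}$ take values in $\btv$ (Lemma \ref{lem9}(e)), so the composition on the second line is well-posed.
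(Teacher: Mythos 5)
Your proposal is correct and follows essentially the same route as the paper: it uses the relation from (the proof of) Lemma \ref{lem10} (i), verifies via Lemma \ref{lem3} that the intertwining unitary $u_\sigma$ lies in $\amf(\Lambda_0)\subset\btv$ so that $T_{\rho_2}^{(\theta,\varphi),\Lambda_0,\bar V_{\rho_2,\Lambda_0}}(u_\sigma)$ is defined, and then performs the same composition computation to exhibit the implementing unitary $u_\rho\, T_{\rho_2}^{(\theta,\varphi),\Lambda_0,\bar V_{\rho_2,\Lambda_0}}(u_\sigma)$. (Your extra check that $u_\rho\in\btv$ is harmless but not needed, since only $u_\sigma$ must lie in the domain of the extension.)
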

\begin{proof}
Let $W_\rho,W_\sigma$ be unitaries such that 
$\rho_2=\Ad(W_\rho)\circ\rho_1$
and $\sigma_2=\Ad(W_\sigma)\circ\sigma_1$.
From the proof of Lemma \ref{lem10} (i) (\ref{51}),
we have
\begin{align}
\begin{split}
&\trl{\rho_1}{0}=\Ad\lmk\bar V_{\rho_1,\Lambda_0}W_\rho^*\bar V_{\rho_2,\Lambda_0}^*\rmk
\circ \trl{\rho_2}{0},\\
&\trl{\sigma_1}{0}=\Ad\lmk\bar V_{\sigma_1,\Lambda_0}W_\sigma^*\bar V_{\sigma_2,\Lambda_0}^*\rmk
\circ \trl{\sigma_2}{0}.
\end{split}
\end{align}
Because $W_\sigma^*$ is an intertwiner from $\sigma_2$ to $\sigma_1$,
from Lemma \ref{lem3}, 
$\bar V_{\sigma_1,\Lambda_0}W_\sigma^*\bar V_{\sigma_2,\Lambda_0}^*$
belongs to $\pi_0(\caA_{\Lambda_0^c})'=\amf(\Lambda_0)\subset\btv$.
Therefore, we have
\begin{align}
 \rho_1\;\;\circ_{(\theta,\varphi), \Lambda_0,\{\bar V_{\eta\Lambda_0}\}}\;\; \sigma_1
=\Ad\lmk\bar V_{\rho_1,\Lambda_0}W_\rho^*\bar V_{\rho_2,\Lambda_0}^*
\trl{\rho_2}{0}
\lmk\bar V_{\sigma_1,\Lambda_0}W_\sigma^*\bar V_{\sigma_2,\Lambda_0}^*\rmk
\rmk
\circ\rho_2\;\;\circ_{(\theta,\varphi), \Lambda_0,\{\bar V_{\eta\Lambda_0}\}}\;\; \sigma_2.
\end{align}
This proves the Lemma.
\end{proof}
\begin{lem}\label{ccon}
Consider Setting \ref{setni} and assume the approximate Haag duality.
Let $\theta\in \bbR$, $\varphi\in (0,\pi)$.
Let $\Lambda_1,\Lambda_2\in \ctv$,
 $\rho,\sigma\in \caO_0$,
and $V_{\rho\Lambda_i}\in \caV_{\rho\Lambda_i}$, $V_{\sigma\Lambda_i}\in \caV_{\sigma\Lambda_i}$,
$i=1,2$.
Then for any $\Lambda\in\ctv$,
$\trl{\rho}1\circ\trl{\sigma}2$
is $\sigma$w-continuous on $\amf(\Lambda)$.
\end{lem}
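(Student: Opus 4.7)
The plan is to reduce $\sigma$w-continuity on $\amf(\Lambda)$ to $\sigma$w-continuity on $\pi_0(\caA_{\Lambda'})''$ (for some $\Lambda' \in \ctv$) via conjugation by a suitable unitary in $\btv$ coming from the approximate Haag duality, and then chain together the continuity statements that have already been proved.

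First I would fix $\Lambda \in \ctv$ and choose $\varepsilon > 0$ small enough that $\Lambda_{4\varepsilon} \in \ctv$ and $|\arg\Lambda| + 4\varepsilon < 2\pi$; this guarantees $\Lambda' := (\Lambda - R_{|\arg\Lambda|,\varepsilon}\bm e_\Lambda)_\varepsilon \in \ctv$ as well. The approximate Haag duality then provides a unitary $U_{\Lambda,\varepsilon} \in \caU(\caH)$ with
\begin{align}
\amf(\Lambda) \subset \Ad\lmk U_{\Lambda,\varepsilon}\rmk\lmk \pi_0\lmk \caA_{\Lambda'}\rmk'' \rmk,
\end{align}
and by Lemma \ref{lemhoshi} the unitary $U_{\Lambda,\varepsilon}$ lies in $\btv$.

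Next, since both $\trl{\rho}{1}$ and $\trl{\sigma}{2}$ are $*$-homomorphisms on $\btv$ and both $U_{\Lambda,\varepsilon}$ and every $x \in \amf(\Lambda) \subset \btv$ lie in $\btv$, I can write, for any $x \in \amf(\Lambda)$,
\begin{align}
\trl{\rho}{1}\circ\trl{\sigma}{2}(x)
=\Ad\lmk \trl{\rho}{1}\circ\trl{\sigma}{2}\lmk U_{\Lambda,\varepsilon}\rmk\rmk
\circ \trl{\rho}{1}\circ\trl{\sigma}{2}\lmk \Ad\lmk U_{\Lambda,\varepsilon}^*\rmk(x)\rmk.
\end{align}
Setting $W := \trl{\rho}{1}\circ\trl{\sigma}{2}(U_{\Lambda,\varepsilon})$ (a unitary in $\caB(\caH)$), this expresses the restriction $\trl{\rho}{1}\circ\trl{\sigma}{2}|_{\amf(\Lambda)}$ as the composition
\begin{align}
\Ad(W)\;\circ\; \lmk \trl{\rho}{1}\circ\trl{\sigma}{2}\rmk\Big|_{\pi_0(\caA_{\Lambda'})''}\;\circ\; \Ad\lmk U_{\Lambda,\varepsilon}^*\rmk\Big|_{\amf(\Lambda)}.
\end{align}

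Finally, I would check $\sigma$w-continuity of each factor: conjugation by the bounded operator $U_{\Lambda,\varepsilon}^*$ is $\sigma$w-continuous and carries $\amf(\Lambda)$ into $\pi_0(\caA_{\Lambda'})''$; $\trl{\sigma}{2}$ is $\sigma$w-continuous on $\pi_0(\caA_{\Lambda'})''$ by Lemma \ref{lem9}(i) and by Lemma \ref{lem9}(e) its image is contained in $\amf(C_{\Lambda'})$ for some $C_{\Lambda'} \in \ctv$; $\trl{\rho}{1}$ is $\sigma$w-continuous on $\amf(C_{\Lambda'})$ by Lemma \ref{sclem}; and conjugation by $W$ is $\sigma$w-continuous. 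Composing these gives the required $\sigma$w-continuity on $\amf(\Lambda)$. There is no real obstacle here once the reduction via approximate Haag duality is set up; the only care needed is choosing $\varepsilon$ so that both $\Lambda'$ and $C_{\Lambda'}$ stay inside $\ctv$, which is where Lemma \ref{lemhoshi}, Lemma \ref{lem9}(e), and Lemma \ref{sclem} can be invoked.
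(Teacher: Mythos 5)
Your proof is correct and follows essentially the same route as the paper: conjugate $\amf(\Lambda)$ into $\pi_0(\caA_{\Lambda'})''$ via the approximate Haag duality unitary $U_{\Lambda,\varepsilon}\in\btv$, use Lemma \ref{lem9} (i), (e) to get continuity of $\trl{\sigma}2$ and to land in some $\amf(C_{\Lambda'})$, and finish with Lemma \ref{sclem} for $\trl{\rho}1$. The only cosmetic difference is that the paper cites Lemma \ref{sclem} directly for the $\sigma$w-continuity of $\trl{\sigma}2$ on $\amf(\Lambda)$, whereas you re-derive that step inline by the same conjugation trick.
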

\begin{proof}
Let $\Lambda\in\ctv$ and $0<\varepsilon<\frac{2\pi-|\arg\Lambda|}{4}$.
We have 
\begin{align}
\begin{split}
&\trl{\sigma}2\lmk\amf(\Lambda)\rmk\subset 
\Ad\lmk \trl{\sigma}2 \lmk  U_{\Lambda,\varepsilon}\rmk\rmk
\lmk\trl{\sigma}2\lmk
\pi_0\lmk \caA_{\lmk \Lambda-R_{|\Lambda|,\varepsilon}\bm e_\Lambda\rmk_\varepsilon}\rmk''
\rmk\rmk\\
&
\subset 
\Ad\lmk \trl{\sigma}2 \lmk  U_{\Lambda,\varepsilon}\rmk\rmk
\lmk
\amf\lmk C_{\lmk \Lambda-R_{|\Lambda|,\varepsilon}\bm e_\Lambda\rmk_\varepsilon}\rmk
\rmk
\end{split}
\end{align}
for some cone $ C_{\lmk \Lambda-R_{|\Lambda|,\varepsilon}\bm e_\Lambda\rmk_\varepsilon}\in\ctv$.
(Lemma \ref{lem9} (e)).
By the $\sigma$w-continuity of $\trl{\rho}1$ on 
$\amf\lmk C_{\lmk \Lambda-R_{|\Lambda|,\varepsilon}\bm e_\Lambda\rmk_\varepsilon}\rmk
$
and the $\sigma$w-continuity of 
$\trl{\sigma}2$
on $\amf(\Lambda)$ (Lemma \ref{sclem})
$\trl{\rho}1\circ\trl{\sigma}2$ is
 $\sigma$w-continuous on  $\amf(\Lambda)$.
\end{proof}

\begin{lem}\label{lem14}
Consider Setting \ref{setni} and assume the approximate Haag duality.
Let $\theta\in \bbR$, $\varphi\in (0,\pi)$,
$\Lambda_0\in\caC_{(\theta,\varphi)}$, 
$\rho, \sigma\in \caO_0$ and 
$\{\bar V_{\eta,\Lambda_0}\in \caV_{\eta,\Lambda_0}\mid \eta\in\caO_0\}$.
Set
\[
\gamma:=\rho\;\;\circ_{\comp{}{0}}\;\;\sigma\in\caO_0.
\]
Then we have
\begin{align}
\trl{\gamma}{0}=\Ad\lmk \bar V_{\gamma\Lambda_0}\rmk
\circ \trl{\rho}0\circ\trl{\sigma}0,
\end{align}
$1\in \caV_{\gamma,\Lambda_0}$,
and $\bar V_{\gamma\Lambda_0}\in\amf(\Lambda_0)\subset\btv$.
\end{lem}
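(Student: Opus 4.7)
The plan is to verify each of the three assertions in order, using the uniqueness characterization of the extension (Lemma~\ref{lem9}~(a)) to obtain the main formula.

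First, I would show that $\unit\in\caV_{\gamma,\Lambda_0}$. By Lemma~\ref{lem12}, applied inside its own proof at the line \eqref{121eq}, the composition $\gamma=\rho\;\circ_{(\theta,\varphi),\Lambda_0,\{\bar V_{\eta,\Lambda_0}\}}\;\sigma$ satisfies $\gamma\vert_{\caA_{\Lambda_0^c}}=\pi_0\vert_{\caA_{\Lambda_0^c}}$. Therefore $\Ad(\unit)\circ\gamma\vert_{\caA_{\Lambda_0^c}}=\pi_0\vert_{\caA_{\Lambda_0^c}}$, which is exactly the defining condition for $\unit\in\caV_{\gamma,\Lambda_0}$. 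This immediately gives the second assertion.

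Next, for any $\bar V_{\gamma,\Lambda_0}\in \caV_{\gamma,\Lambda_0}$, both $\bar V_{\gamma,\Lambda_0}$ and $\unit$ intertwine $\gamma$ with $\pi_0$ on $\caA_{\Lambda_0^c}$, so $\bar V_{\gamma,\Lambda_0}=\bar V_{\gamma,\Lambda_0}\cdot\unit^*$ is an element of $\pi_0(\caA_{\Lambda_0^c})'=\amf(\Lambda_0)$ by Lemma~\ref{lem3} (taking $\rho=\sigma=\gamma$ and $\Lambda_1=\Lambda_2=\Lambda_0$, with $R=\unit$). This gives the third assertion.

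Finally, for the extension identity, I introduce
\[
X_\gamma:=\Ad\lmk \bar V_{\gamma,\Lambda_0}\rmk\circ\trl{\rho}0\circ\trl{\sigma}0:\btv\to \caB(\caH).
\]
This is a $*$-homomorphism since each factor is. By Lemma~\ref{ccon}, $\trl{\rho}0\circ\trl{\sigma}0$ is $\sigma$w-continuous on $\amf(\Lambda)$, hence in particular on $\pi_0(\caA_\Lambda)''\subset\amf(\Lambda)$ for every $\Lambda\in\ctv$; conjugation by $\bar V_{\gamma,\Lambda_0}$ preserves this continuity, so $X_\gamma$ is $\sigma$w-continuous on each $\pi_0(\caA_\Lambda)''$. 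Furthermore, by the definition of $\gamma$,
\[
X_\gamma\circ\pi_0
=\Ad\lmk \bar V_{\gamma,\Lambda_0}\rmk\circ\trl{\rho}0\circ\trl{\sigma}0\circ\pi_0
=\Ad\lmk \bar V_{\gamma,\Lambda_0}\rmk\circ\gamma.
\]
Thus $X_\gamma$ satisfies the hypotheses of the uniqueness statement in Lemma~\ref{lem9}~(a) for $\gamma$ and $\bar V_{\gamma,\Lambda_0}$, and we conclude $X_\gamma=\trl{\gamma}0$, proving the first assertion.

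The one subtle point, and the only step that is not essentially bookkeeping, is the verification of $\sigma$w-continuity of the composition $\trl{\rho}0\circ\trl{\sigma}0$ on the local von Neumann algebras, because $\trl{\sigma}0$ moves $\pi_0(\caA_\Lambda)''$ into some larger $\amf(C_\Lambda)$ rather than into another $\pi_0(\caA_{\Lambda'})''$; this is exactly why Lemma~\ref{sclem} (continuity on every $\amf(\Lambda)$) and Lemma~\ref{ccon} were established beforehand, and I would appeal to them directly.
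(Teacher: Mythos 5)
Your proposal is correct and follows essentially the same route as the paper: the identity $\gamma\vert_{\caA_{\Lambda_0^c}}=\pi_0\vert_{\caA_{\Lambda_0^c}}$ from (the proof of) Lemma \ref{lem12} gives $\unit\in\caV_{\gamma,\Lambda_0}$ and, via Lemma \ref{lem3}, $\bar V_{\gamma\Lambda_0}\in\amf(\Lambda_0)$, while the main formula follows by checking that $\Ad\lmk\bar V_{\gamma\Lambda_0}\rmk\circ\trl{\rho}0\circ\trl{\sigma}0$ satisfies the hypotheses of the uniqueness statement in Lemma \ref{lem9}~(a), with the $\sigma$w-continuity supplied by Lemma \ref{ccon}. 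The only difference is the order of the three assertions, which is immaterial.
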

\begin{proof}
The map $X:=\Ad\lmk \bar V_{\gamma\Lambda_0}\rmk
\circ \trl{\rho}0\circ\trl{\sigma}0: \btv\to \caB(\caH)$ is a
$*$-homomorphism.
By Lemma \ref{ccon},
$X$ is 
 $\sigma$w-continuous on  $\pza{}$ for any $\Lambda\in \ctv$.
 Furthermore, we have $X\circ\pi_0=\Ad\lmk \bar V_{\gamma\Lambda_0}\rmk\circ\gamma$.
Hence by the uniqueness in  Lemma \ref{lem9} (a), we obtain $X=\trl{\gamma}{0}$.
As in (\ref{121eq}), we have $\gamma\vert_{\caA_{\Lambda_0^c}}=\pi_0\vert_{\caA_{\Lambda_0^c}}$.
This means $1\in \caV_{\gamma,\Lambda_0}$.
Therefore, we have
\[
\bar V_{\gamma\Lambda_0}=\bar V_{\gamma\Lambda_0}\cdot 1^*\in \amf(\Lambda_0),
\]
by Lemma \ref{lem3}.
\end{proof}
Finally we obtain the associativity in the following sense.
\begin{lem}\label{lem15}
Consider Setting \ref{setni} and assume the approximate Haag duality.
Let $\theta\in \bbR$, $\varphi\in (0,\pi)$,
$\Lambda_0\in\caC_{(\theta,\varphi)}$, 
$\rho, \sigma,\gamma\in \caO_0$ and 
$\{\bar V_{\eta,\Lambda_0}\in \caV_{\eta,\Lambda_0}\mid \eta\in\caO_0\}$.
Then we have
\begin{align}
\lmk \rho\;\;\circ_{\comp{}{0}}\;\;\sigma\rmk\circ\;\;_{\comp{}{0}}\;\;\gamma
\simeq_{u.e.}
\rho\;\;\circ_{\comp{}{0}}\;\;\lmk \sigma\circ\;\;_{\comp{}{0}}\;\;\gamma\rmk.
\end{align}
\end{lem}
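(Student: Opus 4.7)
\proof[Plan of proof]
The idea is that Lemma~\ref{lem14} essentially already encodes associativity: it says that the extension $T_{\rho \circ \sigma}^{(\theta,\varphi),\Lambda_0,\bar V_{\rho\circ\sigma,\Lambda_0}}$ of a composite sector differs from the naive candidate $\trl{\rho}{0}\circ \trl{\sigma}{0}$ only by conjugation by a unitary $\bar V_{\rho\circ\sigma,\Lambda_0}$ that lies in $\amf(\Lambda_0)\subset \btv$. So both $(\rho\circ\sigma)\circ\gamma$ and $\rho\circ(\sigma\circ\gamma)$ unfold to the same three-fold composition $\trl{\rho}{0}\circ\trl{\sigma}{0}\circ\trl{\gamma}{0}\circ\pi_0$ modulo an inner automorphism, and the associativity statement is obtained by comparing the two inner automorphisms.

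Concretely, I would set $\mu:=\rho\circ_{\comp{}{0}}\sigma$ and $\nu:=\sigma\circ_{\comp{}{0}}\gamma$; by Lemma~\ref{lem12} both belong to $\caO_0$, so $\bar V_{\mu,\Lambda_0}\in\caV_{\mu,\Lambda_0}$ and $\bar V_{\nu,\Lambda_0}\in\caV_{\nu,\Lambda_0}$ are well-defined elements of the chosen family, and by Lemma~\ref{lem14} they lie in $\amf(\Lambda_0)\subset \btv$ and satisfy
\begin{align*}
\trl{\mu}{0}=\Ad(\bar V_{\mu,\Lambda_0})\circ \trl{\rho}{0}\circ\trl{\sigma}{0},\qquad
\trl{\nu}{0}=\Ad(\bar V_{\nu,\Lambda_0})\circ \trl{\sigma}{0}\circ\trl{\gamma}{0}.
\end{align*}
Unfolding the two sides of the desired equivalence gives
\begin{align*}
(\rho\circ\sigma)\circ\gamma &= \trl{\mu}{0}\circ\trl{\gamma}{0}\circ\pi_0
= \Ad(\bar V_{\mu,\Lambda_0})\circ \trl{\rho}{0}\circ\trl{\sigma}{0}\circ\trl{\gamma}{0}\circ\pi_0,\\
\rho\circ(\sigma\circ\gamma) &= \trl{\rho}{0}\circ\trl{\nu}{0}\circ\pi_0
= \Ad\bigl(\trl{\rho}{0}(\bar V_{\nu,\Lambda_0})\bigr)\circ \trl{\rho}{0}\circ\trl{\sigma}{0}\circ\trl{\gamma}{0}\circ\pi_0,
\end{align*}
where in the last line I pull $\Ad(\bar V_{\nu,\Lambda_0})$ through the $*$-homomorphism $\trl{\rho}{0}$; this uses exactly that $\bar V_{\nu,\Lambda_0}\in \btv$, which is where Lemma~\ref{lem14} is indispensable. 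Setting $W:=\bar V_{\mu,\Lambda_0}\cdot\trl{\rho}{0}(\bar V_{\nu,\Lambda_0})^*$ then gives a unitary in $\caB(\caH)$ implementing the unitary equivalence $(\rho\circ\sigma)\circ\gamma=\Ad(W)\circ(\rho\circ(\sigma\circ\gamma))$.

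The only point to verify with care is that $\trl{\rho}{0}(\bar V_{\nu,\Lambda_0})$ makes sense and is a unitary; this is automatic once one knows $\bar V_{\nu,\Lambda_0}\in\amf(\Lambda_0)\subset \btv$ from Lemma~\ref{lem14}, because $\trl{\rho}{0}$ is a unital $*$-homomorphism on $\btv$ by Lemma~\ref{lem9}. There is no genuine obstacle here beyond bookkeeping: all the analytic content (weak-$*$ continuity, the approximate Haag duality, existence of extensions with $1\in\caV_{\rho\circ\sigma,\Lambda_0}$, and membership in $\amf(\Lambda_0)$) has already been packaged into Lemmas~\ref{lem9}, \ref{lem12}, and \ref{lem14}, so associativity reduces to the above two-line algebraic manipulation.
\endproof
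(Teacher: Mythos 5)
Your proposal is correct and follows essentially the same route as the paper's proof: both unfold the two bracketings via Lemma \ref{lem14}, use $\bar V_{\sigma\circ\gamma,\Lambda_0}\in\amf(\Lambda_0)\subset\btv$ to pull the inner automorphism through the $*$-homomorphism $\trl{\rho}{0}$, and conclude with the implementing unitary $\bar V_{\rho\circ\sigma,\Lambda_0}\,\trl{\rho}{0}(\bar V_{\sigma\circ\gamma,\Lambda_0})^*$, which is exactly the unitary appearing in the paper.
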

\begin{proof}
Set $\alpha:= \rho\;\;\circ_{\comp{}{0}}\;\;\sigma$
and $\beta:= \sigma\circ\;\;_{\comp{}{0}}\;\;\gamma$.
Applying Lemma \ref{lem14}, we have
\begin{align}
\begin{split}
&\lmk \rho\;\;\circ_{\comp{}{0}}\;\;\sigma\rmk\circ\;\;_{\comp{}{0}}\;\;\gamma
=\trl{\alpha}0 \circ\trl{\gamma}0\circ\pi_0\\
&=\Ad\lmk\bar V_{\alpha\Lambda_0}\rmk \trl{\rho}0\trl{\sigma}0\trl{\gamma}0\pi_0
=\Ad\lmk\bar V_{\alpha\Lambda_0}\rmk \trl{\rho}0
\Ad\lmk \bar V_{\beta\Lambda_0}^*\rmk\trl{\beta}0\pi_0\\
&=\Ad\lmk\bar V_{\alpha\Lambda_0}  \trl{\rho}0\lmk  \bar V_{\beta\Lambda_0}^*\rmk \rmk
\trl{\rho}0\trl{\beta}0\pi_0\\
&=\Ad\lmk\bar V_{\alpha\Lambda_0}  \trl{\rho}0\lmk  \bar V_{\beta\Lambda_0}^*\rmk \rmk
\rho\;\;\circ_{\comp{}{0}}\;\;\lmk \sigma\circ\;\;_{\comp{}{0}}\;\;\gamma\rmk.
\end{split}
\end{align}
In the fourth equality, we used $\bar V_{\beta\Lambda_0}\in\btv$, Lemma \ref{lem14}.
\end{proof}

\section{The intertwiners}\label{intsec}
For endomorphisms $T_1, T_2$ of $\btv$, we denote by
$(T_1,T_2)$ the set of all intertwiners from $T_1$ to $T_2$, i.e.,
the set of all bounded operators $R$ on $\caH$ such that $RT_1(x)=T_2(x) R$,
for all $x\in \btv$.

\begin{lem}\label{lem16}
Consider Setting \ref{setni} and assume the approximate Haag duality.
Let $\theta\in \bbR$, $\varphi\in (0,\pi)$,
$\Lambda_1,\Lambda_2\in\caC_{(\theta,\varphi)}$, 
$\rho\in \caO_0$ and 
$ \bar V_{\rho,\Lambda_i}\in \caV_{\rho,\Lambda_i}$, $i=1,2$.
Then we have
\begin{align}
\bar V_{\rho,\Lambda_2}\bar V_{\rho,\Lambda_1}^*\in \lmk \trl{\rho}1,\trl{\rho}2\rmk.
\end{align}
\end{lem}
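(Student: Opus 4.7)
The proof is essentially an immediate corollary of Lemma \ref{lem10} (iii), which I would invoke directly. Applied with the roles $0\to 1$ and $1\to 2$, that lemma gives the identity
\begin{align*}
\trl{\rho}1 = \Ad\lmk \bar V_{\rho,\Lambda_1}\bar V_{\rho,\Lambda_2}^*\rmk\circ \trl{\rho}2
\end{align*}
as $*$-homomorphisms on $\btv$. The hypotheses for invoking Lemma \ref{lem10} (iii) are met since both $\Lambda_1,\Lambda_2\in \ctv$ and $\bar V_{\rho,\Lambda_i}\in\caV_{\rho,\Lambda_i}$ by assumption.

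Unwinding the adjoint action, for every $x\in\btv$ we have
\begin{align*}
\trl{\rho}1(x) = \bar V_{\rho,\Lambda_1}\bar V_{\rho,\Lambda_2}^*\, \trl{\rho}2(x)\, \bar V_{\rho,\Lambda_2}\bar V_{\rho,\Lambda_1}^*.
\end{align*}
Multiplying both sides on the left by the unitary $\bar V_{\rho,\Lambda_2}\bar V_{\rho,\Lambda_1}^*$ yields
\begin{align*}
\bar V_{\rho,\Lambda_2}\bar V_{\rho,\Lambda_1}^*\, \trl{\rho}1(x) = \trl{\rho}2(x)\, \bar V_{\rho,\Lambda_2}\bar V_{\rho,\Lambda_1}^*,
\end{align*}
which is exactly the condition for $\bar V_{\rho,\Lambda_2}\bar V_{\rho,\Lambda_1}^*$ to lie in the intertwiner space $(\trl{\rho}1,\trl{\rho}2)$. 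Boundedness is automatic as the operator is a product of two unitaries on $\caH$.

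There is no real obstacle here; the entire content has been absorbed into the preparatory Lemma \ref{lem10} (iii), whose proof in turn relied on the uniqueness statement Lemma \ref{lem9} (a). One could alternatively give a direct argument by verifying the intertwining relation first on $\pi_0(\caA_{\rm loc})$ using Lemma \ref{lem3} (which ensures that both $\bar V_{\rho,\Lambda_i}$ implement $\rho$ modulo $\pi_0$ on $\caA_{\Lambda_i^c}$) and then extending by $\sigma$w-continuity of both sides on each $\pi_0(\caA_\Lambda)''$ via Lemma \ref{lem9}, but going through Lemma \ref{lem10} (iii) is cleaner and avoids redoing the extension argument.
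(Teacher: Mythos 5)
Your proposal is correct. The intertwining identity you derive is exactly the content of the lemma, and the application of Lemma \ref{lem10} (iii) (with the cones relabelled $\Lambda_0\to\Lambda_1$, $\Lambda_1\to\Lambda_2$) is legitimate: that lemma appears earlier in the paper, its hypotheses are precisely those of the present statement, and its proof does not rely on Lemma \ref{lem16}, so there is no circularity. Unwinding $\trl{\rho}1=\Ad\lmk \bar V_{\rho,\Lambda_1}\bar V_{\rho,\Lambda_2}^*\rmk\circ\trl{\rho}2$ and cancelling the unitary on the left indeed gives $\bar V_{\rho,\Lambda_2}\bar V_{\rho,\Lambda_1}^*\,\trl{\rho}1(x)=\trl{\rho}2(x)\,\bar V_{\rho,\Lambda_2}\bar V_{\rho,\Lambda_1}^*$ for all $x\in\btv$, which is the definition of membership in $\lmk\trl{\rho}1,\trl{\rho}2\rmk$ used in Section \ref{intsec}. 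The paper's own proof is the ``direct argument'' you mention as an alternative: it checks the relation on $\pi_0(A)$, $A\in\at$, using Lemma \ref{lem9} (ii), and then extends to $\btv$ via the $\sigma$w-continuity of $\trl{\rho}i$ on each $\pza{}$ together with the norm density from Lemma \ref{lemhoshi}. The two routes rest on the same machinery — your shortcut simply delegates the extension/uniqueness step to Lemma \ref{lem10} (iii), whose proof invokes Lemma \ref{lem9} (a), whereas the paper redoes the continuity-plus-density extension explicitly; your version is shorter, the paper's is marginally more self-contained. Either is acceptable.
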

\begin{proof}
By Lemma \ref{lem9} (ii), we have
\begin{align}
\bar V_{\rho,\Lambda_2}\bar V_{\rho,\Lambda_1}^* \trl{\rho}1(\pi_0(A))
=\trl{\rho}2(\pi_0(A)) \bar V_{\rho,\Lambda_2}\bar V_{\rho,\Lambda_1}^*,\quad A\in\at.
\end{align}
From the $\sigma$w-continuity of  $\trl{\rho}i$, $i=1,2$ on $\pza{}$,
$\Lambda\in\ctv$ (Lemma \ref{lem9} (i)), this equality extends to any $\pi_0(\caA_{\Lambda})''$
and hence to $\btv$ by Lemma \ref{lemhoshi}.
\end{proof}
\begin{lem}\label{lem17}
Consider Setting \ref{setni} and assume the approximate Haag duality.
Let $\theta\in \bbR$, $\varphi\in (0,\pi)$,
$\Lambda_1,\Lambda_2\in\caC_{(\theta,\varphi)}$, 
$\rho,\sigma\in \caO_0$ and 
$ \bar V_{\rho,\Lambda_1}\in \caV_{\rho,\Lambda_1}$, 
$ \bar V_{\sigma,\Lambda_2}\in \caV_{\rho,\Lambda_2}$.
Then we have
\begin{align}
\lmk \trl{\rho}1,\trl{\sigma}2\rmk\subset \pi_0(\caA_{(\Lambda_1\cup\Lambda_2)^c})'
\subset \btv.
\end{align}
\end{lem}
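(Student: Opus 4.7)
The plan is to chase the intertwining relation down to the image of $\pi_0$ on the commuted algebra, and then dispose of the second inclusion by an upward-filtering argument on cones.

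First I would take any $R \in (\trl{\rho}1, \trl{\sigma}2)$ and test it against elements of the form $\pi_0(A)$ for $A \in \caA_{(\Lambda_1 \cup \Lambda_2)^c}$. Since $\pi_0(\at) \subset \btv$ (by the remarks following (\ref{btvdef})), we may feed $\pi_0(A)$ into both $\trl{\rho}1$ and $\trl{\sigma}2$. The key is that $(\Lambda_1 \cup \Lambda_2)^c \subset \Lambda_1^c \cap \Lambda_2^c$, so $A \in \caA_{\Lambda_1^c}$ and simultaneously $A \in \caA_{\Lambda_2^c}$. Lemma \ref{lem9}(d), applied to $\trl{\rho}1$ with localizing cone $\Lambda_1$ and to $\trl{\sigma}2$ with localizing cone $\Lambda_2$, then gives
\begin{align*}
\trl{\rho}1(\pi_0(A)) = \pi_0(A) = \trl{\sigma}2(\pi_0(A)).
\end{align*}
Substituting into $R\, \trl{\rho}1(\pi_0(A)) = \trl{\sigma}2(\pi_0(A))\, R$ yields $R\, \pi_0(A) = \pi_0(A)\, R$, proving the first inclusion $R \in \pi_0(\caA_{(\Lambda_1 \cup \Lambda_2)^c})'$.

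For the second inclusion $\pi_0(\caA_{(\Lambda_1 \cup \Lambda_2)^c})' \subset \btv$, I would use that $\ctv$ is upward filtering (Lemma \ref{lem4}) to choose a single cone $\Lambda \in \ctv$ with $\Lambda_1 \cup \Lambda_2 \subset \Lambda$. Then $\caA_{\Lambda^c} \subset \caA_{(\Lambda_1 \cup \Lambda_2)^c}$, so taking commutants reverses the inclusion and gives
\begin{align*}
\pi_0(\caA_{(\Lambda_1 \cup \Lambda_2)^c})' \subset \pi_0(\caA_{\Lambda^c})' = \amf(\Lambda) \subset \btv,
\end{align*}
which finishes the proof.

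No step here strikes me as a real obstacle: the whole argument is a direct application of Lemma \ref{lem9}(d) together with the upward-filtering property of $\ctv$. The only subtlety worth flagging is that one must keep track of which cone localizes each extension ($\Lambda_1$ for $\trl{\rho}1$ and $\Lambda_2$ for $\trl{\sigma}2$) and then take the union on the test algebra side so that $A$ is simultaneously in both complements.
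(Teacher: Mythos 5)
Your proof is correct and follows essentially the same route as the paper: the first inclusion is obtained exactly as in the paper by applying Lemma \ref{lem9}(d) to $\trl{\rho}1$ and $\trl{\sigma}2$ on $A\in\caA_{(\Lambda_1\cup\Lambda_2)^c}\subset \caA_{\Lambda_1^c}\cap\caA_{\Lambda_2^c}$ and using the intertwining relation. Your explicit verification of the second inclusion via a cone $\Lambda\in\ctv$ containing $\Lambda_1\cup\Lambda_2$ (upward filtering, Lemma \ref{lem4}) is a correct filling-in of a step the paper leaves implicit.
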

\begin{proof}
For $R\in \lmk \trl{\rho}1,\trl{\sigma}2\rmk$ we have
\begin{align}
R \pi_0(A)=R \trl{\rho}1(\pi_0(A))
= \trl{\sigma}2(\pi_0(A)) R=\pi_0(A)R,\quad A\in \al_{\lmk\Lambda_1\cup\Lambda_2\rmk^c}
\end{align}
because of Lemma \ref{lem9} (d).
\end{proof}
\begin{lem}\label{lem18}
Consider Setting \ref{setni} and assume the approximate Haag duality.
Let $\theta\in \bbR$, $\varphi\in (0,\pi)$,
$\Lambda_1,\Lambda_1',\Lambda_2,\Lambda_2'\in\caC_{(\theta,\varphi)}$, 
$\rho,\rho',\sigma,\sigma'\in \caO_0$ and 
$ V_{\rho,\Lambda_1}\in \caV_{\rho,\Lambda_1}$, $ V_{\rho',\Lambda_1'}\in \caV_{\rho',\Lambda_1'}$, 
$ V_{\sigma,\Lambda_2}\in \caV_{\sigma,\Lambda_2}$,
$ V_{\sigma',\Lambda_2'}\in \caV_{\sigma',\Lambda_2'}$.
Then for any  $R_1\in \lmk \trl{\rho}1,\trlp{\rho'}1\rmk$ and
$R_2\in \lmk \trl{\sigma}2,\trlp{\sigma'}2\rmk$, we
have 
\begin{align}\label{mordef}
R_1\otimes R_2:=R_1\trl{\rho}1(R_2)
\in \lmk\trl{\rho}1\circ  \trl{\sigma}2, \trlp{\rho'}1\circ \trlp{\sigma'}2\rmk,
\end{align}
and 
\begin{align}
\lmk R_1\otimes R_2\rmk^*=R_1^*\otimes R_2^*.
\end{align}
\end{lem}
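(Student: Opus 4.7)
The plan is to prove this in two steps: first establish the intertwining identity, then the $*$-compatibility, both by direct calculation using the multiplicativity of the extensions $T_\rho^{(\theta,\varphi),\Lambda,\bar V}$ and the intertwiner property of $R_1,R_2$ applied at elements of $\btv$.

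For the intertwiner claim in \eqref{mordef}, I take any $x\in\btv$ and compute
\begin{align*}
(R_1\otimes R_2)\cdot \trl{\rho}{1}\circ\trl{\sigma}{2}(x)
&=R_1\, \trl{\rho}{1}(R_2)\,\trl{\rho}{1}\!\lmk\trl{\sigma}{2}(x)\rmk
=R_1\,\trl{\rho}{1}\!\lmk R_2\,\trl{\sigma}{2}(x)\rmk\\
&=R_1\,\trl{\rho}{1}\!\lmk \trlp{\sigma'}{2}(x)\,R_2\rmk
=R_1\,\trl{\rho}{1}\!\lmk\trlp{\sigma'}{2}(x)\rmk\,\trl{\rho}{1}(R_2),
\end{align*}
using that $\trl{\rho}{1}$ is a $*$-homomorphism and that $R_2\in(\trl{\sigma}{2},\trlp{\sigma'}{2})$. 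The key step is now to apply the intertwining property of $R_1$ to the element $\trlp{\sigma'}{2}(x)$; this requires $\trlp{\sigma'}{2}(x)\in\btv$, which is guaranteed by Lemma \ref{lem9} (in particular by the fact that $\trlp{\sigma'}{2}:\btv\to\btv$). Then $R_1\trl{\rho}{1}(\trlp{\sigma'}{2}(x))=\trlp{\rho'}{1}(\trlp{\sigma'}{2}(x))R_1$, and the above equals $\trlp{\rho'}{1}\circ\trlp{\sigma'}{2}(x)\cdot(R_1\otimes R_2)$, as required.

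For the adjoint relation, note first that by Lemma \ref{lem17} we have $R_2\in(\trl{\sigma}{2},\trlp{\sigma'}{2})\subset\btv$, so $R_2^*\in\btv$. Thus the intertwining property of $R_1$ applies at $R_2^*$:
\begin{align*}
R_1\,\trl{\rho}{1}(R_2^*)=\trlp{\rho'}{1}(R_2^*)\,R_1.
\end{align*}
Taking adjoints and using that $\trl{\rho}{1}, \trlp{\rho'}{1}$ are $*$-preserving gives
\begin{align*}
(R_1\otimes R_2)^*=\trl{\rho}{1}(R_2)^* R_1^*=\trl{\rho}{1}(R_2^*)R_1^*
=R_1^*\,\trlp{\rho'}{1}(R_2^*)=R_1^*\otimes R_2^*,
\end{align*}
where in the last equality I apply the definition \eqref{mordef} with $R_1$ replaced by $R_1^*\in(\trlp{\rho'}{1},\trl{\rho}{1})$ and $R_2$ replaced by $R_2^*\in(\trlp{\sigma'}{2},\trl{\sigma}{2})$; the outer extension in that case is $\trlp{\rho'}{1}$, as required.

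The only subtle point is making sure all operations take place in $\btv$ so that the intertwiner identities for $R_1$ may be invoked: this uses Lemma \ref{lem9}(e) (to keep $\trlp{\sigma'}{2}(x)\in\btv$) together with Lemma \ref{lem17} (to conclude $R_2,R_2^*\in\btv$). Beyond that, the argument is a pure algebraic manipulation, and I do not anticipate any additional obstacle.
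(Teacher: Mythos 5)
Your argument is correct and is exactly the direct verification the paper has in mind, since its proof of this lemma is simply ``This follows directly from the definition'': the intertwining identity follows from multiplicativity of $\trl{\rho}{1}$ together with the fact (Lemma \ref{lem17} and Lemma \ref{lem9}) that $R_2$ and $\trlp{\sigma'}{2}(x)$ lie in $\btv$, where the intertwining relation for $R_1$ applies. The only cosmetic remark is that the identity $\trl{\rho}{1}(R_2^*)R_1^*=R_1^*\trlp{\rho'}{1}(R_2^*)$ you use is the adjoint of the relation $R_1\trl{\rho}{1}(R_2)=\trlp{\rho'}{1}(R_2)R_1$ taken at $R_2$ (equivalently, the statement $R_1^*\in\lmk\trlp{\rho'}{1},\trl{\rho}{1}\rmk$ evaluated at $R_2^*$), rather than the adjoint of the relation at $R_2^*$; both relations hold, so this does not affect the validity of the proof.
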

Note from Lemma \ref{lem17},
$\trl{\rho}1(R_2)$ is well-defined.
\begin{proof}
This follows directly from the definition.
\end{proof}
\begin{defn}
In the above setting, we define $R_1\otimes R_2$ by (\ref{mordef}).
\end{defn}

\begin{lem}\label{lem19}
Consider Setting \ref{setni} and assume the approximate Haag duality.
Let $\theta\in \bbR$, $\varphi\in (0,\pi)$,
$\Lambda_i,\Lambda_i',\Lambda_i''\in\caC_{(\theta,\varphi)}$, 
$\rho_i,\rho_i',\rho_i''\in \caO_0$ and 
$ V_{\rho_i,\Lambda_i}\in \caV_{\rho_i,\Lambda_i}$, 
$ V_{\rho_i',\Lambda_i'}\in \caV_{\rho_i',\Lambda_i'}$
$ V_{\rho_i'',\Lambda_i''}\in \caV_{\rho_i'',\Lambda_i''}$, $i=1,2$.
Then for any  $R_i\in \lmk \trl{\rho}i,\trlp{\rho'}i\rmk$ and
$R_i'\in \lmk \trl{\rho'}i,\trlp{\rho''}i\rmk$ $i=1,2$,
we have
\begin{align}
R_1'R_1\otimes R_2' R_2
=(R_1'\otimes R_2')(R_1\otimes R_2).
\end{align}
\end{lem}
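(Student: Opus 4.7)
The plan is to prove the interchange law by directly unfolding the definition of $\otimes$ from Lemma \ref{lem18} and using the intertwining property of $R_1$.

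By definition,
\begin{align*}
R_1' R_1 \otimes R_2' R_2 &= R_1' R_1 \cdot \trl{\rho_1}{1}(R_2' R_2),\\
(R_1' \otimes R_2')(R_1 \otimes R_2) &= R_1' \trlp{\rho_1'}{1}(R_2') \cdot R_1 \trl{\rho_1}{1}(R_2).
\end{align*}
Note that $R_2'$ makes sense as an argument of $\trlp{\rho_1'}{1}$ because by Lemma \ref{lem17} we have $R_2' \in \pi_0(\caA_{(\Lambda_2' \cup \Lambda_2'')^c})' \subset \btv$, and similarly $R_2, R_2'R_2 \in \btv$. The key step is that $R_1 \in (\trl{\rho_1}{1}, \trlp{\rho_1'}{1})$ is an intertwiner, so for every $x \in \btv$,
\begin{equation*}
\trlp{\rho_1'}{1}(x)\, R_1 = R_1\, \trl{\rho_1}{1}(x).
\end{equation*}
Applying this identity to the element $x = R_2' \in \btv$ and multiplying both sides appropriately, I expect to transport $R_1$ past $\trlp{\rho_1'}{1}(R_2')$, yielding
\begin{equation*}
R_1' \trlp{\rho_1'}{1}(R_2')\, R_1 \trl{\rho_1}{1}(R_2)
= R_1' R_1 \trl{\rho_1}{1}(R_2) \trl{\rho_1}{1}(R_2)
\end{equation*}
— wait, more precisely
\begin{equation*}
R_1' \trlp{\rho_1'}{1}(R_2')\, R_1 \trl{\rho_1}{1}(R_2)
= R_1' R_1 \trl{\rho_1}{1}(R_2')\trl{\rho_1}{1}(R_2)
= R_1' R_1 \trl{\rho_1}{1}(R_2' R_2),
\end{equation*}
using the multiplicativity of the $*$-homomorphism $\trl{\rho_1}{1}$ in the last equality. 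This yields exactly $R_1' R_1 \otimes R_2' R_2$, completing the proof.

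There is essentially no obstacle here: all necessary ingredients (the fact that intertwiners lie in $\btv$, the intertwining relation, and multiplicativity of $\trl{\rho_1}{1}$) have already been established, so the argument is a short algebraic manipulation. The only thing to be careful about is confirming that $R_2'$ and $R_2'R_2$ indeed belong to the domain $\btv$ of $\trl{\rho_1}{1}$, which follows from Lemma \ref{lem17}.
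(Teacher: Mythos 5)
Your argument is correct and is exactly what the paper means by ``this follows directly from the definition'': unfold $R_1'R_1\otimes R_2'R_2$ and $(R_1'\otimes R_2')(R_1\otimes R_2)$, use the intertwining relation $\trlp{\rho_1'}{1}(R_2')R_1=R_1\trl{\rho_1}{1}(R_2')$ (legitimate since $R_2'\in\btv$ by Lemma \ref{lem17}), and conclude with multiplicativity of $\trl{\rho_1}{1}$. Aside from the momentary slip that you immediately correct mid-computation, the write-up matches the intended proof.
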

\begin{proof}
This follows directly from the definition.
\end{proof}
\begin{lem}\label{lem22}
Consider Setting \ref{setni} and assume the approximate Haag duality.
Let $\theta\in \bbR$, $\varphi\in (0,\pi)$,
$\Lambda_i,\Lambda_i'\in\caC_{(\theta,\varphi)}$, $\rho_i,\rho_i'\in \caO_0$ and 
$ \bar V_{\rho_i,\ltj{i}{t_i}}\in \caV_{\rho_i,\ltj{i}{t_i}}$, 
$ \bar V_{\rho_i',\ltjp{i}{t_i'}}\in \caV_{\rho_i',\ltjp{i}{t_i'}}$, 
$t_i,t_i'\ge 0$, $i=1,2$.
Suppose that $\{\Lambda_1,\Lambda_1'\}\perp_{(\theta,\varphi)}
\{\Lambda_2,\Lambda_2'\}$.
Suppse  
\[
S_i^{t_i,t_i'}\in \lmk \trlt{\rho_i}i,\trlpt{\rho'_i}i\rmk, \quad i=1,2
\]
with $\lV S_i^{t_it_i'}\rV\le 1$
 is given
for each  $t_i, t_i'\ge 0$.
Then we have
\begin{align}
\lim_{t_1,t_2\to\infty} \lV S_1^{t_1,t_1'} \otimes
S_2^{t_2, t_2'}-S_2^{t_2t_2'}\otimes S_1^{t_1t_1'}\rV=0.
\end{align}
\end{lem}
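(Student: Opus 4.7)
By the hypothesis $\{\Lambda_1,\Lambda_1'\}\perp_{(\theta,\varphi)}\{\Lambda_2,\Lambda_2'\}$, fix $(\varepsilon,\tilde\Lambda_1,\tilde\Lambda_2)\in\caL(\{\Lambda_1,\Lambda_1'\},\{\Lambda_2,\Lambda_2'\})$. Then $\Lambda_i,\Lambda_i'\subset\tilde\Lambda_i$, the arguments of $(\tilde\Lambda_1)_\varepsilon,(\tilde\Lambda_2)_\varepsilon$ are disjoint, and each backward-shifted-then-fattened $\tilde\Lambda_i$ sits in the complement of the other. The basic idea is to write
\begin{align*}
S_1^{t_1,t_1'}\otimes S_2^{t_2,t_2'}-S_2^{t_2,t_2'}\otimes S_1^{t_1,t_1'}
&=S_1\bigl[\trlt{\rho_1}{1}(S_2)-S_2\bigr]+[S_1,S_2]+S_2\bigl[S_1-\trlt{\rho_2}{2}(S_1)\bigr],
\end{align*}
and show each of the three contributions has vanishing norm as $t_1,t_2\to\infty$, uniformly over $t_1',t_2'$; the bound $\|S_i\|\le 1$ then finishes the argument.

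For the first and third brackets I would use the convergence of $\trlt{\rho_i}{i}$ to the identity of Lemma~\ref{lem319}. By Lemma~\ref{lem17} the operator $S_2^{t_2,t_2'}$ lies in $\amf(\Xi_2)$ for some cone $\Xi_2$ containing both $\Lambda_2+t_2\bm e_{\Lambda_2}$ and $\Lambda_2'+t_2'\bm e_{\Lambda_2'}$; using the geometric lemma following Definition~\ref{def36} and the upward filtering property of $\ctv$, one can arrange $\Xi_2\subset\tilde\Lambda_2$, so $S_2\in\amf(\tilde\Lambda_2)$. Since $\arg\Lambda_1\subset\arg\tilde\Lambda_1$ is disjoint from $\arg\tilde\Lambda_2$, Lemma~\ref{lem319} gives $\|\trlt{\rho_1}{1}\vert_{\amf(\tilde\Lambda_2)}-\id\|\to 0$ as $t_1\to\infty$, which bounds the first bracket. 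The third bracket is symmetric.

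For the middle commutator I would invoke Lemma~\ref{lem320} (asymptotic commutativity of $\amf$-algebras in cones of disjoint arguments). Concretely, $S_1\in\amf(\Xi_1)$ with $\Xi_1$ a cone containing $\Lambda_1+t_1\bm e_{\Lambda_1}$ and $\Lambda_1'+t_1'\bm e_{\Lambda_1'}$; as $t_1\to\infty$ (together with suitable $t_1'$), the apices of these shifted small cones migrate deep into $\tilde\Lambda_1$ in the direction $\bm e_{\tilde\Lambda_1}$, so one can take $\Xi_1\subset\tilde\Lambda_1+\tau(t_1,t_1')\bm e_{\tilde\Lambda_1}$ with $\tau\to\infty$. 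Then Lemma~\ref{lem320}, applied with $\tilde\Lambda_1,\tilde\Lambda_2$ in the roles of $\Lambda_1,\Lambda_2$, yields $\|[S_1,S_2]\|\to 0$. (Alternatively one can redo the argument of Lemma~\ref{lem320}: approximate $S_1$ by an element of $\pi_0(\caA_{(\tilde\Lambda_1+\tau\bm e_{\tilde\Lambda_1}-R\bm e_{\tilde\Lambda_1})_\varepsilon})''$ via the approximate Haag duality, and note that this region lies in $\tilde\Lambda_2^c$ by the $\caL$-condition, so commutes with $S_2\in\amf(\tilde\Lambda_2)$.)

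The main obstacle is purely geometric bookkeeping: one has to verify that as $t_1,t_1',t_2,t_2'$ grow, the cones $\Xi_i$ can be pushed arbitrarily deep inside $\tilde\Lambda_i$ along $\bm e_{\tilde\Lambda_i}$, with rates that make the estimates in Lemmas~\ref{lem319} and \ref{lem320} (which come from the decay functions $f_{\varphi,\varepsilon,\delta}$ of the approximate Haag duality) go to zero. Once this containment is set up, the three estimates combine cleanly to give the claim.
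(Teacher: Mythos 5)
Your proposal is correct and is essentially the paper's own argument: the same three-term decomposition of $S_1\otimes S_2-S_2\otimes S_1$, with the $S_i$ localized via Lemma \ref{lem17} in the cones $\tilde\Lambda_i$ supplied by the $\perp_{(\theta,\varphi)}$ hypothesis, the two outer terms killed by the approximate Haag duality (the extension acting as the identity far away) and the middle commutator by asymptotic commutativity of distal cone algebras. The only cosmetic difference is that you cite Lemmas \ref{lem319} and \ref{lem320}, whereas the paper re-derives the same estimates inline from Lemma \ref{niku} and Lemma \ref{lem201} after pushing both $S_1$ and $S_2$ into the deeply shifted cones $\tilde\Lambda_i+t\bm e_{\tilde\Lambda_i}$ via Lemma \ref{lem1p}; note that, exactly as in the paper's own proof, your commutator estimate requires $t_1',t_2'\to\infty$ as well, which is the intended reading of the limit (and is how the lemma is used later).
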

\begin{proof}
By the definition (Definition \ref{def36}), there are $\tilde\Lambda_1,\tilde \Lambda_2\in\ctv$ and $\varepsilon>0$
such that $\Lambda_i,\Lambda_i'\subset \tilde \Lambda_i$, $i=1,2$,
and
\begin{align}\label{fat}
\begin{split}
\lmk \tilde \Lambda_2-R_{\lv \arg \tilde \Lambda_2\rv,\varepsilon}
\bm e_{\tilde \Lambda_2}
\rmk_\varepsilon \subset \lmk \tilde \Lambda_1\rmk^c,\quad
\lmk \tilde \Lambda_1-R_{\lv \arg \tilde \Lambda_1\rv, \varepsilon}
\bm e_{\tilde \Lambda_1}
\rmk_\varepsilon \subset \lmk \tilde \Lambda_2\rmk^c,\\
\lmk \tilde\Lambda_1\rmk_{\varepsilon},
\lmk \tilde\Lambda_2\rmk_{\varepsilon}\in \caC_{(\theta,\varphi)},\quad
\arg\lmk \tilde \Lambda_1\rmk_{\varepsilon}\cap \arg\lmk \tilde \Lambda_2\rmk_{\varepsilon}=\emptyset.
\end{split}
\end{align}
Choose $\delta>0$ small enough so that  $(\tilde\Lambda_1)_{\varepsilon+\delta}, (\tilde\Lambda_2)_{\varepsilon+\delta}\in \ctv$
and $\arg\lmk \tilde \Lambda_1\rmk_{\varepsilon+\delta}
\cap \arg\lmk \tilde \Lambda_2\rmk_{\varepsilon+\delta}=\emptyset.$
For any $t\ge 0$, we have
\begin{align}\label{tphos}
\begin{split}
&\Lambda_i+t_i\bm e_{\Lambda_i},\Lambda_i'+t_i'\bm e_{\Lambda_i'}\subset \tilde \Lambda_i
+t\bm e_{\tilde \Lambda_i},\quad i=1,2,\\
&\lmk \tilde \Lambda_2+\frac t 2\bm e_{\tilde \Lambda_2}\rmk_{\varepsilon+\delta}\subset 
\lmk \Lambda_1+t_1\bm e_{\Lambda_1}\rmk^c,\quad 
\lmk \tilde \Lambda_1+\frac t 2\bm e_{\tilde \Lambda_1}\rmk_{\varepsilon+\delta}\subset 
\lmk \Lambda_2+t_2\bm e_{\Lambda_2}\rmk^c
\end{split}
\end{align}
 for $t_1,t_2, t_1',t_2'$ large enough by Lemma \ref{lem1p}.
For such $t_1,t_2, t_1',t_2'$ given for $t\ge 0$, we have
\begin{align}\label{nago}
\begin{split}
&\lmk \tilde \Lambda_2+\lmk t-R_{|\arg\tilde\Lambda_2|,\varepsilon}\rmk
\bm e_{\tilde \Lambda_2}
\rmk_\varepsilon \subset \lmk \tilde \Lambda_2-R_{|\arg\tilde\Lambda_2|,\varepsilon}
\bm e_{\tilde \Lambda_2}
\rmk_\varepsilon\subset
 \lmk \tilde \Lambda_1\rmk^c\subset 
 \lmk \tilde \Lambda_1+t\bm e_{\tilde \Lambda_1}\rmk^c\subset
 \lmk \Lambda_1+t_1\bm e_{\Lambda_1}
\rmk^c\\
&
\lmk \tilde \Lambda_1+\lmk t-R_{|\arg\tilde\Lambda_1|,\varepsilon}\rmk
\bm e_{\tilde \Lambda_1}
\rmk_\varepsilon \subset \lmk \tilde \Lambda_1-R_{|\arg\tilde\Lambda_1|,\varepsilon}
\bm e_{\tilde \Lambda_1}
\rmk_\varepsilon\subset
 \lmk \tilde \Lambda_2\rmk^c\subset 
 \lmk \tilde \Lambda_2+t\bm e_{\tilde \Lambda_2}\rmk^c\subset
 \lmk \Lambda_2+t_2\bm e_{\Lambda_1}
\rmk^c.
\end{split}
\end{align}

Furthermore, for any $t\ge 0$, and $t_1,t_2, t_1',t_2'$ satisfying (\ref{tphos}) for this $t$,
from  Lemma \ref{lem17}, we have
\begin{align}\label{92}
S_2^{t_2, t_2'}\in \pi_0\lmk \caA_{\lmk \lmk \Lambda_2+t_2 \bm e_{ \Lambda_2}\rmk\cup
\lmk \Lambda_2'+t_2' \bm e_{ \Lambda_2'}\rmk\rmk^c}\rmk'
\subset \pi_0(\caA_{(\tilde \Lambda_2+t\bm e_{\tilde \Lambda_2})^c})'
=\amf(\tilde \Lambda_2+t\bm e_{\tilde \Lambda_2}).
\end{align}
Similarly, we have $S_1^{t_1, t_1'}\in\amf(\tilde \Lambda_1+t\bm e_{\tilde \Lambda_1}).
$
From this and the approximate Haag duality, we have 
\begin{align}\label{nana}
\begin{split}
\Ad\lmk  U_{\tilde \Lambda_2+t\bm e_{\tilde \Lambda_2},\varepsilon }^*\rmk\lmk S_2^{t_2, t_2'}\rmk
\in\pi_0\lmk
\al_{\lmk \tilde \Lambda_2+\lmk t-R_{|\arg \tilde\Lambda_2|,\varepsilon}\rmk
\bm e_{\tilde \Lambda_2}
\rmk_\varepsilon}
\rmk''\subset 
\pi_0\lmk\caA_{\lmk \Lambda_1+t_1\bm e_{\Lambda_1}\rmk^c}\rmk'',\\
\Ad\lmk
 U_{\tilde \Lambda_1+t\bm e_{\tilde \Lambda_1},\varepsilon}^*\rmk\lmk S_1^{t_1, t_1'}\rmk
\in\pi_0\lmk
\caA_{\lmk \tilde \Lambda_1+\lmk t-R_{|\arg \tilde\Lambda_1|,\varepsilon}\rmk
\bm e_{\tilde \Lambda_1}
\rmk_\varepsilon}
\rmk''\subset \pza{\lmk \Lambda_2+t_2\bm e_{\Lambda_2}\rmk^c}.
\end{split}
\end{align}
Therefore, by Lemma \ref{lemhoshi} 
($U_{\tilde \Lambda_2+t\bm e_{\tilde \Lambda_2},\varepsilon }\in \btv$),
 Lemma \ref{lem9} (i) ($\trlt{\rho_1}1$ is $\sigma$w-continuous on $\pi_0\lmk
\al_{\lmk \tilde \Lambda_2+\lmk t-R_{|\arg \tilde\Lambda_2|,\varepsilon}\rmk
\bm e_{\tilde \Lambda_2}
\rmk_\varepsilon}
\rmk''$
 ) (d) and (\ref{nana}), we have 
\begin{align}
\begin{split}
&\trlt{\rho_1}1\lmk S_2^{t_2, t_2'}\rmk\\
&=
\Ad\lmk \trlt{\rho_1}1 \lmk  U_{\tilde \Lambda_2+t\bm e_{\tilde \Lambda_2},\varepsilon}
\rmk\rmk\circ
\trlt{\rho_1}1 \lmk\Ad\lmk  U_{\tilde \Lambda_2+t\bm e_{\tilde \Lambda_2},\varepsilon}^*\rmk\lmk S_2^{t_2, t_2'}\rmk\rmk\\
&=\Ad\lmk \trlt{\rho_1}1 \lmk U_{\tilde \Lambda_2+t\bm e_{\tilde \Lambda_2},\varepsilon}\rmk  U^*_{\tilde \Lambda_2+t\bm e_{\tilde \Lambda_2},\varepsilon}\rmk
\lmk S_2^{t_2, t_2'}\rmk.
\end{split}
\end{align}
Similarly, we have
\begin{align}
\begin{split}
&\trlt{\rho_2}2\lmk S_1^{t_1, t_1'}\rmk\\
&=\Ad\lmk \trlt{\rho_2}2 \lmk U_{\tilde \Lambda_1+t\bm e_{\tilde \Lambda_1},\varepsilon}\rmk  U_{\tilde \Lambda_1+t\bm e_{\tilde \Lambda_1},\varepsilon}^*\rmk
\lmk S_1^{t_1, t_1'}\rmk.
\end{split}
\end{align}

From Lemma \ref{niku} applied to $\Lambda_1$, $\Lambda_2$ $t$ replaced by $\Lambda_1+t_1\bm e_{\Lambda_1}$,
$\tilde \Lambda_2+t_2\bm e_{\tilde \Lambda_2}$, $\frac t2$,
  we have
\begin{align}
\lV
\trlt{\rho_1}{1}\lmk U_{\tilde \Lambda_2+t\bm e_{\tilde \Lambda_2},\varepsilon}\rmk  U_{\tilde \Lambda_2+t\bm e_{\tilde \Lambda_2},\varepsilon}^*-
\unit
\rV\le 2 f_{|\arg \tilde \Lambda_2|,\varepsilon,\delta} \lmk \frac{t} 2\rmk,
\end{align}
for any $t\ge 0$ and 
$t_1,t_1',t_2,t_2'\le 0$
satisfying (\ref{tphos})
for this $t$.

Now, for any $\epsilon'>0$, choose $t>0$ such that
$2 f_{|\arg \tilde \Lambda_2|,\varepsilon,\delta} (\frac{t} 2)<\epsilon'$.
Then for any $t_1,t_2,t_1',t_2'$ satisfying (\ref{tphos}) for this $t$
we have 
\begin{align}
\begin{split}
&\lV \trlt{\rho_1}1\lmk S_2^{t_2, t_2'}\rmk-S_2^{t_2, t_2'}\rV\\
&\le 2\lV
\trlt{\rho_1}{1}\lmk U_{\tilde \Lambda_2+t\bm e_{\tilde \Lambda_2},\varepsilon}\rmk  U_{\tilde \Lambda_2+t\bm e_{\tilde \Lambda_2},\varepsilon}^*-
\unit
\rV\le 4 f_{|\arg \tilde \Lambda_2|,\varepsilon,\delta} \lmk \frac{t} 2\rmk<4\epsilon'.
\end{split}
\end{align}

Hence we have
\begin{align}
\begin{split}
&\lim_{t_1,t_2,t_1',t_2'\to\infty}\lV \trlt{\rho_1}1\lmk S_2^{t_2, t_2'}\rmk-S_2^{t_2, t_2'}\rV=0.
\end{split}
\end{align}
Similarly, we have
\begin{align}
\begin{split}
&\lim_{t_1,t_2,t_1',t_2'\to\infty}\lV \trlt{\rho_2}2\lmk S_1^{t_1, t_1'}\rmk-S_1^{t_1, t_1'}\rV=0.
\end{split}
\end{align}
Combinig this with Lemma \ref{lem201} and (\ref{92}), we obtain
\begin{align}
\begin{split}
&\lim_{t_1,t_2,t_1',t_2'\to\infty}\lV
S_1^{t_1t_1'}\otimes S_2^{t_2t_2'}
-S_2^{t_2t_2'}\otimes S_1^{t_1t_1'}\rV\\
&=\lim_{t_1,t_2,t_1',t_2'\to\infty}\lV
S_1^{t_1t_1'}\trlt{\rho_1}1(S_2^{t_2t_2'})-S_2^{t_2t_2'}\trlt{\rho_2}2(S_1^{t_1t_1'})\rV
=0.
\end{split}
\end{align}
\end{proof}
\begin{defn}
Let $\theta\in \bbR$, $\varphi\in (0,\pi)$.
We write $\Lambda_2\leftarrow_{(\theta,\varphi)}\Lambda_1$ if
$\Lambda_1,\Lambda_2\in\ctv$
can be written $\Lambda_i=\Lambda_{\bm p_i, \theta_i,\varphi_i}$, $i=1,2$
with $\theta_i\in\bbR$, $\varphi_i\in (0,\pi)$ satisfying
\begin{align}\label{lao}
\theta+\varphi<\theta_1-\varphi_1<\theta_1+\varphi_1
<\theta_2-\varphi_2<\theta_2+\varphi_2
<\theta-\varphi+ 2\pi.
\end{align}
\end{defn}
\begin{lem}\label{lem25}
Let $\theta\in \bbR$, $\varphi\in (0,\pi)$.
For any $\Lambda_1,\Lambda_2,\Lambda_1',\Lambda_2'\in \ctv$
with $\Lambda_1\perp_{(\theta,\varphi)}\Lambda_2$, 
$\Lambda_1'\perp_{(\theta,\varphi)}\Lambda_2'$,
$\Lambda_2\leftarrow_{(\theta,\varphi)}\Lambda_1$, and $\Lambda_2'\leftarrow_{(\theta,\varphi)}\Lambda_1'$,
there are $\{\Lambda_i^{(j)}\}_{j=0}^4, \{{\Lambda'}_i^{(j)}\}_{j=0}^4\subset \ctv$,
$i=1,2$
such that
\begin{align}
&\{\Lambda_1^{(j)}, \Lambda_1^{(j+1)}\}\perp_{(\theta,\varphi)}\{\Lambda_2^{(j)}, \Lambda_2^{(j+1)}\},\quad
\{{\Lambda'}_1^{(j)}, {\Lambda'}_1^{(j+1)}\}\perp_{(\theta,\varphi)}
\{{\Lambda'}_2^{(j)}, {\Lambda'}_2^{(j+1)}\},\quad
j=0,1,2,3\label{seqseq1}\\
&\{\Lambda_1^{(4)}, {\Lambda'}_1^{(4)}\} \perp_{(\theta,\varphi)}\{\Lambda_2^{(4)}, {\Lambda'}_2^{(4)}\}\label{seqseq2}\\
&\Lambda_i^{(0)}=\Lambda_i,\quad {\Lambda'}_i^{(0)}=\Lambda_i',\quad i=1,2.\label{seqseq3}
\end{align}
\end{lem}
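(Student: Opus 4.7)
Plan. The approach is to park $\Lambda_i^{(j)}$ and ${\Lambda'}_i^{(j)}$ for $j=2,3,4$ at a single common pair of narrow ``target'' cones $\Lambda_i^\circ$ ($i=1,2$) pointing in angular directions $\bar\theta^-,\bar\theta^+$ chosen near the two endpoints of $I:=(\theta+\varphi,\theta-\varphi+2\pi)$, and to use $j=1$ as a pure forward shift of the given $\Lambda_i,\Lambda_i'$. Pick $\bar\theta^-,\bar\theta^+$ and a half-width $\bar\varphi>0$ with $\bar\theta^-+\bar\varphi<\min\{\min\arg\Lambda_2,\min\arg\Lambda_2'\}$, $\bar\theta^+-\bar\varphi>\max\{\max\arg\Lambda_1,\max\arg\Lambda_1'\}$, $\bar\theta^-\pm\bar\varphi,\bar\theta^+\pm\bar\varphi\in I$, and $\bar\theta^-+\bar\varphi<\bar\theta^+-\bar\varphi$; such a choice exists because $\Lambda_i,\Lambda_i'\in\ctv$ keep $\arg\Lambda_2,\arg\Lambda_2'$ away from $\theta+\varphi$ and $\arg\Lambda_1,\arg\Lambda_1'$ away from $\theta-\varphi+2\pi$. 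Set $\Lambda_1^\star:=\Lambda_{\bm 0,\bar\theta^-,\bar\varphi}$ and $\Lambda_2^\star:=\Lambda_{\bm 0,\bar\theta^+,\bar\varphi}$. Apply Lemma \ref{lem23} to $(\Lambda_1,\Lambda_1^\star,\Lambda_2,\Lambda_2^\star)$, obtaining shifts $L_i,L_i^\star\ge 0$ with $\{\Lambda_1+L_1\bm e_{\Lambda_1},\Lambda_1^\star+L_1^\star\bm e_{\Lambda_1^\star}\}\perp_{(\theta,\varphi)}\{\Lambda_2+L_2\bm e_{\Lambda_2},\Lambda_2^\star+L_2^\star\bm e_{\Lambda_2^\star}\}$; analogously apply Lemma \ref{lem23} to $(\Lambda_1',\Lambda_1^\star,\Lambda_2',\Lambda_2^\star)$ to obtain $L_i',\tilde L_i^\star\ge 0$. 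With $M_i^\star:=\max(L_i^\star,\tilde L_i^\star)$ and $\Lambda_i^\circ:=\Lambda_i^\star+M_i^\star\bm e_{\Lambda_i^\star}$, define $\Lambda_i^{(0)}:=\Lambda_i$, $\Lambda_i^{(1)}:=\Lambda_i+L_i\bm e_{\Lambda_i}$, $\Lambda_i^{(j)}:=\Lambda_i^\circ$ for $j=2,3,4$, and similarly ${\Lambda'}_i^{(0)}:=\Lambda_i'$, ${\Lambda'}_i^{(1)}:=\Lambda_i'+L_i'\bm e_{\Lambda_i'}$, ${\Lambda'}_i^{(j)}:=\Lambda_i^\circ$ for $j=2,3,4$.

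Verification. (\ref{seqseq3}) is immediate. For (\ref{seqseq1}) at $j=0$, unprimed: a forward shift of a cone along its own axis lies inside the cone, so $\Lambda_i+L_i\bm e_{\Lambda_i}\subset\Lambda_i$, and the witnessing cones $\tilde\Lambda_1,\tilde\Lambda_2$ for the assumed $\Lambda_1\perp_{(\theta,\varphi)}\Lambda_2$ also witness $\{\Lambda_1,\Lambda_1^{(1)}\}\perp_{(\theta,\varphi)}\{\Lambda_2,\Lambda_2^{(1)}\}$; the primed case is identical via $\Lambda_1'\perp_{(\theta,\varphi)}\Lambda_2'$. For (\ref{seqseq1}) at $j=1$ unprimed: the first application of Lemma \ref{lem23} yields $\{\Lambda_1^{(1)},\Lambda_1^\star+L_1^\star\bm e_{\Lambda_1^\star}\}\perp_{(\theta,\varphi)}\{\Lambda_2^{(1)},\Lambda_2^\star+L_2^\star\bm e_{\Lambda_2^\star}\}$, and since $\Lambda_i^\circ\subset\Lambda_i^\star+L_i^\star\bm e_{\Lambda_i^\star}$ by the same forward-shift subset property (using $M_i^\star\ge L_i^\star$), the same witnesses deliver $\{\Lambda_1^{(1)},\Lambda_1^\circ\}\perp_{(\theta,\varphi)}\{\Lambda_2^{(1)},\Lambda_2^\circ\}$. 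The primed $j=1$ case is analogous, using $M_i^\star\ge\tilde L_i^\star$ with the witnesses of the second application. The remaining conditions ($j=2,3$ and (\ref{seqseq2})) all collapse to $\{\Lambda_1^\circ\}\perp_{(\theta,\varphi)}\{\Lambda_2^\circ\}$, which once again follows from the first application because its witnessing cones already contain $\Lambda_i^\circ$.

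Main obstacle. The chief conceptual obstruction is that $\Lambda_i^{(0)}=\Lambda_i$ is fixed while Lemma \ref{lem23} inevitably outputs a shifted version of its inputs. Inserting the pure forward shift $\Lambda_i+L_i\bm e_{\Lambda_i}$ at $j=1$ absorbs this shift: the $j=0$ condition becomes a free consequence of $\Lambda_i+L_i\bm e_{\Lambda_i}\subset\Lambda_i$, while the $j=1$ condition is exactly the output of Lemma \ref{lem23}. A secondary technical point is that the primed and unprimed sequences come from two independent applications of Lemma \ref{lem23}, potentially producing different shifts of the common cones $\Lambda_i^\star$; this is reconciled by taking the larger shift $M_i^\star$, which ensures $\Lambda_i^\circ$ sits inside both families of witnessing cones and thereby trivialises all remaining $\perp$-conditions (including the joining condition (\ref{seqseq2})) to one single statement about the target cones.
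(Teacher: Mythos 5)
Your proposal is correct, but it follows a genuinely different route from the paper's proof. The paper builds honestly distinct intermediate levels: it first narrows $\Lambda_i$ to quarter-angle subcones ($j=1$), then applies Lemma \ref{lem23} once inside each family to transport the slot-$1$ cones toward the lower endpoint $\theta+\varphi$ and the slot-$2$ cones toward the upper endpoint $\theta-\varphi+2\pi$ of the allowed arc (levels $j=2,3$), and finally makes a third, joint application of Lemma \ref{lem23} to the four level-$3$ cones to merge the primed and unprimed families at $j=4$. You instead fix one common pair of narrow target cones $\Lambda_1^\star,\Lambda_2^\star$ near the two endpoints at the outset, apply Lemma \ref{lem23} only twice (each family jointly with the targets), reconcile the two resulting shifts of the targets by taking the maximum and invoking monotonicity of $\perp_{(\theta,\varphi)}$ under passing to sub-cones, and let the chain be constant for $j=2,3,4$. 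This is legitimate: the statement imposes no strictness or distinctness on the $\Lambda_i^{(j)}$, the forward-shift inclusion $\Lambda+L\bm e_\Lambda\subset\Lambda$ is exactly the fact the paper itself uses (e.g.\ in the proof of Lemma \ref{lem201}), and the relation $\perp_{(\theta,\varphi)}$ is manifestly preserved when the members of either set are replaced by subsets, since the same witnessing triple $(\varepsilon,\tilde\Lambda_1,\tilde\Lambda_2)$ works; moreover the collapse does no harm downstream in Lemma \ref{lem26}, where repeated cones simply give charge transporters between possibly different $t$-shifts of the same cone. The only points you leave implicit, all routine, are: the choice of the closed intervals $\bbA_1,\bbA_2$ in each application of Lemma \ref{lem23} as the angular hulls of $\arg\Lambda_1\cup\arg\Lambda_1^\star$ and $\arg\Lambda_2\cup\arg\Lambda_2^\star$ inside $(\theta+\varphi,\theta-\varphi+2\pi)$, whose disjointness uses precisely the ordering hypothesis $\Lambda_2\leftarrow_{(\theta,\varphi)}\Lambda_1$ (respectively $\Lambda_2'\leftarrow_{(\theta,\varphi)}\Lambda_1'$) together with your endpoint inequalities; and the fact that $\Lambda_i^{(1)},\Lambda_i^\circ\in\ctv$ because translations do not change $\arg$. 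What your approach buys is economy (two applications of Lemma \ref{lem23} instead of three, no quarter-angle narrowing step); what the paper's buys is a chain in which consecutive cones are genuinely moved, which is closer in spirit to the "transport in small steps" picture, but is not needed for the statement as formulated.
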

\begin{proof}
Set $\Lambda_i^{(0)}=\Lambda_i$, $ i=1,2$.
Because $\Lambda_1\perp_{(\theta,\varphi)}\Lambda_2$, there are 
$\tilde \Lambda_1,\tilde \Lambda_2\in\ctv$ with 
$\tilde \Lambda_1\perp_{(\theta,\varphi)}\tilde \Lambda_2$
such that $\Lambda_i\subset \tilde\Lambda_i$, $i=1,2$.
Because $\Lambda_2\leftarrow_{(\theta,\varphi)}\Lambda_1$,
we may write $\Lambda_i=\Lambda_{\bm p_i,\theta_i,\varphi_i}$, $i=1,2$ with
$\theta_i,\varphi_i$
satisfying (\ref{lao}).
Set $\Lambda_i^{(1)}:=\Lambda_{\bm p_i,\theta_i,\frac{\varphi_i}4}$, $i=1,2$.
Then because of
$\Lambda_i^{(1)}\subset \Lambda_i\subset \tilde\Lambda_i$, $i=1,2$,
we have $\{\Lambda_1^{(0)}, \Lambda_1^{(1)}\}\perp_{(\theta,\varphi)}\{\Lambda_2^{(0)}, \Lambda_2^{(1)}\}$.

Next, note 
$\bbA_{[\theta_2-\varphi_2,\theta-\varphi+2\pi-\frac{\varphi_2}4]}$,
$\bbA_{[\theta+\varphi+\frac{\varphi_1}{4},\theta_1+\varphi_1]}$,
$\Lambda_2^{(1)}$, $\Lambda_{\bm p_2,\theta-\varphi+2\pi-\frac{\varphi_2}2, \frac{\varphi_2}8}$,
$\Lambda_1^{(1)}$, $\Lambda_{\bm p_1,\theta+\varphi+\frac{\varphi_1}2, \frac{\varphi_1}8}$,
satisfy the conditions of 
$\bbA_1$, $\bbA_2$, $\Lambda_1$, $\Lambda_1'$, $\Lambda_2$, $\Lambda_2'$
in Lemma \ref{lem23}.
Applying the latter lemma, we obtain
$L_1, L_1', L_2, L_2'\ge 0$
such that $\{\Lambda_1^{(2)}, \Lambda_1^{(3)}\}\perp_{(\theta,\varphi)}\{\Lambda_2^{(2)}, \Lambda_2^{(3)}\}$
for
\begin{align}
&\Lambda_i^{(2)}:=\Lambda^{(1)}_i+L_i\bm e_{\Lambda^{(1)}_i},\quad i=1,2\\
&\Lambda_2^{(3)}:=\Lambda_{\bm p_2,\theta-\varphi+2\pi-\frac{\varphi_2}2, \frac{\varphi_2}8}
+L_1'\bm e_{\theta-\varphi+2\pi-\frac{\varphi_2}2},\\
&\Lambda_1^{(3)}:=\Lambda_{\bm p_1,\theta+\varphi+\frac{\varphi_1}2, \frac{\varphi_1}8}
+L_2' \bm e_{\theta+\varphi+\frac{\varphi_1}2}.
\end{align}
Note also that from $\Lambda_i^{(2)}\subset \Lambda_i^{(1)}\subset \tilde \Lambda_i$,
we have $\{\Lambda_1^{(1)}, \Lambda_1^{(2)}\}\perp_{(\theta,\varphi)}
\{\Lambda_2^{(1)}, \Lambda_2^{(2)}\}$.

Hence we obtain 
 $\{\Lambda_i^{(j)}\}_{j=0}^3\subset \ctv$,
$i=1,2$
such that $\Lambda_i^{(0)}=\Lambda_i$, $i=1,2$,
$\{\Lambda_1^{(j)}, \Lambda_1^{(j+1)}\}\perp_{(\theta,\varphi)}\{\Lambda_2^{(j)}, \Lambda_2^{(j+1)}\}
$$j=0,1,2$
and
\begin{align}
\arg\Lambda_2^{(3)}\subset \bbA_{(\theta+\frac{11}8\pi-\frac 3 8\varphi, \theta+2\pi-\varphi)},\quad
\arg\Lambda_1^{(3)}\subset \bbA_{(\theta+\varphi,\theta+\frac 58\pi+\frac 38 \varphi)}.
\end{align}
By the same argument  we obtain 
 $\{{\Lambda'}_i^{(j)}\}_{j=0}^3\subset \ctv$,
$i=1,2$
such that ${\Lambda'}_i^{(0)}=\Lambda_i$, $i=1,2$,
$\{{\Lambda'}_1^{(j)}, {\Lambda'}_1^{(j+1)}\}\perp_{(\theta,\varphi)}\{{\Lambda'}_2^{(j)}, {\Lambda'}_2^{(j+1)}\}
$, $j=0,1,2$,
and
\begin{align}
\arg{\Lambda'}_2^{(3)}\subset \bbA_{(\theta+\frac{11}8\pi-\frac 3 8\varphi, \theta+2\pi-\varphi)},\quad
\arg{\Lambda'}_1^{(3)}\subset \bbA_{(\theta+\varphi,\theta+\frac 58\pi+\frac 38 \varphi)}.
\end{align}

Choose $\delta>0$ small enough so that
\begin{align}
\arg\Lambda_2^{(3)},\arg{\Lambda'}_2^{(3)}\subset \bbA_{[\theta+\frac{11}8\pi-\frac 3 8\varphi, \theta+2\pi-\varphi-\delta]},\quad
\arg\Lambda_1^{(3)}, \arg{\Lambda'}_1^{(3)}\subset \bbA_{[\theta+\varphi+\delta,\theta+\frac 58\pi+\frac 38 \varphi]},
\end{align}
and $\theta+\frac{11}8\pi-\frac 3 8\varphi<\theta+2\pi-\varphi-\delta$,
$\theta+\varphi+\delta<\theta+\frac 58\pi+\frac 38 \varphi$.
Note 
$ \bbA_{[\theta+\frac{11}8\pi-\frac 3 8\varphi, \theta+2\pi-\varphi-\delta]}$,
$\bbA_{[\theta+\varphi+\delta,\theta+\frac 58\pi+\frac 38 \varphi]}$,
$\Lambda_2^{(3)}$, ${\Lambda'}_2^{(3)}$
$\Lambda_1^{(3)}$, ${\Lambda'}_1^{(3)}$, 
satisfy the conditions of 
$\bbA_1$, $\bbA_2$, $\Lambda_1$, $\Lambda_1'$, $\Lambda_2$, $\Lambda_2'$
in Lemma \ref{lem23}.
Applying the latter lemma, we obtain
$s_1, s_1', s_2, s_2'\ge 0$
such that 
 $\{\Lambda_1^{(4)}, {\Lambda'}_1^{(4)}\}\perp_{(\theta,\varphi)}\{\Lambda_2^{(4)}, {\Lambda'}_2^{(4)}\}$
for
\begin{align}
\Lambda_i^{(4)}:=\Lambda^{(3)}_i+s_i\bm e_{\Lambda^{(3)}_i},\quad
{\Lambda'}_i^{(4)}:={\Lambda'}^{(3)}_i+s'_i\bm e_{{\Lambda'}^{(3)}_i}\quad i=1,2.
\end{align}
Note that $\Lambda_i^{(4)}, {\Lambda'}_i^{(4)}\in \ctv$.
From $\Lambda_i^{(4)}\subset \Lambda_i^{(3)}$ while $\{\Lambda_1^{(3)}\}\perp_{(\theta,\varphi))}\{\Lambda_2^{(3)}\}$,
we have $\{\Lambda_1^{(3)}, \Lambda_1^{(4)}\}\perp_{(\theta,\varphi)}
\{\Lambda_2^{(3)}, \Lambda_2^{(4)}\}$.
Similarly, we have $\{{\Lambda'}_1^{(3)}, {\Lambda'}_1^{(4)}\}\perp_{(\theta,\varphi)}
\{{\Lambda'}_2^{(3)}, {\Lambda'}_2^{(4)}\}$.
This completes the proof.
\end{proof}
\begin{lem}\label{lem26}
Consider Setting \ref{setni} and assume the approximate Haag duality.
Let $\theta\in \bbR$, $\varphi\in (0,\pi)$, and 
$\Lambda_1,\Lambda_2,\Lambda_1',\Lambda_2'\in\ctv$
satisfying
$\Lambda_1\perp_{(\theta,\varphi)} \Lambda_2$, 
$\Lambda'_1\perp_{(\theta,\varphi)} \Lambda'_2$
and $\Lambda_2\leftarrow_{(\theta,\varphi)}\Lambda_1$,
$\Lambda_2'\leftarrow_{(\theta,\varphi)}\Lambda_1'$.
Let $\rho_i,\rho_i'\in\caO_0$,
$V_{\rho_i\ltj{i}{t_i}}\in\caV_{\rho_i\ltj{i}{t_i}}$,
$V_{\rho_i'\ltjp{i}{t_i}}\in\caV_{\rho_i'\ltjp{i}{t_i}}$,
and 
\[R_i^{t_i,t_i'}\in \lmk \trlt{\rho_i}i, \trlpt{\rho_i'}i\rmk,\quad
\lV R_i^{t_i,t_i'}\rV\le 1
\]
given for each $t_i,t_i'\ge 0$
and $i=1,2$.
Then we have
\begin{align}
\lim_{t_1,t_2,t_1't_2'\to\infty}
\lV R_1^{t_1t_1'}\otimes R_2^{t_2t_2'}-R_2^{t_2t_2'}\otimes R_1^{t_1t_1'}\rV=0.
\end{align}
\end{lem}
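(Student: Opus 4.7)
The plan is to reduce Lemma \ref{lem26} to Lemma \ref{lem22} via the geometric chain provided by Lemma \ref{lem25}. First, apply Lemma \ref{lem25} to obtain families of cones $\{\Lambda_i^{(j)}\}_{j=0}^{4}, \{\Lambda_i'^{(j)}\}_{j=0}^{4}\subset\ctv$ with $\Lambda_i^{(0)}=\Lambda_i$, $\Lambda_i'^{(0)}=\Lambda_i'$, satisfying the consecutive-level joint distalities (\ref{seqseq1}) and the terminal joint distality (\ref{seqseq2}). For each $j\geq 1$ and shifts $s_i^{(j)},s_i'^{(j)}\geq 0$ fix auxiliary unitaries $\bar V_{\rho_i,\Lambda_i^{(j)}+s_i^{(j)}\bm e_{\Lambda_i^{(j)}}}\in\caV_{\rho_i,\Lambda_i^{(j)}+s_i^{(j)}\bm e_{\Lambda_i^{(j)}}}$ and the primed analogues; at $j=0$ use the unitaries given by the hypothesis. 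Define the transported intertwiner
\[
R_i^{(j)}:=\bar V_{\rho_i',\Lambda_i'^{(j)}+s_i'^{(j)}\bm e_{\Lambda_i'^{(j)}}}\,\bar V_{\rho_i',\Lambda_i'+t_i'\bm e_{\Lambda_i'}}^{*}\,R_i^{t_i,t_i'}\,\bar V_{\rho_i,\Lambda_i+t_i\bm e_{\Lambda_i}}\,\bar V_{\rho_i,\Lambda_i^{(j)}+s_i^{(j)}\bm e_{\Lambda_i^{(j)}}}^{*}.
\]
By Lemma \ref{lem16} and composition, $R_i^{(j)}$ is a norm-$\leq 1$ intertwiner between the level-$j$ extensions $T_i^{(j)}$ and $T_i'^{(j)}$ of $\rho_i$ and $\rho_i'$, and $R_i^{(0)}=R_i^{t_i,t_i'}$. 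At the terminal level, Lemma \ref{lem22} applied with (\ref{seqseq2}) gives $\lim\|R_1^{(4)}\otimes R_2^{(4)}-R_2^{(4)}\otimes R_1^{(4)}\|=0$ as $s_i^{(4)},s_i'^{(4)}\to\infty$.

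The heart of the argument is a telescoping identity linking consecutive levels. With $A_i^{(j)}:=\bar V_{\rho_i,\Lambda_i^{(j+1)}+s_i^{(j+1)}\bm e_{\Lambda_i^{(j+1)}}}\bar V_{\rho_i,\Lambda_i^{(j)}+s_i^{(j)}\bm e_{\Lambda_i^{(j)}}}^{*}\in(T_i^{(j)},T_i^{(j+1)})$ (a unitary by Lemma \ref{lem16}) and $A_i'^{(j)}$ its primed analogue, we have $R_i^{(j+1)}=A_i'^{(j)}R_i^{(j)}A_i^{(j)*}$. Applying the functoriality of $\otimes$ from Lemma \ref{lem19} together with Lemma \ref{lem18}, this yields
\[
R_1^{(j+1)}\otimes R_2^{(j+1)}=(A_1'^{(j)}\otimes A_2'^{(j)})\,(R_1^{(j)}\otimes R_2^{(j)})\,(A_1^{(j)}\otimes A_2^{(j)})^{*},
\]
and analogously with the roles of $1,2$ swapped. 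Setting $C^{(j)}:=R_1^{(j)}\otimes R_2^{(j)}-R_2^{(j)}\otimes R_1^{(j)}$ and invoking Lemma \ref{lem22} at $\{\Lambda_1^{(j)},\Lambda_1^{(j+1)}\}\perp_{(\theta,\varphi)}\{\Lambda_2^{(j)},\Lambda_2^{(j+1)}\}$ (and the primed analogue), we get $\|A_1^{(j)}\otimes A_2^{(j)}-A_2^{(j)}\otimes A_1^{(j)}\|,\|A_1'^{(j)}\otimes A_2'^{(j)}-A_2'^{(j)}\otimes A_1'^{(j)}\|\to 0$ as the relevant shifts $\to\infty$. Substituting these asymptotic commutativities into the two-term expression for $C^{(j+1)}$ and using that $A_1'^{(j)}\otimes A_2'^{(j)}$ and $A_1^{(j)}\otimes A_2^{(j)}$ are unitaries that preserve norm under conjugation, one concludes $\bigl|\,\|C^{(j+1)}\|-\|C^{(j)}\|\,\bigr|\to 0$.

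Telescoping over $j=0,1,2,3$ gives $\|C^{(0)}\|\leq\|C^{(4)}\|+\sum_{j=0}^{3}\bigl|\,\|C^{(j+1)}\|-\|C^{(j)}\|\,\bigr|$; the first term vanishes at level $4$ and each error in the sum can be made arbitrarily small by enlarging the shifts at the appropriate levels, yielding $\lim_{t_1,t_2,t_1',t_2'\to\infty}\|C^{(0)}\|=0$ as required. The main technical obstacle is the bookkeeping of the iterated limits: since $A_i^{(j)}$ depends on shifts at both levels $j$ and $j+1$, one must send $s^{(4)}\to\infty$ first (to kill $\|C^{(4)}\|$), then $s^{(3)}$ (to kill $\epsilon^{(3,4)}$), and so on down to level $1$, before letting the original shifts $t_i,t_i'$ tend to infinity. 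This is legitimate because the rates of convergence in Lemma \ref{lem22} are controlled by the approximate-Haag-duality functions $f_{\varphi,\varepsilon,\delta}$, which depend only on the cones and not on the intertwiners, so the estimate at each level is uniform in the shift parameters of the other levels.
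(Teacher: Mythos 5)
Your argument is essentially the paper's own proof: the paper likewise takes the chain of cones from Lemma \ref{lem25}, uses the charge transporters $W_i^{(j),\bm t}$ (your $A_i^{(j)}$) as intertwiners via Lemma \ref{lem16}, applies Lemma \ref{lem22} at each consecutive pair of levels and at the terminal level, and concludes with the interchange law of Lemma \ref{lem19}; the only difference is organizational, in that the paper conjugates $R_i$ all the way to level $4$ in one step (the operators $S_i^{\bm t}$) rather than telescoping level by level, which amounts to the same computation. Your remark on the order of limits matches the paper's convention of letting all parameters $t_{ij},t'_{ij}$ tend to infinity jointly, so no gap arises there.
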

\begin{proof}
By Lemma \ref{lem25}, there are 
$\{\Lambda_i^{(j)}\}_{j=0}^4, \{{\Lambda'}_i^{(j)}\}_{j=0}^4\subset \ctv$,
$i=1,2$ satisfying (\ref{seqseq1}), (\ref{seqseq2}), (\ref{seqseq3}).
For $t_{ij}, t'_{ij}\ge 0$, $i=1,2$, $j=0,\ldots 4$, we denote the set $\bm t:=\lmk (t_{ij}), (t'_{ij})\rmk$
and $\bm t\to \infty$ means $t_{i,j}, t_{ij}'\to\infty$ for all $i,j$.
 Set 
 \begin{align}
 W_i^{(j), \bm t}:=\bar V_{\rho_i\Lambda_i^{(j+1)}+t_{i,j+1}\bm e_{\Lambda_i^{(j+1)}}} {\lmk \bar V_{\rho_i\Lambda_i^{(j)}+t_{ij}\bm e_{\Lambda_i^{(j)}}}\rmk^*},\quad\text{
 and}\quad
 {W'}_i^{(j),\bm t}:=\bar V_{\rho_i'{\Lambda'}_i^{(j+1)}+t'_{i,j+1}\bm e_{{\Lambda'}_i^{(j+1)}}} {\lmk \bar V_{\rho'_i{\Lambda'}_i^{(j)}+t'_{ij}\bm e_{{\Lambda'}_i^{(j)}}}\rmk^*}.
 \end{align} 
 We have 
 \begin{align}
 \begin{split}
 W_i^{(j), \bm t}\in\lmk  \trltj{\rho_i}{i}{j}{ij}, \trltj{\rho_i}{i}{j+1}{i,j+1}\rmk,\\
 {W'}_i^{(j), \bm t}\in\lmk  \trltjp{\rho_i'}{i}{j}{ij}, \trltjp{\rho_i'}{i}{j+1}{i,j+1}\rmk
\end{split} 
  \end{align}
  for $i=1,2$ and $j=0,1,2,3$ from Lemma \ref{lem16}.
By Lemma \ref{lem22}, we have
\begin{align}\label{wcom}
\lim_{\bm t\to\infty}\lV W_1^{(j),\bm t}\otimes W_2^{(j),\bm t}-
W_2^{(j)\bm t}\otimes W_1^{(j)\bm t}\rV=0,\quad
\lim_{\bm t\to\infty }\lV {W'}_1^{(j),\bm t}\otimes {W'}_2^{(j),\bm t}-
{W'}_2^{(j)\bm t}\otimes {W'}_1^{(j)\bm t}\rV=0
 \quad j=0,1,2,3.
\end{align}
Note that 
\begin{align}
\begin{split}
S_i^{\bm t}:=
{W'}_i^{(3)\bm t} {W'}_i^{(2)\bm t} {W'}_i^{(1)\bm t}{W'}_i^{(0)\bm t}
R_i^{(t_{i_0}t_{i_0}')}
\lmk W_i^{(3)\bm t} W_i^{(2)\bm t} W_i^{(1)\bm t}W_i^{(0)\bm t}\rmk^*\\
\in \lmk
\trltj {\rho_i}i4{i4},\trltjp {\rho_i'}i4{i4}
\rmk.
\end{split}
\end{align}
Because $\{\Lambda_1^{(4)}, {\Lambda'}_1^{(4)}\} \perp_{(\theta,\varphi)}\{\Lambda_2^{(4)}, {\Lambda'}_2^{(4)}\}$, we have
\begin{align}\label{scom}
\lim_{\bm t\to\infty}
\lV
S_1^{\bm t}\otimes S_2^{\bm t}-S_2^{\bm t}\otimes S_1^{\bm t}
\rV=0
\end{align}
by Lemma \ref{lem22}.

By Lemma \ref{lem19} and (\ref{wcom}), (\ref{scom}) we complete the proof.
\end{proof}
\begin{lem}\label{lem28}
Consider Setting \ref{setni} and assume the approximate Haag duality.
Let $\theta\in \bbR$, $\varphi\in (0,\pi)$, $\rho,\sigma\in \caO_0$,
$\Lambda_0,\Lambda_1,\Lambda_2, {\Lambda'}_1,{\Lambda'}_2\in \ctv$,
and $\bar V_{\rho\Lambda_i+t_i \bm e_{\Lambda_i}}\in \caV_{\rho\Lambda_i+t_i \bm e_{\Lambda_i}}$, $\bar V_{\sigma\Lambda_i+t_i \bm e_{\Lambda_i}}\in \caV_{\sigma\Lambda_i+t_i \bm e_{\Lambda_i}}$,
$\bar V_{\rho{\Lambda'}_i+t'_i \bm e_{\Lambda'_i}}\in \caV_{\rho{\Lambda'}_i}+t'_i \bm e_{\Lambda'_i}$, $\bar V_{\sigma{\Lambda'}_i+t'_i \bm e_{\Lambda'_i}}\in \caV_{\sigma{\Lambda'}_i+t'_i \bm e_{\Lambda'_i}}$,
for each $i=1,2$, $t_i, t_i'\ge 0$ and
$\bar V_{\rho\Lambda_0}\in \caV_{\rho\Lambda_0}$,
$\bar V_{\sigma\Lambda_0}\in \caV_{\sigma\Lambda_0}$.
Set $\bm t:=(t_i)_{i=1,2}, \bm t':=(t_i')_{i=1,2}$ and 
 $W_{\rho\Lambda_0\Lambda_1}^{\bm t}:=\bar V_{\rho\Lambda_1+t_1 \bm e_{\Lambda_1}}\bar V_{\rho\Lambda_0}^*$,
$W_{\sigma\Lambda_0\Lambda_2}^{\bm t}:=\bar V_{\sigma\Lambda_2+t_2\bm e_{\Lambda_2}}\bar V_{\sigma\Lambda_0}^*$,
 $W_{\rho{\Lambda}_0{\Lambda'}_1}^{\bm t'}:=\bar V_{\rho{\Lambda'}_1+t_1' \bm e_{\Lambda'_1}}\bar V_{\rho\Lambda_0}^*$,
$W_{\sigma\Lambda_0\Lambda_2'}^{\bm t'}:=\bar V_{\sigma\Lambda'_2+t_2'\bm e_{\Lambda'_2}}\bar V_{\sigma\Lambda_0}^*$.
If $\Lambda_1\perp_{(\theta,\varphi)} \Lambda_2$, $\Lambda_2\leftarrow_{(\theta,\varphi)}\Lambda_1$ and
${\Lambda'}_1\perp_{(\theta,\varphi)} {\Lambda'}_2$, ${\Lambda'}_2\leftarrow_{(\theta,\varphi)}{\Lambda'}_1$,
then we have 
\begin{align}
\lim_{\bm t,\bm t'\to\infty}\lV
\lmk W_{\sigma\Lambda_0\Lambda_2}^{\bm t}\otimes W_{\rho\Lambda_0\Lambda_1}^{\bm t}\rmk^*
\lmk
 W_{\rho\Lambda_0\Lambda_1}^{\bm t}\otimes  W_{\sigma\Lambda_0\Lambda_2}^{\bm t}
\rmk
-\lmk W_{\sigma{\Lambda}_0{\Lambda'}_2}^{\bm t'}\otimes W_{\rho{\Lambda}_0{\Lambda'}_1}^{\bm t'}\rmk^*
\lmk
 W_{\rho{\Lambda}_0{\Lambda'}_1}^{\bm t'}\otimes  W_{\sigma{\Lambda}_0{\Lambda'}_2}^{\bm t'}
\rmk\rV=0.
\end{align}
Here, $\bm t,\bm t'\to\infty$ means 
$t_i, t_i'\to\infty$ for each $i$.
\end{lem}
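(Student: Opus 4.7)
The plan is to reduce the claim to the approximate commutativity already established in Lemma~\ref{lem26}, exploiting the multiplicativity of the intertwiner tensor product.

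First I introduce the ``transport'' unitaries
\[
X_1 := \bar V_{\rho,\Lambda'_1+t'_1\bm e_{\Lambda'_1}}\bar V_{\rho,\Lambda_1+t_1\bm e_{\Lambda_1}}^{*}, \qquad
X_2 := \bar V_{\sigma,\Lambda'_2+t'_2\bm e_{\Lambda'_2}}\bar V_{\sigma,\Lambda_2+t_2\bm e_{\Lambda_2}}^{*}.
\]
These are unitaries on $\caH$ and, by Lemma~\ref{lem16}, satisfy $X_1 \in (\trlt{\rho}{1}, \trlpt{\rho}{1})$ and $X_2 \in (\trlt{\sigma}{2}, \trlpt{\sigma}{2})$. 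A direct computation gives $W_{\rho\Lambda_0\Lambda'_1}^{\bm t'} = X_1 W_{\rho\Lambda_0\Lambda_1}^{\bm t}$ and $W_{\sigma\Lambda_0\Lambda'_2}^{\bm t'} = X_2 W_{\sigma\Lambda_0\Lambda_2}^{\bm t}$.

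Next I apply the multiplicativity of $\otimes$ (Lemma~\ref{lem19}) to obtain
\[
W_{\rho\Lambda_0\Lambda'_1}^{\bm t'} \otimes W_{\sigma\Lambda_0\Lambda'_2}^{\bm t'} = (X_1 \otimes X_2)\bigl(W_{\rho\Lambda_0\Lambda_1}^{\bm t} \otimes W_{\sigma\Lambda_0\Lambda_2}^{\bm t}\bigr),
\]
and the analogous identity with $\rho$ and $\sigma$ swapped. Combined with the $*$-compatibility $(R_1 \otimes R_2)^{*} = R_1^{*} \otimes R_2^{*}$ from Lemma~\ref{lem18}, the quantity to be estimated rewrites as
\[
\bigl(W_{\sigma\Lambda_0\Lambda_2}^{\bm t} \otimes W_{\rho\Lambda_0\Lambda_1}^{\bm t}\bigr)^{*} \bigl[(X_2 \otimes X_1)^{*}(X_1 \otimes X_2) - \unit\bigr] \bigl(W_{\rho\Lambda_0\Lambda_1}^{\bm t} \otimes W_{\sigma\Lambda_0\Lambda_2}^{\bm t}\bigr),
\]
whose norm is dominated by $\|(X_2 \otimes X_1)^{*}(X_1 \otimes X_2) - \unit\|$, since tensor products of unitary intertwiners are themselves unitary by Lemmas~\ref{lem18} and~\ref{lem19}.

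Finally, the hypotheses on $\Lambda_1, \Lambda_2, \Lambda'_1, \Lambda'_2$ are exactly those of Lemma~\ref{lem26} (taking $R_1^{t_1,t_1'} := X_1$, $R_2^{t_2,t_2'} := X_2$, and $\rho'_i = \rho_i$), so that lemma delivers $\|X_1 \otimes X_2 - X_2 \otimes X_1\| \to 0$ as $\bm t,\bm t' \to \infty$. Since $X_2 \otimes X_1$ is unitary, this implies $\|(X_2 \otimes X_1)^{*}(X_1 \otimes X_2) - \unit\| \to 0$, concluding the proof. The substantive content sits entirely in Lemma~\ref{lem26}; the main obstacle here is bookkeeping, namely verifying that the sources and targets of all the $W$'s and $X$'s line up so that Lemmas~\ref{lem18} and~\ref{lem19} can be invoked without ambiguity.
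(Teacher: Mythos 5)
Your proposal is correct and follows essentially the same route as the paper: the paper also introduces the transport intertwiners $\bar V_{\rho\Lambda_1'+t_1'\bm e_{\Lambda_1'}}\bar V_{\rho\Lambda_1+t_1\bm e_{\Lambda_1}}^*$ and $\bar V_{\sigma\Lambda_2'+t_2'\bm e_{\Lambda_2'}}\bar V_{\sigma\Lambda_2+t_2\bm e_{\Lambda_2}}^*$ (your $X_1,X_2$), factors the primed expression through them using Lemma \ref{lem16}, Lemma \ref{lem18} and the multiplicativity of Lemma \ref{lem19}, and then invokes Lemma \ref{lem26} with $\rho_i'=\rho_i$ to show the middle factor tends to $\unit$. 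Your unitarity/conjugation bookkeeping is the same argument in a slightly different arrangement, so there is nothing to add.
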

\begin{defn}\label{epdef}
From Lemma \ref{lem28}, for any $\theta\in \bbR$, $\varphi\in (0,\pi)$,
$\rho,\sigma\in \caO_0$ and $\Lambda_0\in\ctv$
we may define
\begin{align}
\epsilon_+^{(\Lambda_0)}(\rho,\sigma):=
\lim_{\bm t\to\infty}
\lmk W_{\sigma\Lambda_0\Lambda_2}^{\bm t}\otimes W_{\rho\Lambda_0\Lambda_1}^{\bm t}\rmk^*
\lmk
W_{\rho\Lambda_0\Lambda_1}^{\bm t}\otimes  W_{\sigma\Lambda_0\Lambda_2}^{\bm t}
\rmk\in \btv,
\end{align}
independent of the choice of $\Lambda_1,\Lambda_2\in\ctv$,
satisfying $\Lambda_1\perp_{(\theta,\varphi)} \Lambda_2$, $\Lambda_2\leftarrow_{(\theta,\varphi)}\Lambda_1$,
and $V_{\rho\ltj{i}{t_i}}\in \caV_{\rho\ltj{i}{t_i}}$, $V_{\sigma\ltj{i}{t_i}}\in \caV_{\sigma\ltj{i}{t_i}}$.
\end{defn}
\begin{proof}
Set  $W_{\rho\Lambda_1\Lambda_1'}^{\bm t, \bm t'}:=\bar V_{\rho\Lambda_1'+t_1' \bm e_{\Lambda'_1}}\bar V_{\rho\Lambda_1+t_1 \bm e_{\Lambda_1}}^*$,
$W_{\sigma\Lambda_2\Lambda_2'}^{\bm t, \bm t'}:=\bar V_{\sigma\Lambda_2'+t_2' \bm e_{\Lambda'_2}}\bar V_{\sigma\Lambda_2+t_2 \bm e_{\Lambda_2}}^*$.
%
From Lemma \ref{lem16},
\begin{align}
\begin{split}
&W^{\bm t}_{\rho\Lambda_0\Lambda_1}\in \lmk \trl{\rho}0,\trlt{\rho}1\rmk,\quad
W^{\bm t}_{\sigma\Lambda_0\Lambda_2}\in \lmk \trl{\sigma}0,\trlt{\sigma}2\rmk,\\
&W^{\bm t'}_{\rho{\Lambda}_0{\Lambda'}_1}\in \lmk \trl{\rho}0,\trlpt{\rho}1\rmk,\quad
W^{\bm t'}_{\sigma{\Lambda}_0{\Lambda'}_2}\in \lmk \trl{\sigma}0,\trlpt{\sigma}2\rmk,\\
&W^{\bm t,\bm t'}_{\rho\Lambda_1\Lambda_1'}\in \lmk \trlt{\rho}1,\trlpt{\rho}1\rmk,\\
&W^{\bm t,\bm t'}_{\sigma\Lambda_2\Lambda_2'}\in \lmk \trlt{\sigma}2,\trlpt{\sigma}2\rmk.
\end{split}
\end{align}
By Lemma \ref{lem19},
\begin{align}\label{wtachi}
\begin{split}
&\lmk W^{\bm t'}_{\sigma{\Lambda}_0{\Lambda'}_2}\otimes W^{\bm t'}_{\rho{\Lambda}_0{\Lambda'}_1}\rmk^*
\lmk
 W^{\bm t'}_{\rho{\Lambda}_0{\Lambda'}_1}\otimes  W^{\bm t'}_{\sigma{\Lambda}_0{\Lambda'}_2}
\rmk\\
&=\lmk W^{\bm t,\bm t'}_{\sigma{\Lambda}_2{\Lambda'}_2}
W^{\bm t}_{\sigma{\Lambda}_0{\Lambda}_2}\otimes W^{\bm t,\bm t'}_{\rho{\Lambda}_1{\Lambda'}_1}W^{\bm t}_{\rho{\Lambda}_0{\Lambda}_1}\rmk^*
\lmk
 W^{\bm t,\bm t'}_{\rho{\Lambda}_1{\Lambda'}_1}W^{\bm t}_{\rho{\Lambda}_0{\Lambda}_1}
 \otimes  W^{\bm t,\bm t'}_{\sigma{\Lambda}_2{\Lambda'}_2} W^{\bm t}_{\sigma{\Lambda}_0{\Lambda}_2}
\rmk\\
&=
\lmk W^{\bm t}_{\sigma{\Lambda}_0{\Lambda}_2} \otimes W^{\bm t}_{\rho{\Lambda}_0{\Lambda}_1}\rmk^*
\lmk W^{\bm t,\bm t'}_{\sigma{\Lambda}_2{\Lambda'}_2}
\otimes W^{\bm t,\bm t'}_{\rho{\Lambda}_1{\Lambda'}_1}\rmk^*
\lmk
 W^{\bm t,\bm t'}_{\rho{\Lambda}_1{\Lambda'}_1}
 \otimes  W^{\bm t, \bm t'}_{\sigma{\Lambda}_2{\Lambda'}_2}
\rmk
\lmk
W^{\bm t}_{\rho{\Lambda}_0{\Lambda}_1}
 \otimes  {W^{\bm t}}_{\sigma{\Lambda}_0{\Lambda}_2}
\rmk.
\end{split}
\end{align}
Note that 
setting $\rho_1:=\rho$, $\rho_1':=\rho$, $\rho_2:=\sigma$, $\rho_2':=\sigma$
and $R_1:=W^{\bm t, \bm t'}_{\rho\Lambda_1\Lambda_1'}$
$R_2:=W^{\bm t,\bm t'}_{\sigma\Lambda_2\Lambda_2'}$
in Lemma \ref{lem26}, we obtain
\begin{align}
\lim_{\bm t,\bm t'\to\infty}
\lV
\lmk W^{\bm t, \bm t'}_{\sigma{\Lambda}_2{\Lambda'}_2}
\otimes W^{\bm t, \bm t'}_{\rho{\Lambda}_1{\Lambda'}_1}\rmk^*
\lmk
 W^{\bm t,\bm t'}_{\rho{\Lambda}_1{\Lambda'}_1}
 \otimes  W^{\bm t,\bm t'}_{\sigma{\Lambda}_2{\Lambda'}_2}
\rmk
-\unit\rV=0.
\end{align}
Substituting this to (\ref{wtachi}), we prove the claim.
\end{proof}
\begin{lem}\label{lem30}
Consider Setting \ref{setni} and assume the approximate Haag duality.
Let $\theta\in \bbR$, $\varphi\in (0,\pi)$.
For any $\rho,\sigma\in \caO_0$ and $\Lambda_0\in\ctv$
we have
\begin{align}
\epsilon_+^{(\Lambda_0)}(\rho,\sigma)\in 
\lmk\trl{\rho}0\trl{\sigma}0,\trl{\sigma}0\trl{\rho}0\rmk.
\end{align}
\end{lem}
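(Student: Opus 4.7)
My plan is to verify the intertwining relation
\[
\epsilon_+^{(\Lambda_0)}(\rho,\sigma)\cdot \trl{\rho}0\trl{\sigma}0(x) = \trl{\sigma}0\trl{\rho}0(x)\cdot\epsilon_+^{(\Lambda_0)}(\rho,\sigma)
\]
first for $x$ in the norm-dense $*$-subalgebra $\pi_0(\caA_{\rm loc})$ of $\btv$, and then extend by $\sigma$w-continuity on each $\pi_0(\caA_\Lambda)''$ ($\Lambda\in\ctv$) followed by norm-density of $\caB_0$ in $\btv$. Writing
\[
X^{\bm t}:=(W^{\bm t}_{\sigma\Lambda_0\Lambda_2}\otimes W^{\bm t}_{\rho\Lambda_0\Lambda_1})^*(W^{\bm t}_{\rho\Lambda_0\Lambda_1}\otimes W^{\bm t}_{\sigma\Lambda_0\Lambda_2}),
\]
we have $\epsilon_+^{(\Lambda_0)}(\rho,\sigma)=\lim_{\bm t\to\infty}X^{\bm t}$ in norm by Definition \ref{epdef}.

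First I would identify the intertwining spaces of the building blocks: Lemma \ref{lem16} yields $W^{\bm t}_{\rho\Lambda_0\Lambda_1}\in(\trl{\rho}0,\trlt{\rho}1)$ and $W^{\bm t}_{\sigma\Lambda_0\Lambda_2}\in(\trl{\sigma}0,\trlt{\sigma}2)$, and Lemma \ref{lem18} then gives
\[
W^{\bm t}_{\rho\Lambda_0\Lambda_1}\otimes W^{\bm t}_{\sigma\Lambda_0\Lambda_2}\in (\trl{\rho}0\trl{\sigma}0,\, \trlt{\rho}1\trlt{\sigma}2),\qquad
W^{\bm t}_{\sigma\Lambda_0\Lambda_2}\otimes W^{\bm t}_{\rho\Lambda_0\Lambda_1}\in (\trl{\sigma}0\trl{\rho}0,\,\trlt{\sigma}2\trlt{\rho}1).
\]
A direct manipulation then produces, for any $x\in\btv$,
\begin{align*}
&X^{\bm t}\trl{\rho}0\trl{\sigma}0(x)-\trl{\sigma}0\trl{\rho}0(x)X^{\bm t}\\
&\qquad=(W^{\bm t}_{\sigma\Lambda_0\Lambda_2}\otimes W^{\bm t}_{\rho\Lambda_0\Lambda_1})^*\bigl[\trlt{\rho}1\trlt{\sigma}2(x)-\trlt{\sigma}2\trlt{\rho}1(x)\bigr](W^{\bm t}_{\rho\Lambda_0\Lambda_1}\otimes W^{\bm t}_{\sigma\Lambda_0\Lambda_2}).
\end{align*}

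The crux is that the bracketed commutator kills local elements once $\bm t$ is large. Indeed, if $A\in\caA_{\Lambda_A}$ with $\Lambda_A$ finite, then as soon as $\bm t$ is large enough that $\Lambda_A\subset (\Lambda_i+t_i\bm e_{\Lambda_i})^c$ for $i=1,2$, Lemma \ref{lem9}(d) gives $\trlt{\sigma}2(\pi_0(A))=\pi_0(A)=\trlt{\rho}1(\pi_0(A))$, hence $\trlt{\rho}1\trlt{\sigma}2(\pi_0(A))=\trlt{\sigma}2\trlt{\rho}1(\pi_0(A))=\pi_0(A)$ and the commutator above is exactly zero. Letting $\bm t\to\infty$ in norm yields the intertwining identity on $x=\pi_0(A)$ for every $A\in\caA_{\rm loc}$.

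Finally I would propagate the identity to all of $\btv$. Both sides are $\sigma$w-continuous in $x$ on each $\pi_0(\caA_\Lambda)''$, $\Lambda\in\ctv$, by Lemma \ref{ccon} together with $\sigma$w-continuity of left/right multiplication by a bounded operator. Since $\pi_0(\caA_\Lambda)$ is $\sigma$w-dense in $\pi_0(\caA_\Lambda)''$, the identity extends to $\pi_0(\caA_\Lambda)''$, thence to $\caB_0=\cup_{\Lambda\in\ctv}\pi_0(\caA_\Lambda)''$, and finally to $\btv$ by norm-density from Lemma \ref{lemhoshi}. The only genuine labor is the intertwiner bookkeeping via Lemma \ref{lem18} identifying the correct source and target representations of the two tensor products; once that is in place, the superselection criterion trivializes the commutator on local elements essentially for free.
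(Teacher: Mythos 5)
Your proposal is correct and follows essentially the same route as the paper: identify $W^{\bm t}_{\rho\Lambda_0\Lambda_1}\otimes W^{\bm t}_{\sigma\Lambda_0\Lambda_2}$ and $W^{\bm t}_{\sigma\Lambda_0\Lambda_2}\otimes W^{\bm t}_{\rho\Lambda_0\Lambda_1}$ as intertwiners between the composed extensions, observe that $\trlt{\rho}1\trlt{\sigma}2$ and $\trlt{\sigma}2\trlt{\rho}1$ agree on $\pi_0(A)$ for local $A$ once the translated cones miss the support of $A$ (the paper's (\ref{tcom}), via Lemma \ref{lem9} (d)), and then extend from $\caA_{\rm loc}$ to $\btv$ by $\sigma$w-continuity (Lemma \ref{ccon}) and the density statement of Lemma \ref{lemhoshi}. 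Your single commutator identity just packages the paper's two separate computations into one line; no gap.
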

\begin{proof}
Let 
 $\Lambda_1,\Lambda_2\in\ctv$,
be cones satisfying $\Lambda_1\perp_{(\theta,\varphi)} \Lambda_2$, $\Lambda_2\leftarrow_{(\theta,\varphi)}\Lambda_1$.
We use the notation in Lemma \ref{lem28}.
Note that 
\begin{align}
 \begin{split}
 &{W^{\bm t}}_{\rho\Lambda_0\Lambda_1}\otimes  {W^{\bm t}}_{\sigma\Lambda_0\Lambda_2}\in 
 \lmk\trl{\rho}0\trl{\sigma}0,\trlt{\rho}1\trlt{\sigma}2\rmk,\\
 & \lmk {W^{\bm t}}_{\sigma\Lambda_0\Lambda_2}\otimes {W^{\bm t}}_{\rho\Lambda_0\Lambda_1}\rmk^*
 \in  \lmk\trlt{\sigma}2\trlt{\rho}1,\trl{\sigma}0\trl{\rho}0\rmk.
 \end{split}
\end{align}
Note also that by Lemma \ref{lem9} (d), we have
 for any $A\in\caA_{loc}$,
 we have
 \begin{align}\label{tcom}
\lim_{\bm t\to\infty}
\lV 
\begin{gathered}
\trlt{\rho}{1}\trlt{\sigma}{2}\lmk\pi_0(A)\rmk\\
-\trlt{\sigma}{2}\trlt{\rho}{1}\lmk\pi_0(A)\rmk
\end{gathered}
\rV=0,
\end{align}
because the support of $A$ gets out of $\Lambda_i +t_i\bm e_{\Lambda_i}$
eventually. 
For any $A\in\caA_{loc}$
we have
\begin{align}
\begin{split}
&\epsilon_+^{(\Lambda_0)}(\rho,\sigma)\trl{\rho}{0}\trl{\sigma}{0}\lmk\pi_0(A)\rmk\\
&=\lim_{\bm t\to\infty}
\lmk {W^{\bm t}}_{\sigma\Lambda_0\Lambda_2}\otimes {W^{\bm t}}_{\rho\Lambda_0\Lambda_1}\rmk^*
\lmk
 {W^{\bm t}}_{\rho\Lambda_0\Lambda_1}\otimes  {W^{\bm t}}_{\sigma\Lambda_0\Lambda_2}
\rmk\trl{\rho}{0}\trl{\sigma}{0}
\lmk\pi_0(A)\rmk\\
&=\lim_{\bm t\to\infty}
\lmk {W^{\bm t}}_{\sigma\Lambda_0\Lambda_2}\otimes {W^{\bm t}}_{\rho\Lambda_0\Lambda_1}\rmk^*
\trlt{\rho}{1}\trlt{\sigma}{2}
\lmk\pi_0(A)\rmk
\lmk
 {W^{\bm t}}_{\rho\Lambda_0\Lambda_1}\otimes  {W^{\bm t}}_{\sigma\Lambda_0\Lambda_2}
\rmk,
\end{split}
\end{align}
and
\begin{align}
\begin{split}
&\trl{\sigma}{0}\trl{\rho}{0}
\lmk\pi_0(A)\rmk
\epsilon_+^{(\Lambda_0)}(\rho,\sigma)\\
&=\lim_{\bm t\to\infty}\trl{\sigma}{0}\trl{\rho}{0}
\lmk\pi_0(A)\rmk
\lmk {W^{\bm t}}_{\sigma\Lambda_0\Lambda_2}\otimes {W^{\bm t}}_{\rho\Lambda_0\Lambda_1}\rmk^*
\lmk
 {W^{\bm t}}_{\rho\Lambda_0\Lambda_1}\otimes  {W^{\bm t}}_{\sigma\Lambda_0\Lambda_2}
\rmk\\
&=\lim_{\bm t\to\infty}\lmk {W^{\bm t}}_{\sigma\Lambda_0\Lambda_2}\otimes {W^{\bm t}}_{\rho\Lambda_0\Lambda_1}\rmk^*
\trlt{\sigma}{2}\trlt{\rho}{1}
\lmk\pi_0(A)\rmk
\lmk
 {W^{\bm t}}_{\rho\Lambda_0\Lambda_1}\otimes  {W^{\bm t}}_{\sigma\Lambda_0\Lambda_2}
\rmk.
\end{split}
\end{align}
From this and  (\ref{tcom}), we obtain 
\begin{align}
\epsilon_+^{(\Lambda_0)}(\rho,\sigma)\trl{\rho}{0}\trl{\sigma}{0}\lmk\pi_0(A)\rmk
=\trl{\sigma}{0}\trl{\rho}{0}
\lmk\pi_0(A)\rmk
\epsilon_+^{(\Lambda_0)}(\rho,\sigma)
\end{align}
for any $A\in \caA_{\rm loc}$.
 Using the $\sigma$w-continuity of $\trl{\rho}{0}\trl{\sigma}{0}$,
 $\trl{\sigma}{0}\trl{\rho}{0}$ (Lemma \ref{ccon}) and Lemma \ref{lemhoshi}
 we can replace $\pi_0(A)$
 by general $x\in \btv$, proving the claim.
\end{proof}
\begin{lem}\label{lem31}
Consider Setting \ref{setni} and assume the approximate Haag duality.
Let $\theta\in \bbR$, $\varphi\in (0,\pi)$.
For any $\rho,\sigma,\rho',\sigma'\in \caO_0$, $\Lambda_0\in\ctv$,
and $R\in \lmk \trl{\rho} 0, \trl{\rho'}0\rmk$,
$S\in \lmk \trl{\sigma} 0, \trl{\sigma'}0\rmk$,
we have
\begin{align}
\epsilon_+^{(\Lambda_0)}(\rho',\sigma')(R\otimes S)
=(S\otimes R) \epsilon_+^{(\Lambda_0)}(\rho,\sigma).
\end{align}
\end{lem}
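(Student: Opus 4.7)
The plan is to reduce the claim to the asymptotic commutation result Lemma~\ref{lem22} combined with the algebraic composition rules Lemma~\ref{lem18} and Lemma~\ref{lem19}. First, I fix cones $\Lambda_1,\Lambda_2\in\ctv$ with $\Lambda_1\perp_{(\theta,\varphi)}\Lambda_2$ and $\Lambda_2\leftarrow_{(\theta,\varphi)}\Lambda_1$, and for each $\bm t=(t_1,t_2)$ and each $\eta\in\{\rho,\rho',\sigma,\sigma'\}$ pick $\bar V_{\eta,\Lambda_i+t_i\bm e_{\Lambda_i}}\in\caV_{\eta,\Lambda_i+t_i\bm e_{\Lambda_i}}$. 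Writing
\begin{align*}
W_\rho^{\bm t}&:=\bar V_{\rho,\Lambda_1+t_1\bm e_{\Lambda_1}}\bar V_{\rho,\Lambda_0}^*,\quad W_\sigma^{\bm t}:=\bar V_{\sigma,\Lambda_2+t_2\bm e_{\Lambda_2}}\bar V_{\sigma,\Lambda_0}^*,\\
\tilde W_{\rho'}^{\bm t}&:=\bar V_{\rho',\Lambda_1+t_1\bm e_{\Lambda_1}}\bar V_{\rho',\Lambda_0}^*,\quad \tilde W_{\sigma'}^{\bm t}:=\bar V_{\sigma',\Lambda_2+t_2\bm e_{\Lambda_2}}\bar V_{\sigma',\Lambda_0}^*,
\end{align*}
Lemma~\ref{lem16} yields the intertwining relations $W_\rho^{\bm t}\in(\trl{\rho}{0},\trlt{\rho}{1})$, $\tilde W_{\rho'}^{\bm t}\in(\trl{\rho'}{0},\trlt{\rho'}{1})$, and analogously for $\sigma,\sigma'$.

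Next, I conjugate $R$ and $S$ to their distant counterparts. Set
\begin{align*}
\tilde R_1^{\bm t}:=\tilde W_{\rho'}^{\bm t}R(W_\rho^{\bm t})^*,\qquad \tilde S_2^{\bm t}:=\tilde W_{\sigma'}^{\bm t}S(W_\sigma^{\bm t})^*.
\end{align*}
Using $R\in(\trl{\rho}{0},\trl{\rho'}{0})$ and $S\in(\trl{\sigma}{0},\trl{\sigma'}{0})$, a routine check gives $\tilde R_1^{\bm t}\in(\trlt{\rho}{1},\trlt{\rho'}{1})$, $\tilde S_2^{\bm t}\in(\trlt{\sigma}{2},\trlt{\sigma'}{2})$, with the uniform bounds $\|\tilde R_1^{\bm t}\|\le\|R\|$ and $\|\tilde S_2^{\bm t}\|\le\|S\|$. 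Since $\Lambda_1\perp_{(\theta,\varphi)}\Lambda_2$, Lemma~\ref{lem22} (applied with $\Lambda_i'=\Lambda_i$, $t_i'=t_i$, and after rescaling by $\|R\|,\|S\|$ if necessary) supplies the asymptotic swap
\begin{align*}
\lim_{\bm t\to\infty}\bigl\|\tilde R_1^{\bm t}\otimes\tilde S_2^{\bm t}-\tilde S_2^{\bm t}\otimes\tilde R_1^{\bm t}\bigr\|=0.
\end{align*}

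The final step chains these ingredients through Definition~\ref{epdef}. From the identities $\tilde W_{\rho'}^{\bm t}R=\tilde R_1^{\bm t}W_\rho^{\bm t}$ and $\tilde W_{\sigma'}^{\bm t}S=\tilde S_2^{\bm t}W_\sigma^{\bm t}$, Lemma~\ref{lem19} gives $(\tilde W_{\rho'}^{\bm t}\otimes\tilde W_{\sigma'}^{\bm t})(R\otimes S)=(\tilde R_1^{\bm t}\otimes\tilde S_2^{\bm t})(W_\rho^{\bm t}\otimes W_\sigma^{\bm t})$. The asymptotic swap replaces this, up to vanishing error, by $(\tilde S_2^{\bm t}\otimes\tilde R_1^{\bm t})(W_\rho^{\bm t}\otimes W_\sigma^{\bm t})$, and a second application of Lemma~\ref{lem18} and Lemma~\ref{lem19} produces $(\tilde W_{\sigma'}^{\bm t}\otimes\tilde W_{\rho'}^{\bm t})^*(\tilde S_2^{\bm t}\otimes\tilde R_1^{\bm t})=(S\otimes R)(W_\sigma^{\bm t}\otimes W_\rho^{\bm t})^*$. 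Assembling the factors and taking $\bm t\to\infty$,
\begin{align*}
\epsilon_+^{(\Lambda_0)}(\rho',\sigma')(R\otimes S)=(S\otimes R)\lim_{\bm t\to\infty}(W_\sigma^{\bm t}\otimes W_\rho^{\bm t})^*(W_\rho^{\bm t}\otimes W_\sigma^{\bm t})=(S\otimes R)\,\epsilon_+^{(\Lambda_0)}(\rho,\sigma),
\end{align*}
which is the desired identity.

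The main obstacle is pure bookkeeping: at each application of Lemma~\ref{lem18} or Lemma~\ref{lem19} one must verify that the source and target extensions of the operators being tensored match, so that the composition rules apply without ambiguity. Once this tracking is done, the entire analytic content of the argument is furnished by the single input Lemma~\ref{lem22}.
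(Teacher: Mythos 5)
Your proof is correct and follows essentially the same route as the paper: conjugate $R$ and $S$ by the charge transporters into the distal cones $\Lambda_1+t_1\bm e_{\Lambda_1}$, $\Lambda_2+t_2\bm e_{\Lambda_2}$, use the asymptotic commutativity of such tensor factors, and reassemble with Lemma \ref{lem18} and Lemma \ref{lem19} together with Definition \ref{epdef}. The only cosmetic difference is that the paper invokes Lemma \ref{lem26} for the swap whereas you invoke Lemma \ref{lem22} with $\Lambda_i'=\Lambda_i$, $t_i'=t_i$; since your chosen cones satisfy $\Lambda_1\perp_{(\theta,\varphi)}\Lambda_2$, the set condition $\{\Lambda_1\}\perp_{(\theta,\varphi)}\{\Lambda_2\}$ holds and both citations are legitimate.
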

\begin{proof}
Let 
 $\Lambda_1,\Lambda_2\in\ctv$,
be cones satisfying $\Lambda_2\perp_{(\theta,\varphi)} \Lambda_1$, $\Lambda_2\leftarrow_{(\theta,\varphi)}\Lambda_1$.
We use the notation in Lemma \ref{lem28}.
Note that
\begin{align}
\begin{split}
&{W^{\bm t}}_{\rho'\Lambda_0\Lambda_1}R {W^{\bm t}}_{\rho\Lambda_0\Lambda_1}^*\in \lmk
\trlt{\rho}1,\trlt{\rho'}1\rmk,\\
&{W^{\bm t}}_{\sigma'\Lambda_0\Lambda_2}S{W^{\bm t}}_{\sigma\Lambda_0\Lambda_2}^*\in \lmk
\trlt{\sigma}2,\trlt{\sigma'}2\rmk.
\end{split}
\end{align}
Because $\Lambda_1\perp_{(\theta,\varphi)} \Lambda_2$ by Lemma \ref{lem26}, we have
\begin{align}
\lim_{\bm t\to\infty}\lV {W^{\bm t}}_{\rho'\Lambda_0\Lambda_1}R {W^{\bm t}}_{\rho\Lambda_0\Lambda_1}^*
\otimes {W^{\bm t}}_{\sigma'\Lambda_0\Lambda_2}S{W^{\bm t}}_{\sigma\Lambda_0\Lambda_2}^*
-{W^{\bm t}}_{\sigma'\Lambda_0\Lambda_2}S{W^{\bm t}}_{\sigma\Lambda_0\Lambda_2}^*
\otimes {W^{\bm t}}_{\rho'\Lambda_0\Lambda_1}R {W^{\bm t}}_{\rho\Lambda_0\Lambda_1}^*\rV=0.
\end{align}
From this and Lemma \ref{lem19}, we have
\begin{align}
\begin{split}
&
\lim_{\bm t\to\infty}\lV \lmk {W^{\bm t}}_{\rho'\Lambda_0\Lambda_1}\otimes {W^{\bm t}}_{\sigma'\Lambda_0\Lambda_2}\rmk
(R\otimes S) 
\lmk {W^{\bm t}}_{\rho\Lambda_0\Lambda_1}\otimes {W^{\bm t}}_{\sigma\Lambda_0\Lambda_2}\rmk^*
-\lmk  {W^{\bm t}}_{\sigma'\Lambda_0\Lambda_2}\otimes
{W^{\bm t}}_{\rho'\Lambda_0\Lambda_1}\rmk
(S\otimes R)
\lmk {W^{\bm t}}_{\sigma\Lambda_0\Lambda_2}\otimes {W^{\bm t}}_{\rho\Lambda_0\Lambda_1} \rmk^*
\rV
\\
&
=\lim_{\bm t\to\infty}\lV
{W^{\bm t}}_{\rho'\Lambda_0\Lambda_1}R {W^{\bm t}}_{\rho\Lambda_0\Lambda_1}^*
\otimes {W^{\bm t}}_{\sigma'\Lambda_0\Lambda_2}S{W^{\bm t}}_{\sigma\Lambda_0\Lambda_2}^*
-{W^{\bm t}}_{\sigma'\Lambda_0\Lambda_2}S{W^{\bm t}}_{\sigma\Lambda_0\Lambda_2}^*
\otimes {W^{\bm t}}_{\rho'\Lambda_0\Lambda_1}R {W^{\bm t}}_{\rho\Lambda_0\Lambda_1}^*
\rV=0.
\end{split}
\end{align}
This proves the Lemma.
\end{proof}
\begin{lem}\label{lem32}
Consider Setting \ref{setni} and assume the approximate Haag duality.
Let $\theta\in \bbR$, $\varphi\in (0,\pi)$, and
 $\rho,\sigma\in \caO_0$,
$\Lambda_0,\Lambda_1\in \ctv$,
and $V_{\rho\Lambda_i}\in \caV_{\rho\Lambda_i}$, $V_{\sigma\Lambda_i}\in \caV_{\sigma\Lambda_i}$,
$i=0,1$.
Set $W_{\rho\Lambda_0\Lambda_1}:=V_{\rho\Lambda_1}V_{\rho\Lambda_0}^*$,
$W_{\sigma\Lambda_0\Lambda_1}:=V_{\sigma\Lambda_1}V_{\sigma\Lambda_0}^*$.
Fix some $\{\bar V_{\eta,\Lambda_i}\in \caV_{\eta,\Lambda_i}\mid \eta\in\caO_0\}$,
$i=0,1$.
Then we have
\begin{align}
&W_{\rho\Lambda_0\Lambda_1}\otimes W_{\sigma\Lambda_0\Lambda_1}
\in \caV_{\rho\;\;\circ_{\comp{}{0}}\;\;\sigma,\Lambda_1},\label{w321}\\
&1\in \caV_{\rho\;\;\circ_{\comp{}{0}}\;\;\sigma,\Lambda_0}\label{w322}.
\end{align}
\end{lem}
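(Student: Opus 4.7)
The two conclusions separate cleanly. Equation (\ref{w322}) is immediate from Lemma \ref{lem14}, which already established that $\gamma:=\rho\;\;\circ_{\comp{}{0}}\;\;\sigma\in\caO_{0}$ with $\unit\in\caV_{\gamma,\Lambda_{0}}$.

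For (\ref{w321}), the strategy is to bring in the auxiliary composition $\gamma':=\rho\;\;\circ_{\comp{}{1}}\;\;\sigma$, obtained by using $\Lambda_{1}$ and $\{\bar V_{\eta\Lambda_{1}}\}$ in place of $\Lambda_{0}$ and $\{\bar V_{\eta\Lambda_{0}}\}$. By the computation (\ref{121eq}) inside the proof of Lemma \ref{lem12}, one has $\gamma'|_{\caA_{\Lambda_{1}^{c}}}=\pi_{0}|_{\caA_{\Lambda_{1}^{c}}}$. It therefore suffices to produce a unitary $U$ with $U\gamma(A)=\gamma'(A)U$ for every $A\in\at$, since then $U\gamma(A)U^{*}=\pi_{0}(A)$ for $A\in\caA_{\Lambda_{1}^{c}}$, i.e.\ $U\in\caV_{\gamma,\Lambda_{1}}$.

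In the special case $V_{\eta\Lambda_{i}}=\bar V_{\eta\Lambda_{i}}$ the candidate is $U:=W_{\rho\Lambda_{0}\Lambda_{1}}\otimes W_{\sigma\Lambda_{0}\Lambda_{1}}$ with $\otimes$ from Lemma \ref{lem18}. Indeed, Lemma \ref{lem16} gives $W_{\eta\Lambda_{0}\Lambda_{1}}\in\lmk\trl{\eta}{0},\trl{\eta}{1}\rmk$ for $\eta=\rho,\sigma$, so Lemma \ref{lem18} produces the unitary
\[
U=W_{\rho\Lambda_{0}\Lambda_{1}}\cdot\trl{\rho}{0}\lmk W_{\sigma\Lambda_{0}\Lambda_{1}}\rmk\in\lmk\trl{\rho}{0}\circ\trl{\sigma}{0},\ \trl{\rho}{1}\circ\trl{\sigma}{1}\rmk.
\]
Specializing the intertwining relation to $x=\pi_{0}(A)$ and using $\gamma=\trl{\rho}{0}\trl{\sigma}{0}\pi_{0}$, $\gamma'=\trl{\rho}{1}\trl{\sigma}{1}\pi_{0}$ yields $U\gamma(A)=\gamma'(A)U$ on all of $\at$, as required.

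For arbitrary $V_{\eta\Lambda_{i}}$, decompose $V_{\eta\Lambda_{i}}=Q_{\eta\Lambda_{i}}\bar V_{\eta\Lambda_{i}}$ with $Q_{\eta\Lambda_{i}}:=V_{\eta\Lambda_{i}}\bar V_{\eta\Lambda_{i}}^{*}$; applying Lemma \ref{lem3} with $\rho=\sigma=\eta$ and intertwiner $\unit$ gives $Q_{\eta\Lambda_{i}}\in\amf(\Lambda_{i})=\pi_{0}(\caA_{\Lambda_{i}^{c}})'$. Inserting this decomposition into the formula for $\otimes$ and using that $\trl{\rho}{0}$ is a $*$-homomorphism, the $U$ built from the $V$'s differs from its $\bar V$-counterpart only by unitary factors that, after collection, lie in $\amf(\Lambda_{1})$, which leaves membership in $\caV_{\gamma,\Lambda_{1}}$ undisturbed. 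The principal obstacle is precisely this bookkeeping step in the general case: one has to verify that after the $\otimes$-product and the passage through $\trl{\rho}{0}$, every residual $Q$-type factor ends up in the commutant of $\pi_{0}(\caA_{\Lambda_{1}^{c}})$ and not merely of $\pi_{0}(\caA_{\Lambda_{0}^{c}})$.
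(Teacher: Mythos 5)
Your proof of (\ref{w322}) and your argument for (\ref{w321}) in the case $V_{\eta\Lambda_i}=\bar V_{\eta\Lambda_i}$ are correct and coincide with the paper's proof: the paper likewise quotes Lemma \ref{lem14} for (\ref{w322}), and obtains (\ref{w321}) from the conjugation identity $\Ad\lmk W_{\eta\Lambda_0\Lambda_1}\rmk\circ\trl{\eta}{0}=\trl{\eta}{1}$ (Lemma \ref{lem10} (iii); your use of Lemmas \ref{lem16} and \ref{lem18} is the same identity in intertwiner form) followed by Lemma \ref{lem9} (d).

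The reduction in your last paragraph, however, is a genuine gap and cannot be completed. After inserting $V_{\eta\Lambda_0}=Q_{\eta\Lambda_0}\bar V_{\eta\Lambda_0}$, the residual factor is $V_{\sigma\Lambda_0}^{*}\bar V_{\sigma\Lambda_0}=\Ad\lmk V_{\sigma\Lambda_0}^{*}\rmk\lmk \bar V_{\sigma\Lambda_0}V_{\sigma\Lambda_0}^{*}\rmk$ (and its $\rho$-analogue), which lies in $\sigma\lmk \caA_{\Lambda_0^c}\rmk'$ but need not commute with $\sigma(A)$ for $A\in\caA_{\Lambda_1^c}$, and there is no mechanism pushing it into $\amf(\Lambda_1)$. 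Indeed, with the $\Lambda_0$-representatives decoupled from the $\bar V$'s the statement itself fails: take $\rho=\sigma=\pi_0$ and $\bar V_{\eta\Lambda_0}=\bar V_{\eta\Lambda_1}=\unit$, so that $\trl{\rho}{0}=\trl{\sigma}{0}=\id$ by Lemma \ref{lem9} (a) and $\rho\;\circ_{\comp{}{0}}\;\sigma=\pi_0$; choose $\Lambda_0,\Lambda_1\in\ctv$ with $\Lambda_0\cap\Lambda_1^c\cap\bbZ^2\neq\emptyset$, put $V_{\rho\Lambda_0}=V_{\rho\Lambda_1}=V_{\sigma\Lambda_1}=\unit$ and $V_{\sigma\Lambda_0}=\pi_0(v)$ with $v$ a non-scalar unitary in $\caA_{\{x\}}$ for a site $x\in\Lambda_0\cap\Lambda_1^c\cap\bbZ^2$. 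Then $W_{\rho\Lambda_0\Lambda_1}\otimes W_{\sigma\Lambda_0\Lambda_1}=\pi_0(v)^{*}$, which does not lie in $\pi_0\lmk\caA_{\Lambda_1^c}\rmk'$ (there is $a\in\caA_{\{x\}}\subset\caA_{\Lambda_1^c}$ with $[v,a]\neq 0$ and $\pi_0$ is injective), hence not in $\caV_{\pi_0,\Lambda_1}$. The way out is not more bookkeeping but the reading the paper itself uses: its third equality, cited to Lemma \ref{lem10} (iii), already takes $W_{\eta\Lambda_0\Lambda_1}=\bar V_{\eta\Lambda_1}\bar V_{\eta\Lambda_0}^{*}$, and every later application (Remark \ref{lem32rem}, Setting \ref{catset} with $\bar V_{\eta\Lambda_0}=\unit$) is of this form; that is, the $\Lambda_0$-representatives in the $W$'s must be the same $\bar V_{\eta\Lambda_0}$ that enter the composition, while the $\Lambda_1$-representatives may be arbitrary, since Lemma \ref{lem9} (d) holds for every element of $\caV_{\eta,\Lambda_1}$. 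With that reading, your first argument already proves the lemma in full.
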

\begin{rem}\label{lem32rem}
By this Lemma and Definition \ref{epdef} and Lemma \ref{lem28},
we may set
\[
W^{\bm t}_{\rho\;\;\circ_{\comp{}{0}}\;\;\sigma,\Lambda_0,\Lambda_1}
:=W_{\rho\Lambda_0\Lambda_1}^{\bm t}\otimes W_{\sigma\Lambda_0\Lambda_1}^{\bm t},
\]
in the definition of $\epsilon_+^{(\Lambda_0)}\lmk \rho\;\;\circ_{\comp{}{0}}\;\;\sigma,\cdot\rmk$
and $\epsilon_+^{(\Lambda_0)}\lmk \cdot, \rho\;\;\circ_{\comp{}{0}}\;\;\sigma\rmk$.
\end{rem}
\begin{proof}
For any $A\in\caA_{\Lambda_1^c}$, we have
\begin{align}
\begin{split}
&\Ad\lmk W_{\rho\Lambda_0\Lambda_1}\otimes W_{\sigma\Lambda_0\Lambda_1}\rmk
\rho\;\;\circ_{\comp{}{0}}\;\;\sigma(A)\\
&=\Ad\lmk W_{\rho\Lambda_0\Lambda_1}\trl{\rho}0\lmk W_{\sigma\Lambda_0\Lambda_1}\rmk\rmk
\trl{\rho}0\trl{\sigma}0\lmk\pi_0(A)\rmk\\
&=\Ad\lmk W_{\rho\Lambda_0\Lambda_1}\rmk \trl{\rho}0
\Ad\lmk W_{\sigma\Lambda_0\Lambda_1}\rmk\trl{\sigma}0\lmk\pi_0(A)\rmk\\
&=\trl{\rho}1\trl{\sigma}1\lmk \pi_0(A)\rmk
=\pi_0(A)
\end{split}
\end{align}
In the third equality, we used Lemma \ref{lem10} (iii).
In the fourth equality, we used Lemma \ref{lem9} (d).
Hence we obtain (\ref{w321}).
The latter one (\ref{w322})
is proven in Lemma \ref{lem14}.
\end{proof}

\begin{lem}\label{lem416}
Consider Setting \ref{setni} and assume the approximate Haag duality.
Let $\theta\in \bbR$, $\varphi\in (0,\pi)$, $\Lambda_0\in \ctv$ and
fix $\{\bar V_{\eta\Lambda_0}\in \caV_{\eta\Lambda_0}\mid \eta\in\caO_0\}$.
For any $\rho,\sigma,\tau\in \caO_0$,
we have
\begin{align}
&\epsilon_+^{(\Lambda_0)}\lmk \rho\;\;\circ_{\comp {}0}\sigma, \tau\rmk
=\lmk
\epsilon_+^{(\Lambda_0)}\lmk\rho,\tau\rmk\otimes 1_{\trl{\sigma}0}
\rmk
\lmk
 1_{\trl{\rho}0}\otimes \epsilon_+^{(\Lambda_0)}\lmk\sigma,\tau\rmk
\rmk,\label{fstb}\\
&\epsilon_+^{(\Lambda_0)}\lmk \rho, \sigma\;\;\circ_{\comp {}0}\tau\rmk
=\lmk
 1_{\trl{\sigma}0}\otimes \epsilon_+^{(\Lambda_0)}\lmk\rho,\tau\rmk
\rmk
\lmk
\epsilon_+^{(\Lambda_0)}\lmk\rho,\sigma\rmk\otimes 1_{\trl{\tau}0}
\rmk.\label{sndb}
\end{align}
\end{lem}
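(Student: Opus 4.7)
The plan is to prove (\ref{fstb}) by direct computation from Definition \ref{epdef} and Remark \ref{lem32rem}; identity (\ref{sndb}) then follows by a symmetric argument with the roles of the left and right auxiliary cones exchanged. I first fix auxiliary cones $\Lambda_1,\Lambda_3\in\ctv$ with $\Lambda_1\perp_{(\theta,\varphi)}\Lambda_3$ and $\Lambda_3\leftarrow_{(\theta,\varphi)}\Lambda_1$, chosen further so that $\Lambda_0\cup\Lambda_1$ is contained in some $\Gamma\in\ctv$ with $\arg\Gamma\cap\arg\Lambda_3=\emptyset$. By Lemma \ref{lem32} I may set $\bar V_{\rho\circ\sigma,\Lambda_0}=1$ and $\bar V_{\rho\circ\sigma,\Lambda_1^{\bm t}}:=W^{\bm t}_{\rho\Lambda_0\Lambda_1}\otimes W^{\bm t}_{\sigma\Lambda_0\Lambda_1}$. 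Lemma \ref{lem14} then yields $\trl{\rho\circ\sigma}{0}=\trl\rho 0\trl\sigma 0$, so the tensor product of intertwiners becomes strictly associative in this setup, and Remark \ref{lem32rem} identifies $W^{\bm t}_{\rho\circ\sigma,\Lambda_0,\Lambda_1}$ with $W^{\bm t}_{\rho\Lambda_0\Lambda_1}\otimes W^{\bm t}_{\sigma\Lambda_0\Lambda_1}$.

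Writing $A:=W^{\bm t}_{\rho\Lambda_0\Lambda_1}$, $B:=W^{\bm t}_{\sigma\Lambda_0\Lambda_1}$, $C:=W^{\bm t}_{\tau\Lambda_0\Lambda_3}$, $T_\eta:=\trl\eta 0$, and $T_{\tau,3}:=\trlt\tau 3$, and expanding each $\otimes$ via Lemma \ref{lem18}, substitution into Definition \ref{epdef} shows that the left-hand side of (\ref{fstb}) equals $\lim_{\bm t} T_\tau T_\rho(B)^*\, T_\tau(A)^*\, C^*\, A\, T_\rho(B)\, T_\rho T_\sigma(C)$. Using norm-continuity of $T_\rho$ to combine the two individual limits on the right-hand side and the identity $T_\rho(C)\,T_\rho T_\tau(x)\,T_\rho(C)^*=T_\rho T_{\tau,3}(x)$ derived from $C T_\tau(\cdot)C^*=T_{\tau,3}(\cdot)$, the right-hand side of (\ref{fstb}) equals $\lim_{\bm t} T_\tau(A)^*\, C^*\, A\, T_\rho T_{\tau,3}(B)^*\, T_\rho(B)\, T_\rho T_\sigma(C)$. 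Setting $G:=T_\tau(A)^*\, C^*\, A$ and cancelling the common right-factor $T_\rho(B)\, T_\rho T_\sigma(C)$, (\ref{fstb}) reduces to the norm estimate $\lim_{\bm t}\lV T_\tau T_\rho(B)^*\, G - G\, T_\rho T_{\tau,3}(B)^*\rV=0$, which by the triangle inequality splits into three vanishings $\lV T_\tau T_\rho(B)-T_\rho(B)\rV\to 0$, $\lV [T_\rho(B)^*, G]\rV\to 0$, and $\lV T_\rho(B)-T_\rho T_{\tau,3}(B)\rV\to 0$.

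The third vanishing follows directly from Lemma \ref{lem319} applied to $B\in\amf(\Lambda_0\cup\Lambda_1^{\bm t})\subset\amf(\Gamma)$ (since $\arg\Gamma\cap\arg\Lambda_3=\emptyset$) together with norm-continuity of $T_\rho$. The first and second require localizing $T_\rho(B)$ inside a cone $\tilde\Gamma$ still satisfying $\arg\tilde\Gamma\cap\arg\Lambda_3=\emptyset$, after which Lemma \ref{lem319} (for the first) and Lemma \ref{lem320} (for the second) close the argument. This localization step is the main obstacle: $T_\rho$ is based at $\Lambda_0\subset\Gamma$ and, although norm-continuous on $\amf(\Gamma)$ by Lemma \ref{sclem}, need not preserve the cone structure pointwise. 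I propose controlling the spread by combining the approximate Haag duality $\amf(\Gamma)\subset\Ad(U_{\Gamma,\varepsilon})\,\pi_0\bigl(\caA_{(\Gamma-R_{|\arg\Gamma|,\varepsilon}\bm e_\Gamma)_\varepsilon}\bigr)''$ with Lemma \ref{lem33} to express $T_\rho(B)$ as a conjugation by $T_\rho(U_{\Gamma,\varepsilon})$ (itself controlled via Lemma \ref{niku}) of an element of $\amf(C')$ for a cone $C'$ that can still be enlarged into a $\tilde\Gamma$ disjoint in argument from $\Lambda_3$. Identity (\ref{sndb}) is then obtained by the mirror argument, taking $\bar V_{\sigma\circ\tau,\Lambda_0}=1$ and $\bar V_{\sigma\circ\tau,\Lambda_3^{\bm t}}:=W^{\bm t}_{\sigma\Lambda_0\Lambda_3}\otimes W^{\bm t}_{\tau\Lambda_0\Lambda_3}$, with the roles of $\Lambda_1$ and $\Lambda_3$ exchanged.
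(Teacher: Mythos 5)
Your overall strategy --- use Remark \ref{lem32rem} to take $W^{\bm t}_{\rho\Lambda_0\Lambda_1}\otimes W^{\bm t}_{\sigma\Lambda_0\Lambda_1}$ as the transporter of the composite, expand both sides of (\ref{fstb}) via Definition \ref{epdef}, cancel the common unitary right factor, and finish with a localization estimate based on approximate Haag duality and Lemma \ref{lem33} --- is indeed the paper's strategy. But your reduction to the three vanishing statements is where the argument breaks, and the first statement is not merely unproved but false in general. With your notation $A=W^{\bm t}_{\rho\Lambda_0\Lambda_1}$, $B=W^{\bm t}_{\sigma\Lambda_0\Lambda_1}$, $C=W^{\bm t}_{\tau\Lambda_0\Lambda_3}$, $T_\eta=\trl{\eta}{0}$: the claim $\lV T_\tau T_\rho(B)-T_\rho(B)\rV\to0$ asks the \emph{unshifted} extension $T_\tau$, based at the fixed cone $\Lambda_0$, to act asymptotically trivially on $T_\rho(B)$, whose localization region (by Lemma \ref{lem3}, $B\in\pi_0(\caA_{((\Lambda_1+t_1\bm e_{\Lambda_1})\cup\Lambda_0)^c})'$) always contains $\Lambda_0$ and does not recede as $t_1\to\infty$. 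Lemma \ref{lem319} cannot close this: its convergence comes from shifting the base cone of the acting extension to infinity, with argument disjoint from the region of the operand; here the base cone is $\Lambda_0\subset\Gamma$, fixed, so no relocalization of $T_\rho(B)$ inside a $\tilde\Gamma$ helps. Worse, if the claim held (already for $\rho=\pi_0$, i.e.\ $\lV T_\tau(B)-B\rV\to0$), then combining it with the correct statement $\lV T_{\tau,3}(B)-B\rV\to0$ and $T_{\tau,3}=\Ad(C)\circ T_\tau$ would give $\lV[C,B]\rV\to0$, and the defining limit for $\epsilon_+^{(\Lambda_0)}(\sigma,\tau)$ would collapse to $\unit$: the braiding would always be trivial, which is false (toric code). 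Your second claim, $\lV[T_\rho(B)^*,\,T_\tau(A)^*C^*A]\rV\to0$, has the same defect: every factor is localized in a region containing $\Lambda_0$, so Lemma \ref{lem320}, which needs one of the two regions to be translated to infinity, gives nothing.

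The repair is to regroup before estimating, which is in effect what the paper does. Using the intertwining relations $A\,T_\rho(x)=T_{\rho,1}(x)\,A$ and $T_\tau(y)\,C^*=C^*\,T_{\tau,3}(y)$ (Lemma \ref{lem16}, Lemma \ref{lem10} (iii)), where $T_{\rho,1}:=\Ad(A)\circ T_\rho$ and $T_{\tau,3}:=\Ad(C)\circ T_\tau$ are the extensions based at the shifted cones, the difference of your two expressions equals $T_\tau(A)^*\,C^*\,\bigl(T_{\tau,3}T_{\rho,1}(B^*)-T_{\rho,1}T_{\tau,3}(B^*)\bigr)\,A$; so the only estimate needed is the asymptotic commutation of the two \emph{shifted} extensions on the transporter $B$, which is precisely the paper's key estimate (\ref{comcom1}). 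That estimate is then proved by the very technique you sketch at the end, but aimed at the right maps: approximate $B\in\amf(\tilde\Lambda_1)$ by an element of a pulled-back cone algebra via Lemma \ref{lem2t6}, push it through $T_{\rho,1}$ and control the image with Lemma \ref{lem33}, approximate once more with Lemma \ref{lem2t6}, and observe that the result lies in $\pi_0\lmk\caA_{(\Lambda_3+t_3\bm e_{\Lambda_3})^c}\rmk''$ for $t_3$ large, on which $T_{\tau,3}$ acts as the identity by Lemma \ref{lem9} (d),(i); combined with $\lV T_{\tau,3}(B)-B\rV\to0$ (your third claim, which is correct and proved the same way) this yields (\ref{comcom1}) and hence (\ref{fstb}). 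The mirror regrouping handles (\ref{sndb}). As written, your decomposition replaces this single shifted-commutation estimate by statements about $\trl{\tau}{0}$ itself, for which no convergence mechanism exists.
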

\begin{proof}
First we prove (\ref{fstb}).
We may take 
$\Lambda_1,\Lambda_2\in\ctv$,
satisfying $\Lambda_2\leftarrow_{(\theta,\varphi)}\Lambda_1$
so that there are some $\tilde \Lambda_1,\tilde\Lambda_2\in\ctv$ such that
 \begin{align}
 \begin{split}
 &(\Lambda_1)_\varepsilon, \Lambda_0\subset \tilde \Lambda_1,\quad\Lambda_2\subset \tilde \Lambda_2\\
&\arg\lmk\tilde \Lambda_{1}\rmk_{2\varepsilon+\delta+\epsilon+\epsilon'} \cap
\arg \lmk \tilde\Lambda_2\rmk_{2\varepsilon+\delta+\epsilon+\epsilon'}=\emptyset,\\
&\lmk \tilde \Lambda_1-R_{|\arg\tilde \Lambda_1|, \varepsilon}\bm e_{\tilde \Lambda_1}\rmk_{\varepsilon}
\subset \tilde \Lambda_2^c,\quad 
 \lmk \tilde \Lambda_2-R_{|\arg\tilde \Lambda_2|, \varepsilon}\bm e_{\tilde \Lambda_2}\rmk_{\varepsilon}
\subset \tilde \Lambda_1^c,\\
&\lmk \tilde \Lambda_1\rmk_{2(\varepsilon+\epsilon+\epsilon'+\delta)},
\lmk \tilde \Lambda_2\rmk_{2(\varepsilon+\epsilon+\epsilon'+\delta)}\in\ctv,
 \end{split}
 \end{align} 
 for some $\delta,\varepsilon>0$ and $\epsilon,\epsilon'>0$ small enough.
 
Fix some 
$V_{\rho\Lambda_0}\in\caV_{\rho\Lambda_0}$,
$V_{\sigma\Lambda_0}\in\caV_{\sigma\Lambda_0}$,
$V_{\tau\Lambda_0}\in\caV_{\rho\Lambda_0}$,
$V_{\rho\Lambda_i+t_i\bm e_{\Lambda_i}}\in\caV_{\rho\Lambda_i+t_i\bm e_{\Lambda_i}}$,
$V_{\sigma\Lambda_i+t_i\bm e_{\Lambda_i}}\in\caV_{\sigma\Lambda_i+t_i\bm e_{\Lambda_i}}$,
$V_{\tau\Lambda_i+t_i\bm e_{\Lambda_i}}\in\caV_{\tau\Lambda_i+t_i\bm e_{\Lambda_i}}$,
for $t_i\ge 0$ $i=1,2$ and set
 ${W^{\bm t}}_{\rho\Lambda_i\Lambda_j}:=V_{\rho\ltj{j}{t_j}}V_{\rho\ltj{i}{t_i}}^*$,
 ${W^{\bm t}}_{\sigma\Lambda_i\Lambda_j}:=V_{\sigma\ltj{j}{t_j}}V_{\sigma\ltj{i}{t_i}}^*$,
 ${W^{\bm t}}_{\tau\Lambda_i\Lambda_j}:=V_{\tau\ltj{j}{t_j}}V_{\tau\ltj{i}{t_i}}^*$,
 if $i,j=1,2$.
 Set also
  ${W^{\bm t}}_{\rho\Lambda_0\Lambda_i}:=V_{\rho\ltj{i}{t_i}}V_{\rho\Lambda_0}^*$,
  for $i=1,2$.
By Remark \ref{lem32rem}, we may set
\begin{align}\label{compw}
\begin{split}
&{W^{\bm t}}_{\rho\;\;\circ_{\comp{}{0}}\;\;\sigma,\Lambda_0,\Lambda_1}
:={W^{\bm t}}_{\rho\Lambda_0\Lambda_1}\otimes {W^{\bm t}}_{\sigma\Lambda_0\Lambda_1}
={W^{\bm t}}_{\rho\Lambda_0\Lambda_1}\trl\rho0\lmk  {W^{\bm t}}_{\sigma\Lambda_0\Lambda_1}\rmk\\
&{W^{\bm t}}_{\sigma\;\;\circ_{\comp{}{0}}\;\;\tau,\Lambda_0,\Lambda_2}
:={W^{\bm t}}_{\sigma\Lambda_0\Lambda_2}\otimes {W^{\bm t}}_{\tau\Lambda_0\Lambda_2}
={W^{\bm t}}_{\sigma\Lambda_0\Lambda_2}\trl\sigma 0\lmk {W^{\bm t}}_{\tau\Lambda_0\Lambda_2}\rmk,
\end{split}
\end{align}
in the definition of $\epsilon_+^{(\Lambda_0)}\lmk \rho\;\;\circ_{\comp{}{0}}\;\;\sigma,\tau\rmk$
and $\epsilon_+^{(\Lambda_0)}\lmk  \rho,\sigma\;\;\circ_{\comp{}{0}}\;\;\tau\rmk$.

Fix any $\epsilon''>0$.
Choose $s\ge R_{|\arg\tilde \Lambda_1|+2(\varepsilon+\delta+\epsilon),\frac{\epsilon'}2},
R_{|\arg\tilde\Lambda_1|,\varepsilon}$
such that
\begin{align}
2f_{|\arg\tilde\Lambda_1|+2(\varepsilon+\delta+\epsilon),\frac{\epsilon'}2, \frac{\epsilon'}2}(s)<\epsilon'',
\quad
2f_{|\arg\tilde \Lambda_1|,\varepsilon,\delta}(s)<\epsilon''.
\end{align}
Because 
\begin{align}
W_{\sigma\Lambda_0\Lambda_1}^{\bm t}
\in\pi_0\lmk\caA_{\lmk
\lmk \Lambda_1+t_1\bm e_{\Lambda_1}\rmk\cup\Lambda_0
\rmk^c}\rmk'
\subset \amf\lmk\tilde \Lambda_1\rmk,
\end{align}
from Lemma \ref{lem2t6},
there is a unitary $\tilde W_{\sigma\Lambda_0\Lambda_1}^{\bm t s}\in \pi_0\lmk 
\caA_{\lmk \tilde \Lambda_1\rmk_{\varepsilon+\delta}-s\bm e_{\tilde \Lambda_1}}\rmk''$
such that
\begin{align}
\lV
W_{\sigma\Lambda_0\Lambda_1}^{\bm t}-
\tilde W_{\sigma\Lambda_0\Lambda_1}^{\bm t s}
\rV
\le2 f_{|\arg\tilde\Lambda_1|,\varepsilon,\delta}(s)
<\epsilon''.
\end{align}
From Lemma \ref{lem33},
we have
\begin{align}
\trlt{\rho}1\lmk\tilde W_{\sigma\Lambda_0\Lambda_1}^{\bm t s}\rmk
\in\amf\lmk
\lmk \tilde \Lambda_1\rmk_{\varepsilon+\delta+\epsilon}-s\bm e_{\tilde\Lambda_1}
\rmk.
\end{align}
Then by Lemma \ref{lem2t6}, there is a unitary
\begin{align}
X_{\sigma\Lambda_0\Lambda_1}^{\bm t s}
\in \pi_0
\lmk
\caA_{\lmk\tilde \Lambda_{1}\rmk_{\varepsilon+\delta+\epsilon+\epsilon'}-2s\bm e_{\tilde \Lambda_1}}
\rmk''
\end{align}
such that 
\begin{align}
\lV
X_{\sigma\Lambda_0\Lambda_1}^{\bm t s}-\trlt{\rho}1\lmk\tilde W_{\sigma\Lambda_0\Lambda_1}^{\bm t s}\rmk
\rV\le 2f_{|\arg\tilde\Lambda_1|+2(\varepsilon+\delta+\epsilon),\frac{\epsilon'}2, \frac{\epsilon'}2}(s)<\epsilon''.
\end{align}
For $t_2$ large enough, we have
\begin{align}
\lmk\tilde \Lambda_{1}\rmk_{\varepsilon+\delta+\epsilon+\epsilon'}-2s\bm e_{\tilde \Lambda_1}
\subset 
\lmk \Lambda_2+t_2\bm e_{\Lambda_2}\rmk^c,
\end{align}
and 
\begin{align}
\trlt{\tau}2\lmk X_{\sigma\Lambda_0\Lambda_1}^{\bm t s}
\rmk=X_{\sigma\Lambda_0\Lambda_1}^{\bm t s}.
\end{align}
Hence we obtain
\begin{align}
\lim\sup_{\bm t\to\infty}
\lV
\trlt{\tau}2\trlt{\rho}1
\lmk
W_{\sigma\Lambda_0\Lambda_1}^{\bm t}
\rmk
-\trlt{\rho}1\lmk W_{\sigma\Lambda_0\Lambda_1}^{\bm t}\rmk
\rV\le 2\epsilon''.
\end{align}
As this holds for any $\epsilon''>0$,
we have
\begin{align}
\lim_{\bm t\to\infty}
\lV
\trlt{\tau}2\trlt{\rho}1
\lmk
W_{\sigma\Lambda_1\Lambda_0}^{\bm t}
\rmk
-\trlt{\rho}1\lmk W_{\sigma\Lambda_1\Lambda_0}^{\bm t}\rmk
\rV=0.
\end{align}
Similarly, we have
\begin{align}
\lim_{\bm t\to\infty}
\lV
\trlt{\tau}2
\lmk
W_{\sigma\Lambda_0\Lambda_1}^{\bm t}
\rmk
-W_{\sigma\Lambda_0\Lambda_1}^{\bm t}
\rV=0.
\end{align}

Hence we have
\begin{align}\label{comcom1}
\begin{split}
&\lim_{\bm t\to\infty}
\lV 
\begin{gathered}
\trlt{\tau}2\trlt{\rho}{1}\lmk\wodt\sigma01\rmk\\-  \trlt{\rho}1\trlt{\tau}2\lmk \wodt\sigma01\rmk
\end{gathered}
\rV=0.
\end{split}
\end{align}

By the definition of $\epsilon_+^{(\Lambda_0)}$
and the fact that $\wodt\rho ij$ etc are intertwiners, we obtain
\begin{align}
\begin{split}
&\lmk
\epsilon_+^{(\Lambda_0)}\lmk\rho,\tau\rmk\otimes 1_{\trl{\sigma}0}
\rmk
\lmk
 1_{\trl{\rho}0}\otimes \epsilon_+^{(\Lambda_0)}\lmk\sigma,\tau\rmk
\rmk\\
&=\lim_{\bm t\to\infty}\trl{\tau}0\lmk \wodt\rho01^*\rmk \wodt\tau02^*
\wodt\rho01\\&\quad
\trl\rho0\lmk
\wodt\tau02\trl\tau0\lmk\wodt\sigma01^*\rmk\wodt\tau02^*\wodt\sigma01
\rmk\\
&\quad
\trl\rho0\trl\sigma0\lmk\wodt\tau02\rmk\\
&=\lim_{\bm t\to\infty}\wodt\tau02^*\trlt\tau2\lmk \wodt\rho 01^*\rmk\\
&\quad \trlt\rho1\lmk\trlt\tau2(\wodt\sigma01^*)\rmk
\trlt\rho1\lmk
\wodt\sigma01\rmk
\wodt\rho01
\\&\quad
\trl\rho0\trl\sigma0\lmk\wodt\tau02\rmk\\
&=\lim_{\bm t\to\infty}
\wodt\tau02^*\trlt\tau2\lmk \wodt\rho 01^*\rmk\\
&\quad \trlt\tau2\trlt\rho1\lmk\wodt\sigma01^*\rmk
\trlt\rho1\lmk
\wodt\sigma01\rmk
\wodt\rho01
\\&\quad
\trl\rho0\trl\sigma0\lmk\wodt\tau02\rmk\\
&=\lim_{\bm t\to\infty}
\wodt\tau02^*
\trlt\tau2\lmk \wodt\rho 01^*
\trlt\rho1\lmk\wodt\sigma01^*\rmk
\rmk\\
&\trlt\rho1\lmk
\wodt\sigma01\rmk
\wodt\rho01
\quad
\trl\rho0\trl\sigma0\lmk\wodt\tau02\rmk\\
&=\lim_{\bm t\to\infty}
\trl\tau0\lmk
\trl\rho0\lmk\wodt\sigma01^*\rmk \wodt\rho 01^*
\rmk
\wodt\tau02^*\wodt\rho01
\trl\rho0\lmk
\wodt\sigma01\rmk
\\&\quad
\trl\rho0\trl\sigma0\lmk\wodt\tau02\rmk\\
&=\lim_{\bm t\to\infty}
\trl\tau0\lmk
\lmk \wodt{\rho\circ_{\Lambda_0}\sigma}01\rmk^*
\rmk
\wodt\tau02^*
\wodt{\rho\circ_{\Lambda_0}\sigma}01
\trl\rho0\trl\sigma0\lmk\wodt\tau02\rmk\\
&=\lim_{\bm t\to\infty}\lmk
\wodt\tau02\otimes \wodt{\rho\circ_{\Lambda_0}\sigma}01
\rmk^*
\lmk
\wodt{\rho\circ_{\Lambda_0}\sigma}01\otimes \wodt\tau02
\rmk\\
&=\epsilon_+^{(\Lambda_0)}
\lmk\rho\circ_{\Lambda_0}\sigma,\tau\rmk.
\end{split}
\end{align}
In the third equality, we used (\ref{comcom1}).
In the sixth equality, we used (\ref{compw}).
This proves the first equality (\ref{fstb}).

The second one (\ref{sndb}) can be proven analogously.
As in the first case, we can choose $\Lambda_1,\Lambda_2$
so that
\begin{align}\lV
\begin{gathered}
\trlt\rho1\lmk\trlt\sigma2\lmk\wodt\tau02\rmk\rmk\\
-
\trlt\sigma2\trlt\rho1\lmk\wodt\tau02\rmk
\end{gathered}
\rV\to 0,\quad \bm t\to\infty.
\end{align}

We have 
\begin{align}
\begin{split}
&\lmk
 1_{\trl{\sigma}0}\otimes \epsilon_+^{(\Lambda_0)}\lmk\rho,\tau\rmk
\rmk
\lmk
\epsilon_+^{(\Lambda_0)}\lmk\rho,\sigma\rmk\otimes 1_{\trl{\tau}0}
\rmk\\
&=\lim_{\bm t\to\infty}\trl{\sigma}0\trl\tau0\lmk\lmk \wodt\rho01\rmk^*\rmk
\trl\sigma0\lmk\lmk \wodt\tau02\rmk^*\rmk\\
&\quad\trl\sigma0\lmk \trlt\rho1\lmk\wodt\tau02\rmk\rmk
\lmk\wodt\sigma 02\rmk^*\wodt\rho01\trl\rho0\lmk \wodt\sigma02\rmk\\
&=\lim_{\bm t\to\infty}\trl{\sigma}0\trl\tau0\lmk\lmk \wodt\rho01\rmk^*\rmk
\trl\sigma0\lmk\lmk \wodt\tau02\rmk^*\rmk\\
&\quad\lmk\wodt\sigma 02\rmk^*
\trlt\sigma2\lmk \trlt\rho1\lmk\wodt\tau02\rmk\rmk
\wodt\rho01\trl\rho0\lmk \wodt\sigma02\rmk\\
&=\lim_{\bm t\to\infty}\trl{\sigma}0\trl\tau0\lmk\lmk \wodt\rho01\rmk^*\rmk
\trl\sigma0\lmk\lmk \wodt\tau02\rmk^*\rmk\\
&\quad\lmk\wodt\sigma 02\rmk^*
 \trlt\rho1\lmk\trlt\sigma2\lmk\wodt\tau02\rmk\rmk
\wodt\rho01\trl\rho0\lmk \wodt\sigma02\rmk\\
&=\lim_{\bm t\to\infty}\trl{\sigma}0\trl\tau0\lmk\lmk \wodt\rho01\rmk^*\rmk
\lmk \wodt{\sigma\circ_{\Lambda_0}\tau} 02\rmk^*
\\
&\quad
 \wodt\rho01\trl{\rho}0\lmk \wodt{\sigma\circ_{\Lambda_0}\tau}02\rmk
 \\
&=\lim_{\bm t\to\infty}
\lmk
\wodt{\sigma\circ_{\Lambda}\tau}02
\otimes
\wodt\rho01 
\rmk^* \lmk
\wodt\rho01 \otimes \wodt{\sigma\circ_{\Lambda}\tau}02
\rmk\\
&=
\epsilon_+^{(\Lambda_0)}\lmk \rho, \sigma\;\;\circ_{\comp {}0}\tau\rmk.
\end{split}
\end{align}

\end{proof}

\section{Braided $C^*$-tensor category}\label{braidsec}
Let us consider the following setting.
For the definition of interactions and the dynamics given by them, see \cite{BR2}.
\begin{setting}\label{catset}
Let $\Phi$ be a uniformly bounded finite range interaction on $\at$ and $\tau_{\Phi}$ the $C^*$-dynamics given by $\Phi$.
Let $\omega$ be a pure $\tau_\Phi$-ground state satisfying the gap condition:
there is a $\gamma>0$ such that
\begin{align}
-i\omega\lmk B^* \delta_\Phi(B)\rmk\ge \gamma\omega(B^*B),
\end{align}
for any $B\in \at$ in the domain of the generator $\delta_\Phi$ of $\tau_{\Phi}$ with $\omega(B)=0$.
Let $(\caH,\pi_0,\Omega)$ be a GNS representation of $\omega$.
We use the notations in Setting \ref{setni} for this $(\caH,\pi_0)$.
We assume that $\pi_0$ has a nontrivial sector theory.
We also assume the approximate Haag duality for $\pi_0$.
Fix some $\theta\in\bbR$, $\varphi\in (0,\pi)$ and $\Lambda_0\in\ctv$.
We set
\begin{align}
\caO_{\Lambda_0}:=
\left\{
\rho\in\caO_0\mid
\rho\vert_{\Lambda_{0}^c}=\pi_0\vert_{\Lambda_{0}^c}
\right\}.
\end{align}
We define the multiplication of $\rho,\sigma\in \caO_{\Lambda_0}$
by $\rho\;\;\circ_{\comp{}{0}}\;\;\sigma$ in Definition \ref{compdef},
with $\bar V_{\eta,\Lambda_0}:=\unit_{\caH}\in \caV_{\eta,\Lambda_0}$,
for each $\eta\in\caO_{\Lambda_0}$.
Note from Lemma \ref{lem12} that in fact 
$\rho\;\;\circ_{\comp{}{0}}\;\;\sigma\in \caO_{\Lambda_0}$.
\end{setting}
In this section we show the following theorem.
For definition of a braided $C^*$-tensor category, see 
 \cite{NT}.
\begin{thm}\label{bracatthm}
Consider Setting \ref{catset}.
Then $\caO_{\Lambda_0}$ and its morphisms form
a braided $C^*$-tensor category.
\end{thm}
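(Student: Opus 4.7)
The plan is to verify the braided $C^*$-tensor category axioms by assembling the already-established tensor and intertwiner calculus of Sections \ref{compsec} and \ref{intsec}, invoking the gap hypothesis only at the final step to produce direct sums and subobjects.

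First I would set up the underlying $C^*$-category: the objects are the elements of $\caO_{\Lambda_0}$ and the morphism spaces $(\rho,\sigma)$ are norm-closed $*$-invariant subspaces of $\caB(\caH)$, so operator composition, the adjoint and the $C^*$-identity are automatic. I take the tensor product of objects to be $\rho\otimes\sigma := \rho\;\circ_{\comp{}{0}}\;\sigma$ with $\bar V_{\eta,\Lambda_0} = \unit$; this stays in $\caO_{\Lambda_0}$ by Lemma \ref{lem12}. On morphisms I set $R_1\otimes R_2 := R_1\,\trl{\rho}{0}(R_2)$ as in Lemma \ref{lem18}; bifunctoriality is Lemma \ref{lem19}. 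Lemma \ref{lem14} gives $1\in\caV_{\rho\circ\sigma,\Lambda_0}$ whenever $\bar V_{\rho,\Lambda_0} = \bar V_{\sigma,\Lambda_0} = \unit$, so I may take $\bar V_{\rho\circ\sigma,\Lambda_0} = \unit$ consistently and obtain $\trl{\rho\circ\sigma}{0} = \trl{\rho}{0}\circ\trl{\sigma}{0}$; this forces strict associativity $(\rho\otimes\sigma)\otimes\tau = \rho\otimes(\sigma\otimes\tau)$. The unit object is $\pi_0\in\caO_{\Lambda_0}$: the identity $\id_{\btv}$ satisfies the defining properties of Lemma \ref{lem9} for $\rho = \pi_0$ with $\bar V = \unit$, so by the uniqueness clause (a) one has $\trl{\pi_0}{0} = \id_{\btv}$, which makes $\pi_0$ a strict unit. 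In particular all pentagon and triangle coherences hold trivially.

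The braiding I take to be $c_{\rho,\sigma} := \epsilon_+^{(\Lambda_0)}(\rho,\sigma)$ from Definition \ref{epdef}. Each approximant in its defining limit is a product of unitaries, so the norm limit $c_{\rho,\sigma}$ is unitary. Lemma \ref{lem30} identifies it as an element of $(\rho\otimes\sigma,\sigma\otimes\rho)$, Lemma \ref{lem31} is its naturality in both variables, and the two hexagon axioms are precisely the identities (\ref{fstb}) and (\ref{sndb}) of Lemma \ref{lem416}. Thus the braided structure is complete modulo the existence of direct sums and subobjects.

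The hard part will be producing direct sums and subobjects inside $\caO_{\Lambda_0}$, which is the sole step that genuinely uses the gap hypothesis. Given $\rho_1,\rho_2\in\caO_{\Lambda_0}$, the plan is to pick two isometries $v_1,v_2\in\amf(\Lambda_0) = \pi_0(\caA_{\Lambda_0^c})'$ with $v_i^*v_j = \delta_{ij}\unit$ and $v_1v_1^* + v_2v_2^* = \unit$, and set $(\rho_1\oplus\rho_2)(A) := v_1\rho_1(A)v_1^* + v_2\rho_2(A)v_2^*$; since $v_i$ commutes with $\pi_0(\caA_{\Lambda_0^c})$ and $\rho_i\vert_{\caA_{\Lambda_0^c}} = \pi_0\vert_{\caA_{\Lambda_0^c}}$, one computes $(\rho_1\oplus\rho_2)\vert_{\caA_{\Lambda_0^c}} = \pi_0\vert_{\caA_{\Lambda_0^c}}$, so the sum belongs to $\caO_{\Lambda_0}$ with the $v_i$ realizing it as a direct sum. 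For a subobject, a projection $p\in(\rho,\rho)$ necessarily commutes with $\pi_0(\caA_{\Lambda_0^c})$, so $p\in\amf(\Lambda_0)$; one then writes $p = VV^*$ with an isometry $V\in\amf(\Lambda_0)$ and defines the subobject by $\sigma(A) := V^*\rho(A)V$, with $V$ as the sub-isometry. The real obstacle is therefore to verify that $\amf(\Lambda_0)$ is properly infinite (indeed one expects it to be a type $\mathrm{III}$ factor), and this is what the pure gapped ground state hypothesis on a uniformly bounded finite-range interaction must supply, via the known structural results on GNS algebras of such states in two-dimensional quantum spin systems.
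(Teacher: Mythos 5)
Your categorical bookkeeping (tensor $\rho\otimes\sigma=\rho\circ_{(\theta,\varphi),\Lambda_0,\{\unit\}}\sigma$, strict unit $\pi_0$ via the uniqueness clause of Lemma \ref{lem9} (a), trivial associators, braiding $\epsilon_+^{(\Lambda_0)}$ with Lemmas \ref{lem30}, \ref{lem31}, \ref{lem416} giving the hexagons) is exactly the paper's route and is fine. The genuine gaps are in the step you yourself flag as the hard one. First, for both direct sums and subobjects you only verify the superselection criterion at the cone $\Lambda_0$. But membership in $\caO_{\Lambda_0}\subset\caO_0$ requires $\tau\vert_{\caA_{\Lambda^c}}\simeq_{u.e.}\pi_0\vert_{\caA_{\Lambda^c}}$ for \emph{every} cone $\Lambda$. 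For your direct sum $v_1\rho_1(\cdot)v_1^*+v_2\rho_2(\cdot)v_2^*$, the restriction to $\caA_{\Lambda^c}$ is unitarily equivalent to two copies of $\pi_0\vert_{\caA_{\Lambda^c}}$, and absorbing the doubling back into a single copy on $\caH$ requires a Cuntz pair of isometries in $\amf(\Lambda)=\pi_0(\caA_{\Lambda^c})'$ for each cone $\Lambda$ — i.e.\ proper infiniteness of these algebras cone by cone, not just at $\Lambda_0$; the paper's proof builds the intertwiner $W_\Lambda=u_\Lambda V_{\rho\Lambda}u_{\Lambda_0}^*+v_\Lambda V_{\sigma\Lambda}v_{\Lambda_0}^*$ from isometries $u_\Lambda,v_\Lambda\in\pi_0(\caA_\Lambda)''$ for every $\Lambda$. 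Second, the appeal to ``known structural results'' for proper infiniteness is unsubstantiated: this is precisely what the paper must prove, by showing that the gap forces $\pi_0(\caA_\Lambda)''$ to admit no normal tracial state (the argument of Lemma \ref{lem36}, comparing the trace with $\omega$ outside a finite region and testing the gap inequality on single-site matrix units) and that the non-trivial sector theory excludes type $I$ (Theorem \ref{no}, from \cite{NaOg}); both hypotheses are needed, and neither mechanism appears in your sketch.

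Third, your subobject step fails as stated even granting proper infiniteness. From $\amf(\Lambda_0)$ properly infinite (possibly type $II_\infty$) you cannot write an arbitrary nonzero projection $p\in(\rho,\rho)\subset\amf(\Lambda_0)$ as $VV^*$ with $V$ an isometry: in a type $II_\infty$ factor a finite projection is not equivalent to $\unit$. Showing that $p$ (and, for the criterion at a general cone $\Lambda$, the transported projection $p_\Lambda=V_{\rho\Lambda}\,p\,V_{\rho\Lambda}^*\in\amf(\Lambda)$) is equivalent to $\unit$ is where the paper's proof of Lemma \ref{lem41} does real work: the approximate Haag duality is used to conjugate $p_{\Gamma_\Lambda}$, up to a small unitary error, into a cone von Neumann algebra $\pi_0(\caA_{\Lambda_{-2\varepsilon_\Lambda}})''$, and then Lemmas \ref{lin}, \ref{nmd}, \ref{nmr}, \ref{nmr2} (a nonzero projection commuting with an infinite factor inside a larger factor is infinite, plus separability and Corollary 6.3.5 of \cite{KR}) force equivalence to $\unit$ and produce, cone by cone, the unitaries implementing the criterion for the compressed representation $V^*\rho(\cdot)V$. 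Without an argument of this kind — in particular without using the approximate Haag duality at this point — your construction only yields a representation equivalent, on each $\caA_{\Lambda^c}$, to a subrepresentation of $\pi_0\vert_{\caA_{\Lambda^c}}$, which is not yet the superselection criterion.
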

We already have tensor (i.e., composition) and braiding.
What is missing is direct sums and subrepresentations.
In order to do that we need to prove that $\pi_0(\caA_{\Lambda})''$
is properly infinite.
This requires different arguments than those in AQFT,
because $\Omega$ is not cyclic for $\pi_0(\caA_{\Lambda})''$.
We use the gap condition.
\begin{lem}\label{lem36}
Let $\Phi$ be a uniformly bounded finite range interaction on $\at$
and $\tau_{\Phi}$ the $C^*$-dynamics given by $\Phi$.
Let $\omega$ be a pure $\tau_\Phi$-ground state satisfying the gap condition:
there is a $\gamma>0$ such that
\begin{align}\label{gapcon}
-i\omega\lmk B^* \delta_\Phi(B)\rmk\ge \gamma\omega(B^*B),
\end{align}
for any $B\in \at$ in the domain of the generator $\delta_\Phi$ of $\tau_{\Phi}$ with $\omega(B)=0$.
Let $(\caH,\pi_0,\Omega)$ be a GNS representation of $\omega$.
Then for any cone $\Lambda$ in $\bbZ^2$, $\pi_0(\caA_\Lambda)''$ does not have
any normal tracial state.
\end{lem}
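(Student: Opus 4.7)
The plan is to argue by contradiction: a normal tracial state on $M:=\pi_0(\caA_\Lambda)''$ would force $M$ to be a factor of type $\mathrm{II}_{1}$, which the gap condition must then rule out.

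First I would verify that $M$ is a factor. Purity of $\omega$ gives irreducibility of $\pi_0$, so $\caB(\caH)=\pi_0(\at)''$ and hence $M\cap M'\subset \pi_0(\at)'=\bbC\cdot 1$. Since $\Lambda\cap\bbZ^{2}$ is infinite, $\caA_{\Lambda}$ is a UHF algebra, in particular simple and infinite dimensional, so $\pi_0\vert_{\caA_{\Lambda}}$ is faithful and $M$ is infinite dimensional. A factor carrying a normal tracial state is finite, and combined with being infinite dimensional this forces $M$ to be of type $\mathrm{II}_{1}$.

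To rule this out, I would construct the GNS Hamiltonian $H$ on $\caH$ as the positive self-adjoint operator with $H\Omega=0$ and $H\pi_0(A)\Omega=i\pi_0(\delta_\Phi(A))\Omega$ for $A$ in the domain of $\delta_\Phi$. Condition (\ref{gapcon}) yields $H\ge\gamma\,P_{\{\Omega\}^{\perp}}$, so
\[
\lV H x\Omega\rV^{2}\ge\gamma^{2}\lmk \lV x\Omega\rV^{2}-\lv\langle \Omega, x\Omega\rangle\rv^{2}\rmk
\]
for $x\Omega$ in the domain of $H$. In a type $\mathrm{II}_{1}$ factor one can produce a sequence of unitaries $u_{n}\in M$ tending weakly to $0$ (take a diffuse masa of $M$ and characters $u_{n}$ on it with $\tau_{M}(u_{n})\to 0$; then $\langle\Omega,u_{n}\Omega\rangle=\tau_{M}(\rho u_{n})\to 0$ by the Riemann--Lebesgue lemma, where $\rho$ is the density of $\omega_{\Omega}\vert_{M}$ with respect to $\tau_{M}$). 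The above inequality then gives $\lV H u_{n}\Omega\rV\ge\gamma/2$ for $n$ large. Since $H\Omega=0$, we have $H u_{n}\Omega=[H,u_{n}]\Omega$, and I would approximate $u_{n}$ in the strong operator topology by local contractions $u_{n}^{(k)}\in\caA_{\Lambda_{k}}$ with $\Lambda_{k}\subset\Lambda$ finite and exhausting $\Lambda$, for which $[H,u_{n}^{(k)}]\Omega=i\pi_0(\delta_\Phi(u_{n}^{(k)}))\Omega$ is explicitly given.

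The main obstacle lies in this last step: bounding $\lV\pi_0(\delta_\Phi(u_{n}^{(k)}))\Omega\rV$ uniformly in $k$ so as to conclude that $\lV Hu_{n}\Omega\rV$ can be made arbitrarily small, contradicting the lower bound. I expect to split $\delta_\Phi$ into a bulk part interior to $\Lambda$ (which, once $\Lambda_{k}$ sits deep inside $\Lambda$, commutes with $u_{n}^{(k)}$) and a boundary part supported near $\partial\Lambda$, and to exploit the finite range of $\Phi$ together with vacuum annihilation to keep only the bounded boundary contribution. An alternative route I would try, avoiding pathwise control of $H$, is a folium/disjointness argument: normality of $\tau_{M}$ would put the unique trace of $\caA_{\Lambda}$ in the folium of $\omega\vert_{\caA_{\Lambda}}$, yet this restriction of a pure gapped ground state is disjoint from the infinite-temperature trace by the exponential clustering of Hastings--Koma, giving the contradiction directly from the gap.
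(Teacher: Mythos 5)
Your reduction to ruling out a type $II_1$ factor is fine (a normal tracial state on the factor $\pi_0(\caA_\Lambda)''$ forces finiteness, and infinite-dimensionality then forces type $II_1$), but the step you yourself flag as the ``main obstacle'' is a genuine gap, and the way you propose to close it would fail. For a unitary $u_n$ coming from a diffuse masa of $M=\pi_0(\caA_\Lambda)''$ there is no reason that $u_n\Omega$ lies in the domain (or even the form domain) of the GNS Hamiltonian $H$, and no reason that its energy is small: approximating $u_n$ strongly by local contractions $u_n^{(k)}\in\caA_{\Lambda_k}$ does not help, because $\delta_\Phi(u_n^{(k)})=\sum_{X\cap\Lambda_k\neq\emptyset}[\Phi(X),u_n^{(k)}]$ is supported throughout a neighborhood of $\Lambda_k$, not near $\partial\Lambda$; the ``bulk'' interaction terms inside $\Lambda_k$ do not commute with a generic $u_n^{(k)}\in\caA_{\Lambda_k}$, so $\lV\pi_0(\delta_\Phi(u_n^{(k)}))\Omega\rV$ typically grows with $|\Lambda_k|$ and cannot be bounded uniformly in $k$ (nor does strong convergence $u_n^{(k)}\to u_n$ give convergence of $Hu_n^{(k)}\Omega$, since $H$ is unbounded). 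Your fallback route is also not a proof: the unique trace of the UHF algebra $\caA_\Lambda$ is a product state and itself clusters, so exponential clustering of $\omega$ (Hastings--Koma) does not imply disjointness of $\omega\vert_{\caA_\Lambda}$ from the trace; asserting that disjointness is essentially asserting the lemma, which is what has to be proved.

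The paper closes exactly this gap by keeping all test observables single-site, so that $\lV\delta_\Phi(B)\rV\le c_1\lV B\rV$ uniformly. Assuming a normal trace $\tau$ on $M$, the restriction $\tau\circ\pi_0\vert_{\caA_\Lambda}$ is normal, hence (factoriality) quasi-equivalent to $\omega\vert_{\caA_\Lambda}$, and by Corollary 2.6.11 of \cite{BR1} the two states are $\varepsilon$-close in norm on $\caA_{\Lambda\setminus S}$ for some finite $S$. Testing with a single-site off-diagonal matrix unit $X^{(x)}_{\eta_x,\xi_x}$ at a site $x$ far from $S\cup\Lambda^c$, the gap condition applied to $X$ and $X^*$ gives
\begin{align*}
-i\omega\lmk X^*\delta_\Phi(X)\rmk+i\omega\lmk\delta_\Phi(X)X^*\rmk\ge\frac{\gamma}{d},
\end{align*}
whereas the tracial property makes the corresponding expression under $\tau\circ\pi_0$ vanish, and the $\varepsilon$-closeness transfers this to $\omega$ up to $2\varepsilon c_1$; choosing $\varepsilon<\gamma/(4c_1d)$ yields the contradiction. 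If you want to salvage your GNS-Hamiltonian picture, you would need some such local-agreement-at-infinity input plus a uniformly bounded family of test operators; without it, the energy of $u_n\Omega$ is uncontrolled and the contradiction never materializes.
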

\begin{proof}
Suppose that there is a cone $\Lambda$ such that $\pi_0(\caA_\Lambda)''$ has
a normal tracial state $\tau$.
We derive a contradiction out of it.

For any $x\in\bbZ^2$, there are unit vectors $\xi_x,\eta_x\in \bbC^d$
such that 
\begin{align}\label{xchoice}
\omega\lmk X_{\eta_x,\xi_x}^{(x)}\rmk=0,\quad
\omega\lmk X_{\xi_x,\xi_x}^{(x)}\rmk\ge \frac 1d.
\end{align}
Here $X_{\zeta_1,\zeta_2}^{(x)}$ denotes an element in $\caA_{\{x\}}\simeq\Mat_d$
such that $X_{\zeta_1,\zeta_2}^{(x)}\zeta=\braket{\zeta_2}{\zeta}\zeta_1$.
To see this, let $\{e_{ij}^{(x)}\}$ be a system of matrix units in $\caA_{\{x\}}$.
Because $\sum_{i=1}^d\omega(e_{ii}^{(x)})=1$,
there is $i=1,\ldots,d$ such that $\omega(e_{ii}^{(x)})\ge \frac 1d$.
We set $\xi_x\in \bbC^d$ a unit vector corresponding to this projection
$e_{ii}^{(x)}$, i.e., $X_{\xi_x,\xi_x}^{(x)}=e_{ii}^{(x)}$.
Let $\eta_x$ be a unit vector perpendicular to $D\xi_x$,
where $D$ is the reduced density matrix of $\omega$ onto $\caA_{\{x\}}$.
Then we see these vectors satisfy the condition (\ref{xchoice}).

We also note that because our interaction is uniformly bounded and finite range,
there is some $c_1>0$ such that 
\begin{align}
\lV\delta_{\Phi}(B)\rV\le c_1\lV B\rV, \quad \text{for all}\quad
x\in\bbZ^2,\quad B\in \caA_{\{x\}}.
\end{align}
From now on, we fix $0<\varepsilon<\frac{\gamma}{4c_1 d}$.

Note that because $\omega$ is pure, $\pza{}$ is a factor.
To see this, note that
\begin{align}
\pza{}\cap\pzac{}\;\;\subset\;\; \pi_0(\caA_{\Lambda^c})'\cap \pzac{}\;\;\subset\;\;
\pi_0(\at)'=\bbC\unit.
\end{align}
By definition, $\tau\circ\pi_0\vert_{\caA_\Lambda}$ is a $\pi_0\vert_{\caA_\Lambda}$-normal state.
Because $\pza{}$ is a factor, this means that $\tau\circ\pi_0\vert_{\caA_\Lambda}$ and $\omega\vert_{\caA_\Lambda}$
are quasi-equivalent.
From Corollary 2.6.11 of \cite{BR1}, this means
there exists a finite subset $S$ of $\Lambda\cap\bbZ^2$
such that 
\begin{align}\label{149}
\lv
\tau\circ\pi_0(B)-\omega(B)
\rv<\varepsilon\lV B\rV,\quad
B\in \caA_{\Lambda\setminus S},
\end{align}
for our fixed $\varepsilon$.

Suppose that $\Phi$ has a range less than $m\in\bbN$.
Let $S_m$ be a set of points in $\bbZ^2$
whose distance from $S\cup\Lambda^c$ is less than or equal to $m$.
Let $x\in \bbZ^2\setminus \lmk S_m\rmk$.
For this $x$, $X_{\eta_x,\xi_x}^{(x)},  \lmk X_{\eta_x,\xi_x}^{(x)}\rmk^*$ belongs to the domain of $\delta_\Phi$
and $\omega\lmk X_{\eta_x,\xi_x}^{(x)}\rmk=\omega\lmk \lmk X_{\eta_x,\xi_x}^{(x)}\rmk^*\rmk=0$.
Therefore, by the gap condition (\ref{gapcon}), we have
\begin{align}\label{151}
\begin{split}
&-i\omega\lmk \lmk X_{\eta_x,\xi_x}^{(x)}\rmk^* \delta_{\Phi}\lmk X_{\eta_x,\xi_x}^{(x)}\rmk\rmk
+i\omega \lmk\delta_{\Phi}\lmk X_{\eta_x,\xi_x}^{(x)}\rmk \lmk X_{\eta_x,\xi_x}^{(x)}\rmk^*\rmk \\
&=-i\omega\lmk \lmk X_{\eta_x,\xi_x}^{(x)}\rmk^* \delta_{\Phi}\lmk X_{\eta_x,\xi_x}^{(x)}\rmk\rmk
-i\omega \lmk X_{\eta_x,\xi_x}^{(x)} \delta_{\Phi}\lmk X_{\eta_x,\xi_x}^{(x)}\rmk^*\rmk 
\\
&\ge \gamma\omega\lmk \lmk X_{\eta_x,\xi_x}^{(x)}\rmk^*  X_{\eta_x,\xi_x}^{(x)}\rmk
+\gamma\omega\lmk X_{\eta_x,\xi_x}^{(x)}\lmk X_{\eta_x,\xi_x}^{(x)}\rmk^*  \rmk
\ge \gamma\omega\lmk X_{\xi_x,\xi_x}^{(x)}\rmk
\ge \frac{\gamma} d.
\end{split}
\end{align}
In the equality, we used the fact that $\omega$ is $\tau_\Phi$-invariant.
On the other hand,
by tracial property of $\tau$, we have
\begin{align}\label{152}
\begin{split}
&\lv
-i\omega\lmk \lmk X_{\eta_x,\xi_x}^{(x)}\rmk^* \delta_{\Phi}\lmk X_{\eta_x,\xi_x}^{(x)}\rmk\rmk
+i\omega \lmk\delta_{\Phi}\lmk X_{\eta_x,\xi_x}^{(x)}\rmk \lmk X_{\eta_x,\xi_x}^{(x)}\rmk^*\rmk
\rv\\
&\le
\lv
-i\omega\lmk \lmk X_{\eta_x,\xi_x}^{(x)}\rmk^* \delta_{\Phi}\lmk X_{\eta_x,\xi_x}^{(x)}\rmk\rmk
+i\tau\circ\pi_0\lmk \lmk X_{\eta_x,\xi_x}^{(x)}\rmk^* \delta_{\Phi}\lmk X_{\eta_x,\xi_x}^{(x)}\rmk\rmk
\rv\\
&+\lv
-i\tau\circ\pi_0\lmk \lmk X_{\eta_x,\xi_x}^{(x)}\rmk^* \delta_{\Phi}\lmk X_{\eta_x,\xi_x}^{(x)}\rmk\rmk
+i\tau\circ\pi_0\lmk\delta_{\Phi}\lmk X_{\eta_x,\xi_x}^{(x)}\rmk \lmk X_{\eta_x,\xi_x}^{(x)}\rmk^*\rmk
\rv\\
&+
\lv-i\tau\circ\pi_0\lmk\delta_{\Phi}\lmk X_{\eta_x,\xi_x}^{(x)}\rmk \lmk X_{\eta_x,\xi_x}^{(x)}\rmk^*\rmk
+i\omega \lmk\delta_{\Phi}\lmk X_{\eta_x,\xi_x}^{(x)}\rmk \lmk X_{\eta_x,\xi_x}^{(x)}\rmk^*\rmk
\rv\\
&\le
\varepsilon \lV \lmk X_{\eta_x,\xi_x}^{(x)}\rmk^* \delta_{\Phi}\lmk X_{\eta_x,\xi_x}^{(x)}\rmk \rV
+\varepsilon \lV\delta_{\Phi}\lmk X_{\eta_x,\xi_x}^{(x)}\rmk \lmk X_{\eta_x,\xi_x}^{(x)}\rmk^*
\rV\\
&\le 2\varepsilon c_1.
\end{split}
\end{align}
Note that $\delta_{\Phi}\lmk X_{\eta_x,\xi_x}^{(x)}\rmk$ belongs to
$\caA_{\Lambda\setminus S}$ because of our choice of $x\in \bbZ^2\setminus S_m$.
We applied (\ref{149}).
Combining (\ref{151}) and (\ref{152}), we obtain
$\frac{\gamma} d\le 2\varepsilon c_1$, which contradicts to
our choice of $\varepsilon$.
\end{proof}
The following Theorem is proven in \cite{NaOg}.
\begin{thm}\label{no}
Consider Setting \ref{setni}. Suppose that 
$\pi_0$ has a nontrivial sector theory.
Then for any cone $\Lambda$, $\pza{}$ is not of type $I$.
\end{thm}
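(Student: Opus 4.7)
The approach is to prove the contrapositive: assume $\pi_0(\caA_\Lambda)''$ is a type $I$ factor for some cone $\Lambda$, and deduce that every $\pi\in\caO_0$ is quasi-equivalent to $\pi_0$, forcing the sector theory to be trivial. As a preliminary, $\pi_0(\caA_\Lambda)''$ is always a factor, because
\begin{align*}
\pi_0(\caA_\Lambda)''\cap \pi_0(\caA_\Lambda)'
\;\subset\; \pi_0(\caA_{\Lambda^c})'\cap \pi_0(\caA_{\Lambda^c})''
\;\subset\; \pi_0(\at)'=\bbC\unit,
\end{align*}
where the last equality is the irreducibility of $\pi_0$ (from Setting \ref{setni}). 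Under the type $I$ hypothesis we then obtain a spatial isomorphism $\caH\cong \caK_1\otimes\caK_2$ with $\pi_0(\caA_\Lambda)''=\caB(\caK_1)\otimes\unit$ and $\pi_0(\caA_\Lambda)'=\unit\otimes \caB(\caK_2)$, and in particular $\pi_0(\caA_{\Lambda^c})''\subset \unit\otimes \caB(\caK_2)$.

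Given $\pi\in \caO_0$, apply the superselection criterion to $\Lambda$ to produce a unitary $V$ with $\Ad(V)\circ\pi\vert_{\caA_{\Lambda^c}}=\pi_0\vert_{\caA_{\Lambda^c}}$; since the quasi-equivalence class is unchanged we may replace $\pi$ by $\Ad(V)\circ\pi$, so that $\pi$ and $\pi_0$ agree on $\caA_{\Lambda^c}$. Then $\pi(\caA_\Lambda)''$ commutes with $\pi_0(\caA_{\Lambda^c})''$, and the goal is to construct a unitary $U\in \pi_0(\caA_{\Lambda^c})'$ satisfying $\Ad(U)\circ\pi=\pi_0$ on all of $\at$. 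The point is that inside the tensor factor $\caB(\caK_1)$ we have two candidate normal representations of $\caA_\Lambda$, one being $\pi_0\vert_{\caA_\Lambda}$ (which generates $\caB(\caK_1)\otimes\unit$) and a second built from $\pi\vert_{\caA_\Lambda}$ after compression. If these two are shown quasi-equivalent, the uniqueness (up to multiplicity) of normal representations of the type $I$ factor $\caB(\caK_1)$ provides the sought unitary $U$, contradicting the nontriviality assumption.

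The main obstacle is that $\pi(\caA_\Lambda)''$ a priori lies only in $\pi_0(\caA_{\Lambda^c})'$, which without Haag duality may strictly contain $\pi_0(\caA_\Lambda)''=\caB(\caK_1)\otimes\unit$. To close this gap, the plan is to apply the superselection criterion to a decreasing family of cones $\Lambda_n^c\subset \Lambda^c$ (equivalently, to an enlarging family of cones $\Lambda_n\supset\Lambda$), producing implementing unitaries $V_n$ intertwining $\pi$ and $\pi_0$ on each $\caA_{\Lambda_n^c}$, and then passing to a limit. The type $I$ structure is essential here: it makes the intersection of commutants $\cap_n \pi_0(\caA_{\Lambda_n^c})'$ spatially rigid, so that the asymptotic intertwiners $V_n$ assemble, via a normality/compactness argument, into a genuine unitary $U$ realising the required equivalence. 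This limiting argument, which leverages the 2D cone geometry to control the non-local tails of $\pi$, is the technical heart of the proof and is carried out in \cite{NaOg}.
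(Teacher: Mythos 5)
The paper does not actually prove Theorem \ref{no}: it is imported verbatim from \cite{NaOg} (``The following Theorem is proven in \cite{NaOg}''). Your proposal, read closely, does the same thing at the decisive moment: the entire content of the theorem is concentrated in your final step, where the implementing unitaries $V_n$ for a growing family of cones are supposed to ``assemble, via a normality/compactness argument, into a genuine unitary $U$'', and you explicitly defer that step to \cite{NaOg}. So what you have is a plan, not a proof, and the plan itself is doubtful on two counts. First, the target is too strong: triviality of the sector theory only requires every $\pi\in\caO_0$ to be quasi-equivalent to $\pi_0$, and a unitary $U$ with $\Ad(U)\circ\pi=\pi_0$ cannot exist in general (nothing in the superselection criterion prevents $\pi$ from being a proper multiple of $\pi_0$, which happens as soon as $\pi_0(\caA_{\Lambda^c})'$ is properly infinite for every cone); any correct argument has to aim at quasi-equivalence. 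Second, the proposed mechanism has no support: weak limits of unitaries need not be unitary, nothing in Setting \ref{setni} gives convergence of the $V_n$ in any useful topology, and ``$\cap_n\pi_0(\caA_{\Lambda_n^c})'$ is spatially rigid'' is not a usable statement. A smaller but real slip: your factoriality chain uses $\pi_0(\caA_\Lambda)'\subset\pi_0(\caA_{\Lambda^c})''$, which is exactly a Haag-duality-type inclusion that is not available; the correct chain (the one the paper uses inside the proof of Lemma \ref{lem36}) is $\pi_0(\caA_\Lambda)''\cap\pi_0(\caA_\Lambda)'\subset\pi_0(\caA_{\Lambda^c})'\cap\pi_0(\caA_\Lambda)'\subset\pi_0(\at)'=\bbC\unit$.

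If you want a self-contained argument, the gap can be closed without any limiting procedure, using the criterion at a second cone. Assume $\pi_0(\caA_\Lambda)''=\caB(\caK_1)\otimes\unit$ on $\caH\cong\caK_1\otimes\caK_2$, so $\pi_0\vert_{\caA_\Lambda}=\rho_1\otimes\unit$ with $\rho_1$ irreducible and $\pi_0\vert_{\caA_{\Lambda^c}}=\unit\otimes\rho_2$. Given $\pi\in\caO_0$, adjust by a unitary so that $\pi\vert_{\caA_{\Lambda^c}}=\pi_0\vert_{\caA_{\Lambda^c}}$. Now pick a cone $\tilde\Lambda\subset\Lambda^c$; the superselection criterion for $\tilde\Lambda$ gives $\pi\vert_{\caA_{\tilde\Lambda^c}}\simeq_{u.e.}\pi_0\vert_{\caA_{\tilde\Lambda^c}}$, and restricting this unitary equivalence to $\caA_\Lambda\subset\caA_{\tilde\Lambda^c}$ yields $\pi\vert_{\caA_\Lambda}\simeq_{u.e.}\pi_0\vert_{\caA_\Lambda}$, a multiple of the irreducible $\rho_1$. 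Hence $\pi(\caA_\Lambda)''$ is a type $I$ factor, $\caH\cong\caH_1\otimes\caH_2$ with $\pi\vert_{\caA_\Lambda}=\sigma_1\otimes\unit$, $\sigma_1\simeq_{u.e.}\rho_1$, and $\pi\vert_{\caA_{\Lambda^c}}=\unit\otimes\sigma_2$; comparing with $\pi\vert_{\caA_{\Lambda^c}}=\unit_{\caK_1}\otimes\rho_2$ identifies a multiple of $\sigma_2$ with a multiple of $\rho_2$, so $\sigma_2$ is quasi-equivalent to $\rho_2$, and therefore $\pi\cong\sigma_1\otimes\sigma_2$ is quasi-equivalent to $\rho_1\otimes\rho_2\cong\pi_0$. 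Some argument of this kind (it is the point of the citation to \cite{NaOg}) is what your proposal would need to contain; as written, the theorem is left unproven.
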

\begin{lem}\label{lem37}
In Setting \ref{catset},
for any cone $\Lambda$, $\pza{}$ is either type $II_\infty$
or type $III$ factor.
Furthermore, $\amf(\Lambda)=\pi_0(\caA_{\Lambda^c})'$
is either type $II_\infty$
or type $III$ factor.
\end{lem}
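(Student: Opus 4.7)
I combine factor-ness with the three tools at hand: non-trivial sector theory (Theorem~\ref{no}), the gap condition (Lemma~\ref{lem36}), and approximate Haag duality.

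First I check that both $\pi_0(\caA_\Lambda)''$ and $\amf(\Lambda)=\pi_0(\caA_{\Lambda^c})'$ are factors. Purity of $\omega$ makes $\pi_0$ irreducible, so $\pi_0(\at)'=\bbC\unit$, and
\[
\pi_0(\caA_\Lambda)''\cap\pi_0(\caA_\Lambda)'\subset \pi_0(\caA_\Lambda)'\cap\pi_0(\caA_{\Lambda^c})'=\lmk\pi_0(\caA_\Lambda)\vee\pi_0(\caA_{\Lambda^c})\rmk'=\bbC\unit.
\]
Hence $\pi_0(\caA_\Lambda)''$ is a factor; the same calculation with $\Lambda$ and $\Lambda^c$ exchanged gives that $\pi_0(\caA_{\Lambda^c})''$ is a factor, and so, therefore, is its commutant $\amf(\Lambda)$.

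For $\pi_0(\caA_\Lambda)''$ the conclusion is then immediate: Theorem~\ref{no} (applicable by the non-trivial sector assumption built into Setting~\ref{catset}) rules out type~$I$, while Lemma~\ref{lem36} (via the gap condition) rules out any normal tracial state and hence forbids types $I_n$ and $II_1$. Only $II_\infty$ and $III$ remain; in particular $\pi_0(\caA_\Lambda)''$ is properly infinite.

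For $\amf(\Lambda)$ the containment $\pi_0(\caA_\Lambda)''\subseteq\amf(\Lambda)$ with common identity propagates proper infiniteness: $\unit$, being infinite in the properly infinite subfactor $\pi_0(\caA_\Lambda)''$, stays infinite in $\amf(\Lambda)$. A normal tracial state on $\amf(\Lambda)$ would restrict to a normal tracial state on $\pi_0(\caA_\Lambda)''$, contradicting Lemma~\ref{lem36}, so $\amf(\Lambda)$ admits no normal tracial state; the candidates collapse to $I_\infty$, $II_\infty$, or $III$. To eliminate $I_\infty$, I invoke approximate Haag duality in the form $\amf(\Lambda)\subset\Ad(U_{\Lambda,\varepsilon})\pi_0(\caA_{\tilde\Lambda})''$ for the cone $\tilde\Lambda=\lmk\Lambda-R_{|\arg\Lambda|,\varepsilon}\bm e_\Lambda\rmk_\varepsilon\in\ctv$, and apply Theorem~\ref{no} to $\tilde\Lambda$, by which $\pi_0(\caA_{\tilde\Lambda})''$ has no non-zero abelian projection. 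A hypothetical minimal projection $p\in\amf(\Lambda)$ would conjugate to $q:=\Ad(U_{\Lambda,\varepsilon}^*)p\in\pi_0(\caA_{\tilde\Lambda})''$ with $q\,\Ad(U_{\Lambda,\varepsilon}^*)(\amf(\Lambda))\,q=\bbC q$, and the norm approximations $\tilde U_{\Lambda,\varepsilon,\delta,t}\in\pi_0(\caA_{\Lambda_{\varepsilon+\delta}-t\bm e_\Lambda})''$ of Definition~\ref{assum7}, controlled by $f_{\varphi,\varepsilon,\delta}(t)\to 0$ (cf.\ Lemma~\ref{lem2t6}), are used to upgrade this algebraic minimality from $\Ad(U_{\Lambda,\varepsilon}^*)(\amf(\Lambda))$ to the slightly larger $\pi_0(\caA_{\tilde\Lambda})''$, yielding the desired contradiction.

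The main obstacle is precisely this last transfer. With exact Haag duality the inclusion $\amf(\Lambda)\subset\Ad(U_{\Lambda,\varepsilon})\pi_0(\caA_{\tilde\Lambda})''$ would be an equality and the promotion of minimality would be automatic; with only approximate Haag duality one must push the algebraic equality $q\,\Ad(U_{\Lambda,\varepsilon}^*)(\amf(\Lambda))\,q=\bbC q$ into an honest statement about the bigger $\pi_0(\caA_{\tilde\Lambda})''$ using the quantitative norm control on the approximating unitaries supplied by Definition~\ref{assum7}.
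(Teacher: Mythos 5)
Your factoriality argument, the use of Theorem~\ref{no} and Lemma~\ref{lem36} to place $\pi_0(\caA_\Lambda)''$ in type $II_\infty$ or $III$, and the trace-restriction argument killing the finite types for $\amf(\Lambda)$ all match the paper's proof and are fine. The problem is the step you yourself flag: eliminating type $I_\infty$ for $\amf(\Lambda)$. The transfer you propose cannot work as described. A projection that is minimal (equivalently abelian) in a von Neumann \emph{sub}algebra has no reason to be abelian in the ambient algebra, and the inclusion $\Ad\lmk U_{\Lambda,\varepsilon}^*\rmk\lmk\amf(\Lambda)\rmk\subset\pi_0\lmk\caA_{\tilde\Lambda}\rmk''$ supplied by approximate Haag duality is genuinely proper: $\tilde\Lambda=\lmk\Lambda-R_{|\arg\Lambda|,\varepsilon}\bm e_\Lambda\rmk_\varepsilon$ is a fattened, backward-translated cone, so the right-hand algebra is much larger. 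The approximants $\tilde U_{\Lambda,\varepsilon,\delta,t}$ of Definition~\ref{assum7} only measure how well $U_{\Lambda,\varepsilon}$ is localized; they do not shrink the gap between the two algebras, so no amount of norm control lets you upgrade $q\,\Ad\lmk U_{\Lambda,\varepsilon}^*\rmk\lmk\amf(\Lambda)\rmk\,q=\bbC q$ to $q\,\pi_0\lmk\caA_{\tilde\Lambda}\rmk''\,q=\bbC q$. So the contradiction with Theorem~\ref{no} is never reached, and this is precisely the statement the lemma needs (the trace argument already disposes of $I_n$ and $II_1$, but not of $I_\infty$).

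The paper closes this step by a different and simpler route, which does not use approximate Haag duality at all: since $\pi_0(\caA_{\Lambda^c})''$ is a factor, so is its commutant $\amf(\Lambda)$, and a factor is of type $I$ if and only if its commutant is of type $I$ (Takesaki \cite{takesaki}, Ch.~V). Hence it suffices to know that $\pi_0(\caA_{\Lambda^c})''$ is not of type $I$, which is what the non-trivial sector theory provides via \cite{NaOg}: if $\pi_0(\caA_{\Lambda^c})''$ were type $I$, then $\amf(\Lambda)=\pi_0(\caA_{\Lambda^c})'$ would be a type $I$ factor sitting between $\pi_0(\caA_\Lambda)''$ and $\pi_0(\caA_{\Lambda^c})'$, i.e.\ the split property for the cone $\Lambda$ would hold, forcing the sector theory to be trivial. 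If you want to keep your structure, replace your conjugation argument by this commutant argument; everything else in your write-up can stand.
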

\begin{proof}
Let $\Lambda$ be a cone.
Recall from the proof of Lemma \ref{lem36} that $\pza{}$ is a factor.
By Corollary 1.20 V of \cite{takesaki},
$\pza{}$ should be either type $I$, type $II_1$, type $II_\infty$ or type $III$ factor.
From Theorem \ref{no}, it is not of type $I$.
From Lemma \ref{lem36}, $\pza{}$ does not have a normal tracial state.
From Theorem 2.4 V \cite{takesaki}, this means that $\pza{}$ is not finite.
Therefore, it is not type $II_1$.

Because $\pi_0(\al_{\Lambda^c})''$ is factor, $\pi_0(\al_{\Lambda^c})'$ is factor. 
By Corollary 2.24 V of \cite{takesaki}, $\amf(\Lambda)=\pi_0(\al_{\Lambda^c})'$ 
is not type $I$ because $\pi_0(\al_{\Lambda^c})''$ is not.
If $\pi_0(\al_{\Lambda^c})'$ is type $II_1$, 
it admits sufficiently many finite normal trace. (Theorem 2.4 V\cite{takesaki}.)
From 
$
\pza{}\subset \pi_0(\al_{\Lambda^c})'
$, this means $\pza{}$ admits sufficiently many finite normal trace, which is not true because 
$\pza{}$ is not finite.
Therefore, $\amf(\Lambda)=\pi_0(\caA_{\Lambda^c})'$ is 
either type $II_\infty$
or type $III$ factor.
\end{proof}
\begin{lem}\label{lem38}
For any $\rho,\sigma\in \caO_{\Lambda_0}$, we have
\begin{align}
(\rho,\sigma)\subset 
\lmk \trlz\rho,\trlz\sigma\rmk\subset\caB_{(\theta,\varphi)}.
\end{align}
\end{lem}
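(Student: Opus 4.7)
The plan is to prove the two inclusions separately, each by a routine application of the properties of the extensions $\trlz{\rho}$ and $\trlz{\sigma}$ established in Lemma \ref{lem9}. The key observation underlying both inclusions is that for $\eta\in\caO_{\Lambda_0}$, the defining property $\eta\vert_{\caA_{\Lambda_0^c}}=\pi_0\vert_{\caA_{\Lambda_0^c}}$ means that $\unit\in\caV_{\eta,\Lambda_0}$, so Setting \ref{catset} is exactly the choice $\bar V_{\eta,\Lambda_0}=\unit=V_{\eta,\Lambda_0}$, allowing direct use of parts (b) and (d) of Lemma \ref{lem9}.

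For the first inclusion $(\rho,\sigma)\subset(\trlz{\rho},\trlz{\sigma})$, I would start from Lemma \ref{lem9}(ii), which in our setting gives $\trlz{\rho}\circ\pi_0=\rho$ and $\trlz{\sigma}\circ\pi_0=\sigma$ on $\at$. Thus for any $R\in(\rho,\sigma)$ and $A\in\at$, the intertwining relation $R\,\trlz{\rho}(\pi_0(A))=\trlz{\sigma}(\pi_0(A))\,R$ holds immediately. To extend this to all of $\btv$, note that for fixed $R$, the maps $x\mapsto Rx$ and $x\mapsto xR$ on $\caB(\caH)$ are $\sigma$w-continuous, and $\trlz{\rho}$, $\trlz{\sigma}$ are $\sigma$w-continuous on each $\pza{\Lambda}$, $\Lambda\in\ctv$, by Lemma \ref{lem9}(i). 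Since $\pi_0(\caA_\Lambda)$ is $\sigma$w-dense in $\pza{\Lambda}$, the intertwining relation extends to each $\pza{\Lambda}$, and then by norm density of $\caB_0=\cup_{\Lambda\in\ctv}\pza{\Lambda}$ in $\btv$ (Lemma \ref{lemhoshi}), to all of $\btv$. This is essentially the same argument as in Lemma \ref{lem16}.

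For the second inclusion $(\trlz{\rho},\trlz{\sigma})\subset\btv$, I would apply Lemma \ref{lem9}(d), which gives $\trlz{\rho}\circ\pi_0\vert_{\caA_{\Lambda_0^c}}=\pi_0\vert_{\caA_{\Lambda_0^c}}=\trlz{\sigma}\circ\pi_0\vert_{\caA_{\Lambda_0^c}}$. Then for any $R\in(\trlz{\rho},\trlz{\sigma})$ and any $A\in\caA_{\Lambda_0^c}$,
\begin{align*}
R\,\pi_0(A)=R\,\trlz{\rho}(\pi_0(A))=\trlz{\sigma}(\pi_0(A))\,R=\pi_0(A)\,R,
\end{align*}
so $R\in\pi_0(\caA_{\Lambda_0^c})'=\amf(\Lambda_0)$. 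Since $\Lambda_0\in\ctv$, the definition (\ref{btvdef}) of $\btv$ gives $\amf(\Lambda_0)\subset\btv$, completing the inclusion.

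Neither step presents a real obstacle; the proof is essentially a transcription of Lemmas \ref{lem16} and \ref{lem17} to the current setting, where the simplification $\bar V_{\eta,\Lambda_0}=\unit$ makes the arguments slightly cleaner. The only point worth emphasizing is the use of $\sigma$w-continuity of left/right multiplication together with Lemma \ref{lem9}(i) in the extension of the intertwining from $\pi_0(\at)$ to $\btv$.
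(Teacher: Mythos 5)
Your proposal is correct and follows essentially the same route as the paper: the first inclusion is obtained exactly as you describe from Lemma \ref{lem9}(ii) with $\bar V_{\eta,\Lambda_0}=\unit$ plus $\sigma$w-continuity on each $\pza{}$ and norm density of $\caB_0$, and for the second inclusion the paper simply cites Lemma \ref{lem17} (with $\Lambda_1=\Lambda_2=\Lambda_0$), whose proof is precisely the Lemma \ref{lem9}(d) argument you wrote out.
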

\begin{proof}
For any $R\in (\rho,\sigma)$, we have
\begin{align}
R\trlz\rho\lmk \pi_0(A)\rmk=R\rho(A)=\sigma(A)R=\trlz\sigma\lmk \pi_0(A)\rmk R,
\end{align}
for all $A\in\at$
using Lemma \ref{lem9} (ii) with $\bar V_{\rho\Lambda_0}=\unit$.
By the $\sigma$w-continuity of $\trlz\rho$, $\trlz\sigma$ on 
$\pi_0(\caA_{\Lambda})''$ this is extended to $\caB_0$ of Lemma \ref{lemhoshi}, 
and from 
Lemma \ref{lemhoshi}, this extends to $\btv$.
The last inclusion is by Lemma \ref{lem17}.
\end{proof}

\begin{lem}\label{directsum}
Consider Setting \ref{catset}.
For any $\rho,\sigma\in\caO_{\Lambda_{0}}$,
there exists a $\tau\in \caO_{\Lambda_{0}}$, 
and isometries $u\in (\rho,\tau)$,
$v\in (\sigma,\tau)$ such that $uu^{*}+vv^{*}=\unit$.
\end{lem}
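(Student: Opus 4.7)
The plan is to realize $\tau$ as an internal direct sum of $\rho$ and $\sigma$ carried by isometries in $\pi_0(\caA_{\Lambda_0})''$. By Lemma \ref{lem37}, $\pi_0(\caA_{\Lambda_0})''$ is a factor of type $II_\infty$ or $III$, hence properly infinite, so there exist isometries $u,v\in\pi_0(\caA_{\Lambda_0})''$ with $u^*u=v^*v=\unit$, $uu^*+vv^*=\unit$, and consequently $uu^*\perp vv^*$, i.e.\ $v^*u=u^*v=0$. Define
\[
\tau(A):=u\rho(A)u^*+v\sigma(A)v^*,\qquad A\in\at.
\]

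Orthogonality of the ranges of $u$ and $v$ immediately gives $\tau(A)u=u\rho(A)$ and $\tau(A)v=v\sigma(A)$, so $\tau$ is a $*$-homomorphism and $u\in(\rho,\tau)$, $v\in(\sigma,\tau)$ are isometries with $uu^*+vv^*=\unit$. To place $\tau$ in $\caO_{\Lambda_0}$, observe that for $A\in\caA_{\Lambda_0^c}$ one has $\rho(A)=\sigma(A)=\pi_0(A)$ (because $\rho,\sigma\in\caO_{\Lambda_0}$), and $u,v\in\pi_0(\caA_{\Lambda_0})''\subset\pi_0(\caA_{\Lambda_0^c})'$ commute with $\pi_0(A)$; hence $\tau(A)=(uu^*+vv^*)\pi_0(A)=\pi_0(A)$, so $\tau\vert_{\caA_{\Lambda_0^c}}=\pi_0\vert_{\caA_{\Lambda_0^c}}$.

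The main obstacle is verifying that $\tau$ still satisfies the superselection criterion, i.e.\ $\tau\vert_{\caA_{\Lambda^c}}\simeq_{u.e.}\pi_0\vert_{\caA_{\Lambda^c}}$ for every cone $\Lambda$. For any such $\Lambda$, choose $V_\rho\in\caV_{\rho,\Lambda}$ and $V_\sigma\in\caV_{\sigma,\Lambda}$, and set $S_\rho:=V_\rho u^*$, $S_\sigma:=V_\sigma v^*$. Using $v^*u=u^*v=0$ one checks that $S_\rho S_\rho^*=S_\sigma S_\sigma^*=\unit$, $S_\rho S_\sigma^*=0$, and $S_\rho^*S_\rho+S_\sigma^*S_\sigma=uu^*+vv^*=\unit$, while $S_\eta\tau(A)=\pi_0(A)S_\eta$ for $\eta\in\{\rho,\sigma\}$ and $A\in\caA_{\Lambda^c}$. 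Thus $S_\rho^*, S_\sigma^*$ are two isometries $\pi_0\vert_{\caA_{\Lambda^c}}\to\tau\vert_{\caA_{\Lambda^c}}$ with orthogonal ranges summing to $\unit$, proving $\tau\vert_{\caA_{\Lambda^c}}\simeq_{u.e.}\pi_0\vert_{\caA_{\Lambda^c}}\oplus\pi_0\vert_{\caA_{\Lambda^c}}$. To collapse this doubling, I would invoke Lemma \ref{lem37} a second time applied to $\amf(\Lambda)=\pi_0(\caA_{\Lambda^c})'$, which is also a properly infinite factor, producing isometries $w_1,w_2\in\pi_0(\caA_{\Lambda^c})'$ with $w_1w_1^*+w_2w_2^*=\unit$. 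The operator $W:=S_\rho^*w_1^*+S_\sigma^*w_2^*$ is then straightforwardly shown to be unitary and to satisfy $W\pi_0(A)=\tau(A)W$ on $\caA_{\Lambda^c}$, giving the desired equivalence. The essential conceptual point is that proper infiniteness must be used \emph{twice}: once in $\pi_0(\caA_{\Lambda_0})''$ to accommodate the formal direct sum, and once in every $\pi_0(\caA_{\Lambda^c})'$ to reabsorb the doubling into $\pi_0$ itself. Both instances are furnished by Lemma \ref{lem37}, which crucially relies on the gap condition and the nontriviality of the sector theory assumed in Setting \ref{catset}.
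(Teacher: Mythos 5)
Your proposal is correct and follows essentially the same route as the paper: $\tau$ is defined by exactly the same formula $u\rho(\cdot)u^*+v\sigma(\cdot)v^*$ with isometries coming from the proper infiniteness in Lemma \ref{lem37}, and the superselection criterion is checked cone by cone with a unitary of the same shape (your $W=S_\rho^*w_1^*+S_\sigma^*w_2^*$ is, up to adjoint, the paper's $W_\Lambda=u_\Lambda V_{\rho\Lambda}u_{\Lambda_0}^*+v_\Lambda V_{\sigma\Lambda}v_{\Lambda_0}^*$). The only cosmetic difference is that you take the second pair of isometries in $\pi_0(\caA_{\Lambda^c})'$ while the paper takes them in the subalgebra $\pi_0(\caA_\Lambda)''$; both are supplied by Lemma \ref{lem37}, so this changes nothing of substance.
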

\begin{proof}
From Lemma \ref{lem37}, $\pza{}$ is properly intinite for any cone $\Lambda$.
Therefore, there exist a projection $p_{\Lambda}\in \pza{}$
such that $p_{\Lambda}\sim \unit- p_{\Lambda}\sim \unit$
from Proposition 1.36 V\cite{takesaki}, i.e.,
there are isometries $u_{\Lambda}, v_{\Lambda}\in \pza{}$
such that $u_{\Lambda} u_{\Lambda}^{*}=p_{\Lambda}$ and $v_{\Lambda} v_{\Lambda}^{*}=\unit- p_{\Lambda}$.
Define $\tau : \at\to \caB(\caH_{0})$ by 
\begin{align}
\tau(A):=u_{\Lambda_{0}}\rho(A) u_{\Lambda_{0}}^{*}
+v_{\Lambda_{0}}\sigma(A) v_{\Lambda_{0}}^{*},\quad A\in \at.
\end{align}
Because the range of  $u_{\Lambda_0}$ and $ v_{\Lambda_0}$ are orthogonal
and they are isometries, $\tau$ is a $*$-homomorphism.

To see $\tau\in\caO_{\Lambda_0}$, fix some
$V_{\rho,\Lambda}\in\caV_{\rho,\Lambda}$ and
$V_{\sigma,\Lambda}\in\caV_{\sigma,\Lambda}$
for each cone $\Lambda$.
For $\Lambda_0$, we may and we do set $V_{\rho,\Lambda_0}=\unit$, $V_{\sigma,\Lambda_0}=\unit$, because
$\rho,\sigma\in \caO_{\Lambda_0}$.
Set
\begin{align}
W_\Lambda:=u_{\Lambda} V_{\rho\Lambda} u_{\Lambda_0}^*
+v_\Lambda V_{\sigma\Lambda} v_{\Lambda_0}^*.
\end{align}
By our choice $V_{\rho,\Lambda_0}=\unit$, $V_{\sigma,\Lambda_0}=\unit$,
we have $W_{\Lambda_0}=\unit$.
For general $\Lambda$, $W_\Lambda$ is a unitary:
\begin{align}
W_\Lambda^* W_\Lambda=
u_{\Lambda_0} V_{\rho\Lambda}^* u_{\Lambda}^*
u_{\Lambda} V_{\rho\Lambda} u_{\Lambda_0}^*
+v_{\Lambda_0}
V_{\sigma\Lambda}^* v_{\Lambda}^*v_\Lambda V_{\sigma\Lambda} v_{\Lambda_0}^*
=u_{\Lambda_0}u_{\Lambda_0}^*+v_{\Lambda_0}v_{\Lambda_0}^*=\unit,\\
W_\Lambda W_{\Lambda}^*=u_{\Lambda} V_{\rho\Lambda}^* u_{\Lambda_0}^*
u_{\Lambda_0} V_{\rho\Lambda} u_{\Lambda}^*
+v_{\Lambda}
V_{\sigma\Lambda}^* v_{\Lambda_0}^*v_\Lambda V_{\sigma\Lambda_0} v_{\Lambda}^*
=u_{\Lambda}u_{\Lambda}^*+v_{\Lambda}v_{\Lambda}^*=\unit.
\end{align}
For any cone $\Lambda$, we have  $\Ad\lmk W_{\Lambda}\rmk\circ\tau\vert_{\caA_{\Lambda^c}}
=\pi_0\vert_{\caA_{\Lambda^c}}$.
In fact, for any $A\in \caA_{\Lambda^c}$, we have
\begin{align}
\begin{split}
&\Ad\lmk W_{\Lambda}\rmk\circ\tau(A)
=\Ad\lmk u_{\Lambda} V_{\rho\Lambda} u_{\Lambda_0}^*
+v_\Lambda V_{\sigma\Lambda} v_{\Lambda_0}^*
\rmk
\lmk
u_{\Lambda_{0}}\rho(A) u_{\Lambda_{0}}^{*}
+v_{\Lambda_{0}}\sigma(A) v_{\Lambda_{0}}^{*}
\rmk\\
&=
\Ad\lmk
u_{\Lambda} V_{\rho\Lambda} u_{\Lambda_0}^*
\rmk
\lmk
u_{\Lambda_{0}}\rho(A) u_{\Lambda_{0}}^{*}
\rmk
+
\Ad\lmk
v_\Lambda V_{\sigma\Lambda} v_{\Lambda_0}^*
\rmk
\lmk
v_{\Lambda_{0}}\sigma(A) v_{\Lambda_{0}}^{*}
\rmk\\
&=\Ad\lmk
u_{\Lambda} V_{\rho\Lambda} 
\rmk
\lmk
\rho(A) 
\rmk
+
\Ad\lmk
v_\Lambda V_{\sigma\Lambda}
\rmk
\lmk
\sigma(A)
\rmk\\
&=
\Ad\lmk
u_{\Lambda}
\rmk
\lmk
\pi_0(A) 
\rmk
+
\Ad\lmk
v_\Lambda 
\rmk
\lmk
\pi_0(A)
\rmk\\
&=\lmk u_\Lambda u_\Lambda^*+v_\Lambda v_\Lambda^*\rmk\lmk \pi_0(A)\rmk
=\pi_0(A).
\end{split}
\end{align}
In the fourth equality, we used $u_{\Lambda}, v_{\Lambda}\in \pza{}$
and $A\in \caA_{\Lambda^c}$.
In particular, 
because $W_{\Lambda_0}=\unit$, we have $\tau\vert_{\caA_{\Lambda_0^c}}
=\pi_0\vert_{\caA_{\Lambda_0^c}}$.
This proves $\tau\in \caO_{\Lambda_0}$.

We have $u_{\Lambda_0}\in (\rho,\tau)$ because 
\begin{align}
\tau(A) u_{\Lambda_0}=\lmk u_{\Lambda_{0}}\rho(A) u_{\Lambda_{0}}^{*}
+v_{\Lambda_{0}}\sigma(A) v_{\Lambda_{0}}^{*}\rmk u_{\Lambda_0}
=u_{\Lambda_0}\rho(A) u_{\Lambda_0}^* u_{\Lambda_0}
=u_{\Lambda_0}\rho(A)
\end{align}
for any $A\in \at$.
Similarly, $v_{\Lambda_0}\in (\sigma,\tau)$.
Setting $u:=u_{\Lambda_0}$ and $v:=v_{\Lambda_0}$
we obtain the isometries satisfying the required condition.
\end{proof}
%
\begin{lem}\label{lem41}
Consider Setting \ref{catset}.
For any $\rho\in\caO_{\Lambda_{0}}$ and a nonzero projection $p\in (\rho,\rho)$,
there exists a $\tau\in \caO_{\Lambda_{0}}$
and an isometry $v\in (\tau,\rho)$,
such that $vv^*=p$.
\end{lem}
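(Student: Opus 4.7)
The plan is to define $\tau$ as a compression of $\rho$ by an isometry $v\in\amf(\Lambda_0)$ with $vv^*=p$ and $v^*v=\unit$, namely $\tau(A):=v^*\rho(A)v$. Given such a $v$, the required properties of $\tau$ follow by short direct computations, so the entire content of the proof lies in producing $v$ inside the right algebra.

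To extract $v$ I would first observe that $p\in\amf(\Lambda_0)$. Indeed, since $\rho\in\caO_{\Lambda_0}$ agrees with $\pi_0$ on $\caA_{\Lambda_0^c}$, any element of $(\rho,\rho)$ commutes with $\pi_0(\caA_{\Lambda_0^c})\subset\rho(\at)$ and therefore lies in $\pi_0(\caA_{\Lambda_0^c})'=\amf(\Lambda_0)$. By Lemma~\ref{lem37} this is a properly infinite factor, so I would invoke the Murray--von Neumann equivalence $p\sim\unit$ inside it to pick an isometry $v\in\amf(\Lambda_0)$ with $vv^*=p$, $v^*v=\unit$. With such $v$ the relations $p\rho(A)=\rho(A)p$, $vv^*=p$, $v^*v=\unit$ give
\begin{align*}
\tau(A)\tau(B)=v^*\rho(A)vv^*\rho(B)v=v^*\rho(A)p\rho(B)v=v^*p\rho(AB)v=v^*\rho(AB)v=\tau(AB),
\end{align*}
so $\tau$ is a unital $*$-homomorphism. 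For $A\in\caA_{\Lambda_0^c}$, $\rho(A)=\pi_0(A)$ commutes with $v\in\amf(\Lambda_0)$, so $\tau(A)=\pi_0(A)$ and $\tau\in\caO_{\Lambda_0}$; and $v\tau(A)=vv^*\rho(A)v=p\rho(A)v=\rho(A)v$ shows $v\in(\tau,\rho)$, while $vv^*=p$ holds by construction.

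The main obstacle is the step $p\sim\unit$ in $\amf(\Lambda_0)$. In the type $III$ alternative of Lemma~\ref{lem37} every nonzero projection in a type $III$ factor is automatically equivalent to $\unit$, so the claim is immediate and the construction of $v$ poses no issue. In the type $II_\infty$ alternative one additionally has to establish that $p$ is an infinite projection in $\amf(\Lambda_0)$ (ruling out that it might be a finite projection of finite trace); this is the genuine work of the proof, and once handled the rest is the routine verification described above.
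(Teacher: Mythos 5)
Your reduction to producing an isometry $v\in\amf(\Lambda_0)$ with $v^*v=\unit$, $vv^*=p$ is sound as far as it goes (and indeed the $v$ constructed in the paper does lie in $\amf(\Lambda_0)$, since it intertwines two elements of $\caO_{\Lambda_0}$), but the proposal leaves out precisely the two steps that constitute the proof. First, the equivalence $p\sim\unit$ in $\amf(\Lambda_0)$ does not follow from Lemma \ref{lem37}: in the type $II_\infty$ alternative $p$ could a priori be finite, and you explicitly defer this case. There is no soft way around it, because the natural tool (Lemma \ref{nmr}: a projection commuting with an infinite factor sitting \emph{inside} the ambient factor is infinite) is not directly applicable — $p$ commutes with $\rho(\caA_{\Lambda_1})$ for cones $\Lambda_1\subset\Lambda_0$, not with $\pi_0(\caA_{\Lambda_1})$, and the cone algebras it does commute with, $\pi_0(\caA_{\Lambda_2})$ for $\Lambda_2\subset\Lambda_0^c$, do not lie in $\amf(\Lambda_0)$. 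The paper resolves this by using the approximate Haag duality (Definition \ref{assum7}) to conjugate a translated copy $p_{\Gamma_\Lambda}=\Ad(V_{\rho\Gamma_\Lambda})(p)$ by a unitary close to $\unit$ into $\pi_0(\caA_{\Lambda_{-2\varepsilon_\Lambda}})''$, where the perturbative Lemma \ref{nmd} (built on Lemma \ref{nmr} and Corollary 6.3.5 of \cite{KR}) forces the conjugated projection to be infinite, hence equivalent to $\unit$. That mechanism is the "genuine work" you name but do not supply.

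Second, even granting such a $v\in\amf(\Lambda_0)$, your verification that $\tau:=v^*\rho(\cdot)v$ belongs to $\caO_{\Lambda_0}$ only checks $\tau\vert_{\caA_{\Lambda_0^c}}=\pi_0\vert_{\caA_{\Lambda_0^c}}$. Membership in $\caO_{\Lambda_0}$ requires $\tau\in\caO_0$, i.e.\ $\tau\vert_{\caA_{\Lambda^c}}\simeq_{u.e.}\pi_0\vert_{\caA_{\Lambda^c}}$ for \emph{every} cone $\Lambda$. For a general cone (e.g.\ one disjoint from $\Lambda_0$), $\tau\vert_{\caA_{\Lambda^c}}$ is implemented by the isometry $V_{\rho\Lambda}v$, which does not lie in $\amf(\Lambda)$, so $\tau\vert_{\caA_{\Lambda^c}}$ is only unitarily equivalent to the subrepresentation of $\pi_0\vert_{\caA_{\Lambda^c}}$ cut by $p_\Lambda:=\Ad(V_{\rho\Lambda})(p)\in\amf(\Lambda)$; the criterion holds iff $p_\Lambda\sim\unit$ in $\amf(\Lambda)$, and this does not follow from $p\sim\unit$ in the different algebra $\amf(\Lambda_0)$. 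In the type $II_\infty$ case this must be established separately for each cone, which is exactly what the paper's construction of $\Gamma_\Lambda$, $D_\Lambda$, $w_\Lambda$, $u_\Lambda$ and the unitaries $W_\Lambda$ (via Lemmas \ref{nmd} and \ref{nmr2}) accomplishes, after which $\tau$ is defined as $\Ad(W_{\Lambda_0}^*)\circ T_{\Lambda_0}$ rather than by a single compression. As it stands, your proposal establishes only the elementary algebraic shell of the argument and omits both the infiniteness argument and the cone-by-cone verification of the superselection criterion.
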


In order to prove this, we first prepare following general Lemmas.
The following Lemma is well known, and can be derived as a refinement of the proof of Lemma 2.5.2
of \cite{Lin}.
\begin{lem}\label{lin}
For any $\varepsilon>0$, there exists $\delta_\varepsilon>0$
satisfying the following.:
For any unital $C^*$-algebra $\mathfrak B$ and projections $p,q\in\mathfrak B$
such that $\lV q-qp\rV<\delta_\varepsilon$,
there is a partial isometry $w\in \mathfrak B$ such that
\begin{align}
ww^*=q,\quad w^*w\le p,\quad \lV w-q\rV<\varepsilon.
\end{align}
\end{lem}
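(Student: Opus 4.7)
The plan is to invoke the standard polar-decomposition trick that is by now folklore in the $C^*$-algebra literature. First I would observe that the hypothesis $\lV q-qp\rV<\delta$ gives immediately
\[
\lV q-qpq\rV=\lV q(1-p)q\rV=\lV q(1-p)\rV^{2}=\lV q-qp\rV^{2}<\delta^{2},
\]
so that $y:=qpq$ lies within $\delta^{2}$ of $q$ in the corner $C^{*}$-algebra $q\mathfrak B q$ (in which $q$ is the unit). For $\delta<1$, this makes $y$ positive and invertible in $q\mathfrak B q$, so continuous functional calculus within that corner yields well-defined positive elements $y^{1/2}$ and $y^{-1/2}$.

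Next, I would set
\[
w:=y^{-1/2}\,qp.
\]
Direct computation gives $ww^{*}=y^{-1/2}(qpq)y^{-1/2}=y^{-1/2}\,y\,y^{-1/2}=q$, and
\[
(w^{*}w)^{2}=pqy^{-1}qpq y^{-1}qp=pqy^{-1}\,y\,y^{-1}qp=pqy^{-1}qp=w^{*}w,
\]
so that $w^{*}w$ is a projection, evidently dominated by $p$. Hence $w$ is a partial isometry with $ww^{*}=q$ and $w^{*}w\le p$, giving the first two conclusions.

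For the norm estimate, I would note that $wq=y^{-1/2}\,qpq=y^{-1/2}y=y^{1/2}$ is self-adjoint, so
\[
\lV w-q\rV^{2}=\lV(w-q)(w-q)^{*}\rV=\lV ww^{*}-wq-qw^{*}+q\rV=2\,\lV q-y^{1/2}\rV.
\]
Because the spectrum of $y$ in $q\mathfrak B q$ is contained in $[1-\delta^{2},1+\delta^{2}]$ for $\delta<1$, continuous functional calculus gives $\lV q-y^{1/2}\rV\le 1-\sqrt{1-\delta^{2}}\le \delta^{2}$ (for $\delta\le 1/\sqrt 2$), whence $\lV w-q\rV\le \sqrt{2}\,\delta$. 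Choosing $\delta_{\varepsilon}:=\min\{1/\sqrt{2},\varepsilon/\sqrt{2}\}$ delivers the required bound.

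There is no real obstacle here beyond being careful that all functional calculus is performed inside the corner $q\mathfrak B q$ (so that ``$y^{-1/2}$'' means the inverse square root relative to the unit $q$, not to any unit of $\mathfrak B$); the rest is straightforward $C^{*}$-algebra and can be presented in a handful of lines.
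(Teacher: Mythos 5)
Your argument is correct, and it is essentially the standard polar-decomposition/functional-calculus proof that the paper itself points to (it gives no proof of its own, citing Lemma 2.5.2 of \cite{Lin}, whose proof runs along exactly these lines: invertibility of $qpq$ in the corner $q\mathfrak{B}q$, $w=(qpq)^{-1/2}qp$, and a spectral estimate for $\lV w-q\rV$). The only point worth making explicit is why $w^*w\le p$: since $w^*w$ is a projection satisfying $w^*w=p\,w^*w\,p$, one has $p-w^*w=p(1-w^*w)p\ge 0$.
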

\begin{lem}\label{nmd}
There is some $\frac 12>\delta>0$ satisfying the following.:
Let $\caH$ be a separable Hilbert space, and $\caN,\caM$
be infinite factors on $\caH$ such that $\caN\subset \caM$.
Let $p$ be a nonzero projection in $\caN'$, and $u$ a unitary on $\caH$
such that $\tilde p:=u pu^*\in \caM$ and $\lV u-\unit \rV<\delta$.
Then $\tilde p$ is equivalent to $\unit$ in $\caM$.
\end{lem}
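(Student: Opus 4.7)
The plan is to prove that $\tilde p$ is infinite in the factor $\caM$; then since $\caN\subset\caM$ is infinite, $\caM$ itself is an infinite factor, and in an infinite factor every nonzero infinite projection (equivalently, in a factor, properly infinite projection) is Murray--von Neumann equivalent to $\unit$. Hence it suffices to produce a projection $r\in\caM$ with $r\le\tilde p$, $r\ne\tilde p$, and $r\sim\tilde p$ in $\caM$.

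Since $\caN$ is a properly infinite factor, I would first pick isometries $w_1,w_2\in\caN$ satisfying the Cuntz relations $w_i^* w_j = \delta_{ij}\unit$ and $w_1 w_1^* + w_2 w_2^* = \unit$. Set $s_i:=w_i\tilde p\in\caM$ and $q_i:=s_i s_i^* = w_i\tilde p w_i^*\in\caM$. Then $s_i^* s_i = \tilde p$ and $q_1 q_2=0$, so $q_1,q_2$ are mutually orthogonal projections in $\caM$, each equivalent to $\tilde p$ in $\caM$ via $s_i$. Put $Q:=q_1+q_2\in\caM$.

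The crux is the estimate $\lV Q-\tilde p\rV\le 8\delta$. The trick is to introduce the auxiliary unitaries $\tilde w_i:=uw_i u^*$ (which lie in $\caB(\caH)$, not in $\caM$ in general) and to observe the commutation $\tilde w_i\tilde p=\tilde p\tilde w_i$: this holds because $w_i\in\caN$ and $p\in\caN'$ commute, so $\tilde w_i\tilde p=uw_i p u^* = upw_i u^* = \tilde p\tilde w_i$. This gives the exact identity in $\caB(\caH)$
\[
\sum_{i=1}^{2}\tilde w_i\tilde p\tilde w_i^* \;=\; \tilde p\sum_{i=1}^{2}\tilde w_i\tilde w_i^* \;=\; \tilde p,
\]
and combined with $\lV\tilde w_i-w_i\rV\le 2\delta$ a two-term estimate yields $\lV q_i-\tilde w_i\tilde p\tilde w_i^*\rV\le 4\delta$, whence $\lV Q-\tilde p\rV\le 8\delta$.

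To conclude, I would take $\delta$ so small that $8\delta<\delta_{1/2}$ in Lemma \ref{lin}. Since $\lV Q-Q\tilde p\rV\le\lV Q-\tilde p\rV\le 8\delta$ and likewise with the roles reversed, two applications of Lemma \ref{lin} in $\mathfrak B=\caM$ produce partial isometries in $\caM$ realizing $Q\prec\tilde p$ and $\tilde p\prec Q$; the Schroeder--Bernstein theorem for projections in a von Neumann algebra then supplies $V\in\caM$ with $V^* V=\tilde p$ and $VV^* = Q$. Setting $r:=V^* q_1 V$, the inclusion $q_1\le Q=VV^*$ forces $r$ to be a projection in $\caM$ with $r\le V^* V=\tilde p$ and $r\sim q_1\sim\tilde p$ in $\caM$ (via $q_1 V\in\caM$); moreover $\tilde p-r=V^* q_2 V$ is equivalent in $\caM$ to $q_2\ne 0$, so $r<\tilde p$ strictly. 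This shows $\tilde p$ is infinite in $\caM$, and hence $\tilde p\sim\unit$ in $\caM$. The main obstacle throughout is that neither $u$ nor $\tilde w_i$ need lie in $\caM$; the argument succeeds because the exact identity $\sum\tilde w_i\tilde p\tilde w_i^* = \tilde p$ is used only to bound the in-$\caM$ quantity $\lV Q-\tilde p\rV$, after which every subsequent manipulation (Lin's lemma, Schroeder--Bernstein, the compression $V^* q_1 V$) takes place entirely inside $\caM$.
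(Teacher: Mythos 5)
Your proof is correct, and while it shares the paper's overall strategy (show $\tilde p$ is infinite in $\caM$, then invoke separability of $\caH$ and the fact that an infinite projection in an infinite factor acting on a separable Hilbert space is equivalent to $\unit$, together with Lemma \ref{lin} to repair almost-subordinate projections inside $\caM$), the mechanism you use to get infiniteness is genuinely different. The paper takes a single isometry $v\in\caN$ with $vv^*=F\neq\unit$, forms $w_0=w^*v\tilde p$ after one application of Lemma \ref{lin}, and must then prove the strict inequality $w_0w_0^*\neq\tilde p$ by a quantitative lower bound: it uses the $*$-isomorphism $\caN\ni x\mapsto xp\in\caN p$ to get $\lV p-vpv^*\rV=1$ and plays this against the smallness of $\lV u-\unit\rV$ and $\lV w-v\tilde pv^*\rV$. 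You instead halve the identity with two Cuntz isometries $w_1,w_2\in\caN$, so that $Q=q_1+q_2$ with $q_i=w_i\tilde pw_i^*\sim\tilde p$ orthogonal, and your exact identity $\sum_i\tilde w_i\tilde p\tilde w_i^*=\tilde p$ (using $p\in\caN'$) gives $\lV Q-\tilde p\rV\le 8\delta$; two applications of Lemma \ref{lin} plus Schroeder--Bernstein then yield $\tilde p\sim Q$, and strictness is automatic because $q_2\sim\tilde p\neq 0$. What your route buys is that no lower-bound estimate (and no appeal to the isomorphism $\caN\cong\caN p$) is needed -- proper infiniteness of $\tilde p$ comes for free from the halving -- at the cost of one extra use of Lin's lemma and of Schroeder--Bernstein; also note that once $\lV Q-\tilde p\rV<1$ you could even get $\tilde p\sim Q$ directly from the standard fact that projections at distance less than $1$ in a von Neumann algebra are equivalent. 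One small imprecision: your closing assertion that in an infinite factor every infinite projection is equivalent to $\unit$ requires countable decomposability; here it is supplied by the separability of $\caH$ in the hypotheses (this is exactly the paper's citation of Corollary 6.3.5 of \cite{KR}), so you should say so explicitly.
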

\begin{proof}
We set  $\delta:=\frac 18\cdot  \min \left\{\delta_{\frac 18}, \frac 18\right\}$,
with $\delta_\varepsilon$ in the Lemma \ref{lem41}.
Let us consider $\caH,\caN,\caM,p,u$ as above, satisfying the condition for this $\delta$.
We would like to show that $\tilde p:=u pu^*$  is equivalent to $\unit$ in $\caM$.
If $p=\unit$, this is trivial. We assume $p\neq\unit$.

Because $\caN$ is an infinite factor, $\unit$ is properly infinite in $\caN$.
Therefore, from Proposition 1.36 V\cite{takesaki}, there exists a projection $F$ in $\caN$ and an isometry
$v\in\caN$
such that $\unit\neq F$ and $vv^*=F$.
As $p\in\caN'$ is a nonzero projection and $\caN$ is a factor,
the map
\begin{align}
\caN\ni x\mapsto xp\in \caN p
\end{align}
is a $*$-isomorphim.
Therefore, $vv^*p=Fp\neq p$.
Because $F,v\in \caN$ and $p\in\caN'$,  $p-vpv^*=p(1-vv^*)=p(1-F)$ is a nonzero projection.
In particular, we have $\lV p-vpv^*\rV=1$.
Because $(v\tilde p)^*v\tilde p=\tilde p v^*v\tilde p=\tilde p$,
$v\tilde p$ is a partial isometry in $\caM$ and $v\tilde pv^*$ is a projection in
$\caM$.
Because $\lV u-\unit \rV<\delta$, we have
\begin{align}
\lV v\tilde p v^*(\unit-\tilde p)\rV
\le 2\lV\tilde p-p\rV+\lV vpv^*(\unit-p)\rV
\le 4\lV u-\unit\rV
<\delta_{\frac 18}.
\end{align}
Therefore, from Lemma \ref{lin}, there exists a partial isometry $w\in\caM$
such that $ww^*=v\tilde p v^*$, $w^* w\le \tilde p$
and $\lV w-v\tilde p v^*\rV<\frac 18$.
Set $w_0:=w^* v\tilde p\in \caM$.
We have
\begin{align}
w_0^*w_0=\tilde p v^* ww^* v\tilde p=\tilde p v^* v\tilde p v^* v\tilde p
=\tilde p,
\end{align}
and $w_0$ is a partial isometry.
We also have
\begin{align}
w_0w_0^*=w^* v\tilde p v^* w\le w^* w\le \tilde p.
\end{align}
Therefore, if $\tilde p-w_0w_0^*\neq 0$, then 
$\tilde p$ is infinite.
In fact,
because
\begin{align}
\begin{split}
\lV
w_0w_0^*-\tilde p
\rV=\lV w^* v\tilde p v^* w-\tilde p\rV
=\lV
\lmk w- v\tilde p v^*\rmk^* v\tilde p v^* w+
v\tilde p v^*(w-v\tilde p v^*)+v\tilde p v^*-\tilde p
\rV\\
\ge \lV v\tilde p v^*-\tilde p\rV-2\lV w- v\tilde p v^*\rV
\ge \lV vp v^*-p\rV-4\lV u-\unit\rV-2\lV w- v\tilde p v^*\rV
\ge \frac 12,
\end{split}
\end{align}
we have $\tilde p-w_0w_0^*\neq 0$, and 
$\tilde p$ is infinite.
We used $\lV p-vpv^*\rV=1$ in the last inequality.
Because $\caH$ is separable and $\caM$ is an infinite factor, from Corollary 6.3.5 of
\cite{KR}, $\tilde p$ is equivalent to $\unit$ in $\caM$.
\end{proof}
\begin{lem}\label{nmr}
Let $\caN,\caM,\caR$ be von Neumann algebras 
 acting on a Hilbert space $\caH$ such that $\caN,\caM\subset\caR$ and 
$\caN\subset \caM'$.
Suppose that $\caM$ is an infinite factor.
Then any nonzero projection in $\caN$ is infinite in $\caR$. 
\end{lem}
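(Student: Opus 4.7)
The plan is a direct construction: use proper infiniteness of the factor $\caM$ to split the identity inside $\caM$ into two equivalent mutually orthogonal pieces, and then transport this splitting to any nonzero $p\in\caN$ via the commutation $\caN\subset\caM'$. Since $\caM$ is an infinite factor, $\unit$ is properly infinite in $\caM$, so by Proposition 1.36 V of \cite{takesaki} we may fix projections $e_1,e_2\in\caM$ with $e_1+e_2=\unit$ and isometries $v_1,v_2\in\caM$ satisfying $v_i^*v_i=\unit$ and $v_iv_i^*=e_i$.

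Given a nonzero projection $p\in\caN$, set $w_i:=v_i p$ for $i=1,2$. Each $w_i$ lies in $\caR$ because $v_i\in\caM\subset\caR$ and $p\in\caN\subset\caR$. Since $\caN\subset\caM'$, the projection $p$ commutes with $v_i$ and with $v_i^*$, so the routine computation gives
\begin{align*}
w_i^*w_i = p\,v_i^*v_i\,p = p,\qquad
w_iw_i^* = v_i p v_i^* = v_iv_i^*\,p = e_i p,
\end{align*}
exhibiting $w_i$ as a partial isometry in $\caR$ with initial projection $p$ and final projection $e_ip\le p$.

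To conclude that $p$ is infinite in $\caR$, it remains to verify that the decomposition $p=(e_1+e_2)p=w_1w_1^*+w_2w_2^*$ is proper, i.e.\ that neither $e_ip$ equals $p$. But if $e_ip=w_iw_i^*=0$ then $w_i=0$, whence $p=w_i^*w_i=0$, contradicting $p\neq 0$. Hence $e_1p$ and $e_2p$ are both nonzero, so $w_1w_1^*=e_1p$ is a proper subprojection of $p$ in $\caR$ equivalent to $p$ via $w_1\in\caR$, which is precisely the definition of $p$ being infinite in $\caR$.

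I do not anticipate a genuine obstacle here: the only points to check are that $p$ commutes with $v_i, v_i^*$ (immediate from $\caN\subset\caM'$) and that the implementing partial isometries $w_i$ belong to $\caR$ (immediate from $\caN,\caM\subset\caR$). Everything else is a short algebraic computation.
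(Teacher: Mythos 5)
Your proof is correct, and it follows the same basic skeleton as the paper's (invoke Proposition 1.36 V of Takesaki to produce isometries in $\caM$, then multiply by $p$ and use $\caN\subset\caM'$ to get a partial isometry in $\caR$ from $p$ onto a subprojection of $p$), but the crucial \emph{properness} step is handled by a genuinely different argument. The paper takes a single isometry $v\in\caM$ with $vv^*=q$, $q\neq\unit$, and deduces $pq\neq p$ from the injectivity of the map $\caM\ni a\mapsto ap$, which is where factoriality of $\caM$ is actually used (the center of $\caM'$ is trivial, so every nonzero $p\in\caM'$ has full central support). You instead keep both halving isometries $v_1,v_2$ and observe that $w_i^*w_i=p\neq 0$ forces $e_ip=w_iw_i^*\neq 0$ for both $i$, so $p-e_1p=e_2p\neq 0$; this avoids the central-support/factor argument entirely and only uses that $\unit$ is properly infinite in $\caM$. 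So your variant is slightly more elementary and in fact proves a marginally stronger statement (the conclusion holds for any properly infinite $\caM$, not necessarily a factor), while the paper's version gets by with one isometry at the cost of invoking factoriality; both are perfectly adequate for the application in Lemma 5.9, where $\caM$ is an infinite factor anyway.
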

\begin{proof}
Let $p\in\caN$ be a nonzero projection.
Since $\caM$ is a factor, and $p\in \caN\subset \caM'$ a projection,
the map
\begin{align}\label{aap}
\caM \ni a\mapsto ap\in\caM p
\end{align}
is a $*$-isomorphism.

Because $\caM$ is an infinite factor, by  Proposition 1.36 V\cite{takesaki}, 
there is a projection $q\in \caM$ such that
$q\sim (\unit-q)\sim\unit$ in $\caM$.
Hence there is some $v\in \caM$ such that $v^*v=\unit$
and $vv^*=q$.
For this $q\in \caM$, $pq\in \caR$ is a projection.
Because of the injectivity of the map (\ref{aap}), we have
$p\neq pq$.
Set $w:=pv\in \caR$.
We have 
\begin{align}
ww^*=pv v^* p=pqp=pq,\quad
w^*w=v^*pv=v^*vp=p.
\end{align}
Hence $p$ and $pq$ are equivalent in $\caR$.
Because $p\neq pq$, $p$ is infinite in $\caR$.
\end{proof}
\begin{lem}\label{nmr2}
Let $\caN,\caM,\caR$ be von Neumann algebras acting on a separable Hilbert space $\caH$
such that $\caN,\caM\subset\caR$ and $\caN\subset \caM'$.
Suppose that $\caM$ is an infinite factor and $\caR$ is a factor.
Let $w$ be a unitary on $\caH$ and $u\in\caN$ an isometry
such that $w^*uu^*w\in\caN$.
Then there is a unitary $W$ on $\caH$
such that
\begin{align}
\Ad\lmk Wu^*\rmk(x)=\Ad\lmk u^* w\rmk(x),\quad x\in\caR'.
\end{align}
\end{lem}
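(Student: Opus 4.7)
The plan is to construct $W$ explicitly in the form $W = u^{*}wr^{*}$ for a suitable isometry $r \in \caR$, and then verify the required identity by direct algebraic manipulation. Set $p := uu^{*}$ and $q := w^{*}uu^{*}w$; both lie in $\caN$ by hypothesis, and $wqw^{*} = p$. A first key observation is that $\caN \subset \caR$, so $\caR' \subset \caN'$, and hence any $x \in \caR'$ commutes with $u$. Consequently $u^{*}xu = (u^{*}u)x = x$, so
\[
\Ad\lmk Wu^{*}\rmk(x) = W u^{*} x u W^{*} = W x W^{*}, \qquad x \in \caR'.
\]
Thus finding $W$ as in the statement is equivalent to finding a unitary $W$ with $WxW^{*} = u^{*}w x w^{*}u$ for every $x \in \caR'$.

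The main substantive step is to verify that $\caR$ is an infinite factor and that $q$ is equivalent to $\unit$ inside $\caR$. Since $p$ is a nonzero projection in $\caN \subset \caM'$ with $\caM$ an infinite factor, Lemma \ref{nmr} gives that $p$ is infinite in $\caR$; since $p \le \unit$ this forces $\unit$ to be infinite in $\caR$, so the factor $\caR$ is infinite. Applying Lemma \ref{nmr} again to the nonzero projection $q \in \caN$ yields that $q$ is infinite in $\caR$ as well. In an infinite factor any two infinite projections are equivalent (in a type $II_{\infty}$ factor both carry the trace $+\infty$, in a type $III$ factor every nonzero projection is equivalent to $\unit$; see e.g.\ Chapter V of \cite{takesaki}). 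Hence there exists $r \in \caR$ with $r^{*}r = q$ and $rr^{*} = \unit$.

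Now set $W := u^{*}wr^{*}$. A direct check shows $W$ is unitary:
\[
WW^{*} = u^{*}w(r^{*}r)w^{*}u = u^{*}wqw^{*}u = u^{*}pu = (u^{*}u)(u^{*}u) = \unit,
\qquad
W^{*}W = rw^{*}(uu^{*})wr^{*} = rqr^{*} = (rr^{*})(rr^{*}) = \unit.
\]
Finally, for any $x \in \caR'$ one has $r^{*}xr = xr^{*}r = xq$ (because $r \in \caR$ commutes with $x$), $xq = qx$ (because $q \in \caN$ and $x \in \caN'$), $wq = w(w^{*}pw) = pw$, and $u^{*}p = u^{*}uu^{*} = u^{*}$. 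Stringing these together,
\[
WxW^{*} = u^{*}w(r^{*}xr)w^{*}u = u^{*}w(xq)w^{*}u = u^{*}(wq)xw^{*}u = u^{*}pwxw^{*}u = u^{*}wxw^{*}u,
\]
which completes the verification.

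Beyond the structural step of promoting $q$ to being equivalent to $\unit$ inside $\caR$ (where the only nontrivial input is Lemma \ref{nmr} together with the classification of infinite factors), the argument is a guess-and-check: once the ansatz $W = u^{*}wr^{*}$ is in hand the computations are routine. I do not foresee any serious obstacle; the only point requiring a little care is book-keeping the three commutation facts ($x$ commutes with $u$, with $q$, and with $r$) that make the chain of equalities collapse.
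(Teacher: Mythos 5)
Your proposal is correct and essentially reproduces the paper's argument: both rest on Lemma \ref{nmr} to make $uu^{*}$ and $w^{*}uu^{*}w$ infinite in $\caR$, then use equivalence of infinite projections in a factor acting on a separable Hilbert space (Corollary 6.3.5 of \cite{KR}) to produce a partial isometry in $\caR$, and finally verify unitarity and the intertwining identity by the same short computations; the paper takes $w^{*}uu^{*}w\sim uu^{*}$ and sets $W=u^{*}wa^{*}u$, while you take $w^{*}uu^{*}w\sim\unit$ and set $W=u^{*}wr^{*}$, a purely cosmetic difference. One small caution: the statement ``any two infinite projections in an infinite factor are equivalent'' requires countable decomposability (here supplied by separability of $\caH$), and your parenthetical justification also omits the type $I_{\infty}$ case, so it is cleaner to cite the comparison theorem for countably decomposable factors directly.
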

\begin{proof}
From Lemma \ref{nmr},
non-zero projections $w^*uu^*w, uu^*\in \caN$
are infinite in $\caR$.
Because $\caR$ is a factor and $\caH$ is separable,
from Corollary 6.3.5 \cite{KR}, 
$w^*uu^*w$ and $uu^*$ are equivalent in $\caR$.
Namely, there is some $a\in\caR$ such that $w^*uu^*w=a^*a$
and $uu^*=aa^*$.

Set $W:=u^*wa^*u$. This $W$ is unitary because
\begin{align}
\begin{split}
W^*W=u^*aw^*uu^*wa^*u
=u^* a a^*aa^* u=u^*u=\unit,\\
WW^*=u^*wa^*uu^*aw^*u
=u^*wa^*aa^*aw^*u=u^*ww^*u=u^*u=\unit.
\end{split}
\end{align}
For any $x\in \caR'$, we have
\begin{align}
\begin{split}
\Ad\lmk Wu^*\rmk(x)
=\Ad\lmk u^*wa^*u u^*\rmk(x)
=\Ad\lmk u^*wa^*a a^*\rmk(x)
=\Ad\lmk u^*wa^*\rmk(x)\\
=\Ad\lmk u^* w\rmk\lmk a^*xa\rmk
=\Ad\lmk u^* w\rmk\lmk a^*a x\rmk
=\Ad\lmk u^*w\rmk(x).
\end{split}
\end{align}

\end{proof}

\begin{proofof}[Lemma \ref{lem41}]

For each cone $\Lambda$, we fix some $V_{\rho,\Lambda}\in\caV_{\rho,\Lambda}$.
For $\Lambda_0$, we may and we do set $V_{\rho,\Lambda_0}=\unit$.
Set $p_{\Lambda}:=\Ad(V_{\rho,\Lambda})(p)$.
Note that $p_{\Lambda}$ belongs to $\pi_0\lmk\caA_{\Lambda^c}\rmk'$ for each cone $\Lambda$, because
\begin{align}
p\cdot \Ad\lmk V_{\rho\Lambda}^*\rmk\lmk \pi_0(A)\rmk=p\rho(A)=\rho(A) p=
\Ad\lmk V_{\rho\Lambda}^*\rmk\lmk \pi_0(A)\rmk\cdot p,\quad A\in \caA_{\Lambda^c}.
\end{align}

Let $\delta>0$ be the number given in Lemma \ref{nmd}.
For each $\varphi\in (0,2\pi)$, $\varepsilon>0$ with $\varphi+8\varepsilon<2\pi$,
fix some $t_{\varphi,\varepsilon}\ge R_{\varphi,\varepsilon}$
such that $f_{\varphi,\varepsilon,\varepsilon}(t_{\varphi,\epsilon})<\delta$.
(Recall the definition of approximate Haag duality Definition \ref{assum7}.)
For each cone $\Lambda$, fix some  $0<\varepsilon_\Lambda<\min\{
\frac{2\pi-|\arg\Lambda|}{16}, \frac{|\arg\Lambda|}{16}
\}$
and
set
\begin{align}
\Gamma_{\Lambda}:=\Lambda_{-4\varepsilon_\Lambda}
+t_{|\arg\Lambda|-8\varepsilon_\Lambda,\varepsilon_\Lambda}\bm e_{\Lambda}.
\end{align}
We choose a cone $D_{\Lambda}$ so that 
$D_{\Lambda}\subset 
\lmk \Gamma_\Lambda-R_{|\arg\Gamma_\Lambda|,\varepsilon_\Lambda}
\bm e_{\Gamma_{\Lambda}}\rmk_{\varepsilon_\Lambda}\cap\Gamma_\Lambda^c$.
Note that
\begin{align}
D_\Lambda
\subset 
\lmk \Gamma_\Lambda-R_{|\arg\Gamma_\Lambda|,\varepsilon_\Lambda}
\bm e_{\Gamma_{\Lambda}}\rmk_{\varepsilon_\Lambda}\cap\Gamma_\Lambda^c
\subset \Lambda_{-3\varepsilon_\Lambda}
+\lmk
t_{|\arg\Lambda|-8\varepsilon_{\Lambda},\varepsilon_\Lambda}
-R_{|\arg\Gamma_\Lambda|,\varepsilon_\Lambda}
\rmk\bm e_{\Lambda}
\subset\Lambda_{-3\varepsilon_\Lambda}.
\end{align}
Note that $|\arg\Gamma_\Lambda|=|\arg\Lambda|-8\varepsilon_{\Lambda}$.
By the approximate Haag duality,
 there is a unitary 
$U_{\Gamma_\Lambda,\varepsilon_\Lambda}\in \caU(\caH)$
satisfying
\begin{align}
\pi_0\lmk\caA_{\Gamma_\Lambda^c}\rmk'\subset 
\Ad\lmk U_{\Gamma_\Lambda,\varepsilon_\Lambda}\rmk\lmk 
\pi_0\lmk \caA_{\lmk \Gamma_\Lambda-R_{|\arg\Gamma_\Lambda|,\varepsilon_\Lambda}\bm e_{\Gamma_\Lambda}\rmk_{\varepsilon_\Lambda}}\rmk''
\rmk.
\end{align}
Furthermore, there is a unitary 
\begin{align}
\hat U_\Lambda:=\tilde U_{\Gamma_\Lambda,\varepsilon_\Lambda,\varepsilon_\Lambda,t_{|\arg\Lambda|-8\varepsilon_{\Lambda},\varepsilon_\Lambda}}\in
  \pi_0\lmk \caA_{(\Gamma_\Lambda)_{2\varepsilon_\Lambda}-
  t_{|\arg\Lambda|-8\varepsilon_{\Lambda},\varepsilon_\Lambda}
  \bm e_{\Lambda}}\rmk''
  =\pi_0\lmk\caA_{\Lambda_{-2\varepsilon_\Lambda}}\rmk''
\end{align}
 satisfying
\begin{align}\label{uappro1}
\lV
U_{\Gamma_\Lambda,\varepsilon_\Lambda}-\hat U_{\Lambda}
\rV\le f_{\arg|\Gamma_\Lambda|,\varepsilon_\Lambda,\varepsilon_\Lambda}
\lmk t_{|\arg\Lambda|-8\varepsilon_{\Lambda},\varepsilon_\Lambda}\rmk
<\delta<\frac 12.
\end{align}

Now we apply Lemma \ref{nmd}
to the separable (because $\caH$ is a GNS Hilbert space of a state on $\at$) Hilbert space $\caH$,
infinite factors
$\caN:=\pi_0\lmk\caA_{D_\Lambda}\rmk''$
$\caM:=\pi_0\lmk\caA_{\Lambda_{-2\varepsilon_\Lambda}}\rmk''$,
$p_{\Gamma_\Lambda}
\in \pi_0\lmk\caA_{\Gamma_\Lambda^c}\rmk'\subset \pi_0\lmk\caA_{D_\Lambda}\rmk'= \caN'$,
$u:=\hat U_{\Lambda} U_{\Gamma_\Lambda,\varepsilon_\Lambda}^*\in \caU(\caH)$.
Note that 
\begin{align}
\Ad\lmk  \hat U_{\Lambda} U_{\Gamma_\Lambda,\varepsilon_\Lambda}^*
\rmk
(p_{\Gamma_\Lambda})\in \Ad\lmk \hat U_{\Lambda} \rmk
\lmk \pi_0\lmk \caA_{ \lmk \Gamma_\Lambda\rmk_{\varepsilon_\Lambda}-R_{|\arg \Gamma_\Lambda|,\varepsilon_\Lambda}\bm e_{\Gamma_\Lambda}}\rmk\rmk''
\subset\pi_0\lmk\caA_{\Lambda_{-2\varepsilon_\Lambda}}\rmk''=\caM.
\end{align}
Applying  Lemma \ref{nmd}, we conclude that 
$\Ad\lmk \hat U_{\Lambda} U_{\Gamma_\Lambda,\varepsilon_\Lambda}^*\rmk(p_{\Gamma_\Lambda})$ is equivalent to $\unit$ in $\pi_0\lmk\caA_{\Lambda_{-2\varepsilon_\Lambda}}\rmk''$.
Hence for each cone $\Lambda$,
there is an isometry $u_{\Lambda}\in \pi_0\lmk\caA_{\Lambda_{-2\varepsilon_\Lambda}}\rmk''$
such that $u_{\Lambda}u_{\Lambda}^*=\Ad\lmk w_{\Lambda}\rmk(p_{\Gamma_\Lambda})$,
where we set $w_{\Lambda}:=
 \hat U_{\Lambda} U_{\Gamma_\Lambda,\varepsilon_\Lambda}^*$.
Note that $\lV w_{\Lambda}-\unit\rV<\frac 12$ from (\ref{uappro1}).

For each cone $\Lambda$, we set
\begin{align}
T_{\Lambda}:=\Ad\lmk u_{\Lambda}^*
w_{\Lambda}
V_{\rho\Gamma_{\Lambda}}\rmk\circ\rho :\at\to\caB(\caH).
\end{align}
It is a $*$-homomorphism
because
\begin{align}
\begin{split}
&T_{\Lambda}(A)T_{\Lambda}(B)=
 u_{\Lambda}^*
w_{\Lambda}
V_{\rho\Gamma_{\Lambda}}\rho(A)
V_{\rho\Gamma_{\Lambda}}^*
w_{\Lambda}^*
 u_{\Lambda}
 u_{\Lambda}^*
w_{\Lambda}
V_{\rho\Gamma_{\Lambda}}\rho(B)
V_{\rho\Gamma_{\Lambda}}^*
w_{\Lambda}^*
 u_{\Lambda}\\
 &=u_{\Lambda}^*
w_{\Lambda}
V_{\rho\Gamma_{\Lambda}}\rho(A)
V_{\rho\Gamma_{\Lambda}}^*
p_{\Gamma_\Lambda}
V_{\rho\Gamma_{\Lambda}}\rho(B)
V_{\rho\Gamma_{\Lambda}}^*
w_{\Lambda}^*
 u_{\Lambda}\\
 &=u_{\Lambda}^*
w_{\Lambda}
V_{\rho\Gamma_{\Lambda}}\rho(A)
p \rho(B)
V_{\rho\Gamma_{\Lambda}}^*
w_{\Lambda}^*
 u_{\Lambda}\\
 &=u_{\Lambda}^*
w_{\Lambda}
V_{\rho\Gamma_{\Lambda}}\rho(AB)p
V_{\rho\Gamma_{\Lambda}}^*
w_{\Lambda}^*
 u_{\Lambda}
 \\
 &=u_{\Lambda}^*
w_{\Lambda}
V_{\rho\Gamma_{\Lambda}}\rho(AB)
V_{\rho\Gamma_{\Lambda}}^*
w_{\Lambda}^*
 u_{\Lambda}
 =T_{\Lambda}(AB),
\end{split}
\end{align}
for any $A,B\in\at$.

For any cone $\Lambda$,
there is a unitary $W_\Lambda$ such that 
\begin{align}\label{waru}
T_{\Lambda}\vert_{\caA_{\Lambda^c}}=\Ad(W_\Lambda)\circ
\pi_0\vert_{\caA_{\Lambda^c}}.
\end{align}
To see this,
we choose a cone $C_\Lambda$ such that
\begin{align}\label{c3d}
C_\Lambda\subset \lmk \Lambda_{-2\varepsilon_\Lambda}\rmk^c\cap \Lambda.
\end{align}
We use Lemma \ref{nmr2} for infinite factors
$\caN:=\pi_0\lmk\caA_{\lmk \Lambda_{-2\varepsilon_\Lambda}\rmk^c}\rmk'$, 
$\caM:=\pi_0\lmk \caA_{C_\Lambda}\rmk''$,
$\caR:=\pi_0\lmk\caA_{\Lambda^c}\rmk'$
acting on the separable Hilbert space $\caH$
and a unitary $w_\Lambda$ on  $\caH$ and an isometry $u_{\Lambda}\in \pi_0\lmk\caA_{\Lambda_{-2\varepsilon_\Lambda}}\rmk''\subset  \pi_0\lmk\caA_{\lmk \Lambda_{-2\varepsilon_\Lambda}\rmk^c}\rmk' =\caN$.
Because of (\ref{c3d}), we have $\caN\subset \caM'$
and $\caN,\caM\subset \caR$.
We have 
\begin{align}
w_\Lambda^*u_{\Lambda}u_{\Lambda}^*w_{\Lambda}
=p_{\Gamma_\Lambda}\in \pi_0\lmk\caA_{\Gamma_\Lambda^c}\rmk'
\subset \pi_0\lmk \caA_{\lmk \Lambda_{-2\varepsilon_\Lambda}\rmk^c}\rmk'.
\end{align}
Applying Lemma \ref{nmr2}, there is a unitary $W_\Lambda$
on $\caH$ such that
\begin{align}\label{kro}
\Ad\lmk W_\Lambda u_\Lambda^*\rmk(x)
=\Ad\lmk u_\Lambda^* w_\Lambda\rmk(x),\quad
x\in \caR'=\pi_0\lmk\caA_{\Lambda^c}\rmk''.
\end{align}
Note that for any $A\in \caA_{\Lambda^c}$, we have
\begin{align}
\begin{split}
T_{\Lambda}(A)
=\Ad\lmk u_{\Lambda}^*
w_{\Lambda}
V_{\rho\Gamma_{\Lambda}}\rmk\circ\rho(A)
=\Ad\lmk u_{\Lambda}^*
w_{\Lambda}
V_{\rho\Gamma_{\Lambda}} V_{\rho \Lambda}^*
\rmk\circ\pi_0(A)
=\Ad\lmk u_{\Lambda}^*
w_{\Lambda}
\rmk\circ\pi_0(A)\\
=\Ad\lmk W_\Lambda u_\Lambda^*\rmk\circ\pi_0(A)
=\Ad\lmk W_\Lambda \rmk\lmk u_\Lambda^*\pi_0(A) u_\Lambda\rmk
=\Ad\lmk W_\Lambda \rmk\lmk u_\Lambda^* u_\Lambda\pi_0(A)\rmk
=\Ad\lmk W_\Lambda \rmk\lmk \pi_0(A)\rmk.
\end{split}
\end{align}
Here we used that
\begin{align}
V_{\rho\Gamma_{\Lambda}} V_{\rho \Lambda}^*\in \pi_0
\lmk\caA_{\Lambda^c}\rmk'
\end{align}
in the third equality and (\ref{kro}) in the fourth equality
and the fact $u_\Lambda\in  \pi_0\lmk\caA_{\lmk \Lambda_{-2\varepsilon_\Lambda}\rmk^c}\rmk'
\subset \pi_0\lmk \caA_{\Lambda^c}\rmk'$ in the sixth equality.
This proves (\ref{waru}).

Now we set 
\begin{align}
\tau:=\Ad\lmk W_{\Lambda_0}^*\rmk\circ T_{\Lambda_0}
\end{align}
with $W_\Lambda$ in (\ref{waru}).
By definition, $\tau$ is a $*$-representation and we have $\tau\vert_{\caA_{\Lambda_0^c}}=\pi_0\vert_{\caA_{\Lambda_0^c}}$ by (\ref{waru}).
For each cone $\Lambda$, set 
\begin{align}
X_\Lambda:= W_{\Lambda}^*
u_{\Lambda}^*
w_{\Lambda}
V_{\rho\Gamma_{\Lambda}}
V_{\rho\Gamma_{\Lambda_0}}^*
w_{\Lambda_0}^*
u_{\Lambda_0}
 W_{\Lambda_0}.
\end{align}
We have
\begin{align}
\begin{split}
&X_\Lambda X_\Lambda^*
=W_{\Lambda}^*
u_{\Lambda}^*
w_{\Lambda}
V_{\rho\Gamma_{\Lambda}}
V_{\rho\Gamma_{\Lambda_0}}^*
w_{\Lambda_0}^*
u_{\Lambda_0}
 W_{\Lambda_0}
 W_{\Lambda_0}^*
 u_{\Lambda_0}^*
 w_{\Lambda_0}
 V_{\rho\Gamma_{\Lambda_0}}
 V_{\rho\Gamma_{\Lambda}}^*
 w_{\Lambda}^*
 u_{\Lambda}W_{\Lambda}\\
 &=
 W_{\Lambda}^*
u_{\Lambda}^*
w_{\Lambda}
V_{\rho\Gamma_{\Lambda}}
V_{\rho\Gamma_{\Lambda_0}}^*
p_{\Gamma_{\Lambda_0}}
 V_{\rho\Gamma_{\Lambda_0}}
 V_{\rho\Gamma_{\Lambda}}^*
 w_{\Lambda}^*
 u_{\Lambda}W_{\Lambda}\\
 &=
 W_{\Lambda}^*
u_{\Lambda}^*
w_{\Lambda}
p_{\Gamma_{\Lambda}}
 w_{\Lambda}^*
 u_{\Lambda}W_{\Lambda}=
 W_{\Lambda}^*
u_{\Lambda}^*
u_{\Lambda}
u_{\Lambda}^*
 u_{\Lambda}W_{\Lambda}=\unit.
\end{split}
\end{align}
Similarly we have $X_\Lambda^* X_\Lambda=\unit$
and $X_\Lambda$ is a unitary.
From the definition, we see that
\begin{align}
\Ad(X_\Lambda)\circ\tau
=\Ad(W_\Lambda^*)T_{\Lambda},
\end{align}
because
\begin{align}
\begin{split}
\Ad(X_\Lambda)\circ\tau
=\Ad\lmk W_{\Lambda}^*
u_{\Lambda}^*
w_{\Lambda}
V_{\rho\Gamma_{\Lambda}}
V_{\rho\Gamma_{\Lambda_0}}^*
w_{\Lambda_0}^*
u_{\Lambda_0}
 W_{\Lambda_0} W_{\Lambda_0}^*u_{\Lambda_0}^*
w_{\Lambda_0}
V_{\rho\Gamma_{\Lambda_0}}\rmk
\circ\rho\\
=\Ad\lmk W_{\Lambda}^*
u_{\Lambda}^*
w_{\Lambda}
V_{\rho\Gamma_{\Lambda}}
V_{\rho\Gamma_{\Lambda_0}}^*
p_{\Gamma_{\Lambda_0}}
V_{\rho\Gamma_{\Lambda_0}}\rmk
\circ\rho
=\Ad\lmk W_{\Lambda}^*
u_{\Lambda}^*
w_{\Lambda}
V_{\rho\Gamma_{\Lambda}}
p
\rmk
\circ\rho\\
=\Ad\lmk W_{\Lambda}^*
u_{\Lambda}^*
w_{\Lambda}p_{\Gamma_\Lambda}
V_{\rho\Gamma_{\Lambda}}
\rmk
\circ\rho
=\Ad\lmk W_{\Lambda}^*
u_{\Lambda}^*
w_{\Lambda}
V_{\rho\Gamma_{\Lambda}}
\rmk
\circ\rho=\Ad(W_\Lambda^*)T_{\Lambda}.
\end{split}
\end{align}
Therefore, from (\ref{waru}), we have 
$\Ad(X_\Lambda)\circ\tau\vert_{\caA_{\Lambda^c}}=\pi_0\vert_{\caA_{\Lambda^c}}$.
Hence we conclude $\tau\in \caO_{\Lambda_0}$.

Finally, we set
\begin{align}
v:=V_{\rho\Gamma_{\Lambda_0}}^*
w_{\Lambda_0}^*
u_{\Lambda_0}
 W_{\Lambda_0}
\end{align}
and prove 
\begin{align}
v\in (\tau,\rho),\quad v^*v=\unit,\quad vv^*=p.
\end{align}
For any $A\in\at$,
\begin{align}\begin{split}
v \tau(A)
=V_{\rho\Gamma_{\Lambda_0}}^*
w_{\Lambda_0}^*
u_{\Lambda_0}
 W_{\Lambda_0}
 W_{\Lambda_0}^* 
 u_{\Lambda_0}^*
w_{\Lambda_0}
V_{\rho\Gamma_{{\Lambda_0}}}\rho(A)
V_{\rho\Gamma_{\Lambda_0}}^*
w_{\Lambda_0}^*
u_{\Lambda_0}
 W_{\Lambda_0}
\\ 
=\rho(A)p
V_{\rho\Gamma_{\Lambda_0}}^*
w_{\Lambda_0}^*
u_{\Lambda_0}
 W_{\Lambda_0}
 =\rho(A)
V_{\rho\Gamma_{\Lambda_0}}^* p_{\Gamma_{\Lambda_0}}
w_{\Lambda_0}^*
u_{\Lambda_0}
 W_{\Lambda_0}
 =\rho(A)
V_{\rho\Gamma_{\Lambda_0}}^*
w_{\Lambda_0}^*
u_{\Lambda_0}
 W_{\Lambda_0}
=\rho(A)v.\end{split}
\end{align}
Hence $v\in (\tau,\rho)$.
Furthermore, we have
\begin{align}
\begin{split}
&vv^*
=
V_{\rho\Gamma_{\Lambda_0}}^*
w_{\Lambda_0}^*
u_{\Lambda_0}
u_{\Lambda_0}^*
w_{\Lambda_0}
 V_{\rho\Gamma_{\Lambda_0}}
 =V_{\rho\Gamma_{\Lambda_0}}^*
 p_{\Lambda_0}
 V_{\rho\Gamma_{\Lambda_0}}
 =p,\\
 &v^*v
 =
 W_{\Lambda_0}^*
u_{\Lambda_0}^*
u_{\Lambda_0}
 W_{\Lambda_0}
 =
 \unit.
 \end{split}
\end{align}
This completes the proof.
\end{proofof}

\begin{proofof}[Theorem \ref{catset}]
We prove that $\caO_{\Lambda_0}$ and its morphisms form a braided $C^*$-tensor
category \cite{NT}. 

For any $\rho,\sigma\in \caO_{\Lambda_0}$,
$(\rho,\sigma)$ is a Banach space.
For any $\rho,\sigma, \gamma\in \caO_{\Lambda_0}$,
the map 
\begin{align}
(\sigma,\gamma)\times (\rho,\sigma)\ni (S, R)\mapsto
SR\in (\rho,\gamma)
\end{align}
is bilinear and $\lV SR\rV\le \lV S\rV \lV R\rV$ holds.
For any $R\in (\rho,\sigma)$, its adjoint $R^*$ belongs to
$(\sigma,\rho)$ and $R^{**}=R$,
$\lV R^* R\rV=\lV R\rV^2$
(in particular, $(\rho,\rho)$ is a $C^*$-algebra), and $R^*R\in (\rho,\rho)$ is positive.
We define the tensor product of $\caO_{\Lambda_0}$ by
\begin{align}
\rho\otimes \sigma :=\rho\circ_{(\theta,\varphi),\Lambda_0, \{ V_{\eta\Lambda_0}=\unit\}}\sigma,\quad
\rho,\sigma \in \caO_{\Lambda_0}.
\end{align}
Tensor product of intertwiners are given by 
\begin{align}
R_1\otimes R_2=R_1\trlz\rho(R_2),\quad
R_1\in (\rho,\rho'),\quad R_2\in (\sigma,\sigma'),\quad
\rho,\rho',\sigma,\sigma'\in \caO_{\Lambda_0}
\end{align}
as in Lemma \ref{lem18}.
Note that it is bi-linear.
Morphisms $\alpha_{\rho, \sigma,\gamma}:=\unit_{\caB(\caH)}$
 play the role of the associativity morphisms.
 (Recall Lemma \ref{lem14}.)
 The irreducible representation $\pi_0$ corresponds to the unit object, because
 $\trlz{\pi_0}=\id$.
 (This can be see from Lemma \ref{lem9} (a).)
  Furthermore, setting
$\lambda_\sigma=\rho_{\sigma}=\unit_{\caB(\caH)}$,
we have natural unitary isomorphisms
$\lambda_\sigma : \pi_o\otimes\sigma\to \sigma$
and $\rho_\sigma: \sigma\otimes \pi_0\to \sigma$.
It is clear that the associativity morphism satisfies the pentagonal diagram,
and the natural unitary isomorphisms satisfy the triangle diagram.
By Lemma \ref{lem18},
morphisms satisfy $(R\otimes S)^*=R^*\otimes S^*$.
The existence of the direct sums and subobjects are proven in Lemma \ref{directsum} and 
Lemma \ref{lem41} respectively.
The unit object $\pi_0$ is simple, i.e. $(\pi_0,\pi_0)=\bbC\unit_{\caB(\caH)}$
because it is irreducible.
As a subset of $\at\times\caH\times\caH$, $\caO_{\Lambda_0}$ is a set.
Definition \ref{epdef} and Lemma \ref{lem416} proves that
$\epsilon_+^{\Lambda_0}$ gives the braiding.
\end{proofof}

\section{Stability of the braiding structure}\label{stasec}
\begin{thm}\label{monoidalthm}
Let $\Phi_1,\Phi_2$ be uniformly bounded finite range interactions on $\caA_{\bbZ^2}$
and $\omega_{\Phi_i}$ $i=1,2$ pure $\tau_{\Phi_i}$-ground states satisfying the gap condition.
Suppose that the GNS  representation of $\omega_{\Phi_1}$ has a non-trivial sector theory and
satisfies the approximate Haag duality.
Suppose that there is a approximately-factorizable automorphism $\alpha$
such that $\omega_{\Phi_2}=\omega_{\Phi_1}\circ\alpha$.
Then the following hold.
\begin{description}
\item[(i)]The GNS  representation of $\omega_{\Phi_2}$ has a non-trivial sector theory and
satisfies the approximate Haag duality.
\item[(ii)]Fix some $\theta\in\bbR$, $\varphi\in (0,\pi)$
and $\Lambda_0^{(i)}\in \ctv$, for $i=1,2$.
From (i) and Theorem \ref{catset}, we obtain braided $C^*$-tensor categories $C_1=(\caO_1,Mor(\caO_1))$,
$C_2=(\caO_2,Mor(\caO_2))$ associated 
to $\omega_{\Phi_1}$ $\Lambda_0^{(1)}$ and $\omega_{\Phi_2}$, $\Lambda_0^{(2)}$ respectively.
The braided $C^*$-tensor category $C_1$, $C_2$ are unitarily monoidaly equivalent.
\end{description}
\end{thm}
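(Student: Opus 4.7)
\emph{Proof plan.} Part (i) follows almost directly from material already developed. The approximate Haag duality for $\pi_2:=\pi_{\Phi_1}\circ\alpha$, the GNS representation of $\omega_{\Phi_2}$ realized on the same Hilbert space $\caH$ as $\pi_1:=\pi_{\Phi_1}$, is exactly Proposition \ref{staah}. For non-triviality of the sector theory of $\pi_2$, observe that $\rho\mapsto\rho\circ\alpha$ is a bijection on representations of $\at$ on $\caH$ that preserves quasi-equivalence; a short argument in the spirit of the proof of Proposition \ref{staah}, using the approximate factorization of $\alpha$ to compare $\alpha(\caA_{\Lambda^c})$ with $\caA_{\Lambda^c}$ cone by cone, shows that it sends representations satisfying the superselection criterion for $\pi_1$ to those satisfying it for $\pi_2$.

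For part (ii), define a functor $F:C_1\to C_2$ as follows. For each $\rho_1\in\caO_{\Lambda_0^{(1)}}$, fix a unitary $V_{\rho_1}\in\caU(\caH)$ with $V_{\pi_1}=\unit$ such that $F(\rho_1):=\Ad(V_{\rho_1})\circ\rho_1\circ\alpha\in\caO_{\Lambda_0^{(2)}}$, and on morphisms set $F(R):=V_{\sigma_1}RV_{\rho_1}^*$ for $R\in(\rho_1,\sigma_1)_{C_1}$. A direct check shows that $F$ is a faithful, essentially surjective $*$-functor compatible with direct sums (Lemma \ref{directsum}) and subobjects (Lemma \ref{lem41}).

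To build the tensorator, first apply Lemma \ref{intr} to $\alpha$ and $\alpha^{-1}$ (using that $\alpha^{-1}$ is also approximately factorizable by Definition \ref{qfdef}(i)) to show that $\caB^{(1)}_{(\theta,\varphi)}=\caB^{(2)}_{(\theta,\varphi)}$ as subalgebras of $\caB(\caH)$; call this common algebra $\caB$. Write $T^{(i)}_{\eta}$ for the extension $T_{\eta}^{(\theta,\varphi),\Lambda_0^{(i)},\unit}$ attached to $\pi_i$. On $\pi_2(\at)$ one has
\[
T^{(2)}_{F(\rho_1)}\circ\pi_2=F(\rho_1)=\Ad(V_{\rho_1})\circ T^{(1)}_{\rho_1}\circ\pi_2,
\]
and an adaptation of the uniqueness in Lemma \ref{lem9}(a)---combining $\sigma\mathrm w$-continuity on $\pi_2(\caA_\Lambda)''$ of $\Ad(V_{\rho_1})\circ T^{(1)}_{\rho_1}$ (derivable from Lemmas \ref{sclem} and \ref{intr}) with that of $T^{(2)}_{F(\rho_1)}$---yields the global identity $T^{(2)}_{F(\rho_1)}=\Ad(V_{\rho_1})\circ T^{(1)}_{\rho_1}$ on $\caB$. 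Expanding the two definitions of the tensor product then gives the explicit unitary tensorator
\[
U_{\rho_1,\sigma_1}:=V_{\rho_1\otimes_1\sigma_1}\cdot T^{(1)}_{\rho_1}(V_{\sigma_1})^*\cdot V_{\rho_1}^*\ :\ F(\rho_1)\otimes_2 F(\sigma_1)\xrightarrow{\ \sim\ }F(\rho_1\otimes_1\sigma_1),
\]
whose naturality follows from Lemma \ref{lem38} and whose pentagon coherence follows from Lemma \ref{lem15} together with the strictness of the associators in both $C_i$. The unit isomorphism is trivial since $F(\pi_1)=\pi_2$.

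The main obstacle will be the braiding compatibility
\[
\epsilon_+^{(\Lambda_0^{(2)})}(F(\rho_1),F(\sigma_1))=U_{\sigma_1,\rho_1}^{-1}\cdot F\!\bigl(\epsilon_+^{(\Lambda_0^{(1)})}(\rho_1,\sigma_1)\bigr)\cdot U_{\rho_1,\sigma_1}.
\]
Both sides are limits $\bm t\to\infty$ of products of the transport unitaries of Definition \ref{epdef}, and the task is to match the unitaries $\bar V^{(1)}_{\rho_1,\Lambda_i+t_i\bm e_{\Lambda_i}}$ implementing the superselection criterion for $\pi_1$ with their $\pi_2$-counterparts for $F(\rho_1)$, modulo conjugation by $V_{\rho_1}$ and localized corrections coming from the approximate factorization of $\alpha$. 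Quantitative control is provided by the decay functions $f_{\varphi,\varepsilon,\delta}$ and $g_{\varphi,\delta,\delta'}$ of Definitions \ref{assum7} and \ref{qfdef} via Lemma \ref{intr}, and the limits are handled by the diagonal-sequence and commutator arguments used throughout Sections \ref{supersec}--\ref{intsec}, in particular Lemmas \ref{lem201}, \ref{lem320}, \ref{lem22}, \ref{lem26}, \ref{lem28}. Once this braiding identity is established, the hexagon diagrams reduce, via Lemmas \ref{lem31} and \ref{lem416}, to the corresponding identities already known in $C_1$, completing the unitary monoidal equivalence.
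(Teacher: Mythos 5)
Your overall route is the same as the paper's: realize both GNS representations on one Hilbert space with $\pi_2=\pi_1\circ\alpha$, transport objects by $\rho_1\mapsto \Ad(V_{\rho_1})\circ\rho_1\circ\alpha$, show $\caB^{(1)}_{(\theta,\varphi)}=\caB^{(2)}_{(\theta,\varphi)}$, identify the extensions via the uniqueness of Lemma \ref{lem9} (a), read off a tensorator, and then check braided compatibility by comparing the transport unitaries in the $\bm t\to\infty$ limit. However, as written there are two genuine gaps. First, your tensorator $U_{\rho_1,\sigma_1}=V_{\rho_1\otimes_1\sigma_1}\,T^{(1)}_{\rho_1}(V_{\sigma_1})^*\,V_{\rho_1}^*$ (and already the expansion of $F(\rho_1)\otimes F(\sigma_1)$ using $T^{(2)}_{F(\sigma_1)}=\Ad(V_{\sigma_1})\circ T^{(1)}_{\sigma_1}$) requires applying $T^{(1)}_{\rho_1}$ to $V_{\sigma_1}$, but $T^{(1)}_{\rho_1}$ is only defined on $\caB_{(\theta,\varphi)}$, and for an \emph{arbitrary} unitary $V_{\sigma_1}$ implementing $\Ad(V_{\sigma_1})\circ\sigma_1\circ\alpha\in\caO_2$ there is no a priori reason that $V_{\sigma_1}\in\caB_{(\theta,\varphi)}$. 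Two such implementers differ by an element of $\pi_2(\caA_{(\Lambda_0^{(2)})^c})'\subset\caB_{(\theta,\varphi)}$, so the issue is precisely to exhibit \emph{one} implementer inside $\caB_{(\theta,\varphi)}$; this is where the approximate factorization of $\alpha$ is used in the paper, which takes the explicit choice $Y_{\rho_1}^{\Lambda}=\rho_1(v_{\Lambda,\varepsilon})\,\bar V_{\rho_1,\Lambda_{-\varepsilon}}^*\,\pi_1(v_{\Lambda,\varepsilon})^*$, manifestly a product of elements of $\caB_{(\theta,\varphi)}$. Without this (or an equivalent) construction, your $F_2$ is not known to be defined.

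Second, the braided compatibility is the bulk of the argument and in your proposal it remains a plan. After reducing the hexagon-type identity to a single limit statement, the crucial step is \emph{not} covered by the commutator lemmas you cite: the comparison unitaries $X^{t_1}_{\rho_1\Lambda_1}=W^{\bm t}_{\rho_1\Lambda_0^{(1)}\Lambda_1}\,Y_{\rho_1}^{\Lambda_0^{(2)}}\,\bigl(W^{\bm t}_{F(\rho_1)\Lambda_0^{(2)}\Lambda_1}\bigr)^*$ mix $\pi_1$-transporters, $\pi_2$-transporters and the comparison unitary, and are not a priori localized with respect to either net, so Lemmas \ref{lem201}, \ref{lem22}, \ref{lem26}, \ref{lem319}, \ref{lem320} do not yet apply. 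One must first show that, up to an error controlled by the decay functions $f$ and $g$, each $X^{t_1}_{\rho_1\Lambda_1}$ is a unitary in $\pi_2\bigl(\caA_{((\Lambda_1)_{9\delta}+(t_1-2s)\bm e_{\Lambda_1})^c}\bigr)'$: this is done by computing $\Ad\bigl(X^{t_1}_{\rho_1\Lambda_1}\bigr)\circ\pi_2$ on $\caA_{((\Lambda_1+t_1\bm e_{\Lambda_1})_\delta)^c}$, which yields $X^{t_1}_{\rho_1\Lambda_1}=T^{(1)}_{\rho_1,\Lambda_1+t_1\bm e_{\Lambda_1}}\bigl(\pi_1(v)\bigr)\pi_1(v)^*\,y$ with $y$ localized, then replacing $v$ by the localized approximants $v'$ of Definition \ref{qfdef} and invoking Lemma \ref{intr} to convert $\pi_1$-cone localization into $\pi_2$-cone localization. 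Only after this localization estimate do Lemmas \ref{lem319} and \ref{lem320} produce the vanishing of the commutator in the limit. Supplying this estimate (and the $\caB_{(\theta,\varphi)}$-membership above) would bring your proposal in line with the paper's proof.
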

\begin{proof}
(i) is from \cite{NaOg} and Lemma \ref{staah}. We prove (ii).
Let $(\caH_1, \pi_1)$ be a GNS representation for $\omega_{\Phi_1}$.
Then $(\caH_2,\pi_2):=(\caH_1,\pi_1\alpha)$ is a GNS representation of $\omega_{\Phi_2}$.
For any cone $\Lambda$ and $\varepsilon>0$,
there are automorphism $\beta_{\Lambda}, \tilde \beta_{\Lambda}
\in\Aut\lmk\caA_{\Lambda}\rmk$,
$\beta_{\Lambda^c}, \tilde \beta_{\Lambda^c}\in \Aut\lmk\caA_{\Lambda^c}\rmk$
and $\Xi_{\Lambda, \varepsilon}, \tilde \Xi_{\Lambda, \varepsilon}\in \Aut\lmk\caA_{\Lambda_{\varepsilon}\cap ((\Lambda^c)_\varepsilon)}\rmk$
and unitaries $v_{\Lambda\varepsilon},\tilde v_{\Lambda\varepsilon}\in\at$
such that $\alpha=\Ad\lmk v_{\Lambda,\varepsilon}\rmk\circ
\Xi_{\Lambda, \varepsilon}\circ \lmk\beta_{\Lambda}\otimes \beta_{\Lambda^c}\rmk$,
$\alpha^{-1}=\Ad\lmk \tilde v_{\Lambda,\varepsilon}\rmk\circ
\tilde \Xi_{\Lambda, \varepsilon}\circ \lmk\tilde \beta_{\Lambda}\otimes \tilde
\beta_{\Lambda^c}\rmk$.

For each $\eta\in\caO_i$ and $\Lambda\in\ctv$, we fix some $\bar V_{\eta,\Lambda}\in \caV_{\eta,\Lambda}$. We may and we do choose  $\bar V_{\eta,\Lambda_0^{(i)}}=\unit$, $\eta\in\caO_i$
and $\bar V_{\pi_i,\Lambda}=\unit$ for $i=1,2$ and any cone $\Lambda$.
For each objects $\rho_1\in \caO_1$, $\rho_2\in \caO_2$ and a cone $\Lambda$,
we fix $0<2\varepsilon<|\arg\Lambda|$ and
set
\begin{align}\label{eq61}
Y_{\rho_1}^\Lambda
:= \rho_1(v_{\Lambda,\varepsilon}) \bar V_{\rho_1,\Lambda_{-\varepsilon}}^*
\pi_1\lmk v_{\Lambda,\varepsilon}^*\rmk,\quad
\tilde Y_{\rho_2}^\Lambda
:= \rho_2(\tilde v_{\Lambda,\varepsilon}) \bar V_{\rho_2,\Lambda_{-\varepsilon}}^*
\pi_2\lmk \tilde v_{\Lambda,\varepsilon}^*\rmk.
\end{align}
Due to the choice of $\bar V_{\pi_i,\Lambda}=\unit$, we have
$Y_{\pi_1}^\Lambda=\unit=\tilde Y_{\pi_2}^{\Lambda}$.
We have 
\begin{align}
\Ad\lmk \lmk Y_{\rho_1}^{\Lambda}\rmk^*\rmk\circ\rho_1\circ\alpha\vert_{\caA_{\Lambda^c}}=\pi_2\vert_{\caA_{\Lambda^c}}
\end{align}
because
for any $A\in \caA_{\Lambda^c}$
\begin{align}
\begin{split}
&\rho_1\circ\alpha(A)
=\rho_1\circ \Ad\lmk v_{\Lambda,\varepsilon}\rmk\circ
\Xi_{\Lambda, \varepsilon}\circ \lmk\beta_{\Lambda}\otimes \beta_{\Lambda^c}\rmk(A)
=\Ad\lmk\rho_1\lmk v_{\Lambda,\varepsilon}\rmk\rmk
\rho_1\lmk
\Xi_{\Lambda, \varepsilon}\circ \lmk\beta_{\Lambda}\otimes \beta_{\Lambda^c}\rmk(A)
\rmk\\
&=
\Ad\lmk\rho_1\lmk v_{\Lambda,\varepsilon}\rmk \bar V_{\rho_1\Lambda_{-\varepsilon}}^*\rmk
\pi_1\lmk
\Xi_{\Lambda, \varepsilon}\circ \lmk\beta_{\Lambda}\otimes \beta_{\Lambda^c}\rmk(A)
\rmk
=\Ad\lmk \rho_1(v_{\Lambda,\varepsilon}) \bar V_{\rho_1,\Lambda_{-\varepsilon}}^*
\pi_1\lmk v_{\Lambda,\varepsilon}^*\rmk\rmk
\lmk \pi_2(A)\rmk.
\end{split}
\end{align}
Here we used $\Xi_{\Lambda, \varepsilon}\circ \lmk\beta_{\Lambda}\otimes \beta_{\Lambda^c}\rmk(A)
\in \pi_1(\caA_{\lmk \Lambda_{-\varepsilon}\rmk^c})$.
Similarly, we get 
$\Ad\lmk \lmk \tilde Y_{\rho_2}^{\Lambda}\rmk^*\rmk\circ\rho_2\circ\alpha^{-1}\vert_{\caA_{\Lambda^c}}=\pi_1\vert_{\caA_{\Lambda^c}}$.

From this, for each $\rho_1\in\caO_1$ and $\rho_2\in\caO_2$
we get maps $F:\caO_1\to\caO_2$, $G: \caO_2\to \caO_1$ defined by
\begin{align}\label{188}
F(\rho_1):=\Ad\lmk \lmk Y_{\rho_1}^{\Lambda_0^{(2)}}\rmk^*\rmk\circ\rho_1\circ\alpha\in\caO_2,\quad
G(\rho_2):=\Ad\lmk \lmk \tilde Y_{\rho_2}^{\Lambda_0^{(1)}}\rmk^*\rmk\circ\rho_2\circ\alpha^{-1}
\in \caO_1.
\end{align}
Note that
\begin{align}\label{189}
GF(\rho_1)
=\Ad\lmk
\lmk
\tilde Y_{F(\rho_1)}^{\Lambda_0^{(1)}}
\rmk^* 
\lmk Y_{\rho_1}^{\Lambda_0^{(2)}}\rmk^*
\rmk\circ\rho_1.
\end{align}
We also have $F(\pi_1)=\pi_1\alpha=\pi_2$ and
$G(\pi_2)=\pi_2\alpha^{-1}=\pi_1$.
For morphisms $R_1\in (\rho_1,\sigma_1)$ with $\rho_1,\sigma_1\in \caO_1$, 
we have
\begin{align}
F(R_1):=\lmk Y_{\sigma_1}^{\Lambda_0^{(2)}}\rmk^*
R_1 Y_{\rho_1}^{\Lambda_0^{(2)}}\in \lmk F(\rho_1),F(\sigma_1)\rmk.
\end{align}
In fact for any $A\in \at$, we have
\begin{align}
\begin{split}
F(R_1)\lmk F(\rho_1)(A)\rmk
=\lmk Y_{\sigma_1}^{\Lambda_0^{(2)}}\rmk^*
R_1 Y_{\rho_1}^{\Lambda_0^{(2)}}
 \lmk Y_{\rho_1}^{\Lambda_0^{(2)}}\rmk^*\lmk \rho_1\circ\alpha\lmk A\rmk\rmk Y_{\rho_1}^{\Lambda_0^{(2)}}
 =\lmk Y_{\sigma_1}^{\Lambda_0^{(2)}}\rmk^*
\lmk \sigma_1\circ\alpha\lmk A\rmk \rmk R_1Y_{\rho_1}^{\Lambda_0^{(2)}}\\
=\Ad\lmk \lmk Y_{\sigma_1}^{\Lambda_0^{(2)}}\rmk^*\rmk
\lmk \sigma_1\circ\alpha\lmk A\rmk \rmk 
\cdot \lmk Y_{\sigma_1}^{\Lambda_0^{(2)}}\rmk^*
R_1Y_{\rho_1}^{\Lambda_0^{(2)}}
=\lmk F(\sigma_1)(A)\rmk F(R_1).
\end{split}
\end{align}
Similarly, for morphisms $R_2\in (\rho_2,\sigma_2)$ with $\rho_2,\sigma_2\in \caO_2$, 
we have
\begin{align}
G(R_2):=\lmk \tilde Y_{\sigma_2}^{\Lambda_0^{(1)}}\rmk^*
R_2 \tilde Y_{\rho_2}^{\Lambda_0^{(1)}}\in \lmk G(\rho_2),G(\sigma_2)\rmk.
\end{align}
Note that $F,G$ are linear on morphisms and 
 $F(R_1)^*=F(R_1^*)$ and $G(R_2)^*=G(R_2^*)$.
Hence we obtained a functor $F$ from $C_1$ to $C_2$
and  a functor $G$ from $C_2$ to $C_1$.
Any element of $(\pi_i,\pi_i)$ is of the form $c\unit$ because $\pi_i$
is irreducible
and
$F(c\unit)=c\unit$.

Next we would like to make $F,G$ tensor functors.
To do so, first we note the domain $\caB_{(\theta,\varphi)}^{(i)}:=\overline{\cup_{\Lambda\in \ctv} \pi_i\lmk \caA_{\Lambda^c}\rmk'}$
of $T_{\rho_i}^{(\theta,\varphi), \Lambda_0^{(i)} \{\unit\}}$
for $i=1,2$ are the same.
Recall from Lemma \ref{lemhoshi}
that $\btv^{(i)}=\overline{\cup_{\Lambda\in \ctv}\pi_i(\caA_\Lambda)''}$.
In fact 
for any $\Lambda\in \ctv$, choose $\varepsilon>0$ with $\Lambda_\varepsilon\in\ctv$, and
we have
\begin{align}\label{eq68}
\pi_2\lmk\caA_{\Lambda}\rmk''
=\lmk \pi_1\circ\alpha\lmk\caA_{\Lambda}\rmk\rmk''
=\Ad\lmk \pi_1(v_{\Lambda,\varepsilon})\rmk
\lmk
\lmk \pi_1\lmk \Xi_{\Lambda\varepsilon}\beta_{\Lambda}\lmk\caA_{\Lambda}\rmk\rmk\rmk''
\rmk
\subset \Ad\lmk \pi_1(v_{\Lambda,\varepsilon})\rmk
\lmk \pi_1\lmk \caA_{\Lambda_\varepsilon}\rmk''\rmk
\subset \caB_{(\theta,\varphi)}^{(1)}.
\end{align}
Hence we have $\caB_{(\theta,\varphi)}^{(2)}\subset \caB_{(\theta,\varphi)}^{(1)}$.
Similarly, we have $\caB_{(\theta,\varphi)}^{(1)}\subset \caB_{(\theta,\varphi)}^{(2)}$.
From now on we use the notation $\btv:= \caB_{(\theta,\varphi)}^{(1)}=\caB_{(\theta,\varphi)}^{(2}$.

Note that $\wrl{\rho_1}{2}, \wrlt{\rho_2}{1}\in \btv$.
This is because of (\ref{eq61}) and
\begin{align}
\begin{split}
&\rho_1\lmk \at\rmk\in\trlzi{\rho_1}{1}\circ\pi_1(\at)
\subset\btv,\quad
\pi_1\lmk \at \rmk\in\btv,\\
&\bar V_{\rho_1\lmk \Lambda_0^{(2)}\rmk_{-\varepsilon}}^*=\bar V_{\rho_1
\Lambda_0^{(1)}} \bar V_{\rho_1\lmk \Lambda_0^{(2)}\rmk_{-\varepsilon}}^*
\in \btv.
\end{split}
\end{align}

Next we prove
\begin{align}\label{trh}
\trlzi{F(\rho_1)}{2}=\Ad\lmk \lmk Y_{\rho_1}^{\Lambda_0^{(2)}}\rmk^*\rmk
\trlzi{\rho_1}{1},\quad \text{for all} \quad \rho_1\in\caO_{1}.
\end{align}
Set $X:=\Ad\lmk \lmk Y_{\rho_1}^{\Lambda_0^{(2)}}\rmk^*\rmk\trlzi{\rho_1}{1}$.
By the definition of $F$, and $\trlzi{\rho_1}{1}\circ\pi_1=\rho_1$,
 we have $X\circ\pi_2=X\circ\pi_1\circ\alpha=F(\rho_1)$.
 For any cone $\Lambda\in\ctv$, choose $\varepsilon>0$
 with $\Lambda_\varepsilon\in \ctv$ and 
 we have
 $
 \pi_2(\caA_{\Lambda})''
\subset  \Ad\lmk\pi_1\lmk  v_{\Lambda\varepsilon}\rmk\rmk
\lmk \pi_1\lmk \caA_{\Lambda_\varepsilon}\rmk''\rmk
 $ from (\ref{eq68}).
 Because $X$ is $\sigma$w-continuous on $\pi_1(\caA_{\Lambda_\varepsilon})''$,
 it is  $\sigma$w-continuous on $\pi_2(\caA_{\Lambda})''$.
 From Lemma \ref{lem9},
 we have $\trlzi{F(\rho_1)}{2}=X$, proving (\ref{trh}).
 Similarly, we have
 \begin{align}\label{trh2}
\trlzi{G(\rho_2)}{1}=\Ad\lmk \lmk Y_{\rho_2}^{\Lambda_0^{(1)}}\rmk^*\rmk
\trlzi{\rho_2}{2},\quad \text{for all} \quad \rho_2\in\caO_{2}.
\end{align}

From $\wrl{\rho_1}{2}, \wrlt{\rho_2}{1}\in \btv$, we can define unitaries
\begin{align}\label{199}
F_2(\rho_1,\sigma_1)
:=\lmk \wrl{\rho_1\otimes\sigma_1}2\rmk^* \trlzi{\rho_1}1\lmk \wrl{\sigma_1}2\rmk\wrl{\rho_1}2,\quad
\rho_1,\sigma_1\in \caO_1.
\end{align}
We then find using (\ref{trh}) that 
for any $\rho_1,\sigma_1\in \caO_1$ that
\begin{align}\label{200}
\begin{split}
&F(\rho_1)\otimes F(\sigma_1)
=\trlzi{F(\rho_1)}2\circ \trlzi{F(\sigma_1)}2\circ\pi_2\\
&=\Ad\lmk \lmk \wrl{\rho_1}2\rmk^*\rmk
\circ
\trlzi{\rho_1}1\Ad\lmk
\lmk \wrl{\sigma_1}2\rmk^*
\rmk
\circ \trlzi{\sigma_1}1\circ\pi_2\\
&=\Ad\lmk \lmk\wrl{\rho_1}2\rmk^* \trlzi{\rho_1}1 \lmk \wrl{\sigma_1}2\rmk^*\rmk\circ
\trlzi{\rho_1}1\trlzi{\sigma_1}1\pi_1\circ\alpha\\
&=\Ad\lmk F_2(\rho_1,\sigma_1)^*\rmk\circ F\lmk\rho_1\otimes \sigma_1\rmk.
\end{split}
\end{align}
Similarly, for $\rho_2,\sigma_2\in \caO_2$,
we have
\begin{align}\label{201}
G(\rho_2)\otimes G(\sigma_2)
=\Ad\lmk G_2(\rho_2,\sigma_2)^*\rmk\lmk G\lmk \rho_2\otimes \sigma_2\rmk\rmk
\end{align}
with
\begin{align}\label{202}
G_2(\rho_2,\sigma_2)
:=\lmk \wrlt{\rho_2\otimes\sigma_2}1\rmk^* \trlzi{\rho_2}2\lmk \wrlt{\sigma_2}1\rmk\wrlt{\rho_2}1,\quad
\rho_2,\sigma_2\in \caO_2.
\end{align}
We have
\begin{align}\label{203}
\begin{split}
&F_2\lmk \rho_1\otimes\sigma_1,\tau_1\rmk F_2(\rho_1,\sigma_1)\\
&=\lmk \wrl{\rho_1\otimes\sigma_1\otimes \tau_1}2\rmk^* \trlzi{\rho_1\otimes \sigma_1}1\lmk \wrl{\tau_1}2\rmk\wrl{\rho_1\otimes\sigma_1}2\\
&\lmk \wrl{\rho_1\otimes\sigma_1}2\rmk^* \trlzi{\rho_1}1\lmk \wrl{\sigma_1}2\rmk\wrl{\rho_1}2\\
&=\lmk \wrl{\rho_1\otimes\sigma_1\otimes \tau_1}2\rmk^* \trlzi{\rho_1\otimes \sigma_1}1\lmk \wrl{\tau_1}2\rmk \trlzi{\rho_1}1\lmk \wrl{\sigma_1}2\rmk\wrl{\rho_1}2,
\end{split}
\end{align}
and
\begin{align}\label{204}
\begin{split}
&F_2\lmk\rho_1,\sigma_1\otimes \tau_1\rmk
\trlzi{F(\rho_1)}2\lmk F_2(\sigma_1,\tau_1)\rmk\\
&=\lmk \wrl{\rho_1\otimes\sigma_1\otimes \tau_1}2\rmk^* \trlzi{\rho_1}1\lmk \wrl{\sigma_1\otimes \tau_1}2\rmk\wrl{\rho_1}2
\trlzi{F(\rho_1)}2
\lmk \lmk \wrl{\sigma_1\otimes \tau_1}2\rmk^* \trlzi{\sigma_1}1\lmk \wrl{\tau_1}2\rmk\wrl{\sigma_1}2\rmk\\
&=
\lmk \wrl{\rho_1\otimes\sigma_1\otimes \tau_1}2\rmk^* \trlzi{\rho_1}1\lmk \wrl{\sigma_1\otimes \tau_1}2\rmk
\trlzi{\rho_1}1
\lmk \lmk \wrl{\sigma_1\otimes \tau_1}2\rmk^* 
\trlzi{\sigma_1}1\lmk \wrl{\tau_1}2\rmk\wrl{\sigma_1}2\rmk
\wrl{\rho_1}2\\
&=
\lmk \wrl{\rho_1\otimes\sigma_1\otimes \tau_1}2\rmk^* 
\trlzi{\rho_1}1
\lmk 
\trlzi{\sigma_1}1\lmk \wrl{\tau_1}2\rmk\wrl{\sigma_1}2\rmk
\wrl{\rho_1}2\\
&=
\lmk \wrl{\rho_1\otimes\sigma_1\otimes \tau_1}2\rmk^* 
\trlzi{\rho_1\otimes\sigma_1}1\lmk \wrl{\tau_1}2\rmk
\trlzi{\rho_1}1\lmk\wrl{\sigma_1}2\rmk
\wrl{\rho_1}2.
\end{split}
\end{align}
We used (\ref{trh}).
In the last line we used Lemma \ref{lem14}.
Note that the right hand side of (\ref{203}) and (\ref{204})
are the same.

Hence the following holds:
\[
\begin{CD}
 \lmk F(\rho_1)\otimes F(\sigma_1)\rmk\otimes F(\tau_1) @>{F_2\otimes \iota}>> 
 F(\rho_1\otimes \sigma_1)\otimes F(\tau_1) @>{F_2}>>F\lmk\lmk\rho_1\otimes\sigma_1\rmk\otimes \tau_1\rmk\\
  @V{\unit=a'}VV   @. @VV{F(a)=\unit}V \\
    F(\rho_1)\otimes \lmk F(\sigma_1)\otimes F(\tau_1) \rmk    @> {\iota\otimes F_2}>>  
    F(\rho_1)\otimes F(\sigma_1\otimes \tau_1) @> {F_2}>> F\lmk\rho_1\otimes \lmk \sigma_1\otimes \tau_1\rmk\rmk
\end{CD},
\]
where $a'$ and $F(a)$ are associators which are $\unit$.

Note from Lemma \ref{lem9} (a)
that $\trlzi{\pi_1}1=\id$.
For any $\rho_1\in\caO_1$, we then have
\begin{align}
F_2\lmk \pi_1,\rho_1\rmk
=\lmk \wrl{\pi_1\otimes\rho_1}2\rmk^* \trlzi{\pi_1}1
\lmk \wrl{\rho_1}2\rmk \wrl{\pi_1}2=\unit.
\end{align}
Hence the following holds:
\[
\begin{CD}
F(\pi_1)\otimes F(\rho_1) @>{F_2}>>F(\pi_1\otimes \rho_1)\\
@A {\unit=F_0\otimes\iota}AA @VV{F(\lambda_1)=\unit}V\\
\pi_2\otimes F(\rho_1)@>{\lambda_2=\unit}>> F(\rho_1)
\end{CD}
\]
Similarly, we have
\begin{align}
F_2(\rho_1,\pi_1)=\lmk \wrl{\rho_1\otimes\pi_1}2\rmk^*
\trlzi{\rho_1}1(  \wrl{\pi_1}2 ) \lmk \wrl{\rho_1}2\rmk^*=\unit.
\end{align}
Hence $F$ is a unitary tensor functor.
Similarly, $G$ is a unitary tensor functor.

From (\ref{trh}),(\ref{trh2}),, we have
\begin{align}
\eta(\rho_1):=\lmk \wrlt{F(\rho_1)}1\rmk^* \lmk \wrl{\rho_1}2\rmk^*\in \lmk
\rho_1,GF(\rho_1)\rmk\cap \lmk \trlzi{\rho_1}1,\trlzi{GF(\rho_1)}1\rmk.
\end{align}
Note that $\eta(\pi_1)=\unit$.

Furthermore, we have
\begin{align}
\begin{split}
&GF(\rho_1)\otimes GF(\sigma_1)
=\Ad\lmk G_2\lmk F(\rho_1), F(\sigma_1)\rmk^*\rmk
\circ G\lmk F(\rho_1)\otimes F(\sigma_1)\rmk\\
&=\Ad\lmk G_2\lmk F(\rho_1), F(\sigma_1)\rmk^*\rmk
\Ad\lmk \lmk \wrlt{F(\rho_{1})\otimes F(\sigma_{1})}1\rmk^*\rmk
\Ad\lmk F_2(\rho_1,\sigma_1)^*\rmk F(\rho_1\otimes \sigma_1)\circ\alpha^{-1}
\\
&=\Ad\lmk
\begin{gathered}
\lmk \wrlt{F(\rho_1)}1\rmk^{*}
\trlzi{F(\rho_1)}2\lmk \lmk \wrlt{F(\sigma_1)}1\rmk^{*}\rmk
\lmk \wrlt{F(\rho_1)\otimes F(\sigma_1)}1\rmk\\
\lmk \wrlt{F(\rho_{1})\otimes F(\sigma_{1})}1\rmk^{*}
\lmk \wrl{\rho_1}2\rmk^{*} \trlzi{\rho_1}1\lmk \lmk \wrl{\sigma_1}2\rmk^{*}\rmk
\lmk \wrl{\rho_1\otimes\sigma_1}2\rmk
\end{gathered}
\rmk\\
&\quad F(\rho_{1}\otimes \sigma_{1})\circ\alpha^{-1}\\
&=\Ad\lmk
\begin{gathered}
\lmk \wrlt{F(\rho_1)}1\rmk^{*}
\trlzi{F(\rho_1)}2\lmk \lmk \wrlt{F(\sigma_1)}1\rmk^{*}\rmk
\lmk \wrl{\rho_1}2\rmk^{*} \trlzi{\rho_1}1\lmk \lmk \wrl{\sigma_1}2\rmk^{*}\rmk
\lmk \wrl{\rho_1\otimes\sigma_1}2\rmk
\end{gathered}
\rmk\\
&\quad F(\rho_{1}\otimes \sigma_{1})\circ\alpha^{-1}\\
&=\Ad\lmk
\begin{gathered}
\lmk \wrlt{F(\rho_1)}1\rmk^{*}
\lmk \wrl{\rho_1}2\rmk^{*}
\trlzi{\rho_1}1\lmk \lmk \wrlt{F(\sigma_1)}1\rmk^{*}\rmk
 \trlzi{\rho_1}1\lmk \lmk \wrl{\sigma_1}2\rmk^{*}\rmk
\lmk \wrl{\rho_1\otimes\sigma_1}2\rmk\\
\lmk \wrl{\rho_1\otimes\sigma_1}2\rmk^{*}
\end{gathered}
\rmk\\
&\quad (\rho_{1}\otimes \sigma_{1})\\
&=\Ad\lmk
\begin{gathered}
\lmk \wrlt{F(\rho_1)}1\rmk^{*}
\lmk \wrl{\rho_1}2\rmk^{*}
\trlzi{\rho_1}1\lmk \lmk \wrlt{F(\sigma_1)}1\rmk^{*}
 \lmk \wrl{\sigma_1}2\rmk^{*}\rmk
\end{gathered}
\rmk\\
&\quad (\rho_{1}\otimes \sigma_{1})\\
&=\Ad\lmk
\begin{gathered}
\lmk \wrlt{F(\rho_1)}1\rmk^{*}
\lmk \wrl{\rho_1}2\rmk^{*}
\trlzi{\rho_1}1\lmk \lmk \wrlt{F(\sigma_1)}1\rmk^{*}
 \lmk \wrl{\sigma_1}2\rmk^{*}\rmk\\
 \wrl{\rho_{1}\otimes \sigma_{1}}2
\wrlt {F(\rho_{1}\otimes \sigma_{1})}1
\end{gathered}
\rmk GF(\rho_{1}\otimes \sigma_{1})\\
&=
\Ad\lmk \lmk \eta(\rho_{1})\otimes \eta(\sigma_{1})\rmk \eta(\rho_{1}\otimes \sigma_{1})^{*}\rmk
GF(\rho_{1}\otimes \sigma_{1}).
\end{split}
\end{align}
We used (\ref{201}), (\ref{188}), (\ref{200}), (\ref{202}), (\ref{199}), (\ref{189})
occasionally.
In the fifth equation we used (\ref{trh}).
Setting 
\begin{align}
(GF)_{2}(\rho_{1},\sigma_{1}):=\eta(\rho_{1}\otimes \sigma_{1})
\lmk \eta(\rho_{1})\otimes \eta(\sigma_{1})\rmk^{*},
\end{align}
we conclude
\begin{align}
\Ad\lmk (GF)_{2}(\rho_{1},\sigma_{1})\rmk\circ \lmk GF(\rho_1)\otimes GF(\sigma_1)\rmk
=GF(\rho_1\otimes\sigma_1).
\end{align}
Because
\begin{align}
 (GF)_{2}\lmk \rho_{1},\sigma_{1}\rmk\cdot
\lmk \eta(\rho_{1})\otimes \eta(\sigma_{1})\rmk
=\eta(\rho_{1}\otimes \sigma_{1}),
\end{align}
we see that
\[
\begin{CD}
\rho_1\otimes\sigma_1@>{id}>>\rho_1\otimes\sigma_1\\
@V{\eta\otimes\eta}VV @VV{\eta}V\\
GF(\rho_1)\otimes GF(\sigma_1)@>{(GF)_2} >> GF\lmk \rho_1\otimes\sigma_1\rmk.
\end{CD}
\]
holds.
Hence we see $GF\simeq \id$.
Similarly we have $FG\simeq \id$.
Hence $C_{1}$ and $C_{2}$ are unitarily monoidally equivalent.

Finally, we show that $F$ is braided.
Let $\rho_1,\sigma_1\in\caO_1$.
We would like to show 
\begin{align}\label{barcd}
\begin{CD}
F(\rho_1)\otimes F(\sigma_1)@>{F_2(\rho_1,\sigma_1)}>>F\lmk \rho_1\otimes\sigma_1\rmk\\
@V{\epsilon_+^{\Lambda_0^{(2)}}\lmk F(\rho_1), F(\sigma_1)\rmk}
VV @VV{F\lmk \epsilon_+^{\Lambda_0^{(1)}}(\rho_1,\sigma_1)\rmk}V\\
F(\sigma_1)\otimes F(\rho_1)@>{F_2(\sigma_1,\rho_1)} >> F\lmk \sigma_1\otimes\rho_1\rmk.
\end{CD}
\end{align}
Fix
$\Lambda_1,\Lambda_2\in\ctv$,
satisfying $\Lambda_1\perp_{(\theta,\varphi)} \Lambda_2$, 
$\Lambda_2\leftarrow_{(\theta,\varphi)}\Lambda_1$.
For each $t_j\ge 0$, $j=1,2$, we fix $\bar V_{\rho_1\ltj{j}{t_j}}\in \caV_{\rho_1\ltj{j}{t_j}}$, $\bar V_{\sigma_1\ltj{j}{t_j}}\in \caV_{\sigma_1\ltj{j}{t_j}}$
and 
$\bar V_{F(\rho_1)\ltj{j}{t_j}}\in \caV_{F(\rho_1)\ltj{j}{t_j}}$, $\bar V_{F(\sigma_1)\ltj{j}{t_j}}\in \caV_{F(\sigma_1)\ltj{j}{t_j}}$.
Set
\begin{align}
\begin{split}
&W_{\rho_1\Lambda_0^{(1)}\Lambda_1}^{\bm t}:=\bar V_{\rho_1\Lambda_1+t_1 \bm e_{\Lambda_1}}\bar V_{\rho_1\Lambda_0^{(1)}}^*=\bar V_{\rho_1\Lambda_1+t_1 \bm e_{\Lambda_1}}
,\quad
W_{\sigma_1\Lambda_0^{(1)}\Lambda_2}^{\bm t}:=\bar V_{\sigma_1\Lambda_2+t_2\bm e_{\Lambda_2}}\bar V_{\sigma_1\Lambda_0^{(1)}}^*=\bar V_{\sigma_1\Lambda_2+t_2\bm e_{\Lambda_2}},\\
&W_{F(\rho_1)\Lambda_0^{(2)}\Lambda_1}^{\bm t}:=\bar V_{F(\rho_1)\Lambda_1+t_1 \bm e_{\Lambda_1}}\bar V_{F(\rho_1)\Lambda_0^{(2)}}^*=\bar V_{F(\rho_1)\Lambda_1+t_1 \bm e_{\Lambda_1}},\quad
W_{F(\sigma_1)\Lambda_0^{(2)}\Lambda_2}^{\bm t}:=\bar V_{F(\sigma_1)\Lambda_2+t_2\bm e_{\Lambda_2}}\bar V_{F(\sigma_1)\Lambda_0^{(2)}}^*=\bar V_{F(\sigma_1)\Lambda_2+t_2\bm e_{\Lambda_2}}.
\end{split}
\end{align}
With these notation, (\ref{barcd}) means
\begin{align}
\begin{split}
&
\lmk \wrl{\sigma_1\otimes\rho_1}2\rmk^* \trlzi{\sigma_1}1
\lmk \wrl{\rho_1}2\rmk \wrl{\sigma_1}2\\
&\lim_{\bm t\to\infty}
\trlzi{F(\sigma_1)}{2}\lmk
\lmk
W_{F(\rho_1){\Lambda_0^{(2)}}\Lambda_1}^{\bm t}
\rmk^*\rmk
\lmk  W_{F(\sigma_1){\Lambda_0^{(2)}}\Lambda_2}^{\bm t}\rmk^*
W_{F(\rho_1){\Lambda_0^{(2)}}\Lambda_1}^{\bm t}
\trlzi{F(\rho_1)}2\lmk
W_{F(\sigma_1){\Lambda_0^{(2)}}\Lambda_2}^{\bm t}
\rmk
\\
&=\lmk \wrl{\sigma_1\otimes\rho_1}2\rmk^* \trlzi{\sigma_1}1
\lmk \wrl{\rho_1}2\rmk \wrl{\sigma_1}2
\lim_{\bm t\to\infty}
\lmk W_{F(\sigma_1){\Lambda_0^{(2)}}\Lambda_2}^{\bm t}\otimes W_{F(\rho_1){\Lambda_0^{(2)}}\Lambda_1}^{\bm t}\rmk^*
\lmk
W_{F(\rho_1){\Lambda_0^{(2)}}\Lambda_1}^{\bm t}\otimes  W_{F(\sigma_1){\Lambda_0^{(2)}}\Lambda_2}^{\bm t}
\rmk\\
&=F_2(\sigma_1,\rho_1)\epsilon_+^{\Lambda_0^{(2)}}\lmk F(\rho_1), F(\sigma_1)\rmk\\
&=F\lmk \epsilon_+^{\Lambda_0^{(1)}}(\rho_1,\sigma_1)\rmk F_2(\rho_1,\sigma_1)\\
&= 
\lmk Y_{\sigma_1\otimes \rho_1}^{{\Lambda_0}^{(2)}}\rmk^*
\lim_{\bm t\to\infty}
\lmk W_{\sigma_1{\Lambda_0^{(1)}}\Lambda_2}^{\bm t}\otimes W_{\rho_1{\Lambda_0^{(1)}}\Lambda_1}^{\bm t}\rmk^*
\lmk
W_{\rho_1{\Lambda_0^{(1)}}\Lambda_1}^{\bm t}\otimes  W_{\sigma_1{\Lambda_0^{(1)}}\Lambda_2}^{\bm t}
\rmk
Y_{\rho_1\otimes \sigma_1}^{{\Lambda_0^{(2)}}}
\lmk \wrl{\rho_1\otimes\sigma_1}2\rmk^* \trlzi{\rho_1}1\lmk \wrl{\sigma_1}2\rmk\wrl{\rho_1}2\\
&=
\lmk Y_{\sigma_1\otimes \rho_1}^{{\Lambda_0}^{(2)}}\rmk^*
\lim_{\bm t\to\infty}
\trlzi{\sigma_1}1
\lmk\lmk W_{\rho_1{\Lambda_0^{(1)}}\Lambda_1}^{\bm t}\rmk^*\rmk
 \lmk W_{\sigma_1{\Lambda_0^{(1)}}\Lambda_2}^{\bm t}\rmk^*
\lmk
W_{\rho_1{\Lambda_0^{(1)}}\Lambda_1}^{\bm t}
\trlzi{\rho_1}1\lmk  W_{\sigma_1{\Lambda_0^{(1)}}\Lambda_2}^{\bm t}\rmk
\rmk\\
&Y_{\rho_1\otimes \sigma_1}^{{\Lambda_0^{(2)}}}
\lmk \wrl{\rho_1\otimes\sigma_1}2\rmk^* \trlzi{\rho_1}1\lmk \wrl{\sigma_1}2\rmk\wrl{\rho_1}2
\end{split}
\end{align}
This can be rewritten to
\begin{align}\label{sfa}
\begin{split}
\lim_{\bm t\to\infty}
\lV\begin{gathered}
 \lmk W_{\sigma_1{\Lambda_0^{(1)}}\Lambda_2}^{\bm t}\rmk
\trlzi{\sigma_1}1
\lmk W_{\rho_1{\Lambda_0^{(1)}}\Lambda_1}^{\bm t}\rmk
 \trlzi{\sigma_1}1
\lmk \wrl{\rho_1}2\rmk \wrl{\sigma_1}2\\
\trlzi{F(\sigma_1)}{2}\lmk
\lmk
W_{F(\rho_1){\Lambda_0^{(2)}}\Lambda_1}^{\bm t}
\rmk^*\rmk
\lmk  W_{F(\sigma_1){\Lambda_0^{(2)}}\Lambda_2}^{\bm t}\rmk^*
\\
-
\lmk
W_{\rho_1{\Lambda_0^{(1)}}\Lambda_1}^{\bm t}
\trlzi{\rho_1}1\lmk  W_{\sigma_1{\Lambda_0^{(1)}}\Lambda_2}^{\bm t}\rmk
\rmk\\
\trlzi{\rho_1}1\lmk \wrl{\sigma_1}2\rmk\wrl{\rho_1}2
\trlzi{F(\rho_1)}2\lmk \lmk
W_{F(\sigma_1){\Lambda_0^{(2)}}\Lambda_2}^{\bm t}\rmk^*
\rmk
\lmk W_{F(\rho_1){\Lambda_0^{(2)}}\Lambda_1}^{\bm t}\rmk^*
\end{gathered}\rV=0.
\end{split}
\end{align}
Using (\ref{trh}) and $W_{\rho_1{\Lambda_0^{(1)}}\Lambda_1}^{\bm t}\in \lmk
\trlzi{\rho_1}1,\trlt{\rho_1}1\rmk$ etc,
the first term in the norm is equal to 
\begin{align}
 \lmk W_{\sigma_1{\Lambda_0^{(1)}}\Lambda_2}^{\bm t}\rmk\wrl{\sigma_1}2
 \lmk
W_{F(\sigma_1){\Lambda_0^{(2)}}\Lambda_2}^{\bm t}
\rmk^*
\trlt{F(\sigma_1)}2
\lmk W_{\rho_1{\Lambda_0^{(1)}}\Lambda_1}^{\bm t}\wrl{\rho_1}2
  \lmk W_{F(\rho_1){\Lambda_0^{(2)}}\Lambda_1}^{\bm t}\rmk^*\rmk.
\end{align}
The second term is equal to
\begin{align}
\begin{split}
&W_{\rho_1{\Lambda_0^{(1)}}\Lambda_1}^{\bm t}
\wrl{\rho_1}2\lmk W_{F(\rho_1){\Lambda_0^{(2)}}\Lambda_1}^{\bm t}\rmk^*
\trlt{F(\rho_1)}1\lmk  W_{\sigma_1{\Lambda_0^{(1)}}\Lambda_2}^{\bm t}
 \wrl{\sigma_1}2\lmk
W_{F(\sigma_1){\Lambda_0^{(2)}}\Lambda_2}^{\bm t}\rmk^*
\rmk.
\end{split}
\end{align}
Setting
\begin{align}
{X^{t_1}_{\rho_1\Lambda_1}}
:= W_{\rho_1\Lambda_0^{(1)}\Lambda_1}^{\bm t}
\wrl{\rho_1}2
\lmk W_{F(\rho_1)\Lambda_0^{(2)}\Lambda_1}^{\bm t}\rmk^*,\quad
{X^{t_2}_{\sigma_1\Lambda_2}}
:= W_{\sigma_1\Lambda_0^{(1)}\Lambda_2}^{\bm t}
\wrl{\sigma_1}2
\lmk W_{F(\sigma_1)\Lambda_0^{(2)}\Lambda_2}^{\bm t}\rmk^*,
\end{align}
(\ref{sfa}) becomes
\begin{align}\label{art}
\lim_{t\to\infty}
\lV
{X^{t_2}_{\sigma_1\Lambda_2}}\trlt{F(\sigma_1)}2
\lmk 
{X^{t_1}_{\rho_1\Lambda_1}}\rmk
-{X^{t_1}_{\rho_1\Lambda_1}}\trlt{F(\rho_1)}1
\lmk {X^{t_2}_{\sigma_1\Lambda_2}}
\rmk
\rV=0.
\end{align}
This is the equation we have to show now.

In order to do so, fix some $\delta>0$
such that $\arg (\Lambda_1)_{64\delta}\cap \arg (\Lambda_2)_{64\delta}=\emptyset$
and $ (\Lambda_1)_{64\delta},  (\Lambda_2)_{64\delta}\in\ctv$.
We set 
$\amf_{i}(\Lambda):=\pi_{i}\lmk \caA_{\Lambda^{c}}\rmk'$,
$i=1,2$.

Fix any $\epsilon>0$.
Recall $g_{\varphi,\delta,\delta'}$ from Definition \ref{qfdef}.
Fix some $s\ge 0$ such that
\begin{align}\label{schoice}
\begin{split}
&s\ge R_{|\arg\Lambda_{i}|+10\delta,\delta}\\
& g_{|\arg\Lambda_1|+2\delta,\delta,\delta}(s)<\frac{\epsilon}{32},\\
&g_{|\arg \Lambda_i|+12\delta,\delta,\delta}\lmk \frac s2\rmk
+2g_{|\arg \Lambda_i|+16\delta,\delta,\delta}\lmk \frac s2\rmk
+f_{|\arg \Lambda_i|+12\delta,\delta,\delta}\lmk \frac s2\rmk<\frac{\epsilon}{32},\\
&g_{|\arg \Lambda_i|+22\delta,\delta,\delta}\lmk \frac s2\rmk
+2g_{|\arg \Lambda_i|+26\delta,\delta,\delta}\lmk \frac s2\rmk
+f_{|\arg \Lambda_i|+22\delta,\delta,\delta}\lmk \frac s2\rmk<\frac{\epsilon}{32},
\end{split}
\end{align}
for $i=1,2$.
Now for any $t_1\ge 0$, we claim that
there is some unitary
$z^{t_1s}_{\rho_1\Lambda_1}\in \amf_{2}\lmk
\lmk\Lambda_1\rmk_{9\delta}+(t_1-2s)\bm e_{\Lambda_1}
\rmk$
such that
\begin{align}\label{oboe}
\lV
z^{t_1s}_{\rho_1\Lambda_1}-{X^{t_1}_{\rho_1\Lambda_1}}
\rV\le \frac{\epsilon}4.
\end{align}
To prove this, set $\Lambda_{j}^t:=\Lambda_j+t\bm e_{\Lambda_j}$
for each $t\ge 0$ and $j=1,2$.
Then we have
\begin{align}
\begin{split}
&
\Ad\lmk {X^{t_1}_{\rho_1\Lambda_1}}
\rmk\pi_2\vert_{\caA_{\lmk \lmk \Lambda_{1}^{t_1}\rmk_\delta\rmk^c}}
=
\Ad\lmk W_{\rho_1\Lambda_0^{(1)}\Lambda_1}^{\bm t}
\wrl{\rho_1}2
\lmk W_{F(\rho_1)\Lambda_0^{(2)}\Lambda_1}^{\bm t}\rmk^*\rmk\pi_2\vert_{\caA_{\lmk \lmk \Lambda_{1}^{t_1}\rmk_\delta\rmk^c}}\\
&=
\Ad\lmk W_{\rho_1\Lambda_0^{(1)}\Lambda_1}^{\bm t}
\wrl{\rho_1}2
\rmk
F(\rho_1)\vert_{\caA_{\lmk \lmk \Lambda_{1}^{t_1}\rmk_\delta\rmk^c}}\\
&=
\Ad\lmk W_{\rho_1\Lambda_0^{(1)}\Lambda_1}^{\bm t}
\wrl{\rho_1}2
\rmk
\trlzi{F(\rho_1)}2\circ\pi_2
\vert_{\caA_{\lmk \lmk \Lambda_{1}^{t_1}\rmk_\delta\rmk^c}}\\
&=
\Ad\lmk W_{\rho_1\Lambda_0^{(1)}\Lambda_1}^{\bm t}
\rmk
\trlzi{\rho_1}1\circ\pi_2
\vert_{\caA_{\lmk \lmk \Lambda_{1}^{t_1}\rmk_\delta\rmk^c}}\\
&=
\trlta{\rho_1}1\circ\pi_2
\vert_{\caA_{\lmk \lmk \Lambda_{1}^{t_1}\rmk_\delta\rmk^c}}=
\trlta{\rho_1}1\circ\pi_1\alpha
\vert_{\caA_{\lmk \lmk \Lambda_{1}^{t_1}\rmk_\delta\rmk^c}}\\
&=
\Ad\lmk
\trlta{\rho_1}1\pi_1
\lmk v_{{\lmk {\Lambda_1^{t_1}}\rmk_{\delta}}\delta}\rmk
\rmk
\circ
\trlta{\rho_1}1\pi_1
\lmk
\Xi_{{\lmk {\Lambda_1^{t_1}}\rmk_{\delta}}, \delta}\circ \beta_{({\lmk {\Lambda_1^{t_1}}\rmk_{\delta}})^c}
\rmk\vert_{\caA_{\lmk \lmk \Lambda_{1}^{t_1}\rmk_\delta\rmk^c}}\\
&=
\Ad\lmk
\trlta{\rho_1}1\pi_1
\lmk v_{{\lmk {\Lambda_1^{t_1}}\rmk_{\delta}}\delta}\rmk
\rmk
\pi_1
\circ
\lmk
\Xi_{{\lmk {\Lambda_1^{t_1}}\rmk_{\delta}}, \delta}\circ \beta_{({\lmk {\Lambda_1^{t_1}}\rmk_{\delta}})^c}
\rmk\vert_{\caA_{\lmk \lmk \Lambda_{1}^{t_1}\rmk_\delta\rmk^c}}\\
&=
\Ad\lmk
\trlta{\rho_1}1\pi_1
\lmk v_{{\lmk {\Lambda_1^{t_1}}\rmk_{\delta}}\delta}\rmk\cdot
\pi_1 \lmk v^*_{{\lmk {\Lambda_1^{t_1}}\rmk_{\delta}}\delta}\rmk
\rmk
\pi_1
\alpha\vert_{\caA_{\lmk \lmk \Lambda_{1}^{t_1}\rmk_\delta\rmk^c}}
\\
&=
\Ad\lmk
\trlta{\rho_1}1\pi_1
\lmk v_{{\lmk {\Lambda_1^{t_1}}\rmk_{\delta}}\delta}\rmk\cdot
\pi_1 \lmk v^*_{{\lmk {\Lambda_1^{t_1}}\rmk_{\delta}}\delta}\rmk
\rmk
\pi_2\vert_{\caA_{\lmk \lmk \Lambda_{1}^{t_1}\rmk_\delta\rmk^c}}.
\end{split}
\end{align}
Hence we have
\begin{align}\label{accord}
 {X^{t_1}_{\rho_1\Lambda_1}}
 =\trlta{\rho_1}1\pi_1
\lmk v_{{\lmk {\Lambda_1^{t_1}}\rmk_{\delta}}\delta}\rmk\cdot
\pi_1 \lmk v^*_{{\lmk {\Lambda_1^{t_1}}\rmk_{\delta}}\delta}\rmk
\cdot y^{t_1}_{\rho_1\Lambda_1},
\end{align}
with some unitary $ y^{t_1}_{\rho_1\Lambda_1}\in \amf_{2}(\lmk \Lambda_{1}^{t_1}\rmk_\delta)$.
Recall from Definition \ref{qfdef} that
there is a unitary
\begin{align}\label{accord2}
v_{{\lmk {\Lambda_1^{t_1}}\rmk_{\delta}},\delta,\delta,s}'\in
\caA_{\lmk {\Lambda_1^{t_1}}\rmk_{3\delta}-s\bm e_{\Lambda_1}}
\end{align}
such that
\begin{align}\label{accord3}
\lV
v_{{\lmk {\Lambda_1^{t_1}}\rmk_{\delta}},\delta}-v_{{\lmk {\Lambda_1^{t_1}}\rmk_{\delta}},\delta,\delta,s}'
\rV
\le g_{|\arg\Lambda_1|+2\delta,\delta,\delta}(s)<\frac{\epsilon}{32}.
\end{align}
Because ${\Lambda_1^{t_1}}\subset \lmk {\Lambda_1^{t_1}}\rmk_{3\delta}-s\bm e_{\Lambda_1}$,
we have
\begin{align}\label{accord4}
\begin{split}
&\pi_1 \lmk
v_{{\lmk {\Lambda_1^{t_1}}\rmk_{\delta}},\delta,\delta,s}'\rmk,
\trlta{\rho_1}1\lmk
\pi_1\lmk v_{{\lmk {\Lambda_1^{t_1}}\rmk_{\delta}},\delta,\delta,s}'\rmk\rmk\\
&\in 
\trlta{\rho_1}1\lmk\pi_1
\lmk
\caA_{\lmk {\Lambda_1^{t_1}}\rmk_{3\delta}-s\bm e_{\Lambda_1}}
\rmk\rmk
\subset
\amf_{1}\lmk
{\lmk {\Lambda_1^{t_1}}\rmk_{4\delta}-s\bm e_{\Lambda_1}}
\rmk.
\end{split}
\end{align}
from Lemma \ref{lem33}.
Note from Lemma \ref{intr} and our choice of $s$ (\ref{schoice})
there is some unitary $w_{s} $
such that 
\begin{align}\label{inin}
\begin{split}
\lV w_{s}-\unit\rV<\frac{\epsilon}{32},\\
\Ad\lmk w_{s}\rmk
\lmk
\amf_{1}\lmk
{\lmk {\Lambda_1^{t_1}}\rmk_{4\delta}-s\bm e_{\Lambda_1}}
\rmk
\rmk
\subset \amf_{2}\lmk
{\lmk {\Lambda_1^{t_1}}\rmk_{9\delta}-2s\bm e_{\Lambda_1}}
\rmk.
\end{split}
\end{align}

From (\ref{accord}), (\ref{accord2}), (\ref{accord3}), (\ref{accord4}),(\ref{inin}),
we conclude the claim (\ref{oboe}).
Similarly, 
 for any $t\ge 0$,
there is some unitary
$z^{t_2s}_{\sigma_1\Lambda_2}\in \amf_{2}\lmk
\lmk\Lambda_2\rmk_{9\delta}+(t_2-2s)\bm e_{\Lambda_2}
\rmk$
such that
\begin{align}\label{oboe2}
\lV
z^{t_2s}_{\sigma_1\Lambda_2}-{X^{t_2}_{\sigma_1\Lambda_2}}
\rV\le \frac{\epsilon}4.
\end{align}

From this, we can estimate
\begin{align}\label{last}
\begin{split}
&\lV
{X^{t_2}_{\sigma_1\Lambda_2}}\trlta{F(\sigma_1)}2
\lmk 
{X^{t_1}_{\rho_1\Lambda_1}}\rmk
-{X^{t_1}_{\rho_1\Lambda_1}}\trlta{F(\rho_1)}1
\lmk {X^{t_2}_{\sigma_1\Lambda_2}}
\rmk
\rV\\
&\le
{\epsilon}
+
\lV
z^{t_2s}_{\sigma_1\Lambda_2}\trlta{F(\sigma_1)}2
\lmk 
z^{t_1s}_{\rho_1\Lambda_1}\rmk
-z^{t_1s}_{\rho_1\Lambda_1}\trlta{F(\rho_1)}1
\lmk z^{t_2s}_{\sigma_1\Lambda_2}
\rmk
\rV\\
&\le
{\epsilon}
+
\lV
\left.\trlta{F(\sigma_1)}2
\right\vert_{\amf_{2}\lmk\lmk\Lambda_1\rmk_{9\delta}+(t_1-2s)\bm e_{\Lambda_1}\rmk}
-\id_{\amf_{2}\lmk\lmk\Lambda_1\rmk_{9\delta}+(t_1-2s)\bm e_{\Lambda_1}\rmk}
\rV\\
&+
\lV
\left.\trlta{F(\rho_1)}1
\right\vert_{\amf_{2}\lmk\lmk\Lambda_2\rmk_{9\delta}+(t_2-2s)\bm e_{\Lambda_2}\rmk}
-\id_{\amf_{2}\lmk\lmk\Lambda_2\rmk_{9\delta}+(t_2-2s)\bm e_{\Lambda_2}\rmk}
\rV\\
&+
\sup_{\substack{x\in \amf_{2}\lmk\lmk\Lambda_2\rmk_{9\delta}+(t_{2}-2s)\bm e_{\Lambda_2}\rmk, \\y\in \amf_{2}\lmk
\lmk\Lambda_1\rmk_{9\delta}+(t_1-2s)\bm e_{\Lambda_1}
\rmk
,\\ \lV x\rV,\lV y\rV\le 1}}
\lV
xy-yx
\rV.
\end{split}
\end{align}
%
%
%
%
%
From Lemma \ref{lem319}, Lemma \ref{lem320},
the last three terms in the last line of (\ref{last}) converges to $0$
as $t_1,t_2\to\infty$.
Therefore, for $t_1,t_2$ large enough,
we have 
\begin{align}
&\lV
{X^{t_2}_{\sigma_1\Lambda_2}}\trlta{F(\sigma_1)}2
\lmk 
{X^{t_1}_{\rho_1\Lambda_1}}\rmk
-{X^{t_1}_{\rho_1\Lambda_1}}\trlta{F(\rho_1)}1
\lmk {X^{t_2}_{\sigma_1\Lambda_2}}
\rmk
\rV\le \epsilon.
\end{align}
This proves 
(\ref{art}) hence (\ref{barcd}).
Hence our monoidal equivalence $F$ is braided.
\end{proof}
{\bf Acknowledgment.}\\
This work was supported by JSPS KAKENHI Grant Number 16K05171 and 19K03534.
It was also supported by JST CREST Grant Number JPMJCR19T2.

\appendix
\section{Cones}\label{apcone}
We cpllect basic notation and properties of cones.
For $I\subset \bbR$, we set
\[
\bbA_I:=\left\{e^{it}\mid t\in I\right\}\subset \bbT.
\]
We say $\bbA\subset \bbT$ is a closed interval if it is $\bbA=\bbA_I$
with a closed interval $I$ of $\bbR$. 
For $\theta\in\bbR$, we set $\bm e_{\theta}=(\cos\theta,\sin\theta)$.
For $\bm a\in \bbR$, $\theta\in\bbR$ and $\varphi\in (0,\pi)$,
set 
\begin{align*}
\Lambda_{\bm a, \theta,\varphi}
:=&\left\{
\bm x\in\bbR^{2}\mid (\bm x-\bm a)\cdot \bm e_{\theta}>\cos\varphi\cdot \lV \bm x-{\bm a}\rV
\right\}\\
=&\bm a+\left\{
t\bm e_{\beta}\mid t>0,\quad \beta\in (\theta-\varphi,\theta+\varphi)\right\}.
\end{align*}
We call a subset of $\bbR^{2}$ with this form a {\it cone}.
For a cone $\Lambda=\Lambda_{\bm a, \theta,\varphi}$ given above,
we set
\begin{align*}
\arg\lmk \Lambda\rmk:=
\left\{ e^{it}\in\bbT \mid t\in [\theta-\varphi,\theta+\varphi]\right\},\quad
|\arg\lmk \Lambda\rmk|:=2\varphi,\quad \text{and}\quad
\bm a_{\Lambda}:=\bm a,\quad
\bm e_{\Lambda}:=\bm e_{\theta}.
\end{align*}
For $\varepsilon>0$ and $\Lambda=\Lambda_{\bm a, \theta,\varphi}$ with $\varphi+\varepsilon<\pi$
we denote the ``fattened'' cone by
\begin{align*}
\Lambda_{\varepsilon}:=\Lambda_{\bm a, \theta,\varphi+\varepsilon}.
\end{align*}
If $0<\varphi-\varepsilon$, we also set
\begin{align*}
\Lambda_{-\varepsilon}:=\Lambda_{\bm a, \theta,\varphi-\varepsilon}.
\end{align*}

\begin{lem}\label{lem1}
Let $\kappa\in\bbR$.
Let $\bm a_{1},\bm a_{2}\in \bbR^{2}$, $\theta_{1},\theta_{2}\in \bbR$ and
$\varphi_{1},\varphi_{2}\in (0,\pi)$
with 
\begin{align*}
\kappa\le \theta_{1}-\varphi_{1}, \theta_{2}-\varphi_{2},\quad\text{and}\quad
\theta_{1}+\varphi_{1}, \theta_{2}+\varphi_{2}<\kappa+2\pi.
\end{align*}
Set
\begin{align*}
\eta:=\max\left\{ \theta_{1}+\varphi_{1}, \theta_{2}+\varphi_{2}\right\},\\
\zeta:=\min\left\{ \theta_{1}-\varphi_{1}, \theta_{2}-\varphi_{2}\right\},
\end{align*}
and 
\begin{align*}
\theta:=\frac{\eta+\zeta}2,\quad \varphi:=\frac{\eta-\zeta}2.
\end{align*}
Assume that $\varphi\in (0,\pi)$.
Then there is some $R_{0}\in\bbR_{\ge 0}$
such that 
\begin{align*}
\Lambda_{\bm a_{1},\theta_{1},\varphi_{1}}, 
\Lambda_{\bm a_{2},\theta_{2},\varphi_{2}}\subset 
\Lambda_{-R\bm e_{\theta}, \theta,\varphi},
\end{align*}
for all $R\ge R_{0}$.
\end{lem}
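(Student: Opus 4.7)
The plan is to check $\bm x \in \Lambda_{-R\bm e_\theta, \theta, \varphi}$ pointwise for $\bm x = \bm a_i + t\bm e_\beta$ with $t > 0$ and $\beta \in (\theta_i - \varphi_i, \theta_i + \varphi_i)$, $i = 1, 2$. The definitions of $\eta$ and $\zeta$ immediately give the closed angular inclusion $[\theta_i - \varphi_i, \theta_i + \varphi_i] \subset [\zeta, \eta] = [\theta - \varphi, \theta + \varphi]$ together with $\varphi \in (0, \pi)$, so setting $\psi := \beta - \theta$ one has $|\psi| \le \varphi$. The inclusion $\bm x \in \Lambda_{-R\bm e_\theta, \theta, \varphi}$ is the single inequality $(\bm x + R\bm e_\theta) \cdot \bm e_\theta > \cos\varphi \, \|\bm x + R\bm e_\theta\|$, and I will recast this as a pair of conditions that are affine in $R$ and in $t$.

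Introduce the signed-distance functionals to the boundary rays of the target cone,
\[
d_\pm(\bm y) := (\bm y \cdot \bm e_\theta)\sin\varphi \mp (\bm y \cdot \bm e_\theta^\perp)\cos\varphi,
\]
given by inward-pointing normal projections onto the lines through the origin in the directions $\bm e_{\theta \pm \varphi}$. A short case analysis on the sign of $\cos\varphi$ shows that $\bm y \in \Lambda_{\bm 0, \theta, \varphi}$ iff both $d_+, d_- > 0$ when $\varphi \le \pi/2$ (the wedge is convex, the intersection of two half-planes) and iff at least one of $d_\pm > 0$ when $\varphi > \pi/2$ (the wedge is non-convex and is the complement, in $\bbR^2 \setminus \{0\}$, of the narrow closed convex cone sharing the same two boundary lines). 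Plugging $\bm y = \bm x + R\bm e_\theta$ into these functionals yields
\[
d_\pm(\bm x + R\bm e_\theta) = R\sin\varphi + (\bm a_i \cdot \bm e_\theta)\sin\varphi \mp (\bm a_i \cdot \bm e_\theta^\perp)\cos\varphi + t\sin(\varphi \mp \psi).
\]

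Two observations then close the argument. First, the $t$-coefficient $\sin(\varphi \mp \psi)$ is non-negative exactly when $\varphi \mp \psi \in [0, \pi]$; for $\varphi \le \pi/2$ this is automatic throughout $\psi \in [-\varphi, \varphi]$, while for $\varphi > \pi/2$ the two failure events $\psi < \varphi - \pi$ and $\psi > \pi - \varphi$ cannot occur simultaneously (since $\varphi < \pi$ forces $\varphi - \pi < \pi - \varphi$), so at least one of $d_\pm$ is non-decreasing in $t$. Second, a choice such as
\[
R_0 := \max_{i=1,2}\bigl[-\bm a_i \cdot \bm e_\theta + |\bm a_i \cdot \bm e_\theta^\perp|\,|\cot\varphi|\bigr] + 1
\]
(with the convention $|\cot\varphi| = 0$ at $\varphi = \pi/2$) guarantees $d_\pm(\bm a_i + R\bm e_\theta) > 0$ for both signs, both $i$, and all $R \ge R_0$. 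Combining the two, whichever $d_\pm$ is non-decreasing in $t$ remains strictly positive along $\bm x + R\bm e_\theta$ for every $t \ge 0$, delivering the required membership via the criterion of the second paragraph.

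The hard part will be the regime $\varphi > \pi/2$: there the target cone has angular aperture exceeding a half-plane and so is not an intersection of two half-planes, which means the naive sufficient condition $d_+ > 0$ and $d_- > 0$ is strictly stronger than membership and in general cannot be arranged for all $\psi, t$ simultaneously. The resolution is to pair the complementary-cone description (open cone equals complement, off the origin, of the narrow closed convex cone with the same boundary rays) with the mutual exclusivity of the two $t$-coefficient negativity events: for each admissible $\psi$, whichever $d_\pm$ is non-decreasing single-handedly certifies membership via the ``or'' criterion precisely when the ``and'' criterion would fail.
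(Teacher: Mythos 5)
Your proposal is correct: the membership criteria you assert do hold (writing $\bm y=r\bm e_{\theta+\psi}$ one gets $d_{\pm}(\bm y)=r\sin(\varphi\mp\psi)$, and the set where both, respectively at least one, of these is positive is exactly $\{|\psi|<\varphi\}$ when $\varphi\le\pi/2$, respectively $\varphi>\pi/2$), the affine expansion of $d_{\pm}(\bm a_i+t\bm e_\beta+R\bm e_\theta)$ is right, the mutual exclusivity of the two failure events for the $t$-coefficient is exactly what saves the non-convex case, and your explicit $R_0$ does force $d_{\pm}(\bm a_i+R\bm e_\theta)>0$. The route, however, is genuinely different from the paper's. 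The paper also checks membership pointwise for $\bm a_i+t\bm e_\beta$ and splits at $\varphi=\pi/2$, but it attacks the defining inequality $(\bm a_i+t\bm e_\beta+R\bm e_\theta)\cdot\bm e_\theta>\cos\varphi\,\lV\bm a_i+t\bm e_\beta+R\bm e_\theta\rV$ head-on: for $\cos\varphi\ge 0$ it bounds the norm above by $\lV\bm a_i\rV+t+R$ and uses $\cos(\beta-\theta)\ge\cos\varphi$ to kill the $t$-terms, while for $\cos\varphi<0$ it bounds the norm below by $t-\lV R\bm e_\theta+\bm a_i\rV$ and again uses monotonicity of cosine. You instead replace the nonlinear norm inequality by the two boundary functionals $d_{\pm}$ together with the convex/non-convex dichotomy (intersection of two half-planes versus complement of the opposite narrow closed cone), which makes the whole verification affine in $R$ and $t$. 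What your approach buys is the elimination of norm estimates and an explicit admissible threshold $R_0$; what it costs is the extra step of justifying the ``or'' criterion for apertures exceeding $\pi$, which the paper's direct estimate never needs to articulate. Both are elementary and of comparable length, so this is a legitimate alternative proof rather than a gap.
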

\begin{proof}
By the definition of $\theta,\varphi$, we have
\begin{align}\label{bib}
(\theta_{i}-\varphi_{i}, \theta_{i}+\varphi_{i})\subset (\theta-\varphi,\theta+\varphi),\quad i=1,2.
\end{align}
Any element of $\Lambda_{\bm a_{i},\theta_{i},\varphi_{i}}$, $i=1,2$
is of the form
\begin{align*}
\bm a_{i}+t\bm e_{\beta}, \quad \beta \in (\theta_{i}-\varphi_{i}, \theta_{i}+\varphi_{i}),\quad
0\le t\in \bbR.
\end{align*}
We have to show that there is some $R_{0}>0$ such that
\begin{align*}
\lmk
\bm a_{i}+t\bm e_{\beta}-(-R\bm e_{\theta})
\rmk
\cdot \bm e_{\theta}
> \cos\varphi\cdot \lV \bm a_{i}+t\bm e_{\beta}-(-R\bm e_{\theta})\rV
\end{align*}
for any $\beta \in (\theta_{i}-\varphi_{i}, \theta_{i}+\varphi_{i})$,
$0\le t\in \bbR$, $i=1,2$ and $R\ge R_{0}$.

Suppose that $\varphi\in (0,\frac\pi2]$.
Note that $-\cos\varphi+\cos(\beta-\theta)\ge 0$ 
because of (\ref{bib}).
Using this (in the third line), we get
\begin{align*}
\begin{split}
&\lmk
\bm a_{i}+t\bm e_{\beta}-(-R\bm e_{\theta})
\rmk\cdot \bm e_{\theta}-\cos\varphi\cdot \lV \bm a_{i}+t\bm e_{\beta}-(-R\bm e_{\theta})\rV\\
&\ge \bm a_{i}\cdot e_{\theta}+t\cos(\beta-\theta)+R
-\cos\varphi\lmk \lV \bm a_{i}\rV+t+R\rmk\\
&\ge 
\bm a_{i}\cdot \bm e_{\theta}-\cos\varphi \lV \bm a_{i}\rV+ 
t\lmk -\cos\varphi+\cos(\beta-\theta)\rmk + R(1-\cos\varphi)\\
&\ge 
\bm a_{i}\cdot \bm e_{\theta}-\cos\varphi \lV \bm a_{i}\rV+ 
R(1-\cos\varphi)
\end{split}
\end{align*}
The last term is positive if
\begin{align*}
R>(1-\cos\varphi)^{-1}\max_{i=1,2}
\left\{
-\bm a_{i}\cdot \bm e_{\theta}+\cos\varphi \lV \bm a_{i}\rV
\right\}.
\end{align*}
If $\varphi\in (\frac\pi 2,\pi)$, note that there is some $R_0\in\bbR_+$ such that
\begin{align}
\lmk-R\bm e_\theta-\bm a_i\rmk\cdot \bm e_\theta
<\cos\varphi \lV R\bm e_\theta+\bm a_i\rV,
\end{align}
for all $R\ge R_0$ and $i=1,2$.
If $R\ge R_0$ then
we have
\begin{align}\label{ihi}
\lmk
\bm a_{i}+t\bm e_{\beta}-(-R\bm e_{\theta})
\rmk
\cdot \bm e_{\theta}
=
\lmk R\bm e_\theta+\bm a_i\rmk \cdot \bm e_\theta
+t\bm e_\beta\cdot\bm e_\theta
>-\cos\varphi \lV R\bm e_\theta+\bm a_i\rV+t\cos(\beta-\theta).
\end{align}
On the other hand, we have
\begin{align}
\lV \bm a_{i}+t\bm e_{\beta}-(-R\bm e_{\theta})\rV
\ge \lV t\bm e_\beta\rV-\lV R\bm e_\theta+\bm a_i\rV
=t-\lV R\bm e_\theta+\bm a_i\rV.
\end{align}
From $\cos\varphi<0$, we then have
\begin{align}
\cos\varphi\lmk t-\lV R\bm e_\theta+\bm a_i\rV\rmk\ge 
\cos\varphi\cdot \lV \bm a_{i}+t\bm e_{\beta}-(-R\bm e_{\theta})\rV.
\end{align}
Substituting this to (\ref{ihi}), we have
\begin{align}
\begin{split}
\lmk
\bm a_{i}+t\bm e_{\beta}-(-R\bm e_{\theta})
\rmk
\cdot \bm e_{\theta}
>-\cos\varphi \lV R\bm e_\theta+\bm a_i\rV+t\cos(\beta-\theta)\\
\ge \cos\varphi\lmk -\lV R\bm e_\theta+\bm a_i\rV+t\rmk
\ge \cos\varphi\cdot \lV \bm a_{i}+t\bm e_{\beta}-(-R\bm e_{\theta})\rV,
\end{split}
\end{align}
proving the claim.

\end{proof}
Let us give another version of the Lemma.
\begin{lem}\label{lem1p}
For any $\bm a,\bm a_{1}\in \bbR^{2}$, $\theta,\theta_{1}\in \bbR$ and
$\varphi,\varphi_{1}\in (0,\pi)$
with 
\begin{align*}
\theta-\varphi\le \theta_{1}-\varphi_{1}<\theta_{1}+\varphi_{1}\le\theta+\varphi.
\end{align*}
Then there is some $R_{0}\in\bbR_{\ge 0}$
such that 
\begin{align*}
\Lambda_{R\bm e_{\theta_{1}}+\bm a_{1},\theta_{1},\varphi_{1}}
\subset 
\Lambda_{\bm a, \theta,\varphi},
\end{align*}
for all $R\ge R_{0}$.
\end{lem}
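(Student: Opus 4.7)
The plan is to decouple the problem into two geometric facts: (a) for $R$ sufficiently large the shifted apex $\bm a_1' := R\bm e_{\theta_1} + \bm a_1$ lies in the open cone $\Lambda_{\bm a, \theta, \varphi}$; and (b) once the apex lies there, the entire inner cone $\Lambda_{\bm a_1', \theta_1, \varphi_1}$ is contained in $\Lambda_{\bm a, \theta, \varphi}$.

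For (a), the hypothesis $\theta - \varphi \le \theta_1 - \varphi_1 \le \theta_1 + \varphi_1 \le \theta + \varphi$ combined with $\varphi_1 > 0$ yields $|\theta_1 - \theta| \le \varphi - \varphi_1 < \varphi$, so the unit vector $\bm e_{\theta_1}$ lies in the open set $\Lambda_{\bm 0, \theta, \varphi}$. Picking $r > 0$ with $B(\bm e_{\theta_1}, r) \subset \Lambda_{\bm 0, \theta, \varphi}$ and invoking positive-scaling invariance of the cone gives $B(R\bm e_{\theta_1}, Rr) \subset \Lambda_{\bm 0, \theta, \varphi}$ for every $R > 0$, so choosing $R_0 > \lV \bm a_1 - \bm a\rV / r$ forces $\bm a_1' - \bm a \in \Lambda_{\bm 0, \theta, \varphi}$, i.e., $\bm a_1' \in \Lambda_{\bm a, \theta, \varphi}$, for every $R \ge R_0$.

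For (b), any $\bm x \in \Lambda_{\bm a_1', \theta_1, \varphi_1}$ has the form $\bm x = \bm a_1' + t\bm e_\beta$ with $t > 0$ and $\beta \in (\theta_1 - \varphi_1, \theta_1 + \varphi_1)$; since $\beta$ lies in the open interval, the hypothesis forces $\beta \in (\theta - \varphi, \theta + \varphi)$. Writing $\bm u := \bm a_1' - \bm a$ and $\bm v := t\bm e_\beta$, and identifying $\bbR^2$ with $\bbC$, let $\phi_1$ denote the complex argument of $\bm u$; then both $\phi_1$ and $\beta$ lie in $(\theta - \varphi, \theta + \varphi)$. Enlarging $R_0$ if necessary so that $\phi_1$ is sufficiently close to $\theta_1$ guarantees $|\phi_1 - \beta| < \pi$ uniformly in $\beta$. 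The elementary identity $\bm u + \bm v = e^{i\phi_1}(|\bm u| + |\bm v| e^{i(\beta - \phi_1)})$ then places the complex argument of $\bm u + \bm v$ in the closed real interval between $\phi_1$ and $\beta$, which is contained in $(\theta - \varphi, \theta + \varphi)$. Since $\bm u + \bm v \ne \bm 0$, this gives $\bm x - \bm a \in \Lambda_{\bm 0, \theta, \varphi}$, hence $\bm x \in \Lambda_{\bm a, \theta, \varphi}$.

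The main subtlety will be handling $\varphi > \pi/2$, where $\Lambda_{\bm 0, \theta, \varphi}$ fails to be closed under vector addition and a naive convexity argument breaks (e.g.\ $\bm e_{\theta + \pi/2} + \bm e_{\theta - \pi/2} = \bm 0$ for $\varphi = 3\pi/4$). The remedy is precisely the non-antipodality $|\phi_1 - \beta| < \pi$, which follows uniformly from $|\theta_1 - \beta| < \varphi_1 < \pi$ once $R$ is large enough, and which forces the argument of $\bm u + \bm v$ onto the short arc between $\phi_1$ and $\beta$, keeping it inside the angular sector of the outer cone.
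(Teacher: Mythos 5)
Your proof is correct, but it takes a genuinely different route from the paper's. The paper deduces Lemma \ref{lem1p} in a few lines from Lemma \ref{lem1} by complementation: the complement of $\Lambda_{\bm a,\theta,\varphi}$ is (the closure of) a cone with reversed axis and half-angle $\pi-\varphi$, so the claim that a deeply translated narrow cone sits inside a wide cone becomes, after passing to complements, exactly the statement of Lemma \ref{lem1} that two cones fit inside a suitably retracted larger cone; the delicate wide-aperture case is then inherited from the case split $\varphi\le\pi/2$ versus $\varphi>\pi/2$ in the proof of Lemma \ref{lem1}. You instead give a self-contained elementary argument: push the apex $R\bm e_{\theta_1}+\bm a_1$ into $\Lambda_{\bm a,\theta,\varphi}$ via openness and scale invariance, then control the direction of $\bm x-\bm a=\bm u+\bm v$ by the short-arc fact for sums of non-antipodal vectors. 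One caveat: your headline version of step (b), ``once the apex lies in the outer cone the whole inner cone is contained,'' is false when $\varphi>\pi/2$ (take $\theta=\theta_1=0$, $\varphi=\varphi_1=3\pi/4$ and apex $(0,1)$: the translated cone then contains points of argument $\pi$), but your executed argument does not rely on it — you enlarge $R_0$ so that $\phi_1=\arg(\bm u)$ is within $\pi-\varphi_1$ of $\theta_1$, which yields $|\phi_1-\beta|<\pi$ uniformly in $\beta$ and closes precisely this gap. The trade-off: the paper's proof is shorter once Lemma \ref{lem1} is available and concentrates all trigonometric estimates there, while yours avoids invoking Lemma \ref{lem1} and its case split at the cost of the argument-of-a-sum discussion and the extra enlargement of $R_0$.
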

\begin{proof}
Set $\tilde \theta:=-\theta_{1}$, $\tilde \varphi:=\pi-\varphi_{1}$,
$\tilde\theta_{1}:=-\theta$, $\tilde\varphi_{1}:=\pi-\varphi$.
We have
\begin{align*}
\tilde\theta-\tilde\varphi\le \tilde\theta_1-\tilde\varphi_1<
\tilde\theta_1+\tilde\varphi_1\le \tilde\theta+\tilde\varphi.
\end{align*}
Then applying Lemma \ref{lem1}, (with $\theta_{1},\varphi_{1},\theta_{2},\varphi_{2}$, $\bm a_1$ 
replaced by
$\tilde \theta_{1}$, $\tilde\varphi_{1}$, $\tilde\theta$, $\tilde\varphi$, $-\bm a_1+\bm a$ respectively),
there is some $R_{0}\in \bbR_{\ge 0}$ such that
\begin{align*}
\Lambda_{-\bm a_{1}+\bm{a},-\theta,\pi-\varphi}
=\Lambda_{-\bm a_{1}+\bm{a},\tilde \theta_{1},\tilde \varphi_{1}}, 
\subset 
\Lambda_{-R\bm e_{\tilde \theta}, \tilde \theta,\tilde \varphi}
=\Lambda_{R\bm e_{\theta_1}, -\theta_1,\pi-\varphi_1},
\end{align*}
for all $R\ge R_0$.
Taking its complement, we obtain
\begin{align*}
\Lambda_{R\bm e_{\theta_1}, \theta_1,\varphi_1}
\subset 
\Lambda_{-\bm a_{1}+\bm{a},\theta,\varphi},
\end{align*}
for all $R\ge R_0$, proving the claim.

\end{proof}

Let $\theta,\varphi\in \bbR$ with $0<\varphi<\pi$.
We define a set of cones $\caC_{(\theta,\varphi)}$ by
\begin{align}\label{ctvdef}
\caC_{(\theta,\varphi)}:=\left\{
\Lambda\mid \arg\Lambda\cap\bbA_{[\theta-\varphi,\theta+\varphi]}=\emptyset
\right\}.
\end{align}
\begin{lem}\label{lem4}
Let $\theta\in\bbR$ and $\varphi\in (0,\pi)$.
Then $\caC_{(\theta,\varphi)}$ is a upward filtering set
with respect to the inclusion order.
Furthermore,
\begin{align}
\cup_{R\ge 0} \lmk \Lambda-R\bm e_{\Lambda}\rmk =\bbR^2.
\end{align}
\end{lem}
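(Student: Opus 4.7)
The plan is to prove both statements by reduction to Lemma \ref{lem1} and Lemma \ref{lem1p}, combined with an elementary angular/normalization argument. Essentially nothing deep is needed here; it is a clean application of the explicit cone geometry already developed in the appendix.

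For the upward filtering property, given $\Lambda_1, \Lambda_2 \in \caC_{(\theta,\varphi)}$, I first choose representatives $\Lambda_i = \Lambda_{\bm a_i, \theta_i, \varphi_i}$ such that
\[
\theta + \varphi \le \theta_i - \varphi_i < \theta_i + \varphi_i \le \theta - \varphi + 2\pi,
\]
which is possible precisely because $\arg\Lambda_i \cap \bbA_{[\theta-\varphi,\theta+\varphi]} = \emptyset$; this just amounts to picking the branch of the argument lying in the complementary arc. Now I apply Lemma \ref{lem1} with $\kappa := \theta + \varphi$. This produces $\theta', \varphi'$ with $[\theta' - \varphi', \theta' + \varphi'] = [\zeta, \eta] \subset [\theta+\varphi, \theta-\varphi + 2\pi]$, and from $\eta - \zeta \le 2\pi - 2\varphi$ together with $\eta > \zeta$ (which holds since each $\varphi_i > 0$) I obtain $\varphi' \in (0, \pi - \varphi] \subset (0,\pi)$, so the hypothesis of Lemma \ref{lem1} is satisfied. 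The lemma then gives some $R_0 \ge 0$ with $\Lambda_1, \Lambda_2 \subset \Lambda_{-R\bm e_{\theta'}, \theta', \varphi'}$ for all $R \ge R_0$. The enveloping cone $\Lambda_3 := \Lambda_{-R_0\bm e_{\theta'}, \theta', \varphi'}$ has $\arg\Lambda_3 = \bbA_{[\zeta, \eta]}$, which is disjoint from $\bbA_{[\theta-\varphi, \theta+\varphi]}$ by construction, so $\Lambda_3 \in \caC_{(\theta,\varphi)}$, as required.

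For the identity $\bigcup_{R \ge 0}\lmk\Lambda - R\bm e_\Lambda\rmk = \bbR^2$, fix $\Lambda = \Lambda_{\bm a, \theta, \varphi}$ and $\bm x \in \bbR^2$. I need $R \ge 0$ such that $\bm x + R\bm e_\theta \in \Lambda$, i.e.,
\[
R + (\bm x - \bm a) \cdot \bm e_\theta > \cos\varphi \cdot \lV \bm x - \bm a + R\bm e_\theta\rV.
\]
By the triangle inequality the right-hand side is at most $|\cos\varphi|\,(R + \lV \bm x - \bm a\rV)$ (when $\cos\varphi \ge 0$) or is nonpositive (when $\cos\varphi \le 0$). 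In the first case, since $\cos\varphi < 1$, choosing $R > (1 - \cos\varphi)^{-1}(\cos\varphi\,\lV \bm x - \bm a\rV - (\bm x - \bm a)\cdot\bm e_\theta)$ suffices; in the second case the inequality holds once $R + (\bm x - \bm a)\cdot\bm e_\theta > 0$. Either way, $\bm x \in \Lambda - R\bm e_\theta$ for $R$ large enough.

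No serious obstacle arises: the only mild subtlety is being careful that, in the upward filtering step, the choice of branch of the argument of each $\Lambda_i$ places both $[\theta_i - \varphi_i, \theta_i + \varphi_i]$ inside a single arc of length less than $2\pi$ complementary to $[\theta-\varphi, \theta+\varphi]$, so that the hypotheses of Lemma \ref{lem1} on the ordering of $\theta_i \pm \varphi_i$ relative to $\kappa$ are satisfied and $\varphi' \in (0,\pi)$.
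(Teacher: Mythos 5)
Your proposal is correct and follows essentially the same route as the paper, which simply declares the lemma ``immediate from Lemma \ref{lem1}''; you spell out the application of Lemma \ref{lem1} for the filtering property and reproduce the same elementary estimate for the second claim. One small point: since $\arg\Lambda_i$ is the closed arc $\bbA_{[\theta_i-\varphi_i,\theta_i+\varphi_i]}$, disjointness from the closed arc $\bbA_{[\theta-\varphi,\theta+\varphi]}$ actually forces the strict inequalities $\theta+\varphi<\theta_i-\varphi_i$ and $\theta_i+\varphi_i<\theta-\varphi+2\pi$ (not merely $\le$, as you wrote), which is what guarantees that the enveloping cone's argument stays strictly inside the complementary arc and hence lies in $\caC_{(\theta,\varphi)}$.
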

\begin{proof}
Immediate from Lemma \ref{lem1}.
\end{proof}

\end{document}